\newtheorem*{rep@theorem}{\rep@title}
\newcommand{\newreptheorem}[2]{%
\newenvironment{rep#1}[1]{%
 \def\rep@title{#2 \ref{##1}}%
 \begin{rep@theorem}}%
 {\end{rep@theorem}}}
\newtheorem{theorem}{Theorem}
\newtheorem{lemma}[theorem]{Lemma}
\newtheorem{corollary}[theorem]{Corollary}
\newtheorem{proposition}[theorem]{Proposition}
\theoremstyle{remark}
\newtheorem{remark}[theorem]{Remark}
\newcommand{\xqedhere}[2]{%
  \rlap{\hbox to#1{\hfil\llap{\ensuremath{#2}}}}}
\renewcommand{\Re}{\mathfrak{Re}}
\renewcommand{\Im}{\mathfrak{Im}}
\renewcommand{\vec}{\operatorname{vec}}
\newcommand{\Z}{\ensuremath{\mathbb Z}}
\newcommand{\R}{\ensuremath{\mathbb R}}
\newcommand{\C}{\ensuremath{\mathbb C}}
\newcommand{\unity}{\mathbbmss{1}}
\newcommand{\ket}[1]{\ensuremath{| #1 \rangle}{}}
\newcommand{\ketbra}[2]{\ensuremath{| #1 \rangle \langle #2 |}{}}
\newcommand{\expt}[1]{\ensuremath{\langle #1 \rangle}{}}
\newcommand{\adr}{\operatorname{ad}}
\newcommand{\Adr}{\operatorname{Ad}}
\renewcommand{\vec}{\operatorname{vec}{}}
\newcommand{\Alt}{\mathrm{Alt}}
\newcommand{\Sym}{\mathrm{Sym}}
\newcommand{\uu}{\mathfrak{u}}
\newcommand{\su}{\mathfrak{su}}
\newcommand{\SU}{\mathrm{SU}}
\newcommand{\SO}{\mathrm{SO}}
\newcommand{\SP}{\mathrm{Sp}}
\newcommand{\U}{\mathrm{U}}
\newcommand{\PSU}{\mathrm{PSU}}
\newcommand{\so}{\mathfrak{so}}
\newcommand{\spp}{\mathfrak{sp}}
\newcommand{\fa}{\mathfrak{a}}
\newcommand{\fb}{\mathfrak{b}}
\newcommand{\fc}{\mathfrak{c}}
\newcommand{\fe}{\mathfrak{e}}
\newcommand{\ff}{\mathfrak{f}}
\newcommand{\fg}{\mathfrak{g}}
\newcommand{\fk}{\mathfrak{k}}
\newcommand{\fh}{\mathfrak{h}}
\newcommand{\fm}{\mathfrak{m}}
\newcommand{\fq}{\mathfrak{q}}
\newcommand{\fs}{\mathfrak{s}}
\newcommand{\fii}{\mathfrak{i}}
\newcommand{\fu}{\mathfrak{u}}
\newcommand{\bG}{\mathbf{G}}
\newcommand{\mF}{\mathcal{F}}
\newcommand{\cent}{\mathfrak{cent}}
\newcommand{\comm}{\mathfrak{comm}}
\newcommand{\centre}{\mathfrak{center}}
\newcommand{\TU}{\mathcal{U}}
\newcommand{\TUN}{\mathcal{V}}
\newcommand{\NN}{\mathcal{N}}
\newcommand{\CC}{\mathcal{C}_d}
\newcommand{\CCC}{\mathcal{C}'_d}
\newcommand{\Tm}{r}
\newcommand{\TM}{\hat{r}}
\newcommand{\DD}{\mathcal{D}}
\newcommand{\g}{\tilde{g}}
\newcommand{\oI}{{\ell_k^\unity}}
\newcommand{\oX}{{\ell_k^\mathrm{X}}}
\newcommand{\oY}{{\ell_k^\mathrm{Y}}}
\newcommand{\oZ}{{\ell_k^\mathrm{Z}}}
\newcommand{\hI}{h^\unity}
\newcommand{\hX}{h^\mathrm{X}}
\newcommand{\hY}{h^\mathrm{Y}}
\newcommand{\hZ}{h^\mathrm{Z}}
\renewcommand{\mod}{\operatorname{mod}{\;}}
\newcommand{\abs}[1]{\ensuremath{\vert #1 \vert}}
\newcommand{\tr}{\operatorname{tr}}
\newcommand{\reach}{\mathfrak{reach}{}}
\newcommand{\eR}{\mathcal{R}}
\newcommand{\hsodd}{h_{\mathrm{o}}}
\newcommand{\hseven}{h_{\mathrm{e}}}
\newcommand{\QFn}{\mathcal{QF}_n}
\newcommand{\Pdn}{\mathfrak{P}(d,n)}
\newcommand{\fop}{f}
\newcommand{\XX}{{\sf XX}}
\begin{document}


\title{A Dynamic Systems Approach to Fermions and Their Relation to Spins}

\author{Zolt{\'a}n Zimbor{\'a}s}
\email{zimboras@gmail.com}
\affiliation{Institute for Scientific Interchange Foundation, 
Via Alassio 11/c, 10126 Torino, Italy}
\affiliation{Department of Theoretical Physics, University of the Basque Country UPV/EHU,
P.O. Box 644, E-48080 Bilbao, Spain}

\author{Robert Zeier}
\email{robert.zeier@ch.tum.de}
\affiliation{Department Chemie, Technische Universit{\"a}t M{\"u}nchen,
Lichtenbergstrasse 4, 85747 Garching, Germany}

\author{Michael Keyl} 
\email{michael.keyl@tum.de}
\affiliation{Institute for Scientific Interchange Foundation, 
Via Alassio 11/c, 10126 Torino, Italy}
\affiliation{Zentrum Mathematik, M5, Technische Universit{\"a}t M{\"u}nchen,
Boltzmannstrasse 3, 85748 Garching, Germany}

\author{Thomas \surname{Schulte-Herbr{\"u}ggen}}
\email{tosh@ch.tum.de}
\affiliation{Department Chemie, Technische Universit{\"a}t M{\"u}nchen,
Lichtenbergstrasse 4, 85747 Garching, Germany}

\date{December 20, 2013}

\pacs{03.67.Ac, 02.30.Yy, 75.10.Pq}

\begin{abstract}
Dynamic properties of fermionic systems, like controllability, reachability, 
and simulability, are investigated in a general Lie-theoretical frame for quantum systems theory. 
Observing the  parity superselection rule, we treat
the fully controllable and quasifree cases, 
as well as various translation-invariant and particle-number conserving
cases.
We determine the respective dynamic system Lie algebras to express 
reachable sets of pure (and mixed) states by explicit orbit manifolds.  
\end{abstract}

\maketitle

\section{Introduction}

Over the last decade, there has been a considerable experimental 
progress in achieving coherent control of ultra-cold
gases including fermionic systems \cite{GMEHB02, GRJ03, JB+03, ZSS+03, BDZ08, LBL12}.
This is also of great interest in view of quantum simulation (e.g., \cite{JC03})
of quantum phase transitions \cite{Sachdev99, QPT10}, 
pairing phenomena \cite{Pairing},
and in particular, 
for understanding phases in Hubbard models \cite{Esslinger10}. --- Even 
earlier on, the simulation of fermionic systems on quantum computers
had been in focus \cite{AL97, BK02}. For either case, there are interesting
algebraic aspects going beyond the standard textbook approach \cite{AlicFann01},
some of which can be found in \cite{Vlasov03, kraus-pra71, kraus09, SRL12}.
Here we set out for a unified picture of quantum systems theory in a Lie-algebraic 
frame following the lines of \cite{ZS11} to pave the way for optimal-control methods
to be applied to fermionic systems.

It is generally
recognized that optimal control algorithms are
key tools needed for further advances in experimentally 
exploiting these quantum systems for simulation as well as for
computation \cite{DowMil03, dAll08, WisMil09, DYNAMO}. 
In the implementation
of these algorithms it is crucial to know before-hand 
to which extent the system can be controlled. The usual 
scenario (in coherent control) is that 
we are given a \emph{drift Hamiltonian} and a set of 
\emph{control Hamiltonians} with tunable strengths. 
The achievable operations will be characterized by the
\emph{system Lie algebra}, while the 
reachable sets of states are given by the respective \emph{pure state orbits}.
Dynamic Lie algebras and reachability questions
have been intensively studied in the literature for qudit 
systems \cite{ZS11,SchiSOLea02b,SchiSOLea02,AA03}. 
However, in the case of fermions these 
questions have to be reconsidered mainly due to the
presence of the \emph{parity superselection rule}.
Hence in a broader sense the present work on fermions 
can be envisaged also as a step towards 
quantum control theory for quantum simulation in the presence of 
superselection rules.

Apart from discussing the implications of
the parity superselection rule in the theory 
of dynamic Lie algebras and of pure-state orbits,
we will also treat the case when one imposes translation-invariance or
particle-number conservation.
Moreover,
the experimentally relevant case of  
quasifree fermions (with and
without translation invariance) is discussed
in detail. 
Since we interrelate fermionic systems
with the Lie-theoretical framework of quantum-dynamical systems, at times we will be
somewhat more explicit and put known results into a new frame.
The main results extend from general fermionic systems to the action of 
Hamiltonians with and without restrictions like quadratic interactions, translation invariance, reflection symmetry,
or particle-number conservation.

The paper itself is structured as follows:
In order to set a unified frame, we resume some basic concepts of 
Hamiltonian controllability of qudit systems in Sec.~\ref{quditcontrol}, since for comparison these 
concepts will subsequently be translated to their fermionic counterparts, starting
with the discussion of general fermionic systems in  Sec.~\ref{FQS}.

Then the new results are presented in the following six sections:
In Sec.~\ref{fully} we obtain the dynamic 
system algebra for \emph{general fermionic systems} respecting the parity 
superselection rule (see Theorem~\ref{general}
in Subsection~\ref{DSA}).
An explicit example  for a set of Hamiltonians that provides full controllability over the fermionic system 
is discussed in Subsection~\ref{ExAndDisc}.
Some general results on the controllability of fermionic and spin systems,
such as Theorem~\ref{thm:double-centralizer}, are relegated to Appendix~\ref{DoubleCentralizers}.
Following the same line, in Sec.~\ref{QUASI} we wrap up some 
known results on \emph{quasifree} fermionic systems 
in a general Lie-theoretic frame by streamlining the derivation
for the respective system algebra in Proposition~\ref{quasifree} of Sec.~\ref{QUASI}.
Corollary~\ref{cor_tensor_three} provides a most general
controllability condition of quasifree fermionic systems
building on the tensor-square representation used in \cite{ZS11}.
Furthermore, we develop methods for restricting the set of possible system algebras by 
analyzing their rank, see Theorem~\ref{quad_ex_u} 
as well as Appendices~\ref{appl_quasi} and \ref{proof_u_d}.
The structure and orbits of \emph{pure states} in quasifree fermionic systems are analyzed in
Sec.~\ref{pure-quasifree} leading to a complete characterization 
of pure-state controllability (Theorem~\ref{controllability_quasifree_pure}).
Sections~\ref{sec:TI} and \ref{TIQuasi} are devoted to \emph{translation-invariant systems}.
For \emph{spin chains} 
we give in Theorem~\ref{translation_spins} the first full
characterization of the corresponding system algebras 
and strengthen in Theorem~\ref{Thm:SpinShortRange} earlier results on short-range 
interactions in  \cite{kraus-pra71}.
The system algebras for {\em general translation-invariant fermionic chains}
are given in Theorem~\ref{translationinvariant} of Sec.~\ref{sec-transl-fermionic}.
We also identify translation-invariant fermionic Hamiltonians of bounded interaction length 
which cannot be generated
from nearest-neighbor ones (see Theorem~\ref{thm:main_nn} of Sec.~\ref{ferm_nn}).
The particular case of \emph{quadratic} interactions  (see Sec.~\ref{transl_quadratic}) is settled in
Theorem~\ref{thm:qf-trans-inv}. Corollary~\ref{Rsym} considers systems which additionally carry a
twisted reflection symmetry (or equivalently have no imaginary hopping terms)
as discussed in \cite{kraus-pra71}.
Furthermore, we provide a \emph{complete} classification of
all pure quasifree state orbits in Theorem~\ref{orbit_pure_translation} of Sec.~\ref{orbits-q-trans}.
This leads to Theorem~\ref{thm:gap_close} of Sec.~\ref{applManyBody} presenting
a bound on the scaling of the gap for a class of quadratic Hamiltonians
which are translation-invariant.
Section~\ref{sec:PC} deals with \emph{fermionic systems}  conserving
the \emph{number of particles}. 
Their system
algebras in the general case  as well as in the quasifree case 
are derived in Proposition~\ref{prop:particle_number} and
Proposition~\ref{quasifree_particle}, respectively.
Furthermore, a necessary and sufficient condition for quasifree pure-state controllability
in the particle-number conserving setting is provided by Theorem~\ref{thm:qf_pc_pure_cont}.

In Sec.~\ref{conclusion}, we summarize the main results 
as given in Theorem~\ref{general}, Corollary~\ref{cor_tensor_three},
as well as in
Theorems~\ref{controllability_quasifree_pure}, \ref{translation_spins}, \ref{Thm:SpinShortRange}, 
\ref{translationinvariant},
\ref{thm:main_nn}, \ref{thm:qf-trans-inv}, \ref{orbit_pure_translation}, \ref{thm:gap_close}, 
and \ref{thm:qf_pc_pure_cont}.
We conclude  leaving a number of 
details and proofs
to the Appendices in order to streamline the presentation.

\section{Basic Quantum Systems Theory of $N$-Level Systems}\label{quditcontrol}
As a starting point, consider the controlled 
Schr{\"o}dinger (or Liouville) equation
\begin{equation}\label{eqn:bilinear_contr}
       \dot{\rho}(t) = - [iH_u,\rho(t)] := -(iH_u \rho(t) - \rho(t) iH_u)
\end{equation}
driven by the Hamiltonian $H_u:= H_0 + \sum_{j=1}^m u_j(t) H_j$
and fulfilling the initial condition  $\rho_0:=\rho(0)$.
Here the \emph{drift term} $H_0$ 
describes the evolution of the unperturbed system, 
while the \emph{control terms} $\{H_j\}$ 
represent coherent manipulations from outside. 
Equation~\eqref{eqn:bilinear_contr} defines a
{\em bilinear control system} $\Sigma$  \cite{Elliott09}, as  
it is linear both 
in the \emph{density operator} $\rho(t)$ and in the \emph{control amplitudes} 
$u_j(t)\in\mathbb R$.
 
For a $N$-level system, the natural representation as hermitian operators over
$\mathbb C^N$ relates the Hamiltonians as generators of unitary time evolutions
to the Lie algebra $\fu(N)$ of skew-hermitian operators that 
generate the unitary group $\U(N)$ of propagators. 
Let $L:= \{iH_1, iH_2, \dots, iH_m\}$ be a subset of Hamiltonians seen as Lie-algebra elements. 
Then the smallest subalgebra (with respect to the
commutator $[A,B]:=AB-BA$)
of $\fu(N)$ containing $L$ is called the \emph{Lie closure} of
$L$ written as $\expt{iH_1, iH_2, \dots, iH_m}_{\rm Lie}$. Moreover,
for any element
$iH\in \langle iH_1, \ldots, iH_m \rangle_{\rm Lie}$, there exist \emph{control amplitudes} 
	\mbox{$u_j(t)\in\mathbb R$} with $j\in\{1,\ldots,m\}$
such that
\begin{equation}\label{contr}
\exp(-iH) = \mathcal{T} \int_{t=0}^{1} \exp\left[\sum_{j=1}^m -iu_j(t) H_j\right]\, \mathrm{d}t,
\end{equation}
where $\mathcal{T}$ denotes time-ordering.

Now taking the Lie closure over the system Hamiltonian and all control Hamiltonians of a
bilinear control system $(\Sigma)$
defines the 
{\em dynamic system Lie algebra} (or system algebra for short)
\begin{equation}
\fg_\Sigma := \expt{ iH_0, iH_j\,|\,j=1,2,\dots,m}_{\rm Lie}\;.
\end{equation}
It is the key to characterize the differential geometry of a dynamic system in terms of its 
complete set of Hamiltonian directions forming the tangent space to the time evolutions.
For instance, the condition for \emph{full controllability} of bilinear systems can readily be adopted from
classical systems \cite{SJ72, JS72, Bro72, Jurdjevic97} to the quantum realm such as to take
the form of
\begin{equation}
        \langle iH_0, iH_j \,|\,j=1,2,\dots,m\rangle_{\rm Lie} = \fu(N)\
\end{equation}
saying that a $N$-level quantum system is fully controllable if and only if its system algebra
is the full unitary algebra, which we will relax to $\su(N)$ in a moment.
This notion of controllability is also intuitive (recalling that the exponential map is surjective for 
compact connected Lie groups), as it requires that all Hamiltonian directions can be generated.

So in fully controllable closed systems, to every initial state $\rho_0$ the \emph{reachable set}  
is the entire unitary orbit 
$
\reach_{\rm full}(\rho_0):=\{U\rho_0 U^\dagger\;|\; U\in \U(N)\}
$.
With density operators being hermitian,
this means any final state $\rho(t)$ can be reached from any initial state $\rho_0$
as long as both of them share the same spectrum of eigenvalues (including multiplicities).
Thus the reachable set of $\rho_0$ equals the \emph{isospectral set} of $\rho_0$.

\begin{remark}
Interestingly, this notion is stronger than the requirement that from any given (normalized)
\emph{pure} state one can reach any other (normalized) \emph{pure} state, since it is well known
 \cite{AA03,SchiSOLea02b,SchiSOLea02} that for $N$ being even, all rank-one projectors are
already on the {\em unitary symplectic orbit}
\begin{equation}
\begin{split}
\reach(\ketbra{\psi_0}{\psi_0}) &= \{K\ketbra{\psi_0}{\psi_0} K^\dagger\,|\, K\in \SP(N/2)\}\\
	&=\{U\ketbra{\psi_0}{\psi_0} U^\dagger\,|\, U\in \SU(N)\}
\end{split}
\end{equation}
and $\SP(N/2)$ is a proper subgroup of $\SU(N)$.
\end{remark}

In general, the \emph{reachable set} to an initial state $\rho_0$ of  a dynamic system $(\Sigma$) with 
\emph{system algebra} $\fg_\Sigma$ is given by the orbit of the dynamic (sub)group 
$\bG_\Sigma:=\exp(\fg_\Sigma)\subseteq U(N)$ generated by the system algebra 
\begin{equation}\label{eqn:reachSigma}
\reach_{\Sigma}(\rho_0):=\{G\rho_0 G^\dagger\;|\; G\in \bG_\Sigma=\exp(\fg_\Sigma)\}.
\end{equation}

Thus the system algebra $\fg_\Sigma$ can be envisaged as the \emph{fingerprint} 
encoding all the dynamic
properties of a dynamic system $\Sigma$. Via the respective reachable sets (see, e.g., \cite{ZS11}) 
it is easy to see
that a coherently controlled dynamic system $\Sigma_A$ {\em can simulate the 
dynamics of another
system} $\Sigma_B$ if and only if the system algebra $\fg_{\Sigma_A}$ of the 
simulating system
$\Sigma_A$ encompasses the system algebra $\fg_{\Sigma_B}$ of the simulated 
system $\Sigma_B$,
\begin{equation}\label{eqn:simulability}
\fg_{\Sigma_A} \supseteq \fg_{\Sigma_B}.
\end{equation}
In \cite{ZS11}, we have analyzed the possibility of quantum simulation with respect to the 
dynamic degrees of freedom and have given a number of illustrating worked examples.

Next we describe dynamic symmetries of bilinear control systems
whose Hamiltonians are given by $\fm:=\{iH_\nu\}=\{iH_0, iH_1, \dots, iH_m\}$. 
The \emph{symmetry operators} $s$  are collected in the {\em centralizer} of $\fm$ in $\fu(N)$:
\begin{equation}
\cent(\fm):= \big\{s\in\fu(N)\,|\, [s, i H] = 0 \;\; \forall\, iH \in \fm \big\}.
\end{equation}
More generally, let $S'$ denote the {\em commutant} 
of a set $S$ of matrices, i.e.,  
the set of all complex matrices which commute simultaneously
with all matrices in $S$.
By Jacobi's identity $\big[[a,b],c\big]+\big[[b,c],a\big]+\big[[c,a],b\big]=0$
one gets two properties of the centralizer pertinent for our context: 
First, an element $s$ that commutes with all Hamiltonians $a,b \in \fm$ also commutes 
with their Lie closure $\fg_\Sigma:=\expt{\fm}_{\rm Lie}$ 
(i.e.~$\cent(\fm) \equiv \cent(\fg_\Sigma)$), as 
$[s,a] = 0$ and $[s,b] = 0$ imply $\big[s,[a,b]] = 0$.
Second, for any $u \in\fu(N)$, 
$[s_1,u] = 0$ and $[s_2,u] = 0$ imply $\big[[s_1,s_2],u\big] = 0$,
so the centralizer  
forms itself a Lie subalgebra to $\mathfrak{u}(N)$ 
consisting of all symmetry operators.
By construction, $\cent(\fg_\Sigma)$ is also a 
\emph{normal subalgebra} or an \emph{ideal} of $\fu(N)$ observing
$[\cent(\fg_\Sigma),\fu(N)]\subseteq\cent(\fg_\Sigma)$.

Likewise one can describe the symmetries to a given set $\rho_\Sigma$ of states 
by its centralizer
\begin{align}
\cent(\rho_\Sigma) & :=\{ s\in\fu(N)\,|\, [s,\rho]=0\;\;\forall \rho\in\rho_\Sigma\}\\
                   & \phantom{:}= \cent(\langle \rho_\Sigma \rangle_{\R}), \nonumber
\end{align} 
where $\langle \,\cdot\, \rangle_{\R}$ denotes the real span.
Clearly, $\cent(\rho_\Sigma)\subseteq\fu(N)$ 
generates the \emph{stabilizer group} to the state space
$\rho_\Sigma$ of the control system~($\Sigma$).

Since in the absence of other symmetries 
the identity is the only and trivial symmetry of both any state space $\rho_\Sigma$ as well as
any set of Hamiltonians and their respective system algebra $\fg_\Sigma$, one has
$\cent(\fg_\Sigma)=\cent(\rho_\Sigma)
=\{i\,\lambda \unity_N\,|\, \lambda\in\mathbb R\}=:\fu(1)$.
So there is always a trivial stabilizer group $\U(1):=\{e^{i\phi}\unity_N\,|\,\phi\in\mathbb R\}$.
This explains why the time evolutions generated by two
Hamiltonians $H_1$ and $H_2$ coincide for the set of all density operators 
if (and without other symmetries only if) $H_1-H_2=\lambda \unity$.
As  is well known, by the same argument, in time evolutions 
\begin{equation}
	\rho(t):=U(t)\rho U^{\dagger}(t) = \Adr_{U(t)}(\rho_0)\;
\end{equation} 
following from Eq.~\eqref{eqn:bilinear_contr},
one may take $U(t):=\exp(-itH)$ equally well from $\U(N)$ or $\SU(N)$.
Thus henceforth we will only consider special unitaries (of determinant $+1$)
generated by traceless Hamiltonians $iH_\nu\in\su(N)$, since
for any Hamiltonian 
$\tilde H$ there exists an equivalent unique traceless Hamiltonian
$H:=\tilde H-\tfrac{1}{N}\tr (\tilde H)\unity_N$ generating a time evolution coinciding 
with the one of 
$\tilde H$ \footnote{
More precisely, all unitary conjugations of 
type $\Adr_U$ are elements of the 
projective special unitary group
$\PSU(N) = \U(N)/\U(1) \simeq \SU(N)/{\mathbb Z}(N)$,
where the centers of $\U(N)$ and $\SU(N)$
are respectively given by $\U(1)$ and 
${\mathbb Z}(N):=\{e^{ir}\unity_N\,|\,r\in \R \text{ with } rN \mod 2\pi=0\}$. 
Moreover, recall $\Adr_{\exp(-itH)}=e^{-it\adr_H}$, where  
$\adr_H:=[H,\cdot]$ can be represented
as \emph{commutator superoperator} 
$\adr_H = \unity_N \otimes H - H^t\otimes\unity_N$.
Now, for any  $H_1-H_2=\lambda \unity_N$, one immediately obtains $\adr_{H_1}=\adr_{H_2}$, 
which also elucidates that the generators of the projective 
unitaries are given by
$
\{i\adr_H\,|\,iH\in\fu(N)\}
$.}.

However, the above simple arguments are in fact much stronger, 
e.g., one readily gets the following statement:
\begin{lemma}\label{Lem:JointComm}
Consider a bilinear control system with system algebra $\fg_\Sigma$  
on a state space $\rho_\Sigma$. Let $iH_1\in \fg_\Sigma$ and $iH_2 \in \uu(N)$
while assuming that $[H_1,\langle\rho_\Sigma\rangle_{\R}]
\subseteq i\langle\rho_\Sigma\rangle_{\R}$ for all
$iH_1 \in \fg_\Sigma$, i.e., operations generated by $\fg_\Sigma$ map the set 
$\langle\rho_\Sigma\rangle_{\R}$ 
 into itself.
Then the condition
\begin{equation}\label{lem_cond}
e^{-iH_1t}\rho e^{iH_1t}=e^{-i(H_1+H_2)t}\rho e^{i(H_1+H_2)t}\;
\forall t\in \R, \rho\in\rho_\Sigma
\end{equation}
is equivalent to $iH_2 \in \cent(\rho_\Sigma)$.
\end{lemma}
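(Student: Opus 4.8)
The plan is to prove the equivalence by differentiating the operator identity~\eqref{lem_cond} in $t$ at $t=0$, which reduces the statement to a commutator condition, and then to bootstrap from first order back to all orders using the hypothesis that $\fg_\Sigma$ preserves $\langle\rho_\Sigma\rangle_\R$. First I would note the easy direction: if $iH_2\in\cent(\rho_\Sigma)$, then $[H_2,\rho]=0$ for every $\rho\in\rho_\Sigma$, and since the orbit $e^{-iH_1 t}\rho e^{iH_1 t}$ stays inside $\langle\rho_\Sigma\rangle_\R$ by the standing assumption, $H_2$ also commutes with the evolved state at every time; a standard Trotter/Duhamel argument (or simply differentiating $e^{iH_1 t}e^{-i(H_1+H_2)t}$) then shows the two conjugations agree for all $t$.

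For the nontrivial direction, assume~\eqref{lem_cond}. Differentiating both sides at $t=0$ gives $[H_1,\rho]=[H_1+H_2,\rho]$, i.e.\ $[H_2,\rho]=0$ for all $\rho\in\rho_\Sigma$, which is exactly $iH_2\in\cent(\rho_\Sigma)$ — so in fact the first-order condition already suffices, and the full identity is stronger than needed. I would still want to verify the differentiation is legitimate (both sides are smooth matrix-valued functions of $t$, so this is routine) and then invoke $\cent(\rho_\Sigma)=\cent(\langle\rho_\Sigma\rangle_\R)$ from the excerpt to phrase the conclusion cleanly. This is essentially a one-line computation once set up correctly.

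The role of the hypothesis $[H_1,\langle\rho_\Sigma\rangle_\R]\subseteq i\langle\rho_\Sigma\rangle_\R$ deserves a remark: it is what guarantees that the time-evolved states $e^{-iH_1 t}\rho e^{iH_1 t}$ remain in the relevant state space, so that ``being a symmetry'' is a well-posed condition along the whole trajectory; without it the easy direction could fail because $\cent(\rho_\Sigma)$ would not control the commutator with the evolved states. I would make explicit that this invariance is exactly the compatibility condition ensuring $\fg_\Sigma$ acts on $\rho_\Sigma$, matching the framework set up earlier with $\reach_\Sigma(\rho_0)$.

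The main obstacle is essentially presentational rather than mathematical: one must be careful that the equivalence is genuinely an ``iff'' and not merely that the first-order Taylor coefficient matches — i.e.\ that the converse recovers the \emph{full} operator identity for all $t$, not just its derivative at the origin. The cleanest way to close this gap is to observe that $iH_2\in\cent(\rho_\Sigma)$ together with the invariance hypothesis implies $iH_2$ commutes with $e^{-iH_1 t}\rho e^{iH_1 t}$ for every $t$ (since that operator lies in $\langle\rho_\Sigma\rangle_\R$, and $\cent(\rho_\Sigma)=\cent(\langle\rho_\Sigma\rangle_\R)$), whence $e^{-iH_2 s}$ fixes the evolved state for all $s$ and the two flows $e^{-iH_1 t}(\cdot)e^{iH_1 t}$ and $e^{-i(H_1+H_2)t}(\cdot)e^{i(H_1+H_2)t}$ coincide on $\rho_\Sigma$ — a short Lie-product-formula argument. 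I expect the write-up to be under half a page.
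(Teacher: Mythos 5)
Your proposal is correct, and it reaches the conclusion by a mildly but genuinely different route than the paper. The paper expands both conjugations as $\sum_k t^k/k!\,\adr^k$ and matches Taylor coefficients at \emph{all} orders, deriving the intermediate condition $(\adr_{H_2}\circ\adr^k_{H_1})(\rho)=0$ for every $k\geq 0$ and then invoking the invariance hypothesis to collapse this to $iH_2\in\cent(\rho_\Sigma)$; both directions of the equivalence are handled inside this single algebraic bookkeeping. You instead split the two directions asymmetrically: for ``condition $\Rightarrow$ centralizer'' you differentiate once at $t=0$, which immediately yields $[H_2,\rho]=0$ and correctly shows that the invariance hypothesis is not needed for this implication (the paper's all-orders matching is overkill here); for the converse you use the invariance hypothesis to see that $\rho(t)=e^{-iH_1t}\rho\,e^{iH_1t}$ stays in $\langle\rho_\Sigma\rangle_{\R}$ (since $\adr_{-iH_1}$ preserves that real subspace, so its exponential does too), hence commutes with $H_2$ by $\cent(\rho_\Sigma)=\cent(\langle\rho_\Sigma\rangle_{\R})$, and then the Duhamel-type computation $\tfrac{d}{dt}\bigl[e^{i(H_1+H_2)t}\rho(t)e^{-i(H_1+H_2)t}\bigr]=e^{i(H_1+H_2)t}\,i[H_2,\rho(t)]\,e^{-i(H_1+H_2)t}=0$ closes the argument. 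What your version buys is a cleaner separation of which hypothesis is used where and a shorter forward direction; what the paper's version buys is a purely algebraic proof with no appeal to ODE uniqueness, which records the intermediate condition (b) explicitly. Both are complete; just make sure in the write-up that the converse states explicitly why $\rho(t)\in\langle\rho_\Sigma\rangle_{\R}$ (the $i^k$ factors from $\adr^k_{H_1}(\rho)\in i^k\langle\rho_\Sigma\rangle_{\R}$ cancel against $(-it)^k$ in the exponential series), since that is the one place the invariance hypothesis genuinely enters.
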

\begin{proof}
Using the formula $e^{tA} B e^{-tA}=\exp[\adr_{tA}(B)]=\sum_{k=0}^{\infty} t^k/k! \adr^k_{A}(B)$ we 
show that Eq.~\eqref{lem_cond}
is equivalent to condition (a): $\adr^k_{H_1}(\rho)=\adr^k_{H_1+H_2}(\rho)$
for all non-negative integer $k$ and all $\rho\in\langle\rho_\Sigma\rangle_{\R}$.
Moreover, (a) implies condition (b): 
$(\adr_{H_2} \circ \adr^k_{H_1})(\rho) = 0$ for all non-negative integer $k$ and all 
$\rho\in\langle\rho_\Sigma\rangle_{\R}$, as
$[H_1,\adr^{k-1}_{H_1}(\rho)]=[H_1+H_2,\adr^{k-1}_{H_1+H_2}(\rho)]
=[H_1+H_2,\adr^{k-1}_{H_1}(\rho)]$. Also, (a) follows from (b) due to
$\adr^k_{H_1}(\rho)=[H_1+H_2,\adr^{k-1}_{H_1}(\rho)]=[H_1+H_2,[H_1+H_2,\adr^{k-2}_{H_1}(\rho)]]
=\cdots=\adr^{k}_{H_1+H_2}(\rho)$.
Applying $[H_1,\langle\rho_\Sigma\rangle_{\R}]{\subseteq} i\langle\rho_\Sigma\rangle_{\R}$ to 
(b) completes the proof.
\end{proof}
Therefore, let us consider
a pair of Hamiltonians $iH_1,iH_3\in\fg_\Sigma$ 
(fulfilling the conditions of Lemma~\ref{Lem:JointComm})
as \emph{equivalent} on the state space $\rho_{\Sigma}$, 
if their difference $iH_2:=i(H_1-H_3)$ 
falls into the centralizer $\cent(\rho_\Sigma)$.

\section{Fermionic Quantum Systems\label{FQS}}

In this section, we fix our notation by 
recalling basic notions for fermionic systems.
In the first subsection,
we discuss the Fock space and different operators acting on it as
given by
the creation and annihilation operators as well as the Majorana operators.
We point out how the Lie algebra $\uu(2^d)$ of skew-hermitian matrices
can be embedded as a real subspace in the set of the complex operators acting on the Fock space.
In the second subsection, we focus on the parity superselection rule and how it structures 
a fermionic system.

\subsection{The Fock Space and Majorana Monomials\label{fock_majorana}}
The complex Hilbert space of
a $d$-mode fermionic system with one-particle
subspace $\C^d$ is the 
\emph{Fock space}
\begin{equation*}
\mathcal{F}(\C^{d}):=
\bigoplus_{i=0}^{d} 
\left(\bigwedge^i \C^{d}\right)
=\C \oplus \C^{d} \oplus
 \wedge^2 \C^{d} \oplus \cdot\cdot  \oplus  (\wedge^d \C^{d}).
\end{equation*}
Given an orthonormal basis $\{e_i\}_{i=1}^d$ of $\C^d$, the
\emph{Fock vacuum} $\Omega:= 1\; (= 1\oplus 0  \oplus \cdots \oplus 0)$
and the vectors of the form
$e_{i_1} \wedge e_{i_2}\wedge \cdots \wedge e_{i_k}$ 
(with $i_1<i_2 < \cdots < i_{k}$ and $1 \leq k \le d$)
form an orthonormal basis of $\mathcal{F}(\C^{d})$.
Note that $\mathcal{F}(\C^{d})$ is a $2^d$-dimensional Hilbert space 
isomorphic to $\otimes_{i=1}^d\C^2\;
(\cong \C^{2^d})$.

The fermionic \emph{creation} and \emph{annihilation operators},
$\fop^\dagger_p$ and $\fop_p$ act on the Fock space in the following way:
$\fop^\dagger_p \Omega = e_p$, $\fop_p \Omega =0$, 
$\fop^\dagger_p e_q = e_p \wedge e_q$, and $\fop_p e_q = \delta_{pq}$;
while in the general case of $1\leq \ell \leq d$, their action is given 
by 
$\fop^\dagger_p (e_{q_1} \wedge e_{q_2}\wedge \cdots \wedge e_{q_\ell})=
(e_p \wedge e_{q_1} \wedge e_{q_2}\wedge \cdots \wedge e_{q_\ell})$ and
$\fop_p (e_{q_1} \wedge e_{q_2}\wedge \cdots \wedge e_{q_\ell})= 
\sum_{k=1}^{n} (-1)^k\delta_{pq_k} \, e_{q_1} 
\wedge \cdots \wedge e_{q_{(k-1)}} \wedge 
e_{q_{(k+1)}} \wedge \cdots \wedge e_{q_\ell}$.
By their definition, these operators 
satisfy the fermionic \emph{canonical anticommutation relations}
\begin{equation*}
\{\fop^{\dagger}_p, \fop^{\dagger}_q\} = \{\fop_p, \fop_q\} =0\; \text{ and }\;
\{\fop^{\dagger}_p, \fop_q\} = \delta_{pq} \unity,
\end{equation*}
where $\{A,B\}:=AB+BA$ denotes the anticommutator.
Moreover, every linear operator
acting on  $\mathcal{F}(\C^{d})$ can be 
be written as a complex polynomial in the creation and annihilation operators. 

Another set of polynomial generators 
acting on the Fock space is given by the $2d$ hermitian \emph{Majorana operators}
$m_{2p-1}:=\fop_p + \fop^\dagger_p$
and
$m_{2p}:=i(\fop_p-\fop^\dagger_p)$, which satisfy
the relations ($k,\ell\in\{1,\ldots,2d\}$)
\begin{equation*}
\{m_k, m_\ell\} = 2\delta_{k\ell} \unity.
\end{equation*}
A product $m_{q_1}m_{q_2}\cdots m_{q_k}$ of $k\geq 0$ Majorana operators is
called a \emph{Majorana monomial}.
The ordered Majorana monomials with $q_1 < q_2 < \cdots <q_k$
form a linearly independent basis of the complex operators 
acting on $\mathcal{F}(\C^{d})$. 
Each Majorana monomial acting on $d$-mode fermionic system
can be identified with a complex operator acting on a chain of $d$ qubits
via the \emph{Jordan-Wigner transformation} \cite{JW28,Boerner67,Miller72,SW86}
which is induced by
\begin{align*}
m_{2p-1}&\mapsto \underbrace{\mathrm{Z}\otimes\cdots\otimes\mathrm{Z}}_{p-1}\otimes
\mathrm{X}\otimes\underbrace{\mathrm{I}\otimes\cdots \otimes\mathrm{I}}_{d-p} 
\intertext{and}
m_{2p}&\mapsto
\underbrace{\mathrm{Z}\otimes\cdots\otimes\mathrm{Z}}_{p-1}\otimes\mathrm{Y}\otimes
\underbrace{\mathrm{I}\otimes\cdots \otimes\mathrm{I}}_{d-p} \, ,
\end{align*}
where the following notation for the Pauli matrices 
$\mathrm{X}:=
\left(
\begin{smallmatrix}
0 & \phantom{-}1\\
1 & \phantom{-}0
\end{smallmatrix}
\right)
$,
$\mathrm{Y}:=
\left(
\begin{smallmatrix}
0 & -i\\
i & \phantom{-}0
\end{smallmatrix}
\right)
$, and
$\mathrm{Z}:=
\left(
\begin{smallmatrix}
1 & \phantom{-}0\\
0 & -1
\end{smallmatrix}
\right)
$ is used.

Now we highlight the real subspace contained
in the set of complex operators acting on the Fock space $\mathcal{F}(\C^{d})$
which consists of all skew-hermitian operators
and which forms the real Lie algebra $\uu(2^d)$ closed
under the commutator $[A,B]=AB-BA$ and real-linear combinations.
More precisely, $\uu(2^d)$ is generated by all operators
\begin{equation}\label{eqn:Lv1}
L(M):=-\frac{1}{2} w(M) M\;,
\end{equation}
where $M$ denotes any ordered Majorana monomial
and 
\begin{equation}\label{eqn:Lv2}
w(M):=
\begin{cases}
\phantom{-}i & \text{if }\; [\deg(M) \mod 8] \in \{0,1\}, \\
\phantom{-}1 & \text{if }\; [\deg(M) \mod 8] \in \{2,3\}, \\
-i & \text{if }\; [\deg(M) \mod 8] \in \{4,5\}, \\
-1 & \text{if }\; [\deg(M) \mod 8] \in \{6,7\}. \\
\end{cases}
\end{equation}
Similarly, one obtains a basis of $\su(2^d)$ by excluding $-\tfrac{i}{2}\unity$.

\subsection{Parity Superselection Rule\label{parity}}

An additional fundamental ingredient in describing fermionic
systems is the \emph{parity superselection rule}.
Superselection rules were originally introduced
by  Wick, Wightman, and Wigner \cite{WWW52} 
(see also \cite{Wightman95,Earman08}). These rules,
in the finite-dimensional
definition of Piron \cite{Piron76},
describe the existence of non-trivial
observables that commute with \emph{all 
physical observables}.
The existence of
such a commuting observable in turn implies that a 
superposition of pure states from different blocks of 
a block-diagonal decomposition w.r.t.\
the eigenspaces of this observable
are equivalent to an incoherent classical mixture.

The \emph{parity superselection rule} identifies among 
the operators acting 
on $\mathcal{F}(\C^{d})$ the \emph{physical observables} $\mathbb H_F$
as those that do commute with the parity operator 
\begin{equation}\label{P_def}
P:=i^d\prod_{k=1}^{2d}m_k \,,
\end{equation}
where the adjoint action of $P$ on a Majorana monomial
is given as
$P m_{k_1}m_{k_2} \cdots m_{k_\ell}P^{-1}=(-1)^{\ell} m_{k_1}m_{k_2} \cdots m_{k_\ell}$.
These physical operators are also exactly the ones that
can be written as a sum of products of an \emph{even} number 
of Majorana operators (as $P$ contains all Majorana operators
whereof there exist an even number).
They are therefore 
 denoted as \emph{even operators} for short.
If the parity is the \emph{only} non-trivial symmetry, we obtain
$
{\mathbb H}_F' = \expt{\unity, P}
$, where the bracket stands for the complex-linear span.

Now we will discuss why
the set of all physical fermionic states $\rho_F$ consists similarly of all density operators
that commute with~$P$, notably
$
\rho_F'= \expt{\unity, P}.
$
As we will show,
the parity super\-selection rule induces a decomposition into a direct sum of two \emph{irreducible}
state-space components exploiting
$
{\mathbb H}_F'  \cap \rho_F'\, = \expt{\unity, P}.
$
Recall that $P^2=\unity$ and the eigenspaces to the eigenvalues $+1$ and $-1$ are
indeed of equal dimension, as there are exactly $2^{2d-1}$ 
even operators which map the vacuum state $\Omega$ into the $+1$ eigenspace of $P$.
Note that 
$P e_{q_1} \wedge e_{q_2}\wedge \cdots \wedge e_{q_\ell}=(-1)^{\ell}e_{q_1} \wedge e_{q_2}\wedge 
\cdots \wedge e_{q_\ell}$.
Thus the Fock space can be split up as a direct sum of 
two equal-dimensional eigenspaces of $P$, called the 
\emph{positive} and \emph{negative parity subspaces} (for clarity observe \footnote{We use this
notation in contrast to the notation of even and odd subspaces (which
is also used in the literature) in order to avoid any confusion 
with the even operators.}):
\begin{align*}
\mathcal{F}(\C^{d})
= \left[\bigoplus_{\text{$i$ even}}
\left(\bigwedge^i \C^{d}\right)\right]
\bigoplus 
\left[\bigoplus_{\text{$i$ odd}}
\left(\bigwedge^i \C^{d}\right)\right].
\end{align*}

Now we may write $P^2=\unity=P_++P_-$ with
the orthogonal projections $P_+:=\frac{1}{2}(\unity+P)$ and 
$P_-:=\frac{1}{2}(\unity-P)$ projecting onto the respective subspaces. 
Any physical observable (i.e.~even operator) $A$ has a 
block-diagonal structure with 
respect to the above splitting, i.e.\ $A=P_+AP_+ 
+ P_- A P_-$. This follows, as 
the requirement $[A,P]=\frac{1}{2}[A,P_+]=-\frac{1}{2}[A,P_-]=0$ 
enforces $P_+AP_- = P_-AP_+=0$ for 
any operator
$A=P_+AP_+ + P_+AP_- + P_-AP_+
+ P_- A P_-$.
We obtain
\begin{align}\label{evenstate}
\nonumber
\text{Tr}(\rho A) &= \text{Tr}(\rho P_+ A P_+ + 
\rho P_- A P_-)\\ &= \text{Tr}[(P_+ \rho P_+ 
+ P_- \rho P_-) A].
\end{align}
Hence \emph{physical} observables cannot distinguish between the density operator $\rho$
and its block-diagonal projection to $P_ +\rho P_+ + P_- \rho P_-$ (which is 
always an even density operator). 
In this sense, a physical linear combination (a formal
superposition) of pure states from the positive and negative parity subspaces  
is equivalent to an incoherent classical mixture. 
Equation~\eqref{evenstate} also shows that without loss of generality
we can restrict ourselves to even density operators and 
regard only those as physical.

Finally, we would like to recall three further aspects of the 
parity superselection rule. First, without the parity superselection rule,
two noncommuting observables acting 
on two different and spatially-separated regions would 
exist which would allow for a violation of locality (e.g., by
instantaneous signaling between the regions). 
Second, the parity superselection rule, of course,  
does not apply if one uses a spin system to
simulate a fermionic system 
via the Jordan-Wigner transformation.
This system respects locality, since the Majorana 
operators $m_{k}$ 
are---in this case---localized on 
the \emph{first} $[(k{+}1)\, \mathrm{div}\, 2]$ spins;
two non-commuting Majorana operators are therefore not acting 
on spatially-separated 
regions. Third, the parity superselection rule also affects the concept
of entanglement as has been pointed out and
studied in detail in \cite{BCW07,KKSCH}.

\section{Fully Controllable Fermionic Systems\label{fully}}

Here we derive a general controllability result for fermions obeying the parity superselection 
rule. We illustrate that full controllability for a 
fermionic system can be achieved with
quadratic Hamiltonians and a single fourth-order interaction term.
For example, in a system with $d$ modes, the complete fermionic dynamical algebra
 $\mathcal{L}_d \cong \su(2^{d-1}) \oplus \su(2^{d-1})$ (see Theorem~\ref{general}) can 
be generated by 
a quartic interaction between the first two modes $ih_{\rm{int}}=i(2f^{\dagger}_1f^{\phantom\dagger}_1-\unity)(2f^{\dagger}_2f^{\phantom\dagger}_2-\unity)=-im_1m_2m_3m_4$
combined with three quadratic Hamiltonians which are:
the nearest-neighbor hopping term
\begin{align*}
ih_{\rm{h}}&= -2i\sum_{p=1}^{d-1} f^{\dagger}_pf^{\phantom\dagger}_{p+1} {+} f^{\dagger}_{p+1}f^{\phantom\dagger}_{p}\\
&=\sum_{p=1}^{d-1}{-} m_{2p-1} m_{2p+2} {+} m_{2p} m_{2p+1},
\end{align*}
the on-site potential of the first site $ih_0{=}i(2f^{\dagger}_1f^{\phantom\dagger}_1{-}\unity)=m_1m_2$, and a
pairing-hopping term between the first two modes $ih_{12}=i(f^{\phantom\dagger}_1f^{\phantom\dagger}_2-f^{\dagger}_1f^{\dagger}_2) -i(f^{\dagger}_1f^{\phantom\dagger}_2-f^{\phantom\dagger}_1f^{\dagger}_2)=m_2m_3$ (see Proposition~\ref{prop_gen_ex}).
Finally, we provide a general discussion about 
when the commutant 
of a system algebra determines the algebra itself.

\subsection{System Algebra\label{DSA}}

In the case of qubit systems mentioned in Sec.~\ref{quditcontrol},  
two Hamiltonians generate equivalent time evolutions 
if and only if they differ by a multiple of the identity. 
This condition can readily be modified for the fermionic case 
such as to match the parity-superselection rule as well.

\begin{corollary}\label{equal_time_evol}
Let $H_1$  and $H_2$ be two 
physical fermionic Hamiltonians, i.e., even hermitian 
operators acting on $\mathcal{F}(\C^{d})$. 
Then by Lemma~\ref{Lem:JointComm} the equality
\begin{align*}
e^{-iH_1t}\rho e^{iH_1t}=e^{-iH_2t} \rho e^{iH_2t}
\end{align*} 
holds for all even (physical) density operators $\rho_F$ with 
$\rho_F'=\expt{\unity,P}$
in the sense that
$H_1$ and $H_2$ generate the same time-evolution, 
if and only if 
$
H_2-H_1=\lambda \unity + \mu P
	   =(\lambda+\mu)P_+ + (\lambda-\mu) P_-$ with $\lambda,\mu\in \mathbb R$.
\end{corollary}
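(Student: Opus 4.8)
The plan is to reduce the claim directly to Lemma~\ref{Lem:JointComm} applied with the state space $\rho_\Sigma = \rho_F$, and then to identify the centralizer $\cent(\rho_F)$ explicitly. First I would check that the hypotheses of Lemma~\ref{Lem:JointComm} are met in this setting: the system algebra $\fg_\Sigma$ is taken to be the full even (physical) algebra ${\mathbb H}_F \cap \su(2^d)$, and since physical Hamiltonians are block-diagonal with respect to the splitting $\mathcal F(\C^d) = P_+\mathcal F(\C^d) \oplus P_-\mathcal F(\C^d)$, the condition $[H_1, \langle\rho_F\rangle_{\R}] \subseteq i\langle\rho_F\rangle_{\R}$ holds — conjugation by $e^{-iH_1 t}$ with $H_1$ even preserves the block-diagonal (even) density operators. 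So Lemma~\ref{Lem:JointComm} tells us that the displayed equality of time evolutions holds for all $\rho \in \rho_F$ if and only if $i(H_2 - H_1) \in \cent(\rho_F)$.

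The substantive step is therefore to compute $\cent(\rho_F) = \{ s \in \fu(2^d) \mid [s,\rho]=0 \ \forall \rho \in \rho_F \}$. By the remark after the definition of $\cent(\rho_\Sigma)$, this equals $\cent(\langle\rho_F\rangle_{\R})$, i.e.\ the skew-hermitian part of the commutant $\rho_F'$ of the real span of all even density operators. Now the even density operators span (as a complex space) the whole even operator algebra ${\mathbb H}_F$ — every even hermitian operator is a real linear combination of even density operators (shift by a multiple of $\unity$ to make it positive, then normalize), and ${\mathbb H}_F$ is the real span of even hermitian operators, whose complexification is the full even algebra. Hence $\rho_F' = {\mathbb H}_F'$, and the paper has already recorded that ${\mathbb H}_F' = \expt{\unity, P}$ (the complex-linear span of $\unity$ and $P$), under the standing assumption that parity is the only symmetry. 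Intersecting with the skew-hermitian operators: a general element $\lambda' \unity + \mu' P$ is skew-hermitian iff $\lambda', \mu'$ are purely imaginary, so $\cent(\rho_F) = \{ i\lambda\unity + i\mu P \mid \lambda,\mu \in \R\}$.

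Combining the two steps: $i(H_2 - H_1) \in \cent(\rho_F)$ iff $H_2 - H_1 = \lambda\unity + \mu P$ for some $\lambda,\mu\in\R$, which is exactly the asserted condition; the alternative form $(\lambda+\mu)P_+ + (\lambda-\mu)P_-$ is just the substitution $\unity = P_+ + P_-$, $P = P_+ - P_-$. The main obstacle, such as it is, is the identification $\rho_F' = {\mathbb H}_F'$ — one must be a little careful that the real span of the physical (even, positive, trace-one) density operators is large enough to have the same commutant as all even hermitian operators; this is where the equal-dimensionality of the two parity eigenspaces and the fact that ${\mathbb H}_F \cong \Mat(2^{d-1},\C) \oplus \Mat(2^{d-1},\C)$ (each block spanned by its own density operators) makes the argument clean. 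Everything else is bookkeeping with $P_\pm$.
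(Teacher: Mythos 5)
Your proposal is correct and follows exactly the route the paper intends: the paper gives no separate proof of this corollary but simply invokes Lemma~\ref{Lem:JointComm}, and your argument supplies precisely the missing details (verifying the invariance hypothesis for even Hamiltonians, identifying $\cent(\rho_F)$ via $\rho_F'=\mathbb{H}_F'=\langle\unity,P\rangle$, and intersecting with the skew-hermitian operators). The reduction of the displayed condition to $i(H_2-H_1)\in\cent(\rho_F)$ and the final rewriting in terms of $P_\pm$ are both handled correctly.
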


This also implies that for any physical 
fermionic Hamiltonian $H$, there exists a unique 
Hamiltonian 
\begin{align}
\tilde H:=H - \tfrac{\tr(P_+HP_+)}{\dim P_+^{\phantom{\dagger}} }\;P_+ 
	- \tfrac{\tr(P_-HP_-)}{\dim P_-^{\phantom{\dagger}} }\;P_-
\end{align}
that is traceless on \emph{both} the positive and the negative
parity subspaces, i.e.,
\begin{equation}\label{pos-neg-traceless}
\tr(P_+\tilde H P_+)=\tr(P_-\tilde H P_-)=0\, ,
\end{equation} 
and moreover, $\tilde H$ and $H$ are equivalent and generate the same time evolution.
If necessary, we can restrict ourselves to the 
set of Hamiltonians  satisfying  Eq.~\eqref{pos-neg-traceless}. 
These elements decompose as 
$H=H_+ \oplus H_-$, where $H_+$ and $H_-$ are generic traceless
hermitian operators each acting on a
$2^{d-1}$-dimensional Hilbert space.
We explicitly define the linear space
$\mathbb{F}_{d}$
of physical fermionic Hamiltonians
as generated by the basis of all even Majorana monomials 
without the operators $\unity$ and 
$P$, ensuring that $\mathbb{F}_{d}$ is traceless both on
$H_+$ and $H_-$.---We summarize our exposition on fully controllable fermionic systems
in the following result:
\begin{theorem}\label{general}
The 
Lie algebra corresponding to the physical fermionic 
(and hermitian) Hamiltonians $\mathbb{F}_{d}$ is  
\begin{equation}
\mathcal{L}_d:=\su(2^{d-1})\oplus \su(2^{d-1}).
\end{equation}
The most general set of unitary transformations 
generated by $\mathcal{L}_d$
is given as the block-diagonal
decomposition $\SU(2^{d-1})\oplus \SU(2^{d-1})$. 
Hence 
a set $\{H_0, H_1, H_2, \ldots, H_m \}$ 
of hermitian Hamiltonians defines a 
fully controllable fermionic system if and only if
\begin{align}\label{f-controllability}
\langle iH_0, iH_1, \ldots, iH_m \rangle_{\rm Lie} =\su(2^{d-1})\oplus 
\su(2^{d-1})\;.
\end{align}
\end{theorem}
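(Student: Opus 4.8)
The plan is to show that the Lie closure of the set $\mathbb{F}_d$ of even, $(\unity,P)$-traceless Majorana monomials is exactly $\su(2^{d-1})\oplus\su(2^{d-1})$, and then read off the controllability statement. First I would recall from Subsection~\ref{parity} that every even operator is block-diagonal with respect to the splitting $\mathcal F(\C^d)=\mathcal F_+\oplus\mathcal F_-$ into the two parity eigenspaces, each of dimension $2^{d-1}$; hence $\mathbb{F}_d$, and therefore its Lie closure, sits inside $\su(2^{d-1})\oplus\su(2^{d-1})$ once the traceless normalization \eqref{pos-neg-traceless} is imposed. So the containment $\mathcal{L}_d\subseteq\su(2^{d-1})\oplus\su(2^{d-1})$ is immediate, and the whole content is the reverse inclusion, i.e.\ that the even Majorana monomials generate everything block-diagonal and traceless.

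The key step for the reverse inclusion is a double-commutant / irreducibility argument. Under the Jordan--Wigner transformation the even Majorana monomials map to a spanning set of the even subalgebra of the Pauli algebra on $d$ qubits; the associative algebra $\mathcal{A}$ they generate is the full set of even operators, which by the block-diagonal structure is $\Mat(2^{d-1},\C)\oplus\Mat(2^{d-1},\C)$ (the two summands are distinguished by the central element $P$). In particular $\mathcal{A}$ acts irreducibly on each of $\mathcal F_\pm$ separately, and the two actions are inequivalent (they have opposite $P$-eigenvalue). Now I would invoke the Lie-algebraic fact — essentially the statement underlying Theorem~\ref{thm:double-centralizer} in Appendix~\ref{DoubleCentralizers} — that a set of skew-hermitian operators whose associative closure is $\Mat(n_1,\C)\oplus\Mat(n_2,\C)$ with $n_1,n_2$ the dimensions of two inequivalent irreducible components has Lie closure equal to $\su(n_1)\oplus\su(n_2)$, provided the generators are traceless on each block (which is exactly what excluding $\unity$ and $P$ buys us). Applying this with $n_1=n_2=2^{d-1}$ gives $\mathcal{L}_d=\su(2^{d-1})\oplus\su(2^{d-1})$.

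From the Lie-algebra identification the group statement follows by exponentiation: $\exp(\su(2^{d-1})\oplus\su(2^{d-1}))=\SU(2^{d-1})\oplus\SU(2^{d-1})$ since each factor is compact and connected and the exponential is onto. Finally, the controllability criterion is a direct application of the general principle from Sec.~\ref{quditcontrol}: a control system is fully controllable on the physical state space iff its system algebra $\fg_\Sigma=\langle iH_0,\dots,iH_m\rangle_{\rm Lie}$ exhausts the algebra of all physical Hamiltonian directions, which we have just shown to be $\su(2^{d-1})\oplus\su(2^{d-1})$; here one uses Corollary~\ref{equal_time_evol} to pass freely between a physical Hamiltonian and its $(\unity,P)$-traceless representative without changing the induced dynamics on even states.

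The main obstacle is the reverse inclusion — specifically, making rigorous that the even Majorana monomials minus $\{\unity,P\}$ really do Lie-generate the full block-diagonal traceless algebra rather than some proper reductive subalgebra. The subtlety is that a spanning set of an associative algebra need not Lie-generate the corresponding ``special'' Lie algebra unless one knows the irreducible components are pairwise inequivalent and the elements are appropriately traceless; verifying the inequivalence of the $\mathcal F_+$ and $\mathcal F_-$ representations (via the central $P$) and checking that no residual symmetry survives is where the parity superselection structure does the real work. This is precisely the role of the centralizer/double-centralizer results collected for this purpose, so I would lean on Theorem~\ref{thm:double-centralizer} to close this gap cleanly.
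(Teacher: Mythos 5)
Your containment direction, the exponentiation step, and the reduction of the controllability claim to the identification of $\mathcal{L}_d$ are all fine. The problem is the key step for the reverse inclusion. The ``Lie-algebraic fact'' you invoke --- that a set of skew-hermitian operators whose \emph{associative} closure is $\Mat(n_1,\C)\oplus\Mat(n_2,\C)$, with inequivalent irreducible blocks and block-traceless generators, necessarily has \emph{Lie} closure $\su(n_1)\oplus\su(n_2)$ --- is false. By Burnside's theorem, any set of matrices acting irreducibly on $\C^n$ generates the full associative algebra $\Mat(n,\C)$, yet its Lie closure can be a proper irreducible subalgebra such as $\so(n)$ or $\spp(n/2)$. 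This is not a pedantic point: the paper itself warns in Subsection~\ref{ExAndDisc} that the commutant condition $\{iH_\nu\}'=\langle\unity,P\rangle$ is only \emph{necessary} for full controllability (Lemma~\ref{nec_fer}) and that the double-centralizer property fails for Lie algebras (Corollary~\ref{cor_double}); and Theorem~\ref{thm:double-centralizer}, which you cite as underwriting your step, only identifies the \emph{maximal} Lie algebra with a prescribed commutant, i.e.\ it gives an upper bound on the system algebra, not the claim that an arbitrary generating set with that commutant attains it.

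The gap is easily closed, and in a way that makes the associative-closure detour unnecessary: the ordered even Majorana monomials are \emph{linearly independent}, and after discarding $\unity$ and $P$ there are exactly $2^{2d-1}-2$ of them, each traceless on both parity blocks. Since $\dim_{\R}\bigl[\su(2^{d-1})\oplus\su(2^{d-1})\bigr]=2(2^{2d-2}-1)=2^{2d-1}-2$, the real \emph{linear span} of $i\,\mathbb{F}_d$ already exhausts the ambient block-diagonal traceless algebra; no Lie-generation argument is needed at all, and the ``main obstacle'' you identify at the end is not actually an obstacle. This dimension count is essentially the paper's own proof (there phrased as: the projection onto each block is a subalgebra of $\su(2^{d-1})$, so equality of dimensions forces each projection to be all of $\su(2^{d-1})$ and $\mathbb{F}_d$ to be the full direct sum). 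If you prefer to keep your representation-theoretic framing, you must replace the false principle by this spanning argument; the inequivalence of the two parity blocks and the triviality of the center then serve only to confirm that the result is semisimple with exactly two simple ideals, not to generate it.
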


\begin{remark}\label{remark5}
For Lie algebras, $\fk_1 + \fk_2$ will denote only
an abstract direct sum  without referring to any concrete realization.
We reserve the notation $\fk_1 \oplus \fk_2$
to specify
a direct sum of Lie algebras which is
(up to a change of basis) represented in a block-diagonal form
$
\left(
\begin{smallmatrix}
\fk_1\\
& \fk_2
\end{smallmatrix}
\right)
$.
\end{remark}

\begin{proof}
It follows from Sec.~\ref{FQS} that $\mathbb{F}_{d}$ commutes with
$P$ and that the matrix representation of $\mathbb{F}_{d}$
splits into two blocks of dimension $2^{d-1}$ corresponding to the $+$ and $-$
eigenspaces of $P$. As the center of $\mathbb{F}_{d}$ is given by 
$\mathbb{F}_{d}' \cap \mathbb{F}_{d} = \expt{\unity,P} \cap \mathbb{F}_d = \{0\}$, 
the Lie algebra $\mathbb{F}_d$ is semisimple.
As there are exactly $2^{2d-1}-2$ linear-independent operators 
in $\mathbb{F}_d$, the system algebra could
be $\su(2^{d-1})\oplus \su(2^{d-1})$. And indeed, all other system algebras are ruled out as 
the subalgebras acting on each of the two matrix blocks would
have a smaller Lie-algebra dimension than $\su(2^{d-1})$.
\end{proof}

\subsection{Examples and Discussion\label{ExAndDisc}}

We start out with an example realizing a
fully controllable fermionic system by adding only
one quartic operator to the set of quadratic Hamiltonians
which will be discussed in Section~\ref{QUASI} below
(cf.\ Theorem~\ref{quad_ex_so}):
\begin{proposition}\label{prop_gen_ex}
Consider a fermionic quantum system with
$d > 2$ modes. The system algebra $\mathcal{L}_d=\su(2^{d-1})\oplus 
\su(2^{d-1})$ of a fully controllable 
fermionic system can be generated using the operators
$w_1:=L(v_1)$, $w_2:=L(v_2)$, $w_3:=L(v_3)$, and $w_4:=L(v_4)$ with
the map $L$ as defined in Eqs.~\eqref{eqn:Lv1} and \eqref{eqn:Lv2}, where
\begin{subequations}\label{v_ops}
\begin{gather}
v_1 := \sum_{p=1}^{d-1} - m_{2p-1} m_{2p+2} + m_{2p} m_{2p+1},\\
v_2 := m_1 m_2,\,
v_3 := m_2 m_3,\, 
v_4 := m_1 m_2 m_3 m_4.
\end{gather}
\end{subequations}
\end{proposition}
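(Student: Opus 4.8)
The plan is to show that the Lie closure of the four skew-hermitian operators $w_1,\dots,w_4$ is all of $\mathcal{L}_d = \su(2^{d-1})\oplus\su(2^{d-1})$ by exhibiting enough brackets to generate every even Majorana monomial (modulo $\unity$ and $P$), and then invoking Theorem~\ref{general}. First I would record the commutation rule for Majorana quadratics: $[m_a m_b, m_b m_c] = 2 m_a m_c$ when $a,b,c$ are distinct, and $[m_a m_b, m_c m_d]=0$ when $\{a,b\}\cap\{c,d\}=\emptyset$. Since $w_2 = L(m_1m_2)$ and $w_3=L(m_2m_3)$ are (up to the scalar $w(M)$) just $m_1m_2$ and $m_2m_3$, their bracket yields $m_1m_3$, and then brackets with $w_1$ — whose summands are the quadratics $m_{2p-1}m_{2p+2}$ and $m_{2p}m_{2p+1}$ forming a ``connected chain'' linking consecutive Majorana indices — propagate along the chain. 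The standard argument (as in the quasifree analysis of Sec.~\ref{QUASI}) shows that $v_1$ together with $v_2$ (or $v_3$) already generates the full quadratic algebra $\so(2d)$, i.e.\ the real span of all $L(m_am_b)$ with $a\neq b$; I would state this as the first step, citing Theorem~\ref{quad_ex_so} which the text explicitly flags as the relevant source.

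The second and crucial step is to climb from the quadratic algebra $\so(2d)$ to the full $\mathcal{L}_d$ using the single quartic generator $w_4 = L(m_1m_2m_3m_4)$. The key computation is that bracketing a quartic monomial with a quadratic that shares exactly one index produces another quartic: e.g.\ $[m_1m_2m_3m_4,\, m_4 m_5]\propto m_1m_2m_3m_5$. Iterating with the quadratics already available (which by Step 1 include every $m_am_b$), one reaches every ordered quartic monomial $m_am_bm_cm_d$. Then brackets of two quartics sharing two indices, e.g.\ $[m_am_bm_cm_d,\, m_cm_dm_em_f]\propto m_am_bm_em_f$, stay quartic, while quartics sharing one index, $[m_am_bm_cm_d,\, m_dm_em_fm_g]\propto m_am_bm_cm_em_fm_g$, produce degree-$6$ monomials; continuing, one generates all even-degree Majorana monomials. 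Because each such monomial of degree $2k$ with $2k<2d$ is linearly independent of $\unity$ and $P$, and the one of degree $2d$ is $\pm P$ (excluded), the Lie closure contains a spanning set of $\mathbb{F}_d$, hence equals $\mathcal{L}_d$ by Theorem~\ref{general}. A small bookkeeping point: the scalars $w(M)$ from Eq.~\eqref{eqn:Lv2} are chosen precisely so that the bracket of two $L(M)$'s is again a real multiple of some $L(M')$, so the closure stays inside $\uu(2^d)$ with no stray factors of $i$; I would note this rather than track the phases explicitly.

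The main obstacle is the connectivity bookkeeping in Step 1 — verifying that the specific ``twisted'' hopping pattern in $v_1$, namely the pairs $(2p-1,2p+2)$ and $(2p,2p+1)$, really does link all $2d$ Majorana indices into a single connected graph when combined with the extra edge $m_1m_2$ (or $m_2m_3$), so that the generated quadratic algebra is the full $\so(2d)$ and not a proper subalgebra such as $\so(2d-1)$ or a block-decomposable one. I would handle this by drawing the index graph explicitly for small $d$, observing the edges $1\!-\!4,\,2\!-\!3,\,3\!-\!6,\,4\!-\!5,\,5\!-\!8,\,6\!-\!7,\dots$ together with $1\!-\!2$, and checking it is connected and spans; connectivity of the edge set of a semisimple quadratic Hamiltonian algebra is exactly the criterion that forces irreducibility of the $\so(2d)$-action (this is the content of the quadratic results proven later in the paper, so I would simply cite Theorem~\ref{quad_ex_so} for the rigorous version and use the graph picture only as motivation). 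The condition $d>2$ enters here to guarantee the quartic term $m_1m_2m_3m_4$ is genuinely quartic and not proportional to $P$ (which happens at $d=2$), so that Step 2 has room to generate higher monomials.
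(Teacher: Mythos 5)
Your overall architecture is the same as the paper's: (i) obtain the full quadratic algebra $\so(2d)$ from $w_1,w_2,w_3$ by citing Theorem~\ref{quad_ex_so}, (ii) use quadratics sharing exactly one index with a monomial to replace indices and thus reach all quartics from $w_4$, and (iii) climb the degree in steps of two by bracketing a quartic against a higher monomial sharing exactly one index, stopping short of the degree-$2d$ monomial $\pm P$. That is precisely the published argument, so the proof strategy is sound.

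Two concrete slips need fixing. First, your parenthetical claim that ``$v_1$ together with $v_2$ (or $v_3$) already generates the full quadratic algebra $\so(2d)$'' is false: the paper's own Theorem~\ref{quad_ex_u} shows that $\{w_1,w_2\}$ generates only $\uu(d)$, a proper subalgebra of dimension $d^2$ inside $\so(2d)$ of dimension $d(2d-1)$. All three quadratic generators are needed, which is exactly what Theorem~\ref{quad_ex_so} (which you correctly cite) asserts; the text of your step should be corrected accordingly. Second, your claim that two quartics sharing two indices bracket to a quartic, $[m_am_bm_cm_d,\,m_cm_dm_em_f]\propto m_am_bm_em_f$, is wrong: two Majorana monomials of degrees $k_1,k_2$ sharing $s$ indices satisfy $M_1M_2=(-1)^{k_1k_2-s}M_2M_1$, so with $k_1=k_2=4$ and $s=2$ they commute and the bracket vanishes. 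Fortunately this step is redundant in your argument (all quartics are already obtained via index replacement with quadratics), so deleting it leaves the proof intact; only monomials sharing exactly \emph{one} index give nonvanishing brackets of the kind you need.
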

\begin{proof}
It follows from the independent Theorem~\ref{quad_ex_so} (see Sec.~\ref{QUASI} below)
that $w_1$, $w_2$, and $w_3$ generate all quadratic Majorana monomials $m_p m_q$.
Consider an even Majorana monomial
$s_1:=L(\prod_{i\in \mathcal{I}} m_i)$  of degree $2d'$, where $s_2$ is defined
using the ordered index set $\mathcal{I}$, and a  quadratic operator $s_2:=L(m_p m_q)$
with $p \in \mathcal{I}$ and $q \notin \mathcal{I}$.
We can change any index $p$ of $s_1$ into $q$ of using
$L(\prod_{k\in (\mathcal{I}\setminus \{p\} ) \cup \{q\}} m_k)=\pm[s_1,s_2]$.
Therefore, we get from 
$w_4$ and the quadratic operators
all Majorana monomials of degree four.

Using the quartic Majorana monomials
 we can increase the degree of the monomials in steps of two: 
Consider the operators
$s_3:=L(\prod_{i\in \mathcal{I}} m_i)$ and $s_4:=L(\prod_{j\in \mathcal{J}} m_j)$
which are defined using the
ordered index sets $\mathcal{I}$ and $\mathcal{J}$
and have degrees $2d''<2(d-1)$ and $4$, respectively.
Assuming that $\abs{\mathcal{I} \cap \mathcal{J}}=1$, we can generate
an operator $L(\prod_{k\in \mathcal{K}} m_k)=\pm[s_3,s_4]$ 
of degree $\abs{\mathcal{K}}=2(d''+1)<2d$
where the corresponding ordered index set is given by $\mathcal{K}:=
(\mathcal{I} \cup \mathcal{J})\setminus(\mathcal{I} \cap \mathcal{J})$.
By induction, we can now generate all even Majorana monomials except $L(\prod_{q=1}^{2d} m_q)$.
Note that $L(\prod_{q=1}^{2d} m_q)$ cannot be obtained
as $\mathcal{I} \cap \mathcal{J} \nsubseteq \mathcal{K}$ holds by construction.
Thus, we get all elements of $\mathcal{L}_d$ (see Subsection~\ref{DSA})
and the proposition follows.
\end{proof}

The proof also implies that all the operators generated
commute with 
$\prod_{q=1}^{2d} m_q=P/i^d$ [cf.~Eq.~\eqref{P_def}] (and the identity operator $\unity$). 
In addition,
all operators commuting simultaneously with all elements of  $\mathcal{L}_d$ 
can be written as a complex-linear combination
of $\unity$ and $P$. We thus obtain a partial characterization of full controllability
in fermionic systems:
\begin{lemma}\label{nec_fer}
Consider a fermionic quantum system with
$d \geq 2$ modes. A necessary condition
for full controllability of a given set of hermitian Hamiltonians
$H_v$
is that $\{i H_v\}'=  \langle \unity,P \rangle$.
\end{lemma}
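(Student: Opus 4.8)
The plan is to trace everything back to the double-commutant structure of $\mathcal{L}_d=\su(2^{d-1})\oplus\su(2^{d-1})$ that already underlies the proof of Theorem~\ref{general}. By that theorem, the hypothesis that $\{H_v\}$ is fully controllable means \emph{exactly} that $\langle iH_v\rangle_{\rm Lie}=\mathcal{L}_d$, so it suffices to show that this forces $\{iH_v\}'=\langle\unity,P\rangle$.

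\emph{First step: pass from the generators to the system algebra.} I would observe that $\{iH_v\}'=(\langle iH_v\rangle_{\rm Lie})'$. The inclusion $(\langle iH_v\rangle_{\rm Lie})'\subseteq\{iH_v\}'$ is trivial since the generators lie in their own Lie closure; the converse is the Jacobi-identity argument already invoked for the centralizer in Sec.~\ref{quditcontrol} — if $X$ commutes with $A$ and with $B$ then $[X,[A,B]]=0$, and $X$ commutes with all complex linear combinations as well, hence with the whole Lie closure. Under full controllability this gives $\{iH_v\}'=\mathcal{L}_d'$.

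\emph{Second step: identify $\mathcal{L}_d'=\langle\unity,P\rangle$.} The inclusion $\supseteq$ is immediate: $\mathcal{L}_d$ is generated by $\mathbb{F}_d$, whose elements are even operators and thus commute with $P$ (Sec.~\ref{FQS}), and trivially with $\unity$. For $\subseteq$, write every operator in $2\times2$ block form with respect to the positive and negative parity eigenspaces of $P$; on these two $2^{d-1}$-dimensional spaces $\mathcal{L}_d$ acts block-diagonally as $\su(2^{d-1})\oplus\su(2^{d-1})$. Each diagonal block is the defining representation of $\su(2^{d-1})$, which is irreducible, so Schur's lemma makes the diagonal entries of any operator in $\mathcal{L}_d'$ scalars; and the two blocks carry inequivalent $\mathcal{L}_d$-modules (each $\su$ factor acts by zero on the other block), so there is no off-diagonal intertwiner and the off-diagonal entries vanish. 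Hence $\mathcal{L}_d'=\{\lambda P_+ +\mu P_-\}=\langle\unity,P\rangle$, using $P_\pm=\tfrac12(\unity\pm P)$ — the same computation already recorded in the discussion preceding the lemma and in the proof of Theorem~\ref{general}. Combining the two steps yields $\{iH_v\}'=\langle\unity,P\rangle$.

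\emph{Main obstacle and a caveat.} The only delicate point — and it is minor — is the Schur step: one has to use \emph{both} irreducibility of each diagonal block and inequivalence of the two blocks, since otherwise off-diagonal commutants could not be excluded; everything else is bookkeeping, and nothing changes if $\langle iH_v\rangle_{\rm Lie}$ happens to be an intermediate algebra between $\su\oplus\su$ and $\uu\oplus\uu$, as the commutant is unaffected. I would also emphasize that the condition is only \emph{necessary}: the commutant can equal $\langle\unity,P\rangle$ while $\langle iH_v\rangle_{\rm Lie}$ is a proper subalgebra of $\mathcal{L}_d$ — for instance a symplectic-type subalgebra acting irreducibly on each parity block — which is precisely why the lemma is stated as a one-way implication.
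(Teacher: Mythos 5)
Your proof is correct and follows essentially the same route as the paper: the paper derives the lemma from the observation that the commutant of the generating set equals the commutant of the Lie closure $\mathcal{L}_d=\su(2^{d-1})\oplus\su(2^{d-1})$, which in turn is $\langle\unity,P\rangle$ because the two parity blocks carry irreducible and inequivalent modules. You merely spell out the Jacobi-identity and Schur's-lemma details that the paper leaves implicit.
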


One can expect that the condition of Lemma~\ref{nec_fer}
is not sufficient under any reasonable assumption by
applying counterexamples from spin systems in \cite{ZS11}.
These counterexamples 
could be lifted to fermionic systems by providing the explicit form 
of the embeddings from $\su(2^{d-1})$ to the first and second 
component of the direct sum $\mathcal{L}_d=\su(2^{d-1})\oplus 
\su(2^{d-1})$.

We guide the discussion in a different direction
by emphasizing that the property $\{i H_v\}' = \langle \unity,P \rangle$
does not determine the system algebra uniquely.
We define the centralizer of a set $B\subseteq\su(k)$ in $\su(k)$ (e.g.\ $k=2^d$) as
$$\cent_{\su(k)}(B):=\{ g\in\su(k) \, | \, [g, b]=0 \text{ for all } b \in B \}.$$
We consider the algebras $\mathcal{L}_d=\su(2^{d-1}) \oplus \su(2^{d-1})$
and $\mathrm{s}[\uu(2^{d-1}) \oplus \uu(2^{d-1})]$, where 
the latter algebra is isomorphic to
$\su(2^{d-1}) + \su(2^{d-1}) + \uu(1)$ and contains the additional
(non-physical) generator $L(\prod_{q=1}^{2d} m_q)$. Note 
that $\cent_{\su(k)}(\mathcal{L}_d)=
\cent_{\su(k)}(\mathrm{s}[\uu(2^{d-1}) \oplus \uu(2^{d-1})])
=L(\prod_{q=1}^{2d} m_q )$, i.e.,
the centralizers of both algebras are equal.
However  $\cent_{\su(k)}[L( \prod_{q=1}^{2d} m_q )]=\mathrm{s}[\uu(2^{d-1}) \oplus \uu(2^{d-1})]
\neq \su(2^{d-1})\oplus \su(2^{d-1})$. 
In particular,  we have $\mathcal{L}_d\neq\cent_{\su(k)}(\cent_{\su(k)}(\mathcal{L}_d))$,
and $\mathcal{L}_d$ does not fulfill
the double-centralizer property.
A more general incarnation of this effect in line with a discussion
of double centralizers is given in Appendix~\ref{DoubleCentralizers}.
It leads in the case of irreducible subalgebras to the following maximality result:
\begin{corollary}\label{cor_double}
Let $\fg$ denote an irreducible subalgebra of $\su(k)$, i.e.\ 
$\cent_{\su(k)}(\fg)=\{0\}$. Then one finds that
$\cent_{\su(k)}(\cent_{\su(k)}(\fg))=\fg$ if and only if $\fg=\su(k)$.
\end{corollary}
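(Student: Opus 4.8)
The plan is to unwind the definitions: the whole content of the corollary is that for an irreducible subalgebra the outer centralizer collapses to all of $\su(k)$, so the double-centralizer property can only hold when $\fg$ is already the whole algebra. There is no serious obstacle here; the statement is essentially formal once the standard fact behind the word ``irreducible'' has been recorded.

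First I would dispose of the easy implication. If $\fg=\su(k)$, then $\su(k)$ is centerless (for every $k\geq 1$), so $\cent_{\su(k)}(\fg)=\cent_{\su(k)}(\su(k))=\{0\}$; and $\cent_{\su(k)}(\{0\})=\su(k)$ since every element of $\su(k)$ commutes with the zero operator. Hence $\cent_{\su(k)}(\cent_{\su(k)}(\fg))=\su(k)=\fg$, establishing the double-centralizer property.

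For the converse, assume $\fg$ is irreducible, which by hypothesis means $\cent_{\su(k)}(\fg)=\{0\}$. (This is just Schur's lemma: a subalgebra of $\su(k)$ acting irreducibly on $\C^{k}$ has commutant $\C\unity$ in $\Mat_{k}(\C)$, and $\C\unity\cap\su(k)=\{0\}$ because a nonzero scalar skew-hermitian matrix is not traceless.) Then $\cent_{\su(k)}(\cent_{\su(k)}(\fg))=\cent_{\su(k)}(\{0\})=\su(k)$. So if in addition $\cent_{\su(k)}(\cent_{\su(k)}(\fg))=\fg$ holds, we immediately conclude $\fg=\su(k)$.

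The ``main obstacle'', such as it is, is purely bookkeeping: making sure that ``irreducible'' is being used precisely in the form $\cent_{\su(k)}(\fg)=\{0\}$ stated in the corollary, so that the first centralizer is genuinely $\{0\}$ rather than some proper ideal. Once that is granted, the argument is the two-line computation above; it is the special case of the general double-centralizer discussion in Appendix~\ref{DoubleCentralizers}, where one has $\fg\subseteq\cent_{\su(k)}(\cent_{\su(k)}(\fg))$ in general, with equality forced only in the maximal case.
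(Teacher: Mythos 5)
Your proof is correct and follows exactly the paper's line of reasoning: the hypothesis $\cent_{\su(k)}(\fg)=\{0\}$ makes the double centralizer equal to $\cent_{\su(k)}(\{0\})=\su(k)$, so the double-centralizer property forces $\fg=\su(k)$, while the converse is the trivial computation for $\su(k)$ itself. This is the same argument the paper sketches in the paragraph preceding the corollary in Appendix~\ref{DoubleCentralizers}.
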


To sum up, the symmetry properties of a Lie algebra $\fg\subseteq \su(k)$,
as given by its commutant w.r.t.\ a representation of $\fg$,
do \emph{not} determine the Lie algebra $\fg$ uniquely. Yet the commutant allows us to infer a 
\emph{unique maximal Lie algebra}
contained in $\su(k)$, which is (up to an identity matrix) equal to the double commutant of $\fg$,
but in general not to $\fg$ itself.

\section{Quasifree Fermions\label{QUASI}}

Here we present the dynamic system algebras for fermions with
quadratic Hamiltonians. For illustration, also the relation to spin chains 
is worked out in detail. In this context, we show by free fermionic techniques
that a Heisenberg-XX Hamiltonian of Eq.~\eqref{xx} combined with the one-site term
$ih_{\rm{0}}=i
\mathrm{Z}\otimes \mathrm{I} \otimes \cdots \otimes \mathrm{I}=m_1m_2$  and the
two-site interaction
$ih_{12}=i
\mathrm{X}\otimes \mathrm{X} \otimes\mathrm{I} \otimes \cdots \otimes \mathrm{I}=m_2m_3$
gives rise to the system algebra $\so(2d)$ (see Theorem~\ref{quad_ex_so}), while
the first two operators generate only the subalgebra $\uu(d)$
(see Theorem~\ref{quad_ex_u}). Further results along this line are presented in 
Appendix~\ref{appl_quasi}.

Finally, we arrive at a very useful general result:
In order to decide if a set of operators generates the full quadratic algebra
for $d$ modes,
 we characterize quadratic operators 
by a real skew-symmetric matrix $T$ whose entries are given via
$-\tfrac{1}{2}\sum_{k,\ell}^{2d} T_{k\ell}\, m_k m_\ell$
(see Eq.~\eqref{stand_Maj}).
Adapting our tensor-square criterion for full controllability from spin systems \cite{ZS11}
to quasifree fermionic systems, a set of operators $T_\nu$ generates the full quadratic
algebra $\so(2d)$ if and only if the joint commutant of the operators
$T_\nu \otimes \unity_{2d} + \unity_{2d} \otimes T_\nu$
has dimension three (see Corollary~\ref{cor_tensor_three}).

\subsection{Quadratic Hamiltonians}

A general quadratic Hamiltonian of a fermionic system  
can be written as (cf.\ \cite{LSM61,Berezin,BR86,kraus-pra71,ZKZZ10}) 
\begin{equation}\label{Hqfree}
H=\sum_{p,q=1}^d A_{pq} (f_p^{\dagger}f_q{-}\delta_{pq} \tfrac{\unity}{2}) +
\frac{1}{2}B_{pq} f_{p}^{\dagger}f_{q}^{\dagger}-
\frac{1}{2}B_{pq}^{*} f_p f_{q},
\end{equation}
where the coupling coefficients $A_{pq}$ and $B_{pq}$ are complex entries of 
the $d\times d$-matrices $A$ and $B$, respectively.
The canonical anticommutation relations
and the hermiticity of $H$ require that $A$ is hermitian and $B$ is (complex) skew-symmetric.
The terms corresponding to the non-zero matrix entries
of $A$ and $B$ are usually referred to as \emph{hopping} 
and \emph{pairing} terms, respectively.
Related parameterizations for quadratic Hamiltonians are discussed 
in Appendix~\ref{ParaQuadratic}.

In the Majorana monomial basis, the
quadratic Hamiltonian $H$ can be rewritten such that
\begin{equation}\label{stand_Maj}
-i H= \sum_{k,\ell=1}^{2d} T_{k\ell}\, \left[ -\frac{1}{2}  m_k m_\ell \right]
\end{equation}
with
\begin{align*}
T=\frac{1}{2}& \left[ 
\Re(A) \otimes
\begin{pmatrix}
0 & 1\\
-1 & 0
\end{pmatrix}
+
\Re(B) \otimes
\begin{pmatrix}
0 & -1\\
1 & 0
\end{pmatrix}\right. \\
&\, \left.
+
\Im(A) \otimes
\begin{pmatrix}
-1 & 0\\
0 & -1
\end{pmatrix}
+
\Im(B) \otimes
\begin{pmatrix}
-1 & 0\\
0 & 1
\end{pmatrix}
\right].
\end{align*}
The properties of $A$ and $B$ directly imply that the matrix $T$ 
is real and skew-symmetric. Using the formula
\begin{subequations}\label{eq:quad-majorana}
\begin{gather}
[m_p m_q, m_r m_s]= - 4(\delta_{ps} \delta_{qr} \unity - \delta_{qs} 
\delta_{pr} \unity) \nonumber \\
+2(\delta_{ps}m_qm_r{-}\delta_{pr}m_qm_s{+}\delta_{qr}m_p m_s{-}\delta_{qs}m_pm_r)\\
=\delta_{ps} (m_q m_r - m_r m_q) - \delta_{pr} (m_q m_s - m_s m_q) \nonumber \\ +
\delta_{qr} (m_p m_s - m_s m_p) - \delta_{qs} (m_p m_r - m_r m_p) \label{eq:quad-majorana_b}
\end{gather}
\end{subequations}
one can easily verify that the space of quadratic Hamiltonians is closed under the 
commutator. To sum up, we have established the well-known Lie homomorphism from
the system algebra generated by  
a set of quadratic Hamiltonians (whose
control functions are given by the matrix entries of $A$ and $B$)
onto the system algebra $\so(2d)$ represented by the entries of $T$
(cf.\ pp.~183-184 of \cite{SW86}):
\begin{proposition}\label{quasifree}
The maximal system algebra for a system of quasifree
fermions with $d$ modes is given by $\so(2d)$.  
\end{proposition}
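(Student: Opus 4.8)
The plan is to make the Lie homomorphism alluded to just before the statement fully explicit and then read off its image. Recall from Eq.~\eqref{stand_Maj} that for every quadratic Hamiltonian $H$ of the form~\eqref{Hqfree} one has $-iH = \sum_{k,\ell=1}^{2d} T_{k\ell}\bigl[-\tfrac12 m_k m_\ell\bigr]$ with $T$ real and skew-symmetric, so that $-iH$ lies in the real span
\[
  \mathcal{T}_d := \operatorname{span}_{\R}\bigl\{\, -\tfrac12 m_k m_\ell \;\big|\; 1 \le k < \ell \le 2d \,\bigr\}.
\]
Each $m_k$ is hermitian and $m_k m_\ell = -m_\ell m_k$ for $k\neq\ell$, so every generator $-\tfrac12 m_k m_\ell$ is skew-hermitian and $\mathcal{T}_d \subseteq \uu(2^d)$; and since ordered Majorana monomials are linearly independent, $\dim_\R \mathcal{T}_d = \binom{2d}{2} = \dim \so(2d)$.

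First I would check that $\mathcal{T}_d$ is a Lie subalgebra of $\uu(2^d)$: this is exactly Eq.~\eqref{eq:quad-majorana_b}, which expresses $[m_p m_q, m_r m_s]$ as a real combination of differences $m_a m_b - m_b m_a$, and each such difference equals $2 m_a m_b \in \mathcal{T}_d$ when $a\neq b$ and vanishes when $a=b$ --- in particular the scalar ($\unity$-proportional) terms in the first line of~\eqref{eq:quad-majorana} always cancel. Hence $[\mathcal{T}_d,\mathcal{T}_d]\subseteq \mathcal{T}_d$, and the system algebra generated by any collection of quadratic Hamiltonians is contained in $\mathcal{T}_d$.

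Next I would identify $\mathcal{T}_d$ with $\so(2d)$. In the standard basis $X_{pq} := E_{pq}-E_{qp}$ of $\so(2d)$ (with $X_{aa}:=0$ and $X_{ba}:=-X_{ab}$) one has $[X_{pq},X_{rs}] = \delta_{qr}X_{ps}-\delta_{pr}X_{qs}-\delta_{qs}X_{pr}+\delta_{ps}X_{qr}$. Define $\Phi:\so(2d)\to\mathcal{T}_d$ by $\Phi(X_{pq}):=\tfrac12 m_p m_q$, extended $\R$-linearly; the dimension count above together with the fact that its images span $\mathcal{T}_d$ makes $\Phi$ a linear isomorphism, and comparing the above bracket relations with~\eqref{eq:quad-majorana_b}, rewritten using $m_a m_b - m_b m_a = 2 m_a m_b$, yields $\Phi([X_{pq},X_{rs}])=[\Phi(X_{pq}),\Phi(X_{rs})]$, so $\Phi$ is a Lie-algebra isomorphism. (Consistently, for $d\ge 2$ one has $\mathcal{T}_d\subseteq\mathbb{F}_d\cong\su(2^{d-1})\oplus\su(2^{d-1})$, with $\so(2d)$ acting through its two half-spinor representations, one on each parity sector.)

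Finally, \emph{maximality} follows by combining the two halves: on the one hand, the system algebra of any quasifree system is the Lie closure of operators lying in $\mathcal{T}_d$, hence a subalgebra of $\mathcal{T}_d\cong\so(2d)$; on the other hand $\mathcal{T}_d$ is itself realised as a system algebra --- trivially by taking all the $-\tfrac12 m_p m_q$ as control Hamiltonians, and with only three generators by Theorem~\ref{quad_ex_so}. Therefore the maximal system algebra for a system of $d$ quasifree fermions is $\so(2d)$. The only mildly delicate point in the whole argument is the bookkeeping of the normalisation $\tfrac12$ in $\Phi$ together with the coincident-index cases of~\eqref{eq:quad-majorana}; no deeper obstacle arises.
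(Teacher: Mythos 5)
Your proposal is correct and follows essentially the same route as the paper: both rest on the commutator identity~\eqref{eq:quad-majorana_b} to show that the real span of the operators $-\tfrac{1}{2}m_k m_\ell$ is a Lie algebra isomorphic to $\so(2d)$, your map $\Phi$ being just the inverse of the paper's homomorphism $h$ (with a harmless sign/normalisation choice that your bracket check confirms is consistent). The extra bookkeeping you supply --- the dimension count $\binom{2d}{2}$, the cancellation of the $\unity$-proportional terms, and the remark that maximality is actually attained via Theorem~\ref{quad_ex_so} --- is all sound and merely makes explicit what the paper leaves implicit.
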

\begin{proof}
Let the map $h$ transform the Majorana monomial $-\tfrac{1}{2}(m_p m_q {-} m_q m_p)$ into
the skew-symmetric matrix $e_{pq}{-}e_{qp}$ where $e_{pq}$ has the matrix entries
$[e_{pq}]_{uv}:=\delta_{pu} \delta_{qv}$. We show that  $h$ 
is a Lie-homomorphism assuming $p\neq q$ and $r\neq s$ in the following, while the case of
$p=q$ or $r=s$ holds trivially.
Note that  $\tfrac{1}{2}(m_p m_q - m_q m_p)=m_p m_q$. It follows from
Eq.~\eqref{eq:quad-majorana_b}
that $h([-\tfrac{1}{2} (m_p m_q - m_q m_p),-\tfrac{1}{2} (m_r m_s {-} m_s m_r)])
=[(e_{pq} - e_{qp}),(e_{rs} - e_{sr})]=
[h(-\tfrac{1}{2}(m_p m_q - m_q m_p)),h(-\tfrac{1}{2}(m_r m_s - m_s m_r))]$.
\end{proof}

\subsection{Examples and Explicit Realizations}

We start by showing that the full system algebra $\so(2d)$
of quasifree fermions can be generated using
only three quadratic operators, namely
$w_1=L(v_1)$, $w_2=L(v_2)$, and $w_3=L(v_3) $ from Eq.~\eqref{v_ops}
where $v_1 = \sum_{p=1}^{d-1} - m_{2p-1} m_{2p+2} + m_{2p} m_{2p+1}$,
$v_2 = m_1 m_2$, and  $v_3 = m_2 m_3$. The Jordan-Wigner transformation maps
these generators respectively to the Heisenberg-\XX term
\begin{equation}\label{xx}
iH_{\XX}=-\frac{i}{2}\sum_{p=1}^{d-1}\left( \mathrm{X}_{p}\mathrm{X}_{p+1} + 
\mathrm{Y}_{p}\mathrm{Y}_{p+1}\right), 
\end{equation}
$-\frac{i}{2}\mathrm{Z}_1$, and $-\frac{i}{2}\mathrm{X}_1\mathrm{X}_2$, where
operators as (e.g.) $\mathrm{Z}_1$ are defined
as $\mathrm{Z}\otimes \mathrm{I} \otimes \cdots \otimes \mathrm{I}$.

\begin{lemma}\label{quad_ex_lem}
Consider a fermionic quantum system with $d\geq 2$ modes.
The system algebras $\fk_1$ and $\fk_2$ generated by the set of Lie generators
$\{w_1,w_2\}$ and $\{w_1,w_2,w_3\}$ contain
the elements $L(a_p)$ with $a_p:=m_{2p-1} m_{2p}$ for all $p\in \{1,\ldots,d\}$
as well as $L(b_p)$ with
$b_p:=- m_{2p-1} m_{2p+2} + m_{2p} m_{2p+1}$
and $L(c_p)$ with
$c_p:= m_{2p-1} m_{2p+1} + m_{2p} m_{2p+2}$
for all $p\in \{1,\ldots,d{-}1\}$.
\end{lemma}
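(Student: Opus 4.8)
The plan is to work entirely inside the quadratic algebra $\so(2d)$, using the matrices $T$ of Eq.~\eqref{stand_Maj} as a bookkeeping device and exploiting the bracket formula~\eqref{eq:quad-majorana_b}. Writing $A_p := L(a_p)$, $B_p := L(b_p)$, $C_p := L(c_p)$ with $a_p = m_{2p-1}m_{2p}$, $b_p = -m_{2p-1}m_{2p+2}+m_{2p}m_{2p+1}$, $c_p = m_{2p-1}m_{2p+1}+m_{2p}m_{2p+2}$, I first record that $w_1 = \sum_{p=1}^{d-1} B_p$, $w_2 = A_1$, $w_3 = L(m_2m_3)$, and note that all brackets below are computed via $[m_pm_q,m_rm_s]$ from Eq.~\eqref{eq:quad-majorana}, which keeps one inside the span of quadratic Majorana monomials.

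The key steps, in order: (i) Show $[w_2, w_1] = [A_1, \sum_p B_p]$ produces, up to sign and a factor, exactly $C_1 = L(m_1m_3 + m_2m_4)$, since $A_1 = L(m_1m_2)$ has overlap only with the $p=1$ summand $b_1 = -m_1m_4 + m_2m_3$ and the bracket $[m_1m_2, -m_1m_4+m_2m_3]$ collapses (via $\delta_{ps}m_qm_r$-type terms) to a multiple of $m_1m_3 + m_2m_4$. This gives $C_1 \in \fk_1$. (ii) Bracket $C_1$ back against $w_1 = \sum_p B_p$: the overlap is now with both $b_1$ and $b_2$, and one checks $[C_1, b_1 + b_2]$ yields (a multiple of) $A_1 - A_2$ plus possibly $B_1$-type terms that are already present; combined with $A_1 \in \fk_1$ this delivers $A_2$. (iii) Iterate: having $A_p$ and $C_p$ and $w_1$, bracketing appropriately walks the index up the chain, producing $A_{p+1}$, $B_{p+1}$ (this one is already a summand, but one needs the individual term, obtained as $[A_{p+1}, \text{something}]$ or by a difference argument), and $C_{p+1}$. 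Run this as an induction on $p$ from $1$ to $d-1$ (with $A_d$ coming out at the last step). (iv) To extract the individual $B_p$ from the sum $\sum_p B_p$: once all $A_p$ are available, note $[A_p, w_1] = [A_p, \sum_q B_q]$ involves only $B_{p-1}$ and $B_p$ (the ones sharing an index with $a_p = m_{2p-1}m_{2p}$), producing a combination of $C_{p-1}$ and $C_p$; a second bracket, or a linear-algebra argument over the now-known spanning set, isolates each $B_p$. This handles $\fk_1 = \langle w_1, w_2\rangle$; for $\fk_2 \supseteq \fk_1$ the same elements are present a fortiori, and $w_3 = L(m_2m_3)$ is only used later (in Theorem~\ref{quad_ex_so}) to climb to all of $\so(2d)$, so nothing extra is needed here.

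The main obstacle I anticipate is the index-overlap bookkeeping in the inductive step (iii)--(iv): the bracket $[C_p, \sum_q B_q]$ picks up contributions from the two summands $b_{p-1}$ and $b_p$ (and $[C_p, B_p]$ itself is not simply proportional to a single generator), so one must carefully track which quadratic monomials appear and argue — using the $A_p$'s already in hand and the explicit small $4\times 4$ or $6\times 6$ pattern of $T$-matrices — that the new monomial $m_{2p+1}m_{2p+2}$ and its partners can be disentangled. I would organize this by passing to the matrix side: each of $A_p, B_p, C_p$ corresponds under $h$ to an explicit sparse skew-symmetric $2d\times 2d$ matrix supported on the $2\times 2$ block structure indexed by modes, the bracket is ordinary matrix commutator, and the claim reduces to: the matrices $h(b_p)$ together with $h(a_1)$ generate (under repeated commutators) all $h(a_p), h(b_p), h(c_p)$. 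That is a concrete finite linear-algebra statement that can be verified by the index-chasing above, and the factor-of-$\pm1$ and phase issues from the map $L$ and $w(M)$ are immaterial since we only track linear span.
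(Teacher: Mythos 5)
Your overall strategy---walking up the chain by iterated brackets of quadratic Majorana monomials---is the same as the paper's, and step (i) is correct: $[w_2,w_1]$ overlaps only with the $p=1$ summand of $w_1$ and gives $L(c_1)$ up to sign and normalization. The break is in step (ii). The bracket of $c_1=m_1m_3+m_2m_4$ with $b_2=-m_3m_6+m_4m_5$ is proportional to $-m_1m_6+m_2m_5$, which is a range-two hopping term (the element $b_1^{(2)}$ in the notation of Appendix~\ref{proof_u_d}), \emph{not} ``a $B_1$-type term that is already present.'' It lies outside the target set $\{a_p,b_p,c_p\}$ and is not yet known to belong to $\fk_1$ at this stage, so it cannot be subtracted off to isolate the $a_2-a_1$ piece coming from $[c_1,b_1]$. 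Hence $[C_1,w_1]$ does not deliver $A_2$, and steps (iii)--(iv), which presuppose that the $A_p$ are already in hand, never get started. (Your obstacle paragraph also undercounts the overlaps: $c_p$ shares indices with $b_{p-1}$, with $b_p$, \emph{and} with $b_{p+1}$, and it is precisely the outer two overlaps that generate the unwanted longer-range terms.)

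The repair is exactly the paper's proof, and it hinges on an identity you mention only in passing and too late in the logical order: since $C_1\propto[w_2,w_1]$ and $A_1=w_2$ are each \emph{individually} known, the single bracket $[C_1,A_1]\propto B_1$ isolates $L(b_1)$ first, with no contamination from other summands. Then $[B_1,C_1]$ is a linear combination of $a_1$ and $a_2$ with nonvanishing $a_2$-coefficient, which together with $w_2$ yields $L(a_2)$; finally, subtracting $L(b_1)$ from $w_1$ leaves the analogous hopping generator for the chain on modes $2,\ldots,d$, and the argument closes by induction on $d$ after direct verification of the small cases. The workable order is therefore $C_1\to B_1\to A_2\to$ recurse; your order $C_1\to A_2\to\cdots$ forces a bracket against the full sum $w_1$ that inevitably produces uncontrolled range-two terms, and the ``linear-algebra argument'' you defer to step (iv) has nothing to act on at the point where it is needed.
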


Note that the elements $L(a_p)$, $L(b_p)$, and $L(c_p)$ are mapped by 
the Jordan-Wigner transformation to the spin operators
$-i\mathrm{Z}_p /2$, 
$-i(\mathrm{X}_{p}\mathrm{X}_{p+1} + \mathrm{Y}_{p}\mathrm{Y}_{p+1})/2$, and 
$-i(\mathrm{X}_{p}\mathrm{Y}_{p+1} - \mathrm{Y}_{p}\mathrm{X}_{p+1})/2$, respectively.

\begin{proof}
We compute the commutators
$w_4:= -L(c_1)=[w_2,w_1]$, 
$w_5:= L(b_1)=[w_4,w_2]$, and 
$w_6:= L(a_2)=[w_5,w_4]-w_2$. 
We can now reduce the problem from $d$ to $d-1$ by subtracting $w_5$ from $w_1$. 
The cases of $d\in\{2,3,4\}$ can be verified directly and the proof is
completed by induction.
\end{proof}

This proof also yields an explicit realization for the algebra $\so(2d)$
while providing a more direct line of reasoning as compared to our proof
of Theorem~32 in \cite{ZS11}.
\begin{theorem}\label{quad_ex_so}
Consider a fermionic quantum system with $d\geq 2$ modes.
The system Lie algebra $\fk_2$ generated by $\{w_1,w_2,w_3\}$
is given by $\so(2d)$.
\end{theorem}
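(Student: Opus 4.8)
The plan is to show that the Lie algebra $\fk_2$ generated by $\{w_1,w_2,w_3\}$ exhausts all of $\so(2d)$, given that (by Proposition~\ref{quasifree}) $\so(2d)$ is the maximal possible system algebra for quasifree fermions and that $\fk_2 \subseteq \so(2d)$. Since $\so(2d)$ is spanned by the images $L(m_k m_\ell)$ of the quadratic ordered Majorana monomials for $1 \le k < \ell \le 2d$, it suffices to generate every such $L(m_k m_\ell)$ inside $\fk_2$. By Lemma~\ref{quad_ex_lem} we already have in hand the operators $L(a_p) = L(m_{2p-1} m_{2p})$ for $p \in \{1,\dots,d\}$, the operators $L(b_p) = L(-m_{2p-1} m_{2p+2} + m_{2p} m_{2p+1})$, and the operators $L(c_p) = L(m_{2p-1} m_{2p+1} + m_{2p} m_{2p+2})$ for $p \in \{1,\dots,d-1\}$. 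So the task reduces to: using commutators of these known elements (and the commutation formula~\eqref{eq:quad-majorana}), split $L(b_p)$ and $L(c_p)$ into their individual monomial summands and then propagate to all index pairs.

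First I would use the $L(a_q)$ as ``rotation generators'' on the two Majorana modes $\{2q-1, 2q\}$: since $[m_{2q-1}m_{2q}, m_{2q-1} m_r] \propto m_{2q} m_r$ and $[m_{2q-1}m_{2q}, m_{2q} m_r] \propto m_{2q-1} m_r$ for $r \notin \{2q-1,2q\}$, bracketing $L(a_q)$ against $L(b_q)$ or $L(c_q)$ rotates the indices attached to site $q$. Concretely, from the two-term combinations $L(b_p)$ and $L(c_p)$ one extracts, via a suitable bracket with $L(a_p)$ or $L(a_{p+1})$, linear combinations that isolate single monomials such as $L(m_{2p-1} m_{2p+1})$, $L(m_{2p-1} m_{2p+2})$, $L(m_{2p} m_{2p+1})$, $L(m_{2p} m_{2p+2})$ — i.e.\ all four ``nearest-neighbor-block'' quadratic monomials coupling block $p$ to block $p+1$. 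This is the step where one must be a little careful about which linear combinations are independent and genuinely separable; I expect this disentangling of the two-term generators into single Majorana bimonomials to be the main (though still routine) obstacle, essentially the same bookkeeping that underlies the $\su(2)$-type arguments in~\cite{ZS11}.

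Once all quadratic monomials coupling consecutive blocks are available, I would finish by an induction on index distance: given $L(m_j m_k)$ and $L(m_k m_\ell)$ with $j,k,\ell$ distinct, the commutator formula~\eqref{eq:quad-majorana_b} gives $[L(m_j m_k), L(m_k m_\ell)] = \pm L(m_j m_\ell)$ (the $\delta$ terms with a repeated index survive, the $\unity$ term drops out since $j \ne \ell$), so connectivity of the ``coupling graph'' on $\{1,\dots,2d\}$ propagates single monomials to all pairs. Since the nearest-block monomials already connect every Majorana index to its neighbors, the graph is connected, and we obtain $L(m_k m_\ell)$ for every $k < \ell$. Hence $\fk_2 \supseteq \operatorname{span}_{\R}\{L(m_k m_\ell)\} = \so(2d)$, and combined with $\fk_2 \subseteq \so(2d)$ this yields $\fk_2 = \so(2d)$. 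As with Lemma~\ref{quad_ex_lem}, the small cases $d \in \{2,3\}$ (where the chain is too short for the generic induction) should be checked by direct computation to anchor the argument.
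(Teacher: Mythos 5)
Your overall strategy (show that every single quadratic monomial $L(m_k m_\ell)$ lies in $\fk_2$ and conclude by Proposition~\ref{quasifree}) is sound, and is in fact the ``alternative'' route the paper itself mentions; the paper's own proof instead shows by induction that $\fk_2$ strictly contains $\uu(1)+\so(2d-2)$ and invokes maximality of that subalgebra in $\so(2d)$. However, your central splitting step fails as described. Bracketing $L(a_p)$ or $L(a_{p+1})$ against $L(b_p)$ and $L(c_p)$ does \emph{not} isolate single monomials: a direct computation with Eq.~\eqref{eq:quad-majorana} gives $[a_p,b_p]=2c_p$, $[a_p,c_p]=-2b_p$, $[a_{p+1},b_p]=-2c_p$, $[a_{p+1},c_p]=2b_p$, so the two-dimensional span of $\{b_p,c_p\}$ is invariant under both rotations and you never reach the complementary combinations $m_{2p-1}m_{2p+2}+m_{2p}m_{2p+1}$ and $m_{2p-1}m_{2p+1}-m_{2p}m_{2p+2}$. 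The structural reason is that every element you use in this step lies in $\fk_1=\langle w_1,w_2\rangle_{\rm Lie}$, which by Theorem~\ref{quad_ex_u} is the proper subalgebra $\uu(d)$; being closed under commutators, it can never produce a single cross-block monomial such as $L(m_{2p-1}m_{2p+1})$, none of which lie in $\uu(d)$. This is precisely the obstruction that makes the third generator $w_3$ indispensable, and your proposal never actually uses $w_3$ in the splitting.

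The gap is repairable while keeping your strategy: the seed single monomial is $w_3=L(m_2m_3)$ itself. From $[m_2m_3,a_1]\propto m_1m_3$ and $c_1-m_1m_3=m_2m_4$, etc., you obtain all single monomials on the indices $\{1,2,3,4\}$; then, to advance to higher modes, bracket a known single monomial against a $b_p$ or $c_p$ chosen so that it shares exactly one index with exactly \emph{one} of the two terms (e.g.\ $[m_1m_3,\,c_2]=[m_1m_3,\,m_3m_5+m_4m_6]\propto m_1m_5$), which does isolate a new single monomial. Iterating this and then running your (correct) connectivity argument via $[L(m_jm_k),L(m_km_\ell)]=\pm L(m_jm_\ell)$ completes the proof. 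As written, though, the proposal's key step would not go through.
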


\begin{proof}
The cases of $d\in\{2,3,4\}$ can be verified directly.
We build on Lemma~\ref{quad_ex_lem}
and remark that $\fk_2\subseteq\so(2d)$ as it is generated only by 
quadratic operators 
(see Proposition~\ref{quasifree}).
We compute in the Jordan-Wigner picture
$w_{7}:=-i(\mathrm{Y}_1\mathrm{Y}_2-\mathrm{Y}_2\mathrm{Y}_3)/2 =[w_3,[w_3,w_1]]$,
and
$w_{8}:=-i\mathrm{X}_2\mathrm{X}_3/2=L(b_2)-(w_5-w_3-w_{7})$.
This shows by induction that $\so(2d) \supseteq \fk_2 \supsetneq \uu(1)+\so(2d-2)$.
As $\uu(1)+\so(2d-2)$ is a maximal subalgebra of $\so(2d)$ (see p.~219 of \cite{BS49}
or Sec.~8.4 of \cite{GG78}), 
one obtains that $\fk_2 = \so(2d)$.
Alternatively, one can explicitly show that $\fk_2$ consists of all
quadratic Majorana operators, which combined with Proposition~\ref{quasifree}
would also complete the proof.
\end{proof}

Note that the generators $w_1$, $w_2$, and $w_3$ can be described using
the Hamiltonian of Eq.~\eqref{Hqfree} while keeping the control functions
given by the matrix entries
$A_{pq}$ and $B_{pq}$ in the real range, see
Appendix~\ref{ParaQuadratic} for details.
This also provides a simplified approach to
Theorem~32 in \cite{ZS11}, where only the real case was considered:

\begin{corollary}[see Theorem~32 in \cite{ZS11}]
Consider a control system given by the Hamiltonian components
of Eq.~\eqref{Hqfree}. The control functions
are specified by the matrix entries $A_{pq}$ and $B_{pq}$ 
which are assumed to be real.
The resulting system algebra is $\so(2d)$.
\end{corollary}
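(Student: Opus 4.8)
The plan is to reduce the corollary directly to the results already established for the Majorana-operator formulation, since the only new ingredient is the restriction that the control functions $A_{pq}$ and $B_{pq}$ are real. First I would recall from the discussion around Eq.~\eqref{stand_Maj} that a quadratic Hamiltonian written via the complex matrices $A$ (hermitian) and $B$ (complex skew-symmetric) corresponds to the real skew-symmetric matrix $T$ given by the explicit $2\times 2$ block decomposition displayed there. Restricting $A_{pq}$ and $B_{pq}$ to be real kills the $\Im(A)$ and $\Im(B)$ terms, so the reachable $T$-matrices are exactly the real linear span of $\Re(A)\otimes\left(\begin{smallmatrix}0 & 1\\ -1 & 0\end{smallmatrix}\right)$ and $\Re(B)\otimes\left(\begin{smallmatrix}0 & -1\\ 1 & 0\end{smallmatrix}\right)$ with $\Re(A)$ symmetric and $\Re(B)$ antisymmetric; the key observation is that these span precisely the set of those quadratic Majorana operators that appear as generators $w_1$, $w_2$, $w_3$ of Eq.~\eqref{v_ops}.

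The main step is then to verify that $w_1$, $w_2$, $w_3$ all lie in this real-control span, and conversely that nothing outside $\so(2d)$ can be reached (which is immediate from Proposition~\ref{quasifree}). For $w_2 = L(m_1 m_2)$ and $w_3 = L(m_2 m_3)$ one checks via the dictionary between Majorana monomials and the $A,B$ matrices that each corresponds to a real entry of $A$ or $B$; similarly $w_1 = L\bigl(\sum_{p=1}^{d-1} -m_{2p-1}m_{2p+2} + m_{2p}m_{2p+1}\bigr)$ is the nearest-neighbour hopping term, which corresponds to a purely real $A$ with off-diagonal entries and $B=0$, as noted in the text where $v_1$ is identified with the Heisenberg-\XX term. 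Hence the control system with real $A_{pq}, B_{pq}$ contains the three generators $w_1, w_2, w_3$, so its system algebra contains $\fk_2$, and by Theorem~\ref{quad_ex_so} this equals $\so(2d)$; combined with $\fk_2\subseteq\so(2d)$ the system algebra is exactly $\so(2d)$.

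I would also point to Appendix~\ref{ParaQuadratic}, which the excerpt explicitly advertises as containing the detailed parameterization showing that $w_1, w_2, w_3$ can be written using Eq.~\eqref{Hqfree} with real control functions; so the proof can legitimately be a short paragraph citing Theorem~\ref{quad_ex_so} together with that appendix. The main obstacle, such as it is, is purely bookkeeping: one must be careful about the sign and normalization conventions in the $2\times 2$ blocks of the $T$-matrix so that the identification of $w_1, w_2, w_3$ with real-parameter Hamiltonians is exact rather than off by factors of $i$ or signs. Once that dictionary is pinned down, nothing further is needed, since the hard analytic content (that $\{w_1,w_2,w_3\}$ Lie-generates all of $\so(2d)$, via the maximal-subalgebra argument using $\uu(1)+\so(2d-2)$) has already been done in Theorem~\ref{quad_ex_so}.
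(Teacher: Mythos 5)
Your proposal is correct and coincides with the paper's own (implicit) argument: the paper simply notes before the corollary that $w_1$, $w_2$, $w_3$ are realizable with real $A_{pq}$, $B_{pq}$ (referring to Appendix~\ref{ParaQuadratic} for the bookkeeping), so that Theorem~\ref{quad_ex_so} gives the lower bound and Proposition~\ref{quasifree} the upper bound $\so(2d)$. Nothing further is needed.
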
 

The relations between quasifree fermions and spin systems will be analyzed
in Appendix~\ref{appl_quasi}. --- Next we treat the case of the 
algebra $\uu(d)$. 

\begin{theorem}[see Lem.~36 in \cite{ZS11}]\label{quad_ex_u}
Consider a fermionic quantum system with $d\geq 2$ modes.
The system Lie algebra $\fk_1$ generated by $\{w_1,w_2\}$
is given by $\uu(d)$.
\end{theorem}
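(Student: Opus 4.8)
The plan is to build on Lemma~\ref{quad_ex_lem} and Theorem~\ref{quad_ex_so}, exploiting the fact that quadratic fermionic Hamiltonians close onto $\so(2d)$ via the homomorphism $h$ of Proposition~\ref{quasifree}, which sends $-\tfrac12(m_pm_q-m_qm_p)$ to $e_{pq}-e_{qp}$. Under $h$ the generators $w_1$ and $w_2$ map to the specific skew-symmetric matrices $T_1 := \sum_{p=1}^{d-1}(e_{2p-1,2p+2}-e_{2p+2,2p-1}) - (e_{2p,2p+1}-e_{2p+1,2p})$ (up to sign conventions coming from $L$ and the $w$ weights) and $T_2 := e_{12}-e_{21}$. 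First I would observe that $\fk_1 \subseteq \so(2d)$ automatically, so the task is to identify \emph{which} subalgebra of $\so(2d)$ is generated. The natural guess $\uu(d)$ corresponds to the standard embedding $\uu(d)\hookrightarrow\so(2d)$ obtained by viewing $\C^d\cong\R^{2d}$ with complex structure $J$; concretely, $\uu(d)$ is the centralizer in $\so(2d)$ of the block-diagonal matrix $J=\bigoplus_{p=1}^d\bigl(\begin{smallmatrix}0&1\\-1&0\end{smallmatrix}\bigr)$, which in Majorana language corresponds (up to scalar) to $\sum_p L(a_p)=\sum_p L(m_{2p-1}m_{2p})$, i.e.\ the total particle-number operator.

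The key steps, in order, are: (i) recall from Lemma~\ref{quad_ex_lem} that $\fk_1$ already contains all $L(a_p)$ for $p\in\{1,\dots,d\}$ and all $L(b_p)$, $L(c_p)$ for $p\in\{1,\dots,d-1\}$; in the Jordan--Wigner picture these are $-\tfrac{i}{2}\mathrm Z_p$, $-\tfrac{i}{2}(\mathrm X_p\mathrm X_{p+1}+\mathrm Y_p\mathrm Y_{p+1})$, and $-\tfrac{i}{2}(\mathrm X_p\mathrm Y_{p+1}-\mathrm Y_p\mathrm X_{p+1})$. (ii) Show that these $2d-1$ operators together generate $\uu(d)$: under $h$, $L(a_p)\mapsto$ the $p$-th diagonal $\so(2)$-block, $L(b_p)\mapsto$ (essentially) the real part of the standard off-diagonal generator $E_{p,p+1}-E_{p+1,p}$ of $\uu(d)$, and $L(c_p)\mapsto$ the imaginary part; so the image is exactly the image of the standard tridiagonal generating set of $\uu(d)$. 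Since $\{E_{p,p+1}-E_{p+1,p},\, i(E_{p,p+1}+E_{p+1,p}),\, iE_{pp}\}$ generate $\uu(d)$ by iterated commutators (a standard fact — commutators of nearest-neighbor generators fill in all matrix units), one gets $\fk_1 \supseteq \uu(d)$ in this embedding. (iii) For the reverse inclusion $\fk_1\subseteq\uu(d)$, verify that the images of $w_1$ and $w_2$ under $\so(2d)\hookleftarrow\uu(d)$ both lie in $\uu(d)$, i.e.\ both commute with $J$ (equivalently, $w_1$ and $w_2$ commute with $\sum_p L(a_p)$, the number operator) — this is a direct computation using Eq.~\eqref{eq:quad-majorana}, or simply the observation that hopping terms $f^\dagger_p f_q + f^\dagger_q f_p$ and on-site terms $f^\dagger_p f_p$ conserve particle number while $w_1,w_2$ are exactly of this form (no pairing $B$-terms). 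Combining (ii) and (iii) gives $\fk_1=\uu(d)$.

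An alternative, perhaps cleaner, route for the reverse inclusion is dimensional: once one knows (from a short induction like that in Lemma~\ref{quad_ex_lem}) that $\fk_1$ is spanned precisely by $\{L(a_p)\}_{p=1}^d \cup \{L(b_p),L(c_p)\}_{p=1}^{d-1} \cup (\text{their iterated commutators})$, and that all these iterated commutators remain particle-number conserving quadratics, one counts: the number-conserving quadratics $f^\dagger_p f_q$ span a space of real dimension $d^2$, matching $\dim\uu(d)$, and they visibly form a subalgebra isomorphic to $\uu(d)$; so any subalgebra of $\so(2d)$ generated by a number-conserving generating set that already achieves the full $\uu(d)$ by step (ii) must equal $\uu(d)$. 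I expect the main obstacle to be bookkeeping the sign/phase conventions: the map $L$ of Eq.~\eqref{eqn:Lv1}--\eqref{eqn:Lv2} introduces degree-dependent factors $w(M)\in\{\pm1,\pm i\}$, so one must be careful that the commutators of the $w_i$ genuinely reproduce the $L(\cdot)$ of the claimed monomials (as already tracked in Lemma~\ref{quad_ex_lem}) rather than off-by-a-phase versions; but since Lemma~\ref{quad_ex_lem} has done this bookkeeping, the remaining work is the essentially routine verification that the tridiagonal $\uu(d)$ generators generate all of $\uu(d)$ and that the generating set conserves particle number. For $d\in\{2,3\}$ one checks directly as a base case, and induction (reducing $d$ to $d-1$ exactly as in the proof of Lemma~\ref{quad_ex_lem}, by subtracting off the relevant generator) finishes the argument.
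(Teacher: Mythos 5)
Your proposal is correct, but it takes a genuinely different route from the paper's proof (Appendix~\ref{proof_u_d}). The paper proceeds abstractly: starting from Lemma~\ref{quad_ex_lem} it builds an explicit basis $\{L(a_p)\}\cup\{L(b_p^{(i)}),L(c_p^{(i)})\}$ of $\fk_1$, reads off $\dim\fk_1=d^2$ and $\operatorname{rank}\fk_1=d$, computes the one-dimensional center, proves simplicity of $[\fk_1,\fk_1]$ by an explicit ideal argument using projection operators built from $\adr^2$ and $\adr^4$, and finally identifies $\su(d)$ by exhaustively enumerating the compact simple Lie algebras with matching rank and dimension. You instead pin down the concrete embedding: $w_1$ and $w_2$ are particle-number conserving (no pairing terms), so $\fk_1$ lies in the centralizer of the complex structure $J$ (equivalently, of $\sum_p L(a_p)$) inside $\so(2d)$, which is the standardly embedded $\uu(d)$ of dimension $d^2$ — consistent with what the paper later proves independently in Proposition~\ref{quasifree_particle}; conversely, the elements $L(a_p)$, $L(b_p)$, $L(c_p)$ supplied by Lemma~\ref{quad_ex_lem} map to the standard tridiagonal generating set $\{iE_{pp}\}\cup\{E_{p,p+1}-E_{p+1,p},\,i(E_{p,p+1}+E_{p+1,p})\}$ of $\uu(d)$, whose iterated commutators fill in all matrix units, giving the reverse inclusion. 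Your approach buys a shorter argument that avoids both the delicate simplicity proof and the classification-by-rank-and-dimension step, and it additionally identifies $\fk_1$ as the number-conserving subalgebra rather than merely its abstract isomorphism type; the paper's approach buys the explicit basis $\{L(b_p^{(i)}),L(c_p^{(i)})\}$ (reused nowhere else but illustrative) and does not presuppose recognizing $\uu(d)$ as a centralizer. One small remark: your closing appeal to induction and base cases $d\in\{2,3\}$ is unnecessary — the tridiagonal-generation argument is uniform in $d$ — and the "alternative dimensional route" paragraph adds nothing beyond steps (ii) and (iii), which already constitute a complete proof.
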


Here we just sketch ideas for the proof of Theorem~\ref{quad_ex_u}
while leaving the full details to Appendix~\ref{proof_u_d}.
Our methods exploit the detailed structure of the appearing
Majorana operators while being more explicit than in \cite{ZS11}
and avoiding obstacles of the spin picture. 
Building on the notation of Lemma~\ref{quad_ex_lem}, we show that the elements
$L(a_p)$ with $1\leq p \leq d$ together with
the elements $L(b_p^{(i)})$ with
$b_p^{(i)}:=-m_{2p-1} m_{2p+2i} + m_{2p} m_{2p+2i-1}$ and
$L(c_p^{(i)})$ with $c_p^{(i)}:=m_{2p-1} m_{2p+2i-1} + m_{2p} m_{2p+2i}$ where
$p,i\geq 1$ and $p+i\leq d$ form a basis of $\fk_1$. One obtains that $\dim(\fk_1)=d+(d-1)d=d^2$.
Furthermore, the elements $L(a_p)$ form
a maximal abelian subalgebra and the rank of $\fk_1$ is equal to $d$ \footnote{The rank of a Lie algebra 
is defined as the dimension of its maximal abelian subalgebras.}.
We limit the possible cases further by showing
that $\fk_1$ is a direct sum of a simple and a one-dimensional 
Lie algebra. A complete enumeration of all possible cases completes the proof.

\begin{remark}
A spin chain equivalent to the fermionic system in Theorem~\ref{quad_ex_u} was also considered in \cite{BP09}, 
where it was shown how to swap pairs of fermions using the given Hamiltonians. 
As a consequence of Theorem~\ref{quad_ex_u}, the Lie algebra 
in the spin chain of \cite{BP09} can be identified as $\uu(d)$.
Clearly, its size grows only linearly with the number of modes~$d$. However, the addition of 
controlled-Z gates, as discussed in \cite{BP09}, already allows for scalable quantum computation.
\end{remark}

\subsection{Tensor-Square Criterion}

Consider a control system of quasifree fermions which is
represented by matrices $T_\nu$ in the form of Eq.~\eqref{stand_Maj}.
For more than two modes (i.e.\ $d\geq 3$), we 
can efficiently decide if the system algebra is equal to $\so(2d)$.
Recall that the alternating square $\Alt^2(\phi)$ and the symmetric square 
$\Sym^2(\phi)$ of a representation $\phi$ are defined
as restrictions to the alternating and symmetric subspace 
of the tensor square $\phi^{\otimes 2}=\phi \otimes \unity_{\dim(\phi)}
+ \unity_{\dim(\phi)} \otimes \phi$.

\begin{theorem}
Assume that $\fk$ is a subalgebra of $\so(2d)$ with $d\geq 3$ and denote
by $\Phi$ the standard representation of $\so(2d)$.
Then, the following are equivalent:\\
(1) $\fk = \so(2d)$.\\
(2) The restriction of $\Alt^2 \Phi$ to the subalgebra
$\fk$ is irreducible and the restriction
of  $\Sym^2 \Phi$ to $\fk$ splits into two irreducible components.
Each irreducible component occurs only once.\\
(3) The commutant of all complex matrices commuting with
the tensor square $(\Phi|_{\fk})^{\otimes 2}$ of $\fk$ has dimension three.
\end{theorem}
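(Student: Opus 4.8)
The plan is to prove the chain of equivalences $(1)\Leftrightarrow(2)\Leftrightarrow(3)$ by combining the classical plethysm decomposition of the tensor square of the standard representation of $\so(2d)$ with Schur's lemma. First I would recall the key representation-theoretic fact: for the standard representation $\Phi$ of $\so(2d)$ on $\C^{2d}$ with $d\geq 3$, the alternating square $\Alt^2\Phi$ is irreducible (it is the adjoint representation, of dimension $d(2d-1)$), while the symmetric square $\Sym^2\Phi$ decomposes as a trivial one-dimensional summand (spanned by the invariant bilinear form) plus an irreducible traceless-symmetric piece of dimension $2d^2+d-1$. These three irreducible constituents are pairwise non-isomorphic. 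This is where the hypothesis $d\geq 3$ enters: for $d=2$ one has $\so(4)\cong\su(2)\oplus\su(2)$ and the plethysms degenerate, so the statement genuinely requires $d\geq 3$.

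For $(1)\Rightarrow(2)$: if $\fk=\so(2d)$ the decomposition above \emph{is} the decomposition into $\fk$-irreducibles, and each component occurs exactly once, so $(2)$ holds. For $(2)\Rightarrow(1)$: suppose the restriction of $\Alt^2\Phi$ to $\fk$ is irreducible and $\Sym^2\Phi|_\fk$ splits into two (multiplicity-free) irreducibles. The adjoint representation $\so(2d)\cong\Alt^2\Phi$ restricted to $\fk$ contains $\fk$ itself as a subrepresentation (since $\fk$ is an ideal in... no — rather, $\ad(\fk)$ acts on $\so(2d)$ and $\fk$ is an invariant subspace because $[\fk,\fk]\subseteq\fk$). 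Irreducibility of $\Alt^2\Phi|_\fk$ then forces $\fk=\so(2d)$ or $\fk=0$; since $\Sym^2\Phi|_\fk$ would be irreducible (one piece) for $\fk=0$ rather than two pieces, we conclude $\fk=\so(2d)$. I would need to be slightly careful here that $\fk$ acting on $\so(2d)$ via restriction of the adjoint action indeed has $\fk$ as a genuine subrepresentation and is nonzero; this is immediate from $\fk\subseteq\so(2d)$ being a Lie subalgebra.

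For $(2)\Leftrightarrow(3)$: this is pure Schur's lemma / double-commutant bookkeeping. The commutant (in $\mathrm{End}(\C^{2d}\otimes\C^{2d})$) of the image of $\fk$ under $\Phi^{\otimes2}$ has complex dimension equal to $\sum_i m_i^2$, where the $m_i$ are the multiplicities of the distinct irreducible constituents of $\Phi^{\otimes2}|_\fk$. Since $\C^{2d}\otimes\C^{2d}=\Alt^2\C^{2d}\oplus\Sym^2\C^{2d}$ as $\fk$-modules, and since for $\fk=\so(2d)$ the three constituents (adjoint, trivial, traceless-symmetric) are distinct with multiplicity one, the commutant has dimension $1+1+1=3$. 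Conversely, if the commutant has dimension exactly three, then either there are three distinct multiplicity-one constituents, or two constituents with multiplicities summing in squares to three (impossible: $m^2+m'^2=3$ has no solution in positive integers other than... there is none), or one constituent — impossible since $\sqrt3\notin\N$. Hence the only possibility is three distinct multiplicity-one constituents; matching them against the $\Alt^2\oplus\Sym^2$ splitting shows $\Alt^2\Phi|_\fk$ is irreducible and $\Sym^2\Phi|_\fk$ splits into two multiplicity-one irreducibles, i.e. $(2)$. (One should rule out the alternative that $\Alt^2$ contributes two pieces and $\Sym^2$ one; but a single irreducible summand inside $\Sym^2\Phi|_\fk$ together with $\fk\subseteq\so(2d)$ and the fact that $\Sym^2$ of the standard rep always contains the invariant form as a summand forces at least two pieces in $\Sym^2$, so this alternative cannot occur.)

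The main obstacle I anticipate is making the $(2)\Rightarrow(1)$ step airtight: one must argue that a subalgebra $\fk\subsetneq\so(2d)$ cannot have $\Alt^2\Phi=\ad_{\so(2d)}$ remain irreducible upon restriction. The clean way is exactly the one above — $\fk$ is always an $\ad(\fk)$-invariant subspace of $\so(2d)$, so irreducibility of the ambient space forces $\fk\in\{0,\so(2d)\}$ — but one should double-check that this does not secretly require semisimplicity or reducedness of $\fk$. It does not: it is a purely linear statement about invariant subspaces. With that pinned down, the remaining work is the standard plethysm computation (citable, e.g., from Fulton–Harris) and routine multiplicity arithmetic.
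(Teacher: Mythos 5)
Your proposal is correct, and for the crucial implication it takes a genuinely different --- and more elementary --- route than the paper. Both arguments share the same ingredients for $(1)\Rightarrow(2)$ (the plethysms $\Alt^2\Phi\cong\phi_{(0,1,0,\ldots,0)}$, resp.\ $\phi_{(0,1,1)}$ for $d=3$, and $\Sym^2\Phi\cong\phi_{(2,0,\ldots,0)}\oplus\phi_{(0,\ldots,0)}$) and the same Schur-lemma bookkeeping $\dim\comm=\sum_i m_i^2$ for relating (2) and (3), including your correct observation that $\Sym^2\Phi|_{\fk}$ always contains the invariant bilinear form as a trivial summand and hence can never be irreducible, which pins down the distribution $1+2$ of the three multiplicity-one constituents. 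The difference lies in the hard direction: the paper closes the cycle as $(3)\Rightarrow(1)$ by invoking Dynkin's classification (Theorem~4.5 and Tables~7a/7b of \cite{Dynkin57}) of all irreducible orthogonal or symplectic representations with irreducible alternating or symmetric square, and then eliminating the cases $\su(2)$, $\so(2\ell+1)$, and $\spp(\ell)$. You instead prove $(2)\Rightarrow(1)$ directly by identifying $\Alt^2\Phi$ with the adjoint representation of $\so(2d)$ and observing that the complex span of $\fk$ inside $\so(2d)_{\C}\cong\Alt^2\C^{2d}$ is always $\adr(\fk)$-invariant because $[\fk,\fk]\subseteq\fk$; irreducibility then forces $\fk_{\C}=\so(2d)_{\C}$, whence $\fk=\so(2d)$ by a real-dimension count. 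This argument is purely linear-algebraic, needs no semisimplicity of $\fk$, and avoids the classification tables entirely; what the paper's heavier machinery buys is a uniform template that also identifies the exceptional candidates (e.g.\ $\spp(\ell)$ with irreducible symmetric square), which is reused elsewhere in the companion work \cite{ZS11}. One small slip: for $\fk=\{0\}$ you claim $\Sym^2\Phi|_{\fk}$ would be a single irreducible piece; in fact it decomposes into $\dim\Sym^2\C^{2d}$ trivial lines, but the case is excluded anyway since $\Alt^2\Phi|_{\{0\}}$ on a space of dimension $d(2d-1)>1$ cannot be irreducible.
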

\begin{proof}
Assuming (1), condition (2) follows from the formulas for
the alternating and symmetric square of $\so(2d)$ with $d\geq 3$
given in its standard representation $\phi_{(1,0,\ldots,0)}$
[where $(1,0,\ldots,0)$ denotes the corresponding highest weight]:
The alternating square is given as
$\Alt^2\phi_{(1,0,0)}=\phi_{(0,1,1)}$ for $\so(6)$ 
and
$\Alt^2\phi_{(1,0,0,\ldots,0)}=\phi_{(0,1,0,\ldots,0)}$ for $\so(2d)$ in the case of $d>3$
(cf.\ Table~6 in \cite{Dynkin57} or Table~X in \cite{ZS11}).
The symmetric square
$\Sym^2\phi_{(1,0,\ldots,0)}=\phi_{(2,0,\ldots,0)} \oplus \phi_{(0,0,\ldots,0)}$
for $\so(2d)$ and $d \geq 3$ can be computed using Example~19.21 of Ref.~\cite{FH91}.
We verify the dimension of the commutant and show that (3) is a consequence of (2) 
by applying Proposition~\ref{reprtheory} which says that the dimension of the 
commutant of a representation $\phi$ is given by $\sum_i m_i^2$ where the $m_i$ are 
the multiplicities of the irreducible components of $\phi$.
For the rest of the proof we assume that condition (3) holds. We remark 
that the representation $\Phi|_{\fk}$ is irreducible as otherwise 
the dimension of the commutant would be larger than three. 
Thus, we obtain that $\fk$ is semisimple.
The dimension of the 
commutant allows only two possibilities: one of the restrictions 
$(\Alt^2 \Phi)|_{\fk}$ or $(\Sym^2 \Phi)|_{\fk}$ to the subalgebra $\fk$ has to be irreducible.
We emphasize that $\fk$ is given in an orthogonal representation
(i.e.\ a representation of real type) of even dimension, as
$\fk$ is given in an irreducible representation obtained by restricting the standard 
representation of $\so(2d)$.
Therefore, we can use the list of all irreducible representations
which are orthogonal or symplectic (i.e.\ of quaternionic type) 
and whose alternating or symmetric square is irreducible
(Theorem~4.5 as well as Tables~7a and 7b of Ref.~\cite{Dynkin57}): 
(a) for $\su(2)$ the alternating square of the symplectic representation $\phi=(1)$ of dimension two, 
(b) for $\so(3)\equiv\su(2)$ the alternating square of the orthogonal representation 
$\phi=(2)$ of dimension three,
(c) for $\so(2\ell+1)$ with $\ell>1$ the alternating square of the orthogonal representation 
$\phi=(1,0,\ldots,0)$ of dimension $2\ell+1$,
(d) for $\so(2\ell)$ with $\ell\geq 3$ the alternating square of the orthogonal representation 
$\phi=(1,0,\ldots,0)$ of dimension $2\ell$, and
(e) for $\spp(\ell)$ with $\ell\geq 1$
the symmetric square of the symplectic representation $\phi=(1,0,\ldots,0)$ of dimension
$2\ell$. Only possibility (d) fulfills all conditions which proves (1).
\end{proof}

Describing the matrices in the tensor square more explicitly 
along the lines of Ref.~\cite{ZS11},
we present a necessary and sufficient condition for
full controllability in systems of quasifree fermions.

\begin{corollary}\label{cor_tensor_three}
Consider a set of matrices $\{T_\nu\,|\,\nu\in\{0; 1,\ldots, m\}\}$
as given by Eq.~\eqref{stand_Maj} generating the system algebra 
$\fk \subseteq \so(2d)$ with $d\geq 3$. We obtain $\fk=\so(2d)$
if and only if the joint commutant of $\{ T_\nu\otimes \unity_{2d} + \unity_{2d} \otimes 
T_\nu \,|\,\nu\in\{0; 1,\ldots, m\}\}$
has dimension three. \qed
\end{corollary}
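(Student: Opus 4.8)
The plan is to deduce the corollary directly from the preceding theorem by identifying the joint commutant appearing in the statement with the commutant of the tensor‑square representation $(\Phi|_{\fk})^{\otimes 2}$. The first step is to recall that $X \mapsto X \otimes \unity_{2d} + \unity_{2d} \otimes X$ is exactly the derived form of the tensor‑square representation, so that
\begin{equation*}
[\,T_\mu \otimes \unity_{2d} + \unity_{2d} \otimes T_\mu,\; T_\nu \otimes \unity_{2d} + \unity_{2d} \otimes T_\nu\,] = [T_\mu,T_\nu] \otimes \unity_{2d} + \unity_{2d} \otimes [T_\mu,T_\nu].
\end{equation*}
Hence the Lie closure of $\{T_\nu \otimes \unity_{2d} + \unity_{2d} \otimes T_\nu\}$ equals $\{S \otimes \unity_{2d} + \unity_{2d} \otimes S \mid S \in \fk\}$, i.e.\ the image of the system algebra $\fk$ under the representation $(\Phi|_{\fk})^{\otimes 2}$, where $\fk = \langle \{T_\nu\} \rangle_{\rm Lie}$ is the quadratic system algebra of the control system.

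The second step is to invoke the Jacobi‑identity argument already used for centralizers in Sec.~\ref{quditcontrol} (the observation that $\cent(\fm) = \cent(\expt{\fm}_{\rm Lie})$): a complex matrix commutes with each generator $T_\nu \otimes \unity_{2d} + \unity_{2d} \otimes T_\nu$ if and only if it commutes with their entire Lie closure. Consequently the joint commutant of $\{T_\nu \otimes \unity_{2d} + \unity_{2d} \otimes T_\nu\}$ is precisely the commutant of the representation $(\Phi|_{\fk})^{\otimes 2}$ of $\fk$ in the sense of condition~(3) of the preceding theorem.

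The third step is simply to apply the equivalence of conditions~(1) and~(3) in the preceding theorem, which is valid exactly because $d \geq 3$: one has $\fk = \so(2d)$ if and only if the commutant of $(\Phi|_{\fk})^{\otimes 2}$ has dimension three, and by the previous step this is the same as saying that the joint commutant of $\{T_\nu \otimes \unity_{2d} + \unity_{2d} \otimes T_\nu\}$ has dimension three. This is exactly the claimed equivalence, so the proof is complete once the identifications above are in place.

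I do not expect a genuine obstacle here: the representation‑theoretic heavy lifting — the classification via Dynkin's tables of orthogonal or symplectic irreducible representations whose alternating or symmetric square is irreducible, together with the alternating/symmetric square formulas for $\so(2d)$ — has already been carried out in the proof of the preceding theorem, so the corollary is essentially a translation of its condition~(3) into the explicit $2d \times 2d$ matrix language of Eq.~\eqref{stand_Maj}. The only points that require a word of care are that ``commutant'' must be read as the algebra of \emph{all} complex matrices commuting with the given set (matching condition~(3)), and that the hypothesis $d \geq 3$ cannot be dropped, since for $d = 2$ the algebra $\so(4)$ is not simple and the underlying theorem fails.
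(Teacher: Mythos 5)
Your proposal is correct and is essentially the argument the paper intends: the corollary is stated with an immediate \qed precisely because it is a direct translation of condition~(3) of the preceding theorem, using the fact that the joint commutant of the generators $T_\nu\otimes \unity_{2d} + \unity_{2d}\otimes T_\nu$ coincides (by the Jacobi-identity/centralizer argument) with the commutant of the tensor-square representation restricted to $\fk$. Your write-up simply makes explicit the identifications the paper leaves implicit.
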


 Along the lines of Eq.~\eqref{stand_Maj},
one can apply Corollary~\ref{cor_tensor_three} to the 
matrices $T$
corresponding to the generators of $\so(2d)$ of Theorem~\ref{quad_ex_so}.
For $d\geq 3$ one can verify that the commutant of the tensor square
has dimension three. But for $d=2$ one computes a dimension of four
as $\so(4)=\su(2)+\su(2)$ is not simple.

For illustration, note that two elements in the commutant are trivial, to wit the identity and 
the generator for
the {\sc swap}-operation
between the two tensor copies. The third element does not yet occur in the unitary 
case described in \cite{ZS11}: it is the projector $P_S$ onto the totally anti-symmetric state.
To see this, recall that Ref.~\cite{Obata58} implies that if the Hamiltonians
$\{iH_\nu\,|\,\nu\in\{0;1,\dots, m\}\}$ generate a system algebra 
of orthogonal type, then there is an operator $S \in SL(N)$ satisfying
\begin{equation}
S H^t_\nu + H_\nu S = 0
\end{equation}
\emph{jointly} for all $\nu\in\{0;1,\ldots, m\}$ as in \cite{ZS11}. Using Kronecker products and writing
$\ket S:= \vec(S)$ \cite{HJ1}, one sees that $\ket{S}$ is in the
intersection of all the kernels of the tensor squares, so
\begin{eqnarray}
(H_\nu\otimes\unity+\unity\otimes H_\nu)\ket{S} &=& \ket 0 \nonumber \\  
\Leftrightarrow (H_\nu\otimes\unity+\unity\otimes H_\nu)\ketbra{S}{S}&=& 0_N  \nonumber \\ 
\Leftrightarrow \ketbra{S}{S}(H_\nu\otimes\unity+\unity\otimes H_\nu)&=& 0_N 
\end{eqnarray}
and thus $P_S:=\ketbra{S}{S} \in (H_\nu\otimes\unity+\unity\otimes H_\nu)'$ holds jointly for 
all $\nu\in\{0;1,\ldots, m\}$; $0_N$ denotes the zero matrix of degree $N$.

\section{Pure-State Controllability for Quasifree Systems\label{pure-quasifree}}

In this section, we present a straightforward  criterion for {\em pure-state controllability} of quasifree fermionic
systems with $d$ modes. A fermionic state is called quasifree if Majorana operators of odd degree
map it to zero and even-degree ones map it to states which factorize into the Wick expansion form (see below).
We obtain that quadratic Hamiltonians act transitively on pure quasifree states, i.e.,
every pure quasifree state can be transformed into any other pure quasifree state using only 
quadratic Hamiltonians (see Theorem~\ref{pure_quasi_uu}). 

In particular, an algebra isomorphic to $\uu(d)$
is left invariant by quadratic Hamiltonians and pure quasifree states can be related to
an homogeneous space of type 
$\SU(2d)/\U(d)$. At first glance, this might suggest that for full pure-state controllability the system algebra 
has to be isomorphic to $\so(2d)$. However, the central result of this section shows
that this is in general not necessary:
a quasifree fermionic system (with $d>4$ or $d=3$) is fully pure-state controllable 
iff its system algebra is isomorphic to $\so(2d)$ or $\so(2d-1)$, see Theorem~\ref{controllability_quasifree_pure}.

\subsection{Quasifree States\label{subsec:quasifree_state}}

A fermionic state $\rho$ on $\mathcal{F}(\mathbb{C}^d)$ is called
{\it quasifree} or {\it Gaussian} if it vanishes on odd monomials of  Majorana 
operators and factorizes 
on even monomials into the {\em Wick expansion form}
\begin{align*}
\tr(\rho m_{k_{1}} \dots m_{k_{2d}}) = 
\sum\limits_{\pi}{\mathrm{sgn}}(\pi) 
\prod\limits_{p=1}^{d}
\tr( \rho m_{k_{\pi(2p-1)}}m_{k_{\pi(2p)}}). 
\end{align*}
Here the sum runs over all pairings of 
$[1,\ldots , 2d]$, i.e., over all
permutations $\pi$ of $[1,\ldots , 2d]$
 satisfying $\pi(2q-1)< \pi(2q)$ and $\pi(2q-1) < \pi(2q+1)$ for all $q$.
The \emph{covariance matrix} of $\rho$ is defined as the $2d \times 2d$ 
skew-symmetric matrix with real entries 
\begin{equation} \label{eq:cov-mat}
G_{pq}^{\rho}=i [\rm{Tr}(\rho m_p m_q) - \delta_{pq}].
\end{equation}
Due to the Wick expansion property, 
a quasifree state is uniquely characterized by its covariance
matrix. (General references for this section include \cite{Araki70,kraus09,Bach94,BM62, dMCT13}.)
The following proposition 
 resumes a known result on these covariance matrices 
(see, e.g., Lemma 2.1 and Theorem 2.3 in \cite{Bach94}),
which will be useful in the later development:

\begin{proposition} \label{prop:cov-sing-val}
The singular values of the covariance matrix of a $d-$mode fermionic state
must lie between $0$ and $1$. Conversely, for 
any $2d \times 2d$ skew-symmetric 
matrix $G^\rho$ with singular values between $0$ and $1$ there exist
a quasifree state that has $G^\rho$ as a covariance matrix.
\end{proposition}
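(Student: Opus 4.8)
The plan is to reduce the statement to the standard normal form of a real skew-symmetric matrix and then use the fact that quadratic (quasifree) Hamiltonians act on covariance matrices through the orthogonal group $\SO(2d)$, together with Proposition~\ref{quasifree} and the explicit structure of quasifree states built from the Fock vacuum. The forward half (``singular values lie in $[0,1]$'') I would obtain directly from the defining relations of the Majorana operators, while the converse (``every such matrix is realized'') I would obtain by constructing an explicit quasifree state with a prescribed covariance matrix in block-normal form and then conjugating by an orthogonal transformation lifted to the Fock space.

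For the first part: write $G^\rho_{pq} = i[\tr(\rho m_p m_q) - \delta_{pq}]$. Since $G^\rho$ is real and skew-symmetric, its singular values are the absolute values of the eigenvalues of $iG^\rho$, which is Hermitian. So it suffices to bound $\|G^\rho\|$. For any real unit vector $x\in\R^{2d}$, set $m(x):=\sum_k x_k m_k$; then $m(x)^2 = \tfrac12\{m(x),m(x)\} = \|x\|^2\unity = \unity$ using $\{m_k,m_\ell\}=2\delta_{k\ell}\unity$, so $m(x)$ is a self-adjoint involution, hence $-\unity \le m(x) \le \unity$ as operators. Pairing this with a density operator and using $x^\rT G^\rho x = 0$ by skew-symmetry is not quite enough, so instead I would work with a complex vector $z = x + iy$ and compute $\tr(\rho\, m(z)^\dagger m(z))$, which is nonnegative since $m(z)^\dagger m(z)\ge 0$. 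Expanding this using the CAR and the definition of $G^\rho$ yields $0 \le \tr(\rho\, m(z)^\dagger m(z)) = \|z\|^2 + (\text{bilinear term in } G^\rho)$, from which one reads off $\|G^\rho\|\le 1$, i.e.\ all singular values lie in $[0,1]$. (The lower bound $0$ is trivial.) This is essentially the positivity-of-a-two-point-function argument of Lemma~2.1 in \cite{Bach94}.

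For the converse: given a real skew-symmetric $2d\times 2d$ matrix $G$ with singular values $\lambda_1,\dots,\lambda_d \in [0,1]$, bring it by a real orthogonal transformation $R\in\SO(2d)$ to the block-diagonal normal form $\bigoplus_{p=1}^d \lambda_p\left(\begin{smallmatrix}0 & 1\\ -1 & 0\end{smallmatrix}\right)$. For the normal form I would exhibit an explicit quasifree state as a tensor product over the $d$ modes, $\rho_0 = \bigotimes_p \rho_p$ where each single-mode state $\rho_p$ is chosen (diagonal in the occupation basis, with the right mixing parameter determined by $\lambda_p$) so that $i[\tr(\rho_p\, m_{2p-1}m_{2p}) - 1] = \lambda_p$ and all cross terms vanish; the Wick/factorization property for $\rho_0$ is immediate from the product structure. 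Then I would pull this back: since quadratic Hamiltonians generate $\so(2d)$ (Proposition~\ref{quasifree}) and the corresponding group action on covariance matrices is the defining action of $\SO(2d)$ by $G\mapsto R G R^\rT$ (this is the content of Eq.~\eqref{stand_Maj} and the Lie homomorphism established there), there is a unitary $U$ on $\mathcal{F}(\C^d)$ with $U m_k U^\dagger = \sum_\ell R_{\ell k} m_\ell$, and the state $\rho := U\rho_0 U^\dagger$ is again quasifree with covariance matrix $R G_0 R^\rT = G$.

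The main obstacle I anticipate is bookkeeping rather than conceptual: keeping track of the precise sign/phase conventions in $G^\rho_{pq}=i[\tr(\rho m_p m_q)-\delta_{pq}]$ so that the single-mode building blocks $\rho_p$ really produce eigenvalue $\lambda_p$ (and not $-\lambda_p$ or $\lambda_p/2$) in the normal form, and verifying that the lift of $R\in\SO(2d)$ to a Fock-space unitary $U$ exists and intertwines the Majorana operators correctly (this uses that $\SO(2d)$, as opposed to $\Or(2d)$, is the image of the even-parity Bogoliubov group — the odd component would require a particle-hole type operator). Both points are standard, so I would cite \cite{Bach94} (or \cite{Araki70}) for the details rather than reproduce them.
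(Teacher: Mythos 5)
The paper does not actually prove this proposition: it is explicitly presented as a restatement of Lemma~2.1 and Theorem~2.3 of Bach--Lieb--Solovej, so there is no in-paper argument to compare against. Your proposal is correct and is essentially the standard proof from that reference. The forward direction via $\tr\bigl(\rho\, m(z)^\dagger m(z)\bigr)=z^\dagger(\unity - iG^\rho)z\ge 0$ for complex $z$ is exactly right (and you correctly note that real test vectors give nothing, since $x^\rT G^\rho x=0$); combined with the $\pm$-symmetry of the spectrum of the Hermitian matrix $iG^\rho$ this yields $-\unity\le iG^\rho\le\unity$, i.e.\ singular values in $[0,1]$. The converse via the skew-symmetric normal form, a diagonal product state with $\langle f_p^\dagger f_p\rangle=(1-\nu_p)/2$ per block, and a Bogoliubov unitary implementing $R\in\SO(2d)$ is also the standard route; your flagged subtlety about $\SO(2d)$ versus $\Or(2d)$ is real but harmless, since a determinant $-1$ obstruction can be absorbed by flipping the sign of one $\nu_p$ in the normal form and adjusting the corresponding occupation number. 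The only point stated a bit too quickly is that the Wick property of the product state is ``immediate'': one should note that each single-mode factor is diagonal in the occupation basis (so all odd moments and all off-diagonal two-point functions vanish) before the factorization over modes gives the Wick expansion; this is minor and appropriately covered by your citation.
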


\subsection{Orbits and Stabilizers of Quasifree States under the Action of Quadratic
Hamiltonians}\label{SubSec:PureOrbits1}

The action of the time-evolution unitaries generated by
quadratic Hamiltonians on quasifree states can be described by the following
proposition (see Lemma 2.6 in \cite{Bach94}):

\begin{proposition} \label{prop:qf-dyn}
Consider a  quasifree state $\rho_{a}$ corresponding to the  (skew-symmetric) covariance matrix $G^a$. 
The quadratic Hamiltonian
\begin{equation*}
H=  i \, \sum_{p,q=1}^{2d} T_{pq}\, (-\tfrac{1}{2}  m_p m_q)
\end{equation*}
is defined using
the skew-symmetric matrix $T$ and generates the time-evolution of $\rho_{a}$.
The time-evolved state  (at unit time),
$\rho_{b}= e^{-iH} \rho_{a} e^{iH}$ is again a quasifree state
with a (skew-symmetric) covariance matrix
$G^b= O_TG^aO^t_T$, where $O_T:=e^{-iT} \in \SO(2d)$.
\end{proposition}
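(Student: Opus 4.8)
\textbf{Proof strategy for Proposition~\ref{prop:qf-dyn}.}
The plan is to verify the claimed transformation law directly from the definition of the covariance matrix in Eq.~\eqref{eq:cov-mat}, using the fact (established in Proposition~\ref{quasifree} and the surrounding discussion) that the map $T \mapsto -\tfrac{1}{2}\sum_{k,\ell} T_{k\ell}\, m_k m_\ell$ is a Lie homomorphism onto $\so(2d)$. First I would record the infinitesimal action: for $H$ as given, one computes $[H, m_r]$ using Eq.~\eqref{eq:quad-majorana} (or its linearized cousin), obtaining $[-iH, m_r] = \sum_s T_{rs}\, m_s$, i.e.\ the adjoint action of $-iH$ on the real span of the Majorana operators is exactly multiplication by the matrix $T$. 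Exponentiating, $e^{-iH} m_r e^{iH} = \sum_s (e^{-T})_{rs}\, m_s = \sum_s (O_T^t)_{rs} m_s$ — so conjugation by the unitary $e^{-iH}$ implements the orthogonal rotation $O_T = e^{-iT} \in \SO(2d)$ on the Majorana operators (the exponential of a real skew-symmetric matrix being special orthogonal).

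Next I would push this through the covariance matrix. Writing $\rho_b = e^{-iH}\rho_a e^{iH}$ and using cyclicity of the trace,
\begin{equation*}
\tr(\rho_b\, m_p m_q) = \tr\!\big(\rho_a\, (e^{iH} m_p e^{-iH})(e^{iH} m_q e^{-iH})\big) = \sum_{r,s} (O_T)_{pr}(O_T)_{qs}\, \tr(\rho_a\, m_r m_s),
\end{equation*}
where I have used that the inverse conjugation implements $O_T$ in place of $O_T^t$. Substituting into $G^b_{pq} = i[\tr(\rho_b m_p m_q) - \delta_{pq}]$ and using $\sum_{r,s}(O_T)_{pr}(O_T)_{qs}\delta_{rs} = (O_T O_T^t)_{pq} = \delta_{pq}$, the identity-shift is preserved and one reads off $G^b = O_T\, G^a\, O_T^t$, as claimed. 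Finally, I would note that $\rho_b$ is again quasifree: this is immediate because conjugation by $e^{-iH}$ is linear and multiplicative on operators, so it carries the Wick-expansion identity for $\rho_a$ (a family of polynomial identities among the numbers $\tr(\rho_a m_{k_1}\cdots m_{k_j})$) to the corresponding identity for $\rho_b$ — each Majorana string is sent to a linear combination of Majorana strings of the same parity via the orthogonal matrix $O_T$, and one checks that the Wick/Pfaffian structure is invariant under such an orthogonal change of variables; alternatively one invokes Proposition~\ref{prop:cov-sing-val}, since $G^b = O_T G^a O_T^t$ is skew-symmetric with the same singular values as $G^a$, hence is the covariance matrix of some quasifree state, which must coincide with $\rho_b$ by the uniqueness clause there.

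The only genuinely delicate point is bookkeeping the transpose: one must be careful whether conjugation by $e^{-iH}$ or by $e^{+iH}$ appears when evaluating $\tr(\rho_b \cdots)$, and correspondingly whether $O_T$ or $O_T^t = O_T^{-1}$ acts — getting this backwards would flip the result to $O_T^t G^a O_T$. I expect this sign/transpose tracking, together with the clean derivation of $[-iH, m_r] = \sum_s T_{rs} m_s$ from Eq.~\eqref{eq:quad-majorana_b}, to be the main (though still routine) obstacle; everything else is a direct substitution, and the quasifree-preservation can be quoted from the cited Lemma~2.6 of \cite{Bach94} if a self-contained argument is not wanted.
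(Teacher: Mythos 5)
The paper itself gives no proof of Proposition~\ref{prop:qf-dyn}: it is quoted as Lemma~2.6 of \cite{Bach94}, so there is no internal argument to compare against. Your strategy is the standard one (and the one used in that reference): the adjoint action of a quadratic Hamiltonian preserves the real span of the Majorana operators, exponentiates to an orthogonal Bogoliubov rotation, and the covariance matrix then transforms by conjugation; preservation of quasifree-ness follows either from the invariance of the Wick/Pfaffian identities under a linear canonical transformation or, as you note, from Proposition~\ref{prop:cov-sing-val} together with the uniqueness of the quasifree state with a given covariance matrix. All of that is sound in outline.

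The constants in your key formula, however, do not close with the normalization actually used in the statement. With $H=-\tfrac{i}{2}\sum_{p,q=1}^{2d}T_{pq}\,m_p m_q$, the sum running over \emph{all} ordered pairs with $T$ skew-symmetric (so each unordered pair is counted twice), the identity $[m_p m_q,m_r]=2\delta_{qr}m_p-2\delta_{pr}m_q$ yields $[-iH,m_r]=2\sum_s T_{rs}m_s$, not $\sum_s T_{rs}m_s$, so the Bogoliubov rotation is generated by $2T$ rather than $T$. Your chain of equalities is also internally inconsistent: from $[-iH,m_r]=\sum_s T_{rs}m_s$ one would obtain $e^{-iH}m_r e^{iH}=\sum_s(e^{+T})_{rs}m_s$, not $e^{-T}$, and $(e^{-iT})^t=e^{iT}$ is not equal to $e^{-T}$. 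Part of the blame lies with the proposition as printed: for a real skew-symmetric $T$ the matrix $e^{-iT}$ is Hermitian positive definite, not an element of $\SO(2d)$, so $O_T:=e^{-iT}$ has to be read as a typo for a real orthogonal exponential (with the paper's double-counting normalization, $e^{-2T}$). You correctly flagged the sign/transpose bookkeeping as the delicate point, but a complete proof must actually carry it out, and in doing so would detect and repair this factor-of-two and the non-orthogonality of the stated $O_T$ rather than reproduce them.
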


Any skew-symmetric matrix $G$ can be brought into its
canonical form 
\begin{equation*}
O_G G O^t_G  =
\begin{pmatrix}
0 & \nu_{1} &  & & && \\
-\nu_{1} & 0 & &  &&\\
& & 0 & \nu_2&  &&\\
& & -\nu_2  & 0 & &&\\
&&&& \ddots && \\
&&&&& 0 & \nu_N  \\
&&&&&-\nu_N & 0 
\end{pmatrix}
\end{equation*}
using a (not necessarily unique) 
element $O_G \in \SO(2d)$
where $\{ \nu_i \}_{i=1}^d$ denotes the singular values of $G$. This means
that a quasifree state can be
reached from another one by the action of quadratic Hamiltonians if  
their covariance matrices share the same singular
values (including multiplicities).
Let us now recall another result related to
the singular values of the covariance matrices of pure quasifree states (Theorem~6.2 in
\cite{Bach94}, and Lemma~1 in \cite{dMCT13}):

\begin{proposition} \label{prop:qf-stab}
A quasifree state $\rho$ is pure iff the following (equivalent) conditions
hold for its covariance matrix $G^{\rho}$:\\
(a) The rows (and columns) of $G^{\rho}$ are real 
unit vectors which are pairwise orthogonal to each other.\\
(b) The singular values of $G^{\rho}$ are all $1$.
\end{proposition}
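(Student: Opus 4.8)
\textbf{Proof proposal for Proposition~\ref{prop:qf-stab}.}

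The plan is to prove the equivalence of (a), (b), and purity by working with the spectral normal form of the skew-symmetric covariance matrix $G^\rho$. First I would diagonalize $G^\rho$ into its canonical block form using an element $O \in \SO(2d)$, so that $O G^\rho O^t$ consists of $2\times 2$ blocks $\binom{\phantom{-}0\ \ \nu_i}{-\nu_i\ 0}$ with $\nu_i \in [0,1]$ by Proposition~\ref{prop:cov-sing-val}. Since the orthogonal conjugation $G^\rho \mapsto O G^\rho O^t$ corresponds (via Proposition~\ref{prop:qf-dyn}) to conjugating $\rho$ by a quadratic-Hamiltonian unitary, purity is preserved under this transformation; likewise conditions (a) and (b) are manifestly invariant under orthogonal conjugation since they only refer to singular values / orthonormality of rows. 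Hence it suffices to prove everything for a state in canonical form. The equivalence of (a) and (b) is then immediate: the rows of the block-diagonal matrix are mutually orthogonal automatically, and each row has norm $\nu_i$, so the rows are unit vectors iff all $\nu_i = 1$.

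The heart of the argument is linking $\nu_i \equiv 1$ to purity. In canonical form the state factorizes as a tensor product over the $d$ conjugate pairs of modes (this follows from the Wick expansion form together with the block-diagonal structure of the two-point function), so $\rho = \bigotimes_{i=1}^d \rho_i$ where $\rho_i$ is a single-mode Gaussian state with covariance block determined by $\nu_i$. A product state is pure iff each factor is pure, and purity of $\rho$ is equivalent to $\tr(\rho^2) = 1$, i.e.\ $\prod_i \tr(\rho_i^2) = 1$. For a single fermionic mode one computes directly that the Gaussian state with parameter $\nu_i$ has eigenvalues $\tfrac{1}{2}(1\pm \nu_i)$, so $\tr(\rho_i^2) = \tfrac{1}{2}(1+\nu_i^2) \le 1$ with equality iff $\nu_i = 1$. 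Therefore $\tr(\rho^2) = \prod_i \tfrac{1}{2}(1+\nu_i^2) = 1$ forces every $\nu_i = 1$, and conversely $\nu_i \equiv 1$ gives a product of pure single-mode states, hence a pure state. This establishes (b) $\Leftrightarrow$ purity, and combined with the earlier paragraph (a) $\Leftrightarrow$ (b), the proposition follows.

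I expect the main obstacle to be the justification that the covariance matrix in canonical form really corresponds to a tensor-product state across conjugate mode pairs, and the explicit computation of the single-mode eigenvalues $\tfrac{1}{2}(1\pm\nu)$. Both are standard facts about fermionic Gaussian states, but making the first one precise requires invoking the Wick expansion uniqueness (a quasifree state is determined by its covariance matrix, as noted after Eq.~\eqref{eq:cov-mat}) and checking that the product ansatz has the correct block-diagonal covariance matrix; one can also cite Theorem~6.2 in \cite{Bach94} or Lemma~1 in \cite{dMCT13} directly for this step if a fully self-contained computation is deemed unnecessary. An alternative route that sidesteps the tensor-product decomposition entirely is to use the general formula $\tr(\rho^2) = 2^{-d}\det(\unity + (G^\rho)^t G^\rho)^{1/2} = 2^{-d}\prod_i(1+\nu_i^2)$ directly (valid for any fermionic Gaussian state), from which purity $\Leftrightarrow \prod_i(1+\nu_i^2) = 2^d \Leftrightarrow \nu_i \equiv 1$ is immediate; I would likely present this shorter version if the purity formula can be taken as known.
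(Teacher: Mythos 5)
Your argument is correct, but note that the paper does not actually prove Proposition~\ref{prop:qf-stab} at all: it is recalled as a known result with citations to Theorem~6.2 of \cite{Bach94} and Lemma~1 of \cite{dMCT13}. Your self-contained derivation follows the standard route of those references: reduce to the canonical block form via an $\SO(2d)$ conjugation (legitimate, since $\exp$ is surjective onto $\SO(2d)$ for skew-symmetric generators, so Proposition~\ref{prop:qf-dyn} covers every $O\in\SO(2d)$, and unitary conjugation preserves purity while (a) and (b) are the statement $G^\rho (G^\rho)^t=\unity$, which is conjugation-invariant), then read off purity mode by mode. The one step you rightly flag, the factorization across conjugate mode pairs, closes cleanly without any tensor-product subtleties: the operator $\sigma:=2^{-d}\prod_{i=1}^{d}\bigl(\unity \mp i\nu_i\, m_{2i-1}m_{2i}\bigr)$ has commuting factors (disjoint even Majorana pairs), is a density operator with the prescribed block-diagonal covariance matrix, and hence equals $\rho$ by the uniqueness of a quasifree state given its covariance matrix; its eigenvalues are the products $\prod_i \tfrac{1}{2}(1\pm\nu_i)$, giving $\tr(\rho^2)=\prod_i\tfrac{1}{2}(1+\nu_i^2)$ and thus purity iff all $\nu_i=1$. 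Your alternative via $\tr(\rho^2)=2^{-d}\det\bigl(\unity+(G^\rho)^t G^\rho\bigr)^{1/2}$ is consistent with this and would be the shortest presentation if that determinant formula is taken as known.
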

Applying this result together with Proposition~\ref{prop:qf-dyn},
we obtain the following theorem:

\begin{theorem}\label{pure_quasi_uu}
The set of quadratic Hamiltonians act transitively on pure quasifree states,
and the corresponding stabilizer algebras are isomorphic to $\uu(d)$. 
\end{theorem}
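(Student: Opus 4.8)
The plan is to combine Proposition~\ref{prop:qf-dyn} (quadratic Hamiltonians act on covariance matrices by conjugation with $\SO(2d)$) with Proposition~\ref{prop:qf-stab} (pure quasifree states are exactly those whose covariance matrix is a \emph{real orthogonal} skew-symmetric matrix, i.e.\ $G^t=-G$ and $G^tG=\unity_{2d}$). First I would establish transitivity: given two pure quasifree states $\rho_a,\rho_b$ with covariance matrices $G^a,G^b$, both are real skew-symmetric orthogonal matrices with all singular values equal to $1$, hence both can be brought to the \emph{same} canonical block-diagonal form $\bigoplus_{j=1}^d\left(\begin{smallmatrix}0&1\\-1&0\end{smallmatrix}\right)$ by suitable $O_a,O_b\in\SO(2d)$ (as recalled just before the theorem; one may need to flip one $2\times2$ block's sign, which can be absorbed because the corresponding $\pm$ forms still lie in $\SO(2d)$). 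Then $O:=O_b^t O_a$ satisfies $O G^a O^t = G^b$, and writing $O=e^{-iT}=O_T$ for a real skew-symmetric $T$ (surjectivity of $\exp:\so(2d)\to\SO(2d)$), Proposition~\ref{prop:qf-dyn} gives a quadratic Hamiltonian mapping $\rho_a$ to $\rho_b$. This proves transitivity.

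Next I would identify the stabilizer algebra. Fix the reference pure quasifree state $\rho_0$ with canonical covariance matrix $J:=\bigoplus_{j=1}^d\left(\begin{smallmatrix}0&1\\-1&0\end{smallmatrix}\right)$, which is a complex structure on $\R^{2d}$ ($J^2=-\unity_{2d}$). By Proposition~\ref{prop:qf-dyn}, the quadratic Hamiltonian $H=i\sum T_{pq}(-\tfrac12 m_pm_q)$ stabilizes $\rho_0$ at all times iff $O_T^{\,s} J (O_T^{\,s})^t = J$ for all $s$, equivalently (differentiating at $s=0$, or just because $O_T=e^{-iT}$) iff $T J - J T^t = 0$; since $T$ is skew-symmetric, $T^t=-T$, so the condition becomes $TJ+JT=0$, i.e.\ $T$ anticommutes with $J$... wait---I should be careful with the transpose-versus-commutator bookkeeping: the correct infinitesimal stabilizer condition for $G\mapsto O_T G O_T^t$ at $G=J$ is $(-iT)J + J(-iT)^t=0$, and using $T^t=-T$ this reads $TJ=JT$, i.e.\ $T$ \emph{commutes} with $J$. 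The set $\{T\in\so(2d): TJ=JT\}$ is, by the standard identification of $J$-linear skew-symmetric real matrices with skew-Hermitian complex $d\times d$ matrices (a real $2d\times2d$ matrix commuting with $J$ corresponds to a complex $d\times d$ matrix $A$ via the $\C$-vector-space structure induced by $J$; skew-symmetry under the real inner product that $J$ preserves translates to $A^\dagger=-A$), precisely $\uu(d)$. Hence the stabilizer subalgebra of $\rho_0$ inside $\so(2d)$ is isomorphic to $\uu(d)$, and by transitivity the same holds for every pure quasifree state. I would remark that this also recovers the homogeneous-space description $\SO(2d)/\U(d)$ for the manifold of pure quasifree states, consistent with the $\SU(2d)/\U(d)$ picture mentioned in the section introduction.

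The main obstacle I anticipate is purely bookkeeping: getting the transpose conventions right in translating "$H$ stabilizes $\rho_0$ for all $t$" into a linear condition on $T$, and then correctly matching $\{T\in\so(2d):[T,J]=0\}$ with $\uu(d)$ rather than with some other real form (e.g.\ $\spp$ or $\gl(d,\R)$); the key point making it $\uu(d)$ and not something larger is that $T$ is simultaneously skew-symmetric (orthogonal algebra) \emph{and} $J$-commuting, and these two conditions together cut out exactly the unitary algebra. A minor secondary point is justifying that one really may take $O_a,O_b$ in $\SO(2d)$ rather than $\Or(2d)$ when reducing to canonical form, which is handled by noting that negating a single $\nu_j$ is an $\SO(2d)$ operation (rotation by $\pi$ in that plane composed with the reflection, or simply reordering the pairing).
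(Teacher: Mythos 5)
Your argument splits into two halves, and they fare differently.

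The stabilizer half is correct and takes a genuinely different route from the paper. The paper fixes the Fock vacuum $\rho_\Omega$, computes $H\Omega$ explicitly in Fock space, concludes that $\Omega$ is an eigenvector of $H$ iff the pairing matrix $B$ vanishes, and then invokes Proposition~\ref{quasifree_particle} to identify the particle-number-conserving quadratic Hamiltonians with $\uu(d)$. You instead stay entirely in the covariance-matrix picture: the infinitesimal stabilizer of the reference complex structure $J$ is $\{T\in\so(2d):[T,J]=0\}$, and the simultaneous conditions of skew-symmetry and $J$-linearity cut out exactly $\uu(d)$. Your bookkeeping ($XJ+JX^t=0$ together with $X^t=-X$ giving $[X,J]=0$) is right, and a dimension count $d(2d-1)-d^2=\dim \SO(2d)/\U(d)$ confirms the identification. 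Your route avoids both the Fock-space computation and the forward reference to Sec.~\ref{sec:PC}; the paper's route buys an explicit physical description of the stabilizer as the particle-number-conserving Hamiltonians.

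The transitivity half has a genuine gap, and it sits exactly where you flagged a ``minor secondary point''. Negating a single $\nu_j$ in the canonical form is \emph{not} an $\SO(2d)$ operation: conjugation by $O\in\SO(2d)$ preserves the Pfaffian, $\mathrm{Pf}(OGO^t)=\det(O)\,\mathrm{Pf}(G)=\mathrm{Pf}(G)$, whereas flipping the sign of one block changes it. Since every pure quasifree covariance matrix is orthogonal and skew-symmetric, $\mathrm{Pf}(G)=\pm1$ is an invariant of the action, and both signs occur: the vacuum has $G=\bigoplus_{j=1}^d\left(\begin{smallmatrix}0&1\\-1&0\end{smallmatrix}\right)$ with $\mathrm{Pf}=+1$, while the pure quasifree state $f_1^\dagger\Omega$ has the first block reversed and $\mathrm{Pf}=-1$; equivalently, $\tr(\rho P)=\pm1$ is conserved because quadratic Hamiltonians commute with the parity operator. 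Hence the pure quasifree states form \emph{two} $\SO(2d)$-orbits, one per parity sector --- already visible at $d=1$, where $\SO(2)$ acts trivially and the vacuum and the one-particle state are distinct fixed points. To be fair, the paper's own proof leans on the same unqualified canonical-form claim (``singular values form a separating set of invariants''), so the theorem should be read as transitivity within a fixed parity sector; but your proposed repair of the sign flip (``rotation by $\pi$ composed with the reflection'') has determinant $-1$ and is the one step in your write-up that is concretely false and cannot be fixed as written.
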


\begin{proof}
We have already shown that the singular values of the covariance matrices 
(with multiplicities) form a separating set of invariants for the orbits
generated by quadratic Hamiltonians over the set of quasifree states.
This means, according to Proposition~\ref{prop:qf-stab},
that the pure quasifree states form a single orbit.

As the set of quadratic Hamiltonians generate a transitive action over  
the pure quasifree states, the corresponding stabilizer subalgebras
are isomorphic to each other. 
Consider a quadratic Hamiltonian $H$ with the coefficient matrices $A$ and $B$ as given
in Eq.~\eqref{Hqfree}
and the Fock state $\rho_\Omega$, which is the projection onto the Fock vacuum
vector $\Omega$. 
The state $\rho_\Omega$ is left invariant under the time 
evolution generated by $H$ ($\rho_\Omega=e^{-iHt} \rho_\Omega e^{iHt}$) iff 
$\Omega$ is an eigenvector of $H$. 
We obtain that $H \Omega=$
\begin{align*}
 &\left[\sum_{p,q=1}^d A_{pq}(f^{\dagger}_ pf^{\phantom\dagger}_q{-}\delta_{pq}\tfrac{\unity}{2}) 
+ \frac{1}{2}B_{pq} f_{p}^{\dagger}f_{q}^{\dagger}-
\frac{1}{2}B_{pq}^{*} f_p f_{q}\right] \Omega=\\ & 
-\sum_{p=1}^d \frac{1}{2} A_{pp} \Omega + \sum_{ p < q} B_{pq} f_{p}^{\dagger}f_{q}^{\dagger} \Omega.
\end{align*}
By noting that $\Omega$ and $ f_{p}^{\dagger}f_{q}^{\dagger} \Omega$ (with $p < q $) are linearly 
independent vectors, we can conclude that a quadratic  Hamiltonian $H$ leaves the Fock 
vacuum invariant iff
$H=\sum_{p,q=1}^d A_{pq}(f^{\dagger}_ pf^{\phantom\dagger}_q{-}\delta_{pq}\tfrac{\unity}{2})$. 
In Theorem~\ref{quasifree_particle} of Sec.~\ref{sec:PC} we will show that these
operators form a Lie algebra isomorphic to $\uu(d)$. 
\end{proof}

\begin{corollary}\label{cor:pure_quasi_uu}
Theorem~\ref{pure_quasi_uu} identifies the space of pure
quasifree states with the quotient space $\SO(2d)/\U(d)$.
\end{corollary}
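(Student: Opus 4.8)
The plan is to read the identification straight off Theorem~\ref{pure_quasi_uu}, combined with the description of the dynamics in Proposition~\ref{prop:qf-dyn}. By Proposition~\ref{quasifree} the quadratic Hamiltonians generate a realization of the group $\SO(2d)$ (indeed $O_T=e^{-iT}$ runs over all of $\SO(2d)$ as $T$ ranges over $\so(2d)$, using surjectivity of $\exp$ on the compact connected group $\SO(2d)$), and by Proposition~\ref{prop:qf-dyn} an element $O\in\SO(2d)$ acts on a quasifree state by conjugating its covariance matrix, $G\mapsto O G O^t$. Theorem~\ref{pure_quasi_uu} asserts that this action is transitive on pure quasifree states. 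Fixing the Fock vacuum $\rho_\Omega$ as base point, the orbit map $O\mapsto O\cdot\rho_\Omega$ is then surjective onto the pure quasifree states and descends to a bijection $\SO(2d)/\mathrm{Stab}(\rho_\Omega)\to\{\text{pure quasifree states}\}$; since $\SO(2d)$ is compact and acts smoothly, the orbit is an embedded submanifold and this bijection is a diffeomorphism. So the task reduces to identifying the stabilizer subgroup $\mathrm{Stab}(\rho_\Omega)$.

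The second step is to pin the stabilizer down at the group (not merely Lie-algebra) level. By Proposition~\ref{prop:qf-stab}(b) the covariance matrix $G^\Omega$ of $\rho_\Omega$ is skew-symmetric and orthogonal, hence $(G^\Omega)^2=-\unity_{2d}$; in the standard Majorana ordering it is the block-diagonal complex structure $J_0$ with blocks $\bigl(\begin{smallmatrix}0&1\\-1&0\end{smallmatrix}\bigr)$. Because $O^t=O^{-1}$ for $O\in\SO(2d)$, the condition $O G^\Omega O^t=G^\Omega$ is equivalent to $O$ commuting with $J_0$, so
\begin{equation*}
\mathrm{Stab}(\rho_\Omega)=\SO(2d)\cap\mathrm{GL}(d,\C)=\U(d),
\end{equation*}
the last equality being the standard fact that real-orthogonal matrices commuting with $J_0$ form $\U(d)$ (and $\U(d)\subseteq\SO(2d)$ since $\det_\R=\abs{\det_\C}^2=1$). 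This subgroup is connected, and its Lie algebra is exactly the algebra $\{\sum_{p,q}A_{pq}(f_p^\dagger f_q-\delta_{pq}\unity/2)\mid A=A^\dagger\}\cong\uu(d)$ produced in the proof of Theorem~\ref{pure_quasi_uu}, which serves as a consistency check. Feeding $\mathrm{Stab}(\rho_\Omega)=\U(d)$ back into the homogeneous-space description of the first paragraph yields the asserted identification of the pure quasifree states with $\SO(2d)/\U(d)$.

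The main obstacle is precisely the passage from the infinitesimal statement of Theorem~\ref{pure_quasi_uu} (stabilizer \emph{algebra} $\cong\uu(d)$) to the global one (stabilizer \emph{group} $=\U(d)$): a priori the stabilizer could be a disconnected closed subgroup with the same Lie algebra. The direct computation above avoids this issue entirely by working at the group level from the start; alternatively one could deduce connectedness of $\mathrm{Stab}(\rho_\Omega)$ from the homotopy exact sequence of the fibration $\mathrm{Stab}(\rho_\Omega)\hookrightarrow\SO(2d)\twoheadrightarrow\SO(2d)/\mathrm{Stab}(\rho_\Omega)$ together with information on the orbit's fundamental group, but the explicit route is shorter and self-contained. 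A minor additional point — that the orbit of $\rho_\Omega$ exhausts all pure quasifree states — is transitivity, i.e.\ exactly the content of Theorem~\ref{pure_quasi_uu}, and may be invoked verbatim.
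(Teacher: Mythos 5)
Your proposal is correct and follows the route the paper intends: the corollary is the orbit--stabilizer identification applied to the transitive $\SO(2d)$-action on pure quasifree covariance matrices established in Theorem~\ref{pure_quasi_uu} together with Proposition~\ref{prop:qf-dyn}. The one substantive point you add beyond what the paper makes explicit is the passage from the stabilizer \emph{algebra} $\cong\uu(d)$ to the stabilizer \emph{group} $=\U(d)$; your direct computation --- the vacuum covariance matrix is the standard complex structure $J_0$ (indeed $G^\Omega_{2p-1,2p}=1$), and the elements of $\SO(2d)$ commuting with $J_0$ form exactly $\U(d)$ --- closes this gap cleanly and is a worthwhile piece of rigor that the paper leaves implicit.
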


\subsection{Conditions for Quasifree Pure-State Controllability\label{sec:qf_pure_cont}} 

According to Theorem~\ref{pure_quasi_uu},
a set of quasifree control Hamiltonians 
$\{ H_1, \ldots, H_m \}$ allows for quasifree 
pure-state controllability, if the corresponding
Hamiltonians generate 
the full quasifree system algebra, i.e.\ if
$ \langle iH_1, \ldots, iH_m \rangle_{\text{Lie}} \cong \so(2d)$.
It is natural to ask whether this condition is also a necessary. 
Remarkably, it turns out that this is not the case, which is
shown by the following proposition:

\begin{lemma} \label{prop:so(2d-1)}
Consider a quasifree fermionic system with $d>1$ modes.
Let $K$ be the subgroup of $\SO(2d)$ which 
is isomorphic to $\SO(2d-1)$ and 
stabilizes the first coordinate; the 
corresponding
Lie algebra is denoted by $\fk$. Then\\
(a) via its adjoint action the group $K$ acts transitively 
on the set of all skew-symmetric covariance matrices 
of pure quasifree states
(whose singular values are all  $1$);\\
(b) the quasifree system is  pure-state controllable 
if its system algebra is conjugate under $\SO(2d)$ to  $\fk$.
\end{lemma}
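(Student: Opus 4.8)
The plan is to reduce everything to a concrete statement about the action of $\SO(2d-1)$ on unit vectors and antisymmetric orthogonal structures, using the characterization of pure quasifree covariance matrices from Proposition~\ref{prop:qf-stab}(a): a skew-symmetric $G$ is the covariance matrix of a pure quasifree state iff its rows form an orthonormal system, equivalently $G \in \SO(2d)$ with $G^t = -G$, equivalently $G$ is a real orthogonal complex structure on $\R^{2d}$. So part (a) amounts to showing that $K \cong \SO(2d-1)$, embedded as the stabilizer of $e_1 \in \R^{2d}$, acts transitively by conjugation $G \mapsto O G O^t$ on the set $\mathcal{J}$ of all such complex structures.

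For part (a), first I would fix a reference element, the canonical complex structure $J_0 = \bigoplus_{i=1}^d \left(\begin{smallmatrix} 0 & 1 \\ -1 & 0 \end{smallmatrix}\right)$, which is indeed a pure covariance matrix (singular values all $1$, skew-symmetric, orthogonal). Since $\SO(2d)$ acts transitively on $\mathcal{J}$ already (this is essentially the statement that $O_G$ in Proposition~\ref{prop:qf-stab} can always be chosen, combined with $\mathcal{J} \cong \SO(2d)/\U(d)$ from Corollary~\ref{cor:pure_quasi_uu}), it suffices to show that the $K$-orbit of $J_0$ is all of $\mathcal{J}$. The key observation is a dimension/fiber count: $\dim \mathcal{J} = \dim \SO(2d) - \dim \U(d) = d(2d-1) - d^2 = d^2 - d$, while $\dim K = \dim \SO(2d-1) = (2d-1)(d-1) = 2d^2 - 3d + 1$. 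The stabilizer of $J_0$ inside $K$ is $K \cap \U(d)$ where $\U(d)$ is realized as the centralizer of $J_0$ in $\SO(2d)$; concretely these are the orthogonal transformations commuting with $J_0$ and fixing $e_1$. Since $J_0 e_1 = -e_2$, fixing $e_1$ and commuting with $J_0$ forces fixing $e_2$ as well, so $K \cap \U(d) \cong \U(d-1)$ (unitaries on the orthogonal complement of $\mathrm{span}_\C\{e_1\}$), of dimension $(d-1)^2$. Then $\dim(K \cdot J_0) = \dim K - \dim(K\cap\U(d)) = (2d^2 - 3d + 1) - (d-1)^2 = d^2 - d = \dim \mathcal{J}$, so the orbit is open; since $\mathcal{J}$ is connected (it is $\SO(2d)/\U(d)$) and the orbit is also closed (compact group action), the orbit is all of $\mathcal{J}$.

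For part (b): if the system algebra $\fg$ is conjugate under $\SO(2d)$ to $\fk$, then writing $\fg = O\fk O^t$, the reachable set of any pure quasifree state $\rho$ with covariance matrix $G$ is (by Proposition~\ref{prop:qf-dyn} and the definition of reachable sets, Eq.~\eqref{eqn:reachSigma}) the orbit of $G$ under $\exp(\fg) = O\exp(\fk)O^t$, which by part (a) applied after the conjugation by $O^t$ sweeps out all pure quasifree covariance matrices. Hence every pure quasifree state is reachable from every other, i.e.\ the system is pure-state controllable. I would also remark that $\fk \subsetneq \so(2d)$ strictly (and $\fk \not\cong \uu(d)$), so this genuinely shows that the sufficient condition $\fg \cong \so(2d)$ from the naive argument is not necessary — which is the point the lemma is making.

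The main obstacle I anticipate is making the orbit-is-everything step fully rigorous rather than just a dimension count: one must verify that $K \cdot J_0$ is both open (tangent space surjectivity, i.e.\ $[\fk, J_0] + \mathfrak{stab} = T_{J_0}\mathcal{J}$, equivalently the dimension count plus the fact that the infinitesimal action has no unexpected degeneracy) and closed (automatic from compactness of $K$) inside the connected manifold $\mathcal{J}$. A cleaner route, which I would probably take to avoid any subtlety, is an explicit inductive argument: given any pure covariance matrix $G$, use a rotation in $K$ to bring the vector $Ge_1$ (a unit vector orthogonal to $e_1$, lying in $\mathrm{span}\{e_2, \dots, e_{2d}\}$) to $-e_2$ — this is possible since $\SO(2d-1)$ acts transitively on the unit sphere of the hyperplane $e_1^\perp$ — and then observe that the transformed $G$ now agrees with $J_0$ on the plane $\mathrm{span}\{e_1, e_2\}$ and restricts to a pure complex structure on the orthogonal $(2d-2)$-dimensional complement, where one applies the inductive hypothesis using the copy of $\SO(2d-3)$ fixing $e_1, e_2, e_3$ inside $K$ (note $e_3$ is available because one has used up only the stabilizer of $e_1$ in reaching $-e_2$). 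The base case $d=1$ is trivial since $\mathcal{J}$ is then a single point up to sign, or $d=2$ can be checked by hand.
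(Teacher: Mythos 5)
Your ``cleaner route'' at the end is essentially the paper's own proof: the authors establish (a) by induction on $d$, writing the covariance matrix with first row $(0,v_1^t)$, applying $1\oplus O_1$ with $O_1\in\SO(2d-1)$ to rotate $v_1$ to $(1,0,\dots,0)$, noting that orthonormality of the rows then forces the transformed matrix to split off a $\left(\begin{smallmatrix}0&1\\-1&0\end{smallmatrix}\right)$ block, and recursing on the $(2d-2)$-dimensional complement with base case $d=2$; part (b) is then declared a consequence of (a), exactly as you argue. Your primary argument is a genuinely different route: identifying the $K$-stabilizer of $J_0$ as $\U(d-1)$, checking $\dim K-(d-1)^2=d^2-d=\dim \SO(2d)/\U(d)$, and concluding via openness plus closedness of a compact-group orbit in a connected homogeneous space. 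That computation is correct (the observation that fixing $e_1$ and commuting with $J_0$ forces fixing $e_2$ is the right key point), and it has the advantage of explaining \emph{why} $\SO(2d-1)$ suffices --- the coset count is tight --- which is also the mechanism behind the classification of transitive subgroups on $\SO(2d)/\U(d)$ that the paper invokes right after this lemma. The explicit induction buys an elementary, self-contained verification with no appeal to orbit topology.

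One caveat applies to your dimension-count argument (and, to be fair, to the paper's own statement): the set of all orthogonal skew-symmetric $2d\times 2d$ matrices is $\Or(2d)/\U(d)$, which has \emph{two} connected components distinguished by $\mathrm{Pf}(G)=\pm 1$ (physically, the parity of the pure Gaussian state), and conjugation by $\SO(2d)$ preserves the Pfaffian. So ``transitive on all pure covariance matrices'' really means ``transitive on each parity component''; your step ``$\mathcal J$ is connected'' holds only after restricting to one component, and the inductive argument likewise preserves $\mathrm{Pf}$ at every stage. This does not affect the application (the $K$-orbits coincide with the $\SO(2d)$-orbits, which is all that part (b) needs), but you should state explicitly which component you are working on when you invoke connectedness.
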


\begin{proof}
We prove the statement (a)
by showing that all pure quasifree states can be transformed 
under $K$-conjugation
to the same pure state.
We employ an induction on $d$.
The base case $d=2$
can be directly verified.
It follows from Proposition~\ref{prop:qf-stab}(b) that the
skew-symmetric covariance matrix of a pure quasifree state
can be written as
\begin{equation*}
G^{\rho}=
\begin{pmatrix}
0 &  v_1^t\\
 - v_1 & A_1
\end{pmatrix} ,
\end{equation*}
where $v_1$ denotes a normalized $(2d{-}1)$-dimensional vector and $A_1$
denotes a $(2d{-}1) \times (2d{-}1)$-dimensional skew-symmetric matrix.
We consider the action of a general orthogonal
transformation $1 \oplus O_1$ with $O_1 \in \SO(d-1)$:
\begin{equation*}
\begin{pmatrix}
1 &   \\
  & O_1
\end{pmatrix}
\begin{pmatrix}
0 &  v_1^t\\
 - v_1 & A_1
\end{pmatrix}
\begin{pmatrix}
1 &   \\
  & O_1^t
\end{pmatrix} =
\begin{pmatrix}
0 &  v_1^t O_1^t\\
 - O_1 v_1 & O_1A_1O_1^t 
\end{pmatrix}.
\end{equation*}
Since any $(2d{-}1)$-dimensional vector $v_1$ with unit length
can be transformed by an orthogonal transformation to
$(1,0,0,\ldots, 0)$, we can choose $O_1$ such that $v_1^t O_1^t = (1,0,0,\ldots,0)$.
We have $(O_1A_1O_1^t)_{11} =0$
as  the transformed matrix is skew-symmetric.
Again by Proposition~\ref{prop:qf-stab}(b) we obtain the transformed matrix as
\begin{equation*}
\begin{pmatrix}
0   & 1  \\
-1  & 0 \\
    &  & 0    & v_2^t\\
    &  & -v_2 & A_2
\end{pmatrix},
\end{equation*}
where $v_2$ is a $2d-3$ dimensional unit real vector and $A_2$
is a $(2d-3) \times (2d-3)$ skew-symmetric matrix. 
Now the proof of (a) follows using the induction hypothesis.
The statement (b) is a consequence of (a).
\end{proof}

We relate Lemma~\ref{prop:so(2d-1)} to what is known about 
transitive actions on the coset space $\SO(2d)/\U(d)$.
Only Lie groups isomorphic to $\SO(2d-1)$ and $\SO(2d)$ can act transitively
(i.e.~in a pure-state controllable manner) on 
the homogeneous space $\SO(2d)/\U(d)$
assuming $d\geq 3$. The case $d\geq 4$  is discussed in \cite{Kerr96}.
For $d=3$ we have $\SO(6)\cong \SU(4)$ and
$\SU(4)/\U(3)=\mathrm{CP}^3$ (where $\mathrm{CP}^3$ denotes the
complex projective space in four dimensions), and it is known that
only subgroups of $\SU(4)$ isomorphic to
$\SU(4)$ or $\SP(2)\cong \SO(5)$ can act transitively
on $\mathrm{CP}^3$ (see p.~168 of \cite{Oni66} or p.~68 of \cite{DiHeGAMM08}; 
refer also to \cite{KDH12}). 

In most cases the $\so(k-1)$-subalgebras of $\so(k)$ are conjugate to each other.
More precisely, Lemma~7 of \cite{MS43} states
that for $3 \leq k\notin \{4,8\}$ all subalgebras of $\so(k)$ whose dimension
is equal to $(k-1)(k-2)/2$ are conjugate to each other under the  action of the group $\SO(k)$. 
In particular, it follows in these cases that all subalgebras of $\so(k)$ with dimension
$(k-1)(k-2)/2$ are isomorphic to $\so(k-1)$. 
Interestingly, the last statement holds also for $k\in\{4,8\}$
(see Lemma~3 of \cite{MS43}); however 
not all of these subalgebras of $\so(k)$ are conjugate.
We obtain the following theorem providing a necessary and sufficient
condition for full quasifree pure-state controllability in the case of  $d>4$ or $d=3$ modes:

\begin{theorem}\label{controllability_quasifree_pure}
A quasifree fermionic 
system with $d> 4$ or $d=3$ modes is fully pure-state controllable 
iff 
its system algebra is isomorphic to $\so(2d)$ or
$\so(2d-1)$.
\end{theorem}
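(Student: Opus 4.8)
\textbf{Proof plan for Theorem~\ref{controllability_quasifree_pure}.}
The plan is to combine the positive result of Lemma~\ref{prop:so(2d-1)}(b) with a rigidity argument that rules out every other system algebra. First I would establish the ``if'' direction, which is essentially already in hand: if the system algebra is isomorphic to $\so(2d)$, full pure-state controllability follows from Theorem~\ref{pure_quasi_uu} (quadratic Hamiltonians act transitively on pure quasifree states, and $\so(2d)$ is the full quadratic algebra by Proposition~\ref{quasifree}); if it is isomorphic to $\so(2d-1)$, then by the conjugacy results from \cite{MS43} quoted above (valid since $2d \notin \{4,8\}$ when $d>4$ or $d=3$, except one must handle $d=4$, which is why it is excluded, and $2d=6$ for $d=3$ is fine) the subalgebra is $\SO(2d)$-conjugate to the coordinate-stabilizing $\fk$ of Lemma~\ref{prop:so(2d-1)}, and then part (b) of that lemma gives transitivity on the pure quasifree covariance matrices, hence pure-state controllability.

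For the ``only if'' direction, the strategy is: a quasifree system has system algebra $\fk \subseteq \so(2d)$ (Proposition~\ref{quasifree}), and pure-state controllability means exactly that the connected group $\exp(\fk)$ acts transitively on the space of pure quasifree states, which by Corollary~\ref{cor:pure_quasi_uu} is the homogeneous space $\SO(2d)/\U(d)$. So I would invoke the classification of Lie groups acting transitively on $\SO(2d)/\U(d)$: for $d \geq 4$ this is \cite{Kerr96}, which leaves only groups isomorphic to $\SO(2d)$ or $\SO(2d-1)$; for $d=3$ one uses the isomorphism $\SO(6)\cong\SU(4)$ together with $\SO(6)/\U(3)\cong\mathrm{CP}^3$ and the classical classification of transitive actions on complex projective space (\cite{Oni66,DiHeGAMM08}), giving only $\SU(4)\cong\SO(6)$ or $\SP(2)\cong\SO(5)$. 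In both cases the abstract isomorphism type of the transitive subgroup — equivalently of $\fk$ — is forced to be $\so(2d)$ or $\so(2d-1)$, which is the claim.

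The main obstacle is making the reduction to the transitive-action classification airtight and correctly accounting for the dimension exclusions. One must be careful that ``pure-state controllability'' in the quasifree sense is genuinely transitivity of $\exp(\fk)$ on $\SO(2d)/\U(d)$ and not something weaker — this is exactly the content of Theorem~\ref{pure_quasi_uu} and its corollary, so the identification is clean, but it should be stated explicitly. The more delicate point is the small-$d$ bookkeeping: $d=4$ is excluded precisely because $\so(8)$ has exceptional behaviour (triality, and non-conjugate $\so(7)$-subalgebras per \cite{MS43}), so the ``$\so(2d-1)$'' case would not be uniquely pinned down; $d=3$ needs the separate $\mathrm{CP}^3$ argument rather than \cite{Kerr96}; and $d=2$ is excluded since $\so(4)$ is not simple. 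I would therefore organize the proof as: (i) the ``if'' direction via Lemma~\ref{prop:so(2d-1)} and the conjugacy lemmas; (ii) the ``only if'' direction by identifying pure-state controllability with transitivity on $\SO(2d)/\U(d)$ and citing \cite{Kerr96} for $d\geq 4$ and the $\mathrm{CP}^3$ classification for $d=3$; (iii) a remark pinning down why $d=4$ must be excluded. The routine part is checking that the cited classifications apply verbatim; the conceptual core is simply that transitivity on the fixed homogeneous space is an isomorphism-type constraint strong enough to force one of exactly two algebras.
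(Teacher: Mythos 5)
Your proposal is correct and follows essentially the same route as the paper: the ``only if'' direction identifies pure-state controllability with transitivity on $\SO(2d)/\U(d)$ via Theorem~\ref{pure_quasi_uu} and invokes the classification of transitive actions (\cite{Kerr96} for $d\geq 4$, the $\mathrm{CP}^3$ classification for $d=3$), while the ``if'' direction uses the conjugacy of $\so(2d-1)$-subalgebras from \cite{MS43} together with Lemma~\ref{prop:so(2d-1)}(b). Your additional bookkeeping on why $d=2$ and $d=4$ are excluded matches the discussion surrounding the theorem in the paper.
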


\begin{proof}
\noindent ``$\Rightarrow$'': Note that Theorem~\ref{pure_quasi_uu} identifies the space of 
pure quasifree states with 
the homogeneous space $\SO(2d)/\U(d)$. Assuming $d\geq 3$,
we summarized above that
a group acting transitively on this homogeneous space is isomorphic either to
$\SO(2d)$ or $\SO(2d-1)$.
Thus only the full quasifree system algebra $\so(2d)$ 
or a system algebra isomorphic to $\so(2d-1)$ can generate a transitive action on the 
space of pure quasifree states.\\
\noindent ``$\Leftarrow$'':  As discussed, all $\so(2d-1)$-subalgebras are conjugate 
to each other for $d>4$ and $d=3$.  
Lemma~\ref{prop:so(2d-1)}(b) then implies that any set of Hamiltonians 
generating a system algebra isomorphic to $\so(2d-1)$
will allow for full quasifree pure-state controllability.
\end{proof}

Note that the cases $d=2$ and $d=4$
are well-known pathological exceptions. 
The algebra $\so(4)$ 
breaks up into a direct sum of two 
$\so(3)$-algebras which hence cannot be conjugate
to each other.
For $d=4$, there are three classes of non-conjugate
subalgebras of type $\so(7)$ in $\so(8)$ where two classes are given
by irreducible embeddings and the third one is conjugate to the reducible 
standard embedding fixing the first coordinate \footnote{For 
details refer to the discussions
on the pages 57--58 of \cite{Oni94},
on the pages 234--235 of \cite{Min06}, or on the pages 418--419 of \cite{deGraaf11}.
In addition, this information can also be inferred from the tables on page~260
of \cite{McKay81}.}.

On a more general level, Theorem~\ref{controllability_quasifree_pure}
can be seen as a fermionic variant of the pure-state controllability criterion for spin systems
 \cite{AA03,SchiSOLea02b,SchiSOLea02}. We note here
that the result for spin systems has been recently generalized from the
transitivity over a set of one-dimensional projections (i.e.\ pure states) to the transitivity over 
a set of projections of arbitrary fixed rank (i.e., over Grassmannian  spaces) \cite{KDH12}. 
We will use exactly this generalization in Section~\ref{subsec:part-numb-pure-quasifree} in order 
to find a necessary and sufficient pure-state controllability condition for 
particle-conserving quasifree systems.

\section{Translation-Invariant Systems\label{sec:TI}}

We study system algebras generated by translation-invariant
Hamiltonians of the type which arises approximately in experimental settings of, e.g., optical lattices. 
As the naturally occurring interactions  
are usually short-ranged, we pay particular attention to the case of 
Hamiltonians with restricted interaction length.
For example, consider a $d$-site fermionic chain with Hamiltonians which are translation-invariant 
and are composed of nearest-neighbor (plus on-site) terms. All elements in its dynamic algebra
can be written as linear combinations of six types of terms:
the chemical potential 
\begin{align}
\label{Eq_lin_1}
h_0&:= \sum_{n=1}^{d} \left(f^{\dagger}_n f^{\phantom\dagger}_n -\tfrac{1}{2}\unity\right),
\intertext{the real and complex hopping Hamiltonians}
\label{Eq_lin_2}
h_{\mathrm{rh}}&:= \sum_{n=1}^{d} (f^{\dagger}_n f^{\phantom\dagger}_{n+1} 
+ f^{\dagger}_{n+1} f^{\phantom\dagger}_{n})\; \text{ and }\\ 
\label{Eq_lin_3}
h_{\mathrm{ch}}&:= \sum_{n=1}^{d} i(f^{\dagger}_n f^{\phantom\dagger}_{n+1} 
- f^{\dagger}_{n+1} f^{\phantom\dagger}_{n}),
\intertext{the real and complex pairing terms}
\label{Eq_lin_4}
h_{\mathrm{rp}}&:= \sum_{n=1}^{d} (f^{\dagger}_n f^{\dagger}_{n+1} 
+ f^{\phantom\dagger}_{n+1} f^{\phantom\dagger}_{n})\; \text{ and }\\
\label{Eq_lin_5}
h_{\mathrm{cp}}&:=\sum_{n=1}^{d} i(f^{\dagger}_n f^{\dagger}_{n+1} 
- f^{\phantom\dagger}_{n+1} f^{\phantom\dagger}_{n}),
\intertext{as well as a local density-density-type interaction}
\label{Eq_lin_6}
h_{\mathrm{int}}&:=  
\sum_{n=1}^{d}\left(f^{\dagger}_n f^{\phantom\dagger}_n f^{\dagger}_{n+1} f^{\phantom\dagger}_{n+1}  
- \tfrac{1}{4}\unity\right).
\end{align}
The corresponding dynamic system algebras (given in Table~\ref{tab:trans-inv}) were computed with
the help of the computer algebra system {\sc magma} \cite{MAGMA} for up to six modes
while distinguishing nearest-neighbor interactions from arbitrary translation-invariant ones.

\begin{table}[tb]
\caption{\label{tab:trans-inv} System algebras of translation-invariant fermionic systems with $d$ modes for
(a) nearest-neighbor interactions only and (b) arbitrary translation-invariant interactions}
\begin{tabular}{l@{\hspace{5mm}}l@{\hspace{15mm}}r}
\hline\hline\\[-2mm]
Case & $d$
& System algebra\\[1mm] 
\hline \\[-2mm]
(a) & 1 & --\\[1mm]
& 2 & $\sum_{i=1}^2 \uu(1)$\\[1mm]
& 3 & $\sum_{i=1}^2\su(2)+\sum_{i=1}^3\uu(1)$
 \\[1mm]
& 4 & $\sum_{i=1}^5 \su(2) + \sum_{i=1}^4 \uu(1)$ \\[1mm]
& 5 & $\sum_{i=1}^2\su(4) + \sum_{i=1}^{8} \su(3) + \sum_{i=1}^3 \uu(1)$ \\[1mm]
& 6 & $\sum_{i=1}^4\su(6) + \sum_{i=1}^8 \su(5) + \sum_{i=1}^3 \uu(1)$ \\[2mm] \hline \\[-2mm]
(b) & 1 & --\\[1mm]
& 2 & $\sum_{i=1}^2 \uu(1)$\\[1mm]
& 3 & $\sum_{i=1}^2\su(2)+\sum_{i=1}^4\uu(1)$\\[1mm]
& 4 & $\sum_{i=1}^8 \su(2) + \sum_{i=1}^6 \uu(1)$  \\[1mm]
& 5 & $\sum_{i=1}^2\su(4) + \sum_{i=1}^8 \su(3) + \sum_{i=1}^8 \uu(1)$ \\[1mm]
& 6 & $\sum_{i=1}^4\su(6) + \sum_{i=1}^8 \su(5) + \sum_{i=1}^{10} \uu(1)$ \\[1mm]
\hline\hline
\end{tabular}
\end{table}

In this context, two sets of natural questions arise: 
(a) How does the dimension of these dynamic system algebras scale with the number of modes? 
(b) How do the system algebras generated by the nearest-neighbor terms differ from the general
translation-invariant ones? 
Can one characterize those elements that are translation-invariant 
yet not generated by nearest-neighbor Hamiltonians? Are there, for example, next-nearest-neighbor 
interactions of this type?
In this section, we will answer these questions partially. We determine the system algebra 
for general translation-invariant fermionic Hamiltonians, and conclude that its dimension scales 
exponentially with the the number of modes.
We also provide translation-invariant fermionic Hamiltonians of bounded interaction length
which cannot be generated by nearest-neighbor ones.

The structure of this section is the following:
As the structure of system algebras for translation-invariant systems has only been studied
sparsely even for simple scenarios of spin models, we start by examining this case first. 
In Secs.~\ref{transl_spin} and
\ref{shortrange}, we determine the system algebras of all translation invariant spin-chain Hamiltonians 
with $L$ qubits. In particular, we simplify and generalize results of \cite{kraus-pra71} concerning 
finite-ranged interactions. Finally, we present the corresponding results for the fermionic case in
Sec.~\ref{sec-transl-fermionic} and \ref{ferm_nn}.

\begin{table*}[bt]
\caption{\label{tab:trans-inv-spin} System algebras $\mathfrak{t}_M(L)$ of translation-invariant 
systems with $1\leq L \leq 6$ spins and interaction lengths of less than $M$, where 
$\fk_{a}:=\su(4)+\sum_{i=1}^2\su(2)$,
$\fk_{b}:=\su(6)+\su(4)$,
$\fk_{c}:=\su(8)+\sum_{i=1}^4\su(6)$, and
$\fk_{d}:=\su(14)+\sum_{i=1}^2\su(11) + \su(10)+\sum_{i=1}^2 \su(9)$. Refer also to 
Theorem~\ref{translation_spins} for the structure of 
$\mathfrak{t}_{L}(L)$.}
\begin{tabular}{c@{\hspace{6mm}}rl@{\hspace{3mm}}l@{\hspace{3mm}}
l@{\hspace{3mm}}l@{\hspace{3mm}}l@{\hspace{3mm}}l}
\hline\hline\\[-2mm]
$M$& $L=$ & 1 & 2 & 3 & 4 & 5 & 6\\[1mm] 
\hline \\[-2mm]
1 &&  $\su(2)$ &  $\su(2)$ & $\su(2)$ & $\su(2)$ &$\su(2)$ & $\su(2)$  \\[1mm]
2 && -- &  $\su(3)+\uu(1)$ &  $\fk_{a}+\uu(1)$ & $\fk_{b} + \sum_{i=1}^2\su(2) + \uu(1)$& $\fk_{c} 
+ \uu(1)$ & $\fk_{d}+ \uu(1)$ 
\\[1mm]
3 && -- & -- &  $\fk_{a}+\sum_{i=1}^2 \uu(1)$ & $\fk_{b} + \sum_{i=1}^2\su(3) + \sum_{i=1}^3 \uu(1)$ & $\fk_{c} 
+ \sum_{i=1}^2 \uu(1)$ & $\fk_{d}+ \sum_{i=1}^3 \uu(1)$ \\[1mm]
4 && -- & -- & -- & $\fk_{b} + \sum_{i=1}^2\su(3) + \sum_{i=1}^3 \uu(1)$ & $\fk_{c} + 
\sum_{i=1}^4 \uu(1)$ & $\fk_{d}+ \sum_{i=1}^4 \uu(1)$ 
\\[1mm]
5 && -- & -- & -- & -- & $\fk_{c} + \sum_{i=1}^4 \uu(1)$ & $\fk_{d}+ \sum_{i=1}^5 \uu(1)$ \\[1mm]
6 && -- & -- & -- & -- & -- & $\fk_{d}+ \sum_{i=1}^5 \uu(1)$ \\[1mm] 
\hline\hline
\end{tabular}
\end{table*}

\subsection{Translation-Invariant Spin Chains\label{transl_spin}}

Consider a chain of $L$ qubits with Hilbert space
$\otimes_{i=1}^{L}\mathbb{C}^2$.
The 
\emph{translation unitary} $U_T$ 
is defined by its action on the canonical basis 
vectors as
\begin{equation}\label{eq:spin-trans-def} 
U_T\, |n_1, n_2, \ldots, n_L \rangle =
|n_L, n_1, \ldots, n_{L-1} \rangle
\end{equation}
where $n_i \in \{0,1 \}$.
We will determine the translation-invariant system algebra 
which is defined as the maximal Lie algebra
of skew-hermitian matrices commuting with the translation unitary
$U_T$.

\begin{lemma} \label{lemma:trans_inv_spinchain}
The translation unitary can be spectrally decomposed as
$
U_T=\sum_{\ell = 0}^{L-1} \exp(2\pi i \ell /L) P_\ell 
$, 
and the rank $\Tm_\ell$ of the spectral projection $P_{\ell}$ 
is given by the Fourier transform
\begin{equation} \label{eq:trans-mult} 
\Tm_\ell:=\frac{1}{L}\sum_{k=0}^{L-1} 2^{\gcd(L,k)}\, \exp(-2 \pi i k\ell/L),
\end{equation}
where $\gcd(L,k)$ denotes the greatest common divisor of
$L$ and $k$.
\end{lemma}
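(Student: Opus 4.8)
The plan is to compute the multiplicities of the eigenvalues of the translation unitary $U_T$ by first realizing $\Tm_\ell$ as a trace and then using the character of a permutation representation. First I would note that the powers $U_T^k$ ($k=0,1,\ldots,L-1$) form the regular representation of the cyclic group $\Z/L\Z$ acting on $\otimes_{i=1}^L \C^2$ by cyclic permutation of tensor factors; since $U_T^L = \unity$, $U_T$ is diagonalizable with eigenvalues among the $L$-th roots of unity $\omega^\ell$, $\omega := \exp(2\pi i/L)$, and the spectral projections are given by the standard group-averaging formula $P_\ell = \tfrac{1}{L}\sum_{k=0}^{L-1}\omega^{-k\ell}\,U_T^k$. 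Taking traces yields immediately
\begin{equation*}
\Tm_\ell = \tr P_\ell = \frac{1}{L}\sum_{k=0}^{L-1}\omega^{-k\ell}\,\tr(U_T^k),
\end{equation*}
so the entire problem reduces to evaluating $\tr(U_T^k)$.

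The key combinatorial step is then to show $\tr(U_T^k) = 2^{\gcd(L,k)}$. Here I would argue that $U_T^k$ is the permutation operator on basis vectors $|n_1,\ldots,n_L\rangle$ induced by the permutation $\sigma_k$ of the index set $\{1,\ldots,L\}$ given by $i \mapsto i-k \bmod L$ (a cyclic shift by $k$). The trace of a permutation operator on $(\C^2)^{\otimes L}$ counts the basis vectors left fixed, i.e.\ the assignments $i\mapsto n_i$ that are constant on each cycle of $\sigma_k$; hence $\tr(U_T^k) = 2^{c(k)}$ where $c(k)$ is the number of cycles of $\sigma_k$. A shift by $k$ on $\Z/L\Z$ decomposes into $\gcd(L,k)$ cycles each of length $L/\gcd(L,k)$, so $c(k) = \gcd(L,k)$ and $\tr(U_T^k) = 2^{\gcd(L,k)}$. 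Substituting into the formula for $\Tm_\ell$ gives exactly Eq.~\eqref{eq:trans-mult}.

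The main obstacle — really the only non-formal point — is making the cycle-counting argument precise, in particular being careful about the direction of the shift (the definition in Eq.~\eqref{eq:spin-trans-def} moves $n_L$ to the front, which corresponds to $\sigma_k: i \mapsto i - k \bmod L$ acting on positions, or equivalently to the inverse shift acting on labels) and verifying that this sign ambiguity does not affect the final answer. It does not: $\gcd(L,k) = \gcd(L,L-k)$, so whether one counts cycles of the shift by $k$ or by $-k$, the result is the same, and moreover the sum over $k$ in Eq.~\eqref{eq:trans-mult} is invariant under $k \mapsto L-k$ combined with complex conjugation, which also confirms $\Tm_\ell \in \R_{\geq 0}$ as it must be. I would close by remarking that $\Tm_\ell$ is manifestly a non-negative integer (being $\tr P_\ell$ for an orthogonal projection $P_\ell$), even though this is not obvious from the Fourier expression itself.
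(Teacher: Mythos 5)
Your proof is correct and takes essentially the same route as the paper's: both reduce $\Tm_\ell$ to $\tfrac{1}{L}\sum_{k}\tr(U_T^k)\,e^{-2\pi i k\ell/L}$ (you via the group-averaging projector $P_\ell=\tfrac{1}{L}\sum_k \omega^{-k\ell}U_T^k$, the paper via the character scalar product --- the same computation) and both evaluate $\tr(U_T^k)=2^{\gcd(L,k)}$ by counting the basis bit strings fixed under a cyclic shift by $k$. The only quibble is terminological: the map $k\mapsto U_T^k$ on $(\C^2)^{\otimes L}$ is a permutation representation of $\Z/L\Z$, not its \emph{regular} representation, but nothing in your argument depends on that label.
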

\begin{proof}
The eigenvalues of $U_T$ are limited to 
$\exp(2\pi i \ell/L)$ with $\ell \in \{0,\ldots,L{-}1 \}$ as the order of $U_T$ is
$L$, i.e.\ $U_T^L=\unity$. 
Hence, the corresponding spectral decomposition is given by
$U_T=\sum_{\ell = 0}^{L-1} \exp(2\pi i \ell /L) P_\ell$.
This induces a unitary 
representation $D_T$
of the cyclic group $\mathbb{Z}_L$ which maps
the $k$-th power of the generator $g\in\Z_L$ of degree $L$ to
$D_T(g^k)=U_T^k$.
Note that
$D_T$ 
splits up into 
a direct sum 
$D_T \cong \oplus_{\ell \in \{0,\ldots,L{-}1 \}}   (D_\ell)^{\oplus \dim (P_\ell)}$
containing
$\dim (P_\ell)$ copies of 
the one-dimensional representations satisfying 
$D_\ell (g^k)=\exp(2 \pi i k \ell /L)$. 
Therefore, we determine the rank of a projection
$P_\ell$ by computing the
multiplicity of $D_\ell$ using the character scalar product
\begin{align*}
\Tm_\ell &=\frac{1}{L}\sum_{k=0}^{L-1} \tr[D_T(g^k)]\; 
{\tr[D_{\ell}(g^k)]}^{*} \\ &= \frac{1}{L}
\sum_{k=0}^{L-1} \tr[D_T(g^k)]\; \exp(-2 \pi i k \ell /L).
\end{align*} 
The trace of $D_T(g^k)$ is equal to the number of
basis vectors left invariant since $D_T(g^k)$ is 
a permutation matrix in the canonical basis.
From elementary combinatorial theory we know that 
a bit string $(n_1, n_2, \ldots , n_L)$ is left invariant 
under a cyclic shift by $k$ positions if and only if it is of the form
\begin{gather*}
(n_1, n_2, \ldots, n_{\gcd(L,k)},\ldots, 
n_1, n_2, \ldots, n_{\gcd(L,k)}) \, .
\end{gather*}
It follows that the number of $U_T^k$-invariant basis vectors 
and---hence---the trace of $D_T(g^k)=U_T^k$ is equal to $2^{{\rm{gcd}}(L,k)}$.
Thus, the multiplicities of $D_\ell$ are given accordingly by
$
\Tm_\ell=\frac{1}{L}\sum_{k=0}^{L-1} 2^{\gcd(L,k)}\, \exp(-2 \pi i k\ell/L)
$.
\end{proof}

Note that a Hamiltonian commutes with $U_T$ iff it commutes with all 
spectral projections $P_\ell$ of $U_T$. Combining this fact with Theorem~\ref{thm:double-centralizer}
we obtain a characterization of the system algebra for translation-invariant spin systems:

\begin{theorem}\label{translation_spins}
The
translation-invariant Hamiltonians
acting on a $L$-qubit system generate the system algebra
$\mathfrak{t}(L):=\,  \mathfrak{s}[\oplus_{\ell=0}^{L-1}\, \uu (\Tm_\ell)]  
\cong [\sum_{\ell=0}^{L-1}\, \su (\Tm_\ell)] + [\sum_{i=1}^{L-1}\, \uu (1)]$, where the numbers 
$\Tm_\ell$ are defined in
Eq.~\eqref{eq:trans-mult}.
\end{theorem}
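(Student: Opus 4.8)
The plan is to describe the set of all translation-invariant traceless skew-Hermitian operators explicitly, observe that this set is already closed under the Lie bracket and real linear combinations, and conclude that it therefore coincides with the system algebra it generates.

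First I would record that a skew-Hermitian operator commutes with $U_T$ if and only if it commutes with every spectral projection $P_\ell$ of Lemma~\ref{lemma:trans_inv_spinchain}. Indeed, since $U_T$ has the $L$ pairwise distinct eigenvalues $\exp(2\pi i\ell/L)$ (each actually occurring, see the last paragraph), each $P_\ell$ is a Lagrange-interpolation polynomial in $U_T$, while conversely $U_T=\sum_{\ell=0}^{L-1}\exp(2\pi i\ell/L)\,P_\ell$; hence $\{U_T\}'=\{P_0,\ldots,P_{L-1}\}'$. Consequently the translation-invariant operators are exactly those that are block-diagonal with respect to the orthogonal eigenspace decomposition $\otimes_{i=1}^L\C^2=\bigoplus_{\ell=0}^{L-1}\mathrm{ran}\,P_\ell$, the $\ell$-th block having dimension $\Tm_\ell$ by Lemma~\ref{lemma:trans_inv_spinchain}. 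Stated representation-theoretically, $U_T$ generates the $\Z_L$-representation $D_T\cong\bigoplus_{\ell=0}^{L-1}(D_\ell)^{\oplus\Tm_\ell}$ with the $D_\ell$ pairwise inequivalent one-dimensional irreducibles, which is precisely the situation covered by Theorem~\ref{thm:double-centralizer}.

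Next I would apply Theorem~\ref{thm:double-centralizer} to the single generator $U_T$ (equivalently to the family $\{P_\ell\}$): it yields that the maximal Lie algebra of traceless skew-Hermitian operators commuting with $U_T$ is $\mathfrak{s}[\bigoplus_{\ell=0}^{L-1}\uu(\Tm_\ell)]$, and that this algebra is its own Lie closure. (The latter is also elementary on its own: a commutant is closed under products and linear combinations, so its traceless skew-Hermitian part is closed under the commutator and under real linear combinations.) Since every translation-invariant Hamiltonian becomes, after subtracting a suitable multiple of $\unity$, a traceless skew-Hermitian operator commuting with $U_T$, the system algebra generated by all translation-invariant Hamiltonians equals $\mathfrak{t}(L)=\mathfrak{s}[\bigoplus_{\ell=0}^{L-1}\uu(\Tm_\ell)]$.

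Finally I would rewrite this in the stated form using $\uu(m)\cong\su(m)\oplus\uu(1)$ (with the convention $\su(1)=\su(0)=\{0\}$), so that $\bigoplus_{\ell=0}^{L-1}\uu(\Tm_\ell)\cong[\sum_{\ell=0}^{L-1}\su(\Tm_\ell)]+[\sum_{\ell=0}^{L-1}\uu(1)]$; imposing tracelessness is a single linear constraint on the $L$-dimensional center and removes exactly one $\uu(1)$-summand, leaving $[\sum_{\ell=0}^{L-1}\su(\Tm_\ell)]+[\sum_{i=1}^{L-1}\uu(1)]$. For this count to produce exactly $L-1$ copies of $\uu(1)$ (and no spuriously vanishing $\su(\Tm_\ell)$) one needs $\Tm_\ell\geq 1$ for every $\ell$; this holds because the vector $\sum_{j=0}^{L-1}\exp(-2\pi ij\ell/L)\,U_T^j\ket{10\cdots0}$ is nonzero and lies in $\mathrm{ran}\,P_\ell$, the cyclic orbit of a single excitation having full length $L$. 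Thus the whole statement is essentially a corollary of Lemma~\ref{lemma:trans_inv_spinchain} and Theorem~\ref{thm:double-centralizer}; the only genuinely new bookkeeping is this center count, and I expect verifying $\Tm_\ell\geq1$ for all $\ell$ to be the main — and rather minor — obstacle.
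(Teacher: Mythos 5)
Your proposal is correct and follows essentially the same route as the paper, which likewise reduces commutation with $U_T$ to commutation with the spectral projections $P_\ell$ and then invokes Theorem~\ref{thm:double-centralizer} together with Lemma~\ref{lemma:trans_inv_spinchain}. The extra details you supply (the $P_\ell$ being polynomials in $U_T$, and the check $\Tm_\ell\geq 1$ via the single-excitation orbit, which justifies the count of $L-1$ central $\uu(1)$ summands) are correct and merely make explicit what the paper leaves implicit.
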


In complete analogy one can show that for a chain consisting of $L$ systems 
with $N$ levels, the 
system algebra is equal to $\mathfrak{s}[\oplus_{\ell=0}^{L-1}\, \uu (\Tm_{N,\ell})]$,
where $\Tm_{N,\ell}$ denotes the Fourier transform of the function $N^{\gcd(L,k)}$.

\subsection{Short-Ranged Spin-Chain Hamiltonians\label{shortrange}}

In many physical scenarios, we may only have direct control over 
translation-invariant Hamiltonians of limited interaction range. 
We will investigate in this section how the limitations on the interaction range constrain
the set of reachable operations.
In particular, we provide upper bounds for the system algebras with finite interaction range.

Let us denote the Lie algebra corresponding to Hamiltonians of 
interaction length less than $M$ by $\mathfrak{t}_{M}(L)$, or $\mathfrak{t}_{M}$ for short. 
In other words, $\mathfrak{t}_{M}(L)$ is the Lie 
subalgebra of $\mathfrak{t}(L)$  
generated by the skew-hermitian operators  
$$
i\sum_{q=0}^{L-1}
U^q_T \left[  \left( \bigotimes_{p=1}^{M} Q_p \right)  \otimes  \unity_2^{\otimes L-M}  \right]
U^{-q}_T
$$
for all combinations of $Q_p \in \{ \unity_2, \mathrm{X},\mathrm{Y},\mathrm{Z} \}$
apart from the case when $Q_1=\unity_2$. In this way, $\mathfrak{t}_1(L)$ corresponds to 
the translation-invariant on-site Hamiltonians, while $\mathfrak{t}_2(L)$ is generated by the 
on-site terms and the nearest-neighbor interactions, and so on. Finally, we have 
$\mathfrak{t}_L(L)=\mathfrak{t}_L$.

We computed all the 
algebras $\mathfrak{t}_M(L)$ for $ 1 \le L \le 6$ and $1 \le M \le L$ using the computer 
algebra system {\sc magma} \cite{MAGMA}. The results, shown in Table~\ref{tab:trans-inv-spin}, 
suggest that for certain restrictions on the interaction length (e.g.,
nearest-neighbor terms), there will be some translation-invariant interactions
that cannot be generated. This is in accordance with the result of Kraus et al. \cite{kraus-pra71}. 
Building partly on their work, we analyze the properties of
the algebras $\mathfrak{t}_M(L)$ for general $M$ and $L$ values, and then compare
our theorems with Table~\ref{tab:trans-inv-spin}.

We first mention a central proposition whose proof can be found in  Appendix~\ref{app:VIIB}:
\begin{proposition} \label{prop:trace-short-ranged}
Let $M < L$ denote a  divisor of $L$.  Given two elements
 $iH_{M} \in \mathfrak{t}_{M}$ and $iH_{M+1} \in \mathfrak{t}_{M+1}$, we obtain that
\begin{gather*}
\tr (U^{qM}_T H_{M} )= 0 \text{ and } 
\tr [(U^{qM}_T-U^{-qM}_T) H_{M+1}]=0
\end{gather*}
hold for any positive integer $q$.
\end{proposition}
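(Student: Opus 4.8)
The plan is to read the two claimed identities as saying that the linear functionals $\phi(A):=\tr(U_T^{qM}A)$ on $\mathfrak{t}_M$ and $\psi(A):=\tr[(U_T^{qM}-U_T^{-qM})A]$ on $\mathfrak{t}_{M+1}$ vanish identically. First I would observe that, since every element of $\mathfrak{t}_M$ and of $\mathfrak{t}_{M+1}$ commutes with $U_T$ and hence with $U_T^{\pm qM}$, cyclicity of the trace makes $\phi$ and $\psi$ vanish on all brackets: e.g.\ $\tr(U_T^{qM}BA)=\tr(AU_T^{qM}B)=\tr(U_T^{qM}AB)$. Because a Lie algebra generated by a set $S$ is spanned by $S$ together with its derived subalgebra, and $\phi,\psi$ are linear, it then suffices to evaluate them on the given Lie generators. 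Applying cyclicity once more to a generator $G=i\sum_{q'=0}^{L-1}U_T^{q'}PU_T^{-q'}$, where $P$ is a Pauli string supported on sites $1,\dots,M$ (for $\mathfrak{t}_M$), respectively $1,\dots,M+1$ (for $\mathfrak{t}_{M+1}$), with a nontrivial single--qubit Pauli on site $1$, one finds $\tr(U_T^{a}G)=iL\,\tr(U_T^{a}P)$. Hence the proposition reduces to showing (i) $\tr(U_T^{qM}P)=0$ and (ii) $\tr(U_T^{qM}P')-\tr(U_T^{-qM}P')=0$.

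For both, the plan is to treat $U_T^{a}$ ($a\in\Z$) as the permutation operator associated with the $a$-fold cyclic shift $\sigma^{a}$ of the $L$ site labels, and to use the elementary identity
\[
\tr\Bigl(V_\pi\,\textstyle\bigotimes_{i=1}^{L}A_i\Bigr)=\prod_{c}\tr\bigl(A_{c_1}A_{c_2}\cdots A_{c_{|c|}}\bigr),
\]
the product ranging over the cycles $c=(c_1\,c_2\,\cdots)$ of $\pi$. The divisibility hypothesis $M\mid L$ enters precisely here: since $a=\pm qM\equiv 0\pmod M$ and $M\mid L$, reduction modulo $L$ leaves residues modulo $M$ unchanged, so each cycle of $\sigma^{a}$ consists of sites lying in a single residue class modulo $M$. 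As $\{1,\dots,M\}$ is a complete residue system modulo $M$, for (i) every cycle meets the support of $P$ in at most one site; a cycle through a site $i_0\in\{1,\dots,M\}$ contributes the factor $\tr(P_{i_0})$, every other cycle contributes $\tr(\unity_2)=2$, and the resulting product vanishes because the Pauli on site $1$ is one of $\mathrm{X},\mathrm{Y},\mathrm{Z}$ and hence traceless.

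For (ii) the support $\{1,\dots,M+1\}$ of $P'$ still meets each residue class modulo $M$ in at most two sites (only the class of $1$, containing $1$ and $M+1$, occurs twice), so every cycle of $\sigma^{a}$ meets the support in at most two sites. Therefore each cyclic factor in the product is $2$, or the trace of a single Pauli, or the trace of a product of two Hermitian Paulis, hence real in every case; so $\tr(U_T^{a}P')\in\R$ for $a=\pm qM$. Since $P'$ is Hermitian, $\tr(U_T^{-qM}P')=\overline{\tr(U_T^{qM}P')}=\tr(U_T^{qM}P')$, which is exactly (ii). Assembling (i), (ii) and the reductions of the first paragraph yields the proposition.

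I expect the only genuinely delicate point to be the bookkeeping in the cycle--trace identity --- pinning down the exact permutation attached to $U_T^{a}$ and the multiplicity $2^{\#}$ appearing in (i) (it equals $M(\gcd(q,L/M)-1)$) --- rather than anything conceptual; the conceptual core is simply that $M\mid L$ forces cycles into residue classes modulo $M$, which bounds the overlap of each cycle with an interaction-length-$(M{+}1)$ support by two and thereby keeps all cyclic traces real. I would relegate those verifications to the appendix as indicated.
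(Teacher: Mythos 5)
Your proposal is correct and follows essentially the same route as the paper's proof in Appendix~E: your cycle--trace factorization is precisely Proposition~\ref{prop:help-trace}, the reduction from the full Lie algebra to its generators via cyclicity of the trace on the derived subalgebra is the paper's closing argument, and the observation that $M\mid L$ confines each cycle of the shift to a single residue class modulo $M$ is exactly what underlies the explicit formulas for $F(1,M)$ and $F(\pm 1,M{+}1)$. The only minor deviation is your finish for the second identity --- realness of every cyclic factor plus Hermiticity of the Pauli string, instead of the paper's direct comparison $\tr[Q_1Q_{M+1}]=\tr[Q_{M+1}Q_1]$ --- which is if anything slightly more careful in the case $\gcd(q,L/M)>1$, where sites $1$ and $M{+}1$ fall into different cycles.
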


Applying Proposition~\ref{prop:trace-short-ranged}, we can present upper bounds for the 
system algebras with restricted interaction length.

\begin{theorem} \label{Thm:SpinShortRange}
Let $M < L$ denote a divisor of the number of spins $L$, and define $R:=L/M$.
We obtain:\\
(a) The algebra $\mathfrak{t}_{M}$ is isomorphic to a 
Lie subalgebra of $[\sum_{\ell=0}^{L-1} \su(r_\ell)] + [\sum_{i=1}^{L-R}\uu(1)]$ and does not
generate $\mathfrak{t}_{L}$.\\
(b) The algebra $\mathfrak{t}_{M{+}1}$  is isomorphic to a 
Lie subalgebra of $[\sum_{\ell=0}^{L-1} \su(r_\ell)] + [\sum_{i=1}^{L-1-\lfloor R/2\rfloor }\uu(1)]$
and does not generate $\mathfrak{t}_{L}$.\\ 
(c) In addition, $\mathfrak{t}_{M} \ne \mathfrak{t}_{M{+}1} $.
\end{theorem}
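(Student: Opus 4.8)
The plan is to prove parts (a) and (b) in parallel by realizing both $\mathfrak{t}_M$ and $\mathfrak{t}_{M+1}$ inside the "big" algebra $\mathfrak{t}(L)=\mathfrak{s}[\oplus_{\ell=0}^{L-1}\uu(\Tm_\ell)]$ from Theorem~\ref{translation_spins}, and then cutting down the abelian part using the trace identities in Proposition~\ref{prop:trace-short-ranged}. First I would recall that every $iH\in\mathfrak{t}_M$ commutes with $U_T$, hence block-decomposes along the spectral projections $P_\ell$ of $U_T$; so $\mathfrak{t}_M$ embeds in $\oplus_\ell \uu(\Tm_\ell)$. The semisimple part is automatically contained in $\sum_\ell \su(\Tm_\ell)$, so the whole content of (a) and (b) is a bound on the dimension of the central (abelian) part, i.e.\ on how many of the $\uu(1)$ summands (one "charge" per block $\ell$) can actually be populated by interactions of length $\le M$.

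The key step is the counting. Write $g_\ell(H):=\tr(P_\ell H)$ for the "charge" of $iH$ in block $\ell$; the central part of $\mathfrak{t}_M$ is (isomorphic to) the image of the linear map $H\mapsto (g_0(H),\dots,g_{L-1}(H))$ restricted to $\mathfrak{t}_M$ intersected with its own centralizer, and since $\sum_\ell g_\ell(H)=\tr H=0$ this image already lies in an $(L-1)$-dimensional space — that is the $\sum_{i=1}^{L-1}\uu(1)$ appearing in $\mathfrak{t}(L)$. Now I invoke Proposition~\ref{prop:trace-short-ranged}: with $R=L/M$, the identities $\tr(U_T^{qM}H_M)=0$ for $q=1,\dots,R-1$ (the case $q=R$ is $\tr U_T^L H=\tr H=0$, already used) give $R-1$ further independent linear constraints on the charge vector of any $iH_M\in\mathfrak{t}_M$, because $\tr(U_T^{qM}H)=\sum_\ell e^{2\pi i q M\ell/L}g_\ell(H)$ and the vectors $(e^{2\pi i qM\ell/L})_\ell$ for distinct residues $qM \bmod L$ are Vandermonde-independent. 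Hence the central part of $\mathfrak{t}_M$ has dimension at most $(L-1)-(R-1)=L-R$, giving the $\sum_{i=1}^{L-R}\uu(1)$ bound in (a). For (b) the analogous identity $\tr[(U_T^{qM}-U_T^{-qM})H_{M+1}]=0$ is only a constraint when $U_T^{qM}\neq U_T^{-qM}$, i.e.\ when $2qM\not\equiv 0\pmod L$; counting the values $q\in\{1,\dots,R\}$ for which this holds gives $\lfloor R/2\rfloor$ new constraints (paired residues $\pm qM$), so the central part drops to dimension $\le (L-1)-\lfloor R/2\rfloor$, which is (b). The "does not generate $\mathfrak{t}_L$" clauses are then immediate: $\mathfrak{t}_L$ has central dimension exactly $L-1$ (its generic interactions saturate every charge), strictly larger than $L-R$ and than $L-1-\lfloor R/2\rfloor$ since $R\ge 2$.

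For part (c), $\mathfrak{t}_M\subsetneq\mathfrak{t}_{M+1}$, I would argue by exhibiting an explicit length-$(M{+}1)$ translation-invariant Hamiltonian not in $\mathfrak{t}_M$. The cleanest route is again via the charge/trace functionals: by (a), every element of $\mathfrak{t}_M$ annihilates the functional $H\mapsto\tr(U_T^{M}H)$ (take $q=1$), whereas a suitable nearest-extended interaction of range $M+1$ — for instance the translation-invariant sum built from a single Pauli string $Z_1\otimes\unity^{\otimes(M-1)}\otimes Z_{M+1}$ (or $Z_1 Z_{M+1}$ for short), symmetrized over $U_T$ — has $\tr(U_T^M\cdot)$ nonzero, because $U_T^M$ identifies site $1$ with site $M+1$ and the resulting permutation has fixed-point contributions making the trace $\neq 0$. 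This needs a short direct computation of $\tr(U_T^M S)$ for that specific string $S$, which is elementary (it reduces, as in the proof of Lemma~\ref{lemma:trans_inv_spinchain}, to counting invariant basis strings with a sign). One should check the edge case $M+1>L$ separately — then "interaction length $\le M+1$" just means all of $\mathfrak{t}_L$, and $\mathfrak{t}_M\subsetneq\mathfrak{t}_L$ already follows from (a) — and also handle $M=L$ trivially (where (c) is vacuous or $\mathfrak{t}_{L+1}=\mathfrak{t}_L$).

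The main obstacle I anticipate is twofold: first, making the "$R-1$ independent constraints" step airtight — one must verify that the residues $M,2M,\dots,(R-1)M$ modulo $L$ are genuinely distinct (true since $M\mid L$ and $R=L/M$) and that the corresponding exponential vectors, together with the all-ones vector, are linearly independent over the relevant index set, which is a Vandermonde argument but needs care because the $g_\ell$ live in a space of dimension $L$ with one relation already imposed. Second, for (c), pinning down a length-$(M{+}1)$ generator whose $\tr(U_T^M\cdot)$ is provably nonzero while staying within the allowed Pauli-string generating set of $\mathfrak{t}_{M+1}$ (the set with $Q_1\neq\unity_2$); the density–density-type choice used elsewhere in the paper is a natural candidate, and verifying its charge functional is the one genuinely computational lemma, parallel in spirit to Proposition~\ref{prop:trace-short-ranged} whose proof is deferred to Appendix~\ref{app:VIIB}.
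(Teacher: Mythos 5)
Your proposal follows essentially the same route as the paper's proof: parts (a) and (b) are obtained by reading the trace identities of Proposition~\ref{prop:trace-short-ranged} as linear constraints on the block charges $\tr(P_\ell H)$ (your Vandermonde-independence step is exactly the paper's Fourier inversion of $U_T^{qM}$ over $\Z_R$), and part (c) is settled by exhibiting a range-$(M{+}1)$ translation-invariant Pauli string on which the functional $\tr(U_T^{M}\,\cdot\,)$ does not vanish --- the paper uses $\mathrm{X}_1\mathrm{X}_{M+1}$ where you propose $\mathrm{Z}_1\mathrm{Z}_{M+1}$, an immaterial difference.

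One caveat, which your sketch shares with the paper rather than introduces: in part (b) the constraints $\tr[(U_T^{qM}-U_T^{-qM})H]=0$ pair up under $q\mapsto R-q$ and become vacuous whenever $2qM\equiv 0\pmod L$, so for even $R$ the value $q=R/2$ contributes nothing and the number of independent constraints is $\lfloor (R-1)/2\rfloor$ rather than the claimed $\lfloor R/2\rfloor$ --- your own parenthetical about paired residues $\pm qM$ already points at this. For $R=2$ this leaves no constraint at all, and indeed Table~\ref{tab:trans-inv-spin} shows $\mathfrak{t}_3(4)=\mathfrak{t}_4(4)$ carrying three central $\uu(1)$ summands, one more than the stated bound permits; carrying out the count carefully therefore lands you on the weaker exponent $L-1-\lfloor (R-1)/2\rfloor$ in (b), and the ``does not generate $\mathfrak{t}_L$'' clause of (b) then requires $R\geq 3$. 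Parts (a) and (c) are unaffected.
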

\begin{proof}
(a) Since $M$ is a divisor of $L$, the equation
\begin{align*}
U_T^{qM}&=\sum_{\ell = 0}^{L-1} \exp(2\pi i qM\ell /L) P_\ell=
\sum_{\ell = 0}^{L-1} \exp(2\pi i q\ell/R) P_\ell\\
& =\sum_{\ell' = 0}^{R-1} \exp(2\pi i q\ell'/R) \left(\sum_{p=0}^{M-1} P_{pR+ \ell'}\right)
\end{align*}  
holds for any integer $q$.
One can invert the equation as
\begin{align*}
\left(\sum_{p=0}^{M-1} P_{pR+ \ell'}\right)=\frac{1}{R}\sum_{q = 0}^{R-1} \exp(-2\pi i q\ell'/R)\, U_T^{qM}.
\end{align*}
If $ih \in \mathfrak{t}_{M}$, we obtain by applying Proposition~\ref{prop:trace-short-ranged} that
\begin{align} \label{eq:trace-alg}
\tr \left( ih\sum_{p=0}^{M-1} P_{pR+ \ell'}\right)=0
\end{align}
holds for $\ell' \in \{0,1, \ldots, R-1\}$. It follows that $\mathfrak{t}_{M}$ is a subalgebra of the 
Lie algebra $\mathfrak{f}$ which consists of all skew-hermitian matrices satisfying the condition 
in Eq.~\eqref{eq:trace-alg}.  
Note that $\mathfrak{f}$  is isomorphic to
$\oplus_{\ell' = 0}^{R-1} (\mathfrak{s}[\oplus_ {p=0}^{M-1}u(r_{pR+ \ell'})])\cong
[\sum_{\ell=0}^{L-1} \su(r_\ell)] + [\sum_{i=1}^{L-R}\uu(1)]$,  and part (a) follows.\\
(b) For elements $ig \in \mathfrak{t}_{M{+}1}$, Proposition~\ref{prop:trace-short-ranged}
and Eq.~\eqref{eq:trace-alg} imply that
\begin{align} \label{eq:trace-alg-2}
\tr \left[ ig \sum_{p=0}^{M-1} \left(P_{pR+ \ell'}- 
P_{pR+ L-\ell'}\right) \right]=0.
\end{align}
The maximal Lie algebra consisting of skew-hermitian matrices which satisfy the
condition in Eq.~\eqref{eq:trace-alg-2} is isomorphic to 
$[\sum_{\ell=0}^{L-1} \su(r_\ell)] + [\sum_{i=1}^{L -1 -\lfloor R/2\rfloor }\uu(1)]$.\\ 
(c) Let 
$$
ih= i\sum_{q'=0}^{L-1}
U^{q'}_T \left[  X \otimes \unity_2^{\otimes M-1} \otimes X \otimes  \unity_2^{\otimes L-M-1}  
\right] U^{-q'}_T.
$$
Obviously,  $ih \in \mathfrak{t}_{M{+}1}$ holds. Using the formula for $F(1,M{+}1)$, we obtain
that $\tr(U^{qM}_T ih)=i2L$  holds for every integer $q$. Hence, 
$ih \notin \mathfrak{t}_{M}$. 
\end{proof}
In particular, this theorem implies that the algebra $\mathfrak{t}(L)=\mathfrak{t}_L(L)$
of all translation-invariant Hamiltonians cannot be generated from the subclass
of nearest-neighbor 
Hamiltonians, cp.\ also~\cite{kraus-pra71}. More precisely, one finds:
\begin{corollary}
If $L$ is even, $\mathfrak{t}_{2}(L)$ is isomorphic to a Lie subalgebra of
$[\sum_{\ell=0}^{L-1} \su(r_\ell)] + [\sum_{i=1}^{L/2}\uu(1)]$.
For odd $L \ge 3$, $\mathfrak{t}_{2}(L)$ is isomorphic to a Lie subalgebra of
$[\sum_{\ell=0}^{L-1} \su(r_\ell)] + [\sum_{i=1}^{(L-3)/2} \uu(1)]$.
\end{corollary}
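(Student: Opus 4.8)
\textit{Proof proposal.}

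The plan is to read off both bounds from Theorem~\ref{Thm:SpinShortRange}, choosing in each parity the largest divisor of $L$ for which the relevant hypothesis applies. The elementary point is that $\mathfrak{t}_2(L)$ occurs in the filtration $\mathfrak{t}_1(L)\subseteq\mathfrak{t}_2(L)\subseteq\cdots$ in two ways: as $\mathfrak{t}_{M+1}(L)$ with $M=1$ (and $1$ is a divisor of $L$ with $1<L$ whenever $L\geq 2$), and, when $L$ is even with $L>2$, also as $\mathfrak{t}_{M}(L)$ with $M=2$ (a divisor of $L$ strictly below $L$).

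For even $L>2$ I would take $M=2$, so $R=L/M=L/2$, and apply part~(a): since $\mathfrak{t}_2(L)=\mathfrak{t}_M(L)$, it is isomorphic to a Lie subalgebra of $[\sum_{\ell=0}^{L-1}\su(r_\ell)]+[\sum_{i=1}^{L-R}\uu(1)]$, and $L-R=L/2$, which is the assertion. The case $L=2$ is handled directly: by Theorem~\ref{translation_spins}, $\mathfrak{t}_2(2)=\mathfrak{t}(2)\cong\mathfrak{s}[\uu(r_0)\oplus\uu(r_1)]$ with $(r_0,r_1)=(3,1)$, i.e.\ $\mathfrak{t}_2(2)\cong\su(3)+\uu(1)=[\sum_{\ell=0}^{1}\su(r_\ell)]+[\sum_{i=1}^{1}\uu(1)]$, again with $L/2=1$ copy of $\uu(1)$. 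For odd $L$ the only divisor below $L$ available in general is $M=1$, so part~(b) with $R=L$ places $\mathfrak{t}_2(L)=\mathfrak{t}_{M+1}(L)$ inside $[\sum_{\ell=0}^{L-1}\su(r_\ell)]+[\sum_{i=1}^{L-1-\lfloor R/2\rfloor}\uu(1)]$ with $\lfloor R/2\rfloor=(L-1)/2$. To improve this to the stated count one further $\uu(1)$ must be eliminated, and for that I would sharpen the trace identity behind Proposition~\ref{prop:trace-short-ranged}: writing a two-site generator as $iH=i\sum_q U_T^{q}\,h\,U_T^{-q}$ with $h=h_1\otimes h_2\otimes\unity^{\otimes L-2}$, one has $\tr(U_T^{k}H)=L\,\tr(U_T^{k}h)$, and the cycle decomposition of the shift permutation shows that $\tr(U_T^{k}h)$ equals $\tr(h_1h_2)$ when $\gcd(L,k)=1$ and otherwise a $\gcd(L,k)$-dependent multiple of $\tr(h_1)\tr(h_2)$. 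Since every actual generator is traceless the $\tr(h_1)\tr(h_2)$-part vanishes, while $\tr(h_1h_2)$ is one and the same number for all coprime shifts, so the block traces $t_\ell=\tr(P_\ell H)$ of elements of $\mathfrak{t}_2(L)$ (commutators contributing none) are confined to a one-dimensional family; feeding this back into the block structure of Theorem~\ref{translation_spins} yields the claim, indeed the sharper statement that $\mathfrak{t}_2(L)$ contains at most one $\uu(1)$ summand for every $L$.

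The arithmetic ($\lfloor R/2\rfloor$ and $L-R$) and the hand-check at $L=2$ are routine. The only genuine work is the refined trace computation used for odd $L$: I expect the main obstacle to be the small combinatorial lemma identifying exactly which residue classes $\gcd(L,k)$ yield equal traces on a two-site operator, and then verifying that the two shapes of trace data, $\tr(h_1)\tr(h_2)$ versus $\tr(h_1h_2)$, collapse to a single effective degree of freedom once tracelessness of the generators is imposed.
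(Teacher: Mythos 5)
Your even-$L$ argument is the intended one: part~(a) of Theorem~\ref{Thm:SpinShortRange} with $M=2$, $R=L/2$ gives $L-R=L/2$ copies of $\uu(1)$, and your separate check at $L=2$ (where $M=2$ fails the hypothesis $M<L$) closes a case the paper glosses over. For odd $L$ you are also right that the theorem by itself only yields $L-1-\lfloor L/2\rfloor=(L-1)/2$ copies, one more than the stated $(L-3)/2$, and your proposed repair is sound: for a two-site generator the cycle decomposition of the shift gives $\tr(U_T^{K}H)=L\,\tr(h_1h_2)$ when $\gcd(K,L)=1$ and $L\,2^{\gcd(K,L)-2}\tr(h_1)\tr(h_2)=0$ otherwise (the vanishing coming from $Q_1\ne\unity_2$, i.e.\ $\tr(h_1)=0$, rather than from tracelessness of $H$ as you phrase it), so the block-trace vector $\bigl(\tr(P_\ell H)\bigr)_\ell$ of every element of $\mathfrak{t}_2(L)$ lies on a single line (a Ramanujan-sum profile, with commutators contributing zero). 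This genuinely strengthens the paper's route and proves that $\mathfrak{t}_2(L)$ sits inside $[\sum_{\ell}\su(r_\ell)]+\uu(1)$ for every $L\ge 2$, in agreement with every entry of Table~\ref{tab:trans-inv-spin}.

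What your argument cannot do — and what you should flag rather than paper over — is deliver the printed bound at $L=3$, because that bound is false: the corollary asserts that $\mathfrak{t}_2(3)$ embeds into $\su(4)+\su(2)+\su(2)$ (zero $\uu(1)$'s, dimension $21$), whereas Table~\ref{tab:trans-inv-spin} gives $\mathfrak{t}_2(3)=\su(4)+\su(2)+\su(2)+\uu(1)$ of dimension $22$, and a $22$-dimensional Lie algebra is not isomorphic to any subalgebra of a $21$-dimensional one. The corollary is evidently meant to follow immediately from Theorem~\ref{Thm:SpinShortRange}(b) with $M=1$, which for odd $L$ gives $(L-1)/2$; the printed $(L-3)/2$ appears to be an arithmetic slip ($\lceil L/2\rceil$ in place of $\lfloor L/2\rfloor$). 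So either prove the corrected bound $(L-1)/2$, which is immediate and needs none of your extra machinery, or prove your sharper one-$\uu(1)$ bound and note explicitly that it implies the stated corollary for even $L$ and for odd $L\ge5$ but that the odd case must be emended to $(L-1)/2$ for $L=3$ to be consistent with the paper's own computations.
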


Let us now compare our upper bounds with the results of Table~\ref{tab:trans-inv-spin}. 
Theorem~\ref{Thm:SpinShortRange} restricts the possibilities for the $M$-local  algebras $\mathfrak{t}_M(L)$ 
only by some central elements $\uu(1)$
when compared to the corresponding  full translation-invariant algebra $\mathfrak{t}(L)$. 
One can indeed identify in Table~\ref{tab:trans-inv-spin} some missing $\uu(1)$-parts
for $L\in\{3,\ldots,6\}$. In general, the dimensions of the $M$-local algebras $\mathfrak{t}_M(L)$
can be even smaller than predicted by the upper bounds of Theorem~\ref{Thm:SpinShortRange}
as can be seen in Table~\ref{tab:trans-inv-spin} for $L=4$.
Theorem~\ref{Thm:SpinShortRange} and 
Table~\ref{tab:trans-inv-spin} suggest that the prime decomposition of the chain 
length $L$ will have strong implications on the dimension of $\mathfrak{t}_M(L)$.

\subsection{Translation-Invariant Fermionic Systems\label{sec-transl-fermionic}}

To determine the system algebra generated by
all translation-invariant Hamiltonians of a fermionic chain, 
we can follow similar lines as in Sec.~\ref{transl_spin}.
Here, however, we additionally have to consider the parity superselection rule.
We define the fermionic translation-invariant system algebra as the maximal 
Lie subalgebra of $\su(2^{d-1})\oplus \su(2^{d-1})$ [see Theorem~\eqref{general}]
which contains only skew-hermitian matrices 
commuting with the fermionic translation unitary
$\TU$, which is defined below such that it commutes with the parity operator 
$P$ (see Eq.~\eqref{P_def}).
The standard orthonormal basis in the Fock space for a chain of $d$ fermionic modes 
is given by 
\begin{equation} \label{eq:fermi-basis}
| n_1, n_2, \ldots n_d \rangle := 
(f^{\dagger}_{1})^{n_1} (f^{\dagger}_{2})^{n_2} \cdots
(f^{\dagger}_{d})^{n_d}\; |0 \rangle 
\end{equation}
with $n_i \in \{0,1\}$.
Note that for the purpose of unambiguously defining this basis, we order
the operators $(f^{\dagger}_i)^{n_i}$ in Eq.~\eqref{eq:fermi-basis}
with respect to their site index $i$.
The fermionic translation unitary $\TU$ is defined by its action 
\begin{align}
&\TU\, | n_1, n_2, \ldots n_d \rangle = 
\TU\, (f^{\dagger}_{1})^{n_1} (f^{\dagger}_{2})^{n_2} \cdots
(f^{\dagger}_{d})^{n_d}\; |0 \rangle \nonumber \\
&=(f^{\dagger}_{2})^{n_1}  \cdots
(f^{\dagger}_{d})^{n_{d-1}} (f^{\dagger}_{1})^{n_{d}}\; |0 \rangle \nonumber \\
&=(-1)^{n_d(n_1+n_2 + \ldots + n_{d-1})} 
(f^{\dagger}_{1})^{n_d} (f^{\dagger}_{2})^{n_1} \cdots
(f^{\dagger}_{d})^{n_{d-1}}\; |0 \rangle 
\nonumber \\
&= (-1)^{n_d(n_1+n_2+ \ldots +n_{d-1})}\; | n_d, n_1, \ldots, 
n_{d-1} \rangle \label{Trans-ferm} 
\end{align}
on the standard basis.
The adjoint action of 
$\TU$ on the creation operators $f^\dagger_\ell$ is then given by
\begin{align*}
\TU\, f^\dagger_\ell\, \TU^{\dagger}=f^\dagger_{(\ell+1 \mod d)}.
\end{align*}
The superselection rule for fermions
splits the spectral decomposition of the translation unitary
into two blocks corresponding to the positive and negative parity subspace.
The translation unitary $\TU$ commutes with the parity operator $P$, and
hence $\TU=\TU_{+}+ \TU_{-}$ is block-diagonal in the eigenbasis of $P$
where $\TU_{+}:=P_+ \TU P_+$ and $\TU_{-}:=P_- \TU P_-$.
The following lemma gives the spectral decomposition of the operators $\TU_{\pm}$:

\begin{lemma}\label{ferm_spec}
The unitary operators $\TU_{\pm}$ can be spectrally decomposed as
$
\TU_{\pm}=\sum_{\ell = 0}^{d-1} e^{2\pi i \ell /d} P^{\pm}_\ell 
$, where the rank $\TM_{\ell}$ of the spectral projection $P^\pm_{\ell}$ is given
by the Fourier transform
\begin{equation} \label{eq:trans-mult-ferm}
\TM_{\ell}:=\frac{1}{d}\sum_{k=0}^{d-1} h(d,k)\, \exp(-2 \pi i k\ell/d)
\end{equation}
of $h(d,k)$ where $\ell\in\{0,\ldots,d-1\}$ and
\begin{align*}
h(d,k)
 :=
\begin{cases}
0 & \text{if } d/\gcd(d,k) \text{ is even,}\\
2^{\gcd(d,k)-1} & \text{if }  d/\gcd(d,k) \text{ is odd.}
\end{cases}
\end{align*}
\end{lemma}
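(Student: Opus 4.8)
Here is how I would approach the proof of Lemma~\ref{ferm_spec}.

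The plan is to follow the strategy of Lemma~\ref{lemma:trans_inv_spinchain}, carried out separately on the two parity blocks $P_\pm\mathcal{F}(\C^{d})$. First I would verify that $\TU^d=\unity$: from $\TU\, f^\dagger_\ell\,\TU^{\dagger}=f^\dagger_{(\ell+1\bmod d)}$ one gets $\TU^d\, f^\dagger_\ell\,\TU^{-d}=f^\dagger_\ell$ for every $\ell$, so $\TU^d$ commutes with the whole (irreducibly represented) algebra generated by the creation and annihilation operators and is therefore a scalar; since $\TU^d\ket{0}=\ket{0}$ that scalar is $1$. Restricted to $P_\pm\mathcal{F}(\C^{d})$, the unitary $\TU_\pm$ thus has order dividing $d$ and defines a unitary representation of $\Z_d$, so the same character–inner-product argument as in Lemma~\ref{lemma:trans_inv_spinchain} expresses the rank of $P^\pm_\ell$ as the multiplicity
\begin{equation*}
\TM_\ell=\frac1d\sum_{k=0}^{d-1}\tr(\TU_\pm^{k})\,e^{-2\pi i k\ell/d}\,,
\end{equation*}
so everything reduces to proving $\tr(\TU_\pm^{k})=h(d,k)$.

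Since $\TU$ commutes with $P$ by construction, hence with $P_\pm=\tfrac12(\unity\pm P)$, we have $\TU_\pm^{k}=P_\pm\TU^k$ and therefore $\tr(\TU_\pm^{k})=\tfrac12[\tr(\TU^k)\pm\tr(P\TU^k)]$. Both traces are signed sums over the standard Fock basis $\ket{n_1,\ldots,n_d}$ of Eq.~\eqref{eq:fermi-basis}: a basis vector contributes only when $\TU^k$ maps it to a scalar multiple of itself, i.e.\ when the string $(n_1,\ldots,n_d)$ is invariant under the cyclic shift by $k$ sites; exactly as in the spin case this holds precisely for the $2^g$ strings of period $g:=\gcd(d,k)$, each determined by a ``fundamental block'' $m\in\{0,1\}^g$ of weight $w=|m|$, and for such a string the set $S$ of occupied modes is invariant under $\sigma^k$, where $\sigma$ denotes the cyclic shift of the $d$ sites.

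The crux is the fermionic sign $\varepsilon(n,k)$ with which $\TU^k$ fixes such a vector. Writing $\ket{n}=f^\dagger_{i_1}\cdots f^\dagger_{i_W}\ket{0}$ with $S=\{i_1<\ldots<i_W\}$ the occupied modes ($W=w\,d/g$), the iterated relation $\TU^k f^\dagger_j\TU^{-k}=f^\dagger_{\sigma^k(j)}$ together with $\TU^k\ket{0}=\ket{0}$ gives $\TU^k\ket{n}=f^\dagger_{\sigma^k(i_1)}\cdots f^\dagger_{\sigma^k(i_W)}\ket{0}$, and re-sorting the creation operators into increasing order produces $\varepsilon(n,k)=\mathrm{sgn}(\sigma^k|_S)$. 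Now $S$ is a union of $\sigma^k$-orbits, and the orbits of $\sigma^k$ on the $d$ sites, which (as $\gcd(d,k)=g$) are the $g$ residue classes modulo $g$, each have size $d/g$; since $S$ consists of exactly the $w$ classes selected by $m$, the restriction $\sigma^k|_S$ is a product of $w$ disjoint $(d/g)$-cycles, so $\varepsilon(n,k)=(-1)^{w(d/g-1)}$. Summing over the $2^g$ choices of $m$ then yields $\tr(\TU^k)=\sum_{m\in\{0,1\}^g}(-1)^{|m|(d/g-1)}$, and, inserting the extra parity eigenvalue $(-1)^{W}=(-1)^{|m|\,d/g}$,
\begin{equation*}
\tr(P\TU^k)=\sum_{m\in\{0,1\}^g}(-1)^{|m|(2d/g-1)}=\sum_{m\in\{0,1\}^g}(-1)^{|m|}=0 .
\end{equation*}
Hence $\tr(\TU_\pm^{k})=\tfrac12\tr(\TU^k)$, which equals $2^{g-1}$ when $d/g$ is odd (every term is $1$) and $0$ when $d/g$ is even (the alternating sum cancels) — precisely $h(d,k)$ — and the lemma follows. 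I expect the only real obstacle to be this sign bookkeeping: getting $\mathrm{sgn}(\sigma^k|_S)$ right and noticing that it depends only on $\gcd(d,k)$ and on $w$, so that (since $\tr(P\TU^k)=0$) the two parity blocks in fact carry isomorphic $\Z_d$-representations; everything else is a routine transcription of the spin-chain computation in Lemma~\ref{lemma:trans_inv_spinchain}.
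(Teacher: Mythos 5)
Your proof is correct and follows essentially the same route as the paper's: character orthogonality for the $\Z_d$-representations on the two parity blocks reduces the lemma to computing $\tr(\TU_\pm^k)$, which is then a signed count of the $2^{\gcd(d,k)}$ basis strings invariant under the cyclic shift by $k$, with the case split governed by the parity of $d/\gcd(d,k)$. The only substantive variation is in the bookkeeping: you evaluate the fermionic sign as $\mathrm{sgn}(\sigma^k|_S)=(-1)^{w(d/g-1)}$ from the cycle structure of the shift on the occupied set and organize the parity split via $\tr(\TU_\pm^k)=\tfrac12[\tr(\TU^k)\pm\tr(P\,\TU^k)]$ with $\tr(P\,\TU^k)=0$, whereas the paper iterates Eq.~\eqref{Trans-ferm} to obtain the sign \eqref{transsign} and sorts the invariant strings by parity directly --- the two computations agree, and your version has the small bonus of actually proving $\TU^d=\unity$ (via irreducibility and $\TU\,\Omega=\Omega$) rather than asserting it.
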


\begin{proof}
We determine the spectral decomposition
of $\TU_+$ and $\TU_-$
along the lines
of Lemma~\ref{lemma:trans_inv_spinchain}.
Let $\mF_+(\C^d)$ denote
the subspace spanned
by those basis vectors of Eq.~\eqref{eq:fermi-basis} for
which $\bar{n}=\sum_{i=1}^d n_i$ is even. Likewise, $\mF_-(\C^d)$
corresponds to the case of odd $\bar{n}$.
As $(\TU_{\pm})^d=\unity_{\mF_{\pm}(\C^d)}$, the eigenvalues of $\TU_\pm$ 
are of the form $\exp(2\pi i \ell/d)$ with $\ell \in \{0,\ldots,d{-}1\}$. Hence,
the spectral decomposition is given by
$\TU_\pm=\sum_{\ell = 0}^{d-1} \exp(2\pi i \ell /d) P^{\pm}_\ell$.
We define representations $\DD_{\pm}$ of the cyclic group $\Z_d$ 
which map the $k$-th power of the generator $g\in\Z_d$ of degree $d$ to
$\DD_\pm(g^k):=\TU_{\pm}^k$.  
Note that
$\DD_{\pm}$ splits up into 
a direct sum 
$\DD_{\pm} \cong \oplus_{\ell \in \{0,\ldots,L{-}1 \}}   (D_\ell)^{\oplus \dim (P_\ell)}$
containing 
$\dim (P^{\pm}_\ell)$ copies of 
the one-dimensional representations satisfying
$D_\ell (g^k)=\exp(2 \pi i k \ell/d)$. 
The rank $\Tm^{\pm}_k$
of the projection $P^{\pm}_\ell$ is equal to  
the multiplicity of $D_\ell$ in the decomposition
of the reducible representation $\DD_{\pm}$.
This multiplicity can be computed as the  character scalar product
\begin{align*}
\Tm^{\pm}_k &=\frac{1}{d}\sum_{k=0}^{d-1}\, \tr[\DD_{\pm}(g^k)]\; 
{\tr[D_{\ell}(g^k)]}^{*} \\ &= \frac{1}{d}
\sum_{k=0}^{d-1} \tr[\DD_{\pm}(g^k)]\; \exp(-2 \pi i \ell k /d).
\end{align*} 
In the standard basis,
all matrix entries of $\DD_{\pm}(g^k)=\TU_{\pm}^k$ are elements of
the set $\{0,1,-1\}$.  It follows 
by repeated applications of Eq.~\eqref{Trans-ferm} that $\TU^k$
maps the basis vectors $| n_1, n_2, \ldots, n_d \rangle$ to 
$s | n_{\pi(1)}, n_{\pi(2)}, \ldots, n_{\pi(d)} \rangle$
where $\pi$ is a cyclic shift by $k$ positions and 
the sign $s$ is given by
\begin{equation}\label{transsign}
s:=
(-1)^{\left(\sum_{i=1}^{d-k}n_i\right)\left(\sum_{j=d-k+1}^{d}n_j\right)}.
\end{equation}
Recall from the proof of 
Lemma~\ref{lemma:trans_inv_spinchain} that a
bit string $(n_1, n_2, \ldots , n_N)$ is left invariant 
under a cyclic shift by $k$ positions
iff it is of the form
\begin{gather*}
(n_1, n_2, \ldots, n_{\gcd(d,k)},\ldots, 
n_1, n_2, \ldots, n_{\gcd(d,k)}) \, .
\end{gather*}
If $d/\gcd(d,k)$ is even, 
the sum $\bar{n}=\sum_{i=1}^d n_i$ is even
for all of the $2^{{\rm{gcd}}(d,k)}$
bit strings invariant under a cyclic shift by $k$ positions.
It follows that all the diagonal entries of $\TU_{-}^k$ are zero, while
$\TU_{+}^k$ has $2^{{\rm{gcd}}(d,k)}$ non-zero diagonal entries. 
The non-zero diagonal entries of $\TU_{+}^k$ are given by the number $s$ of Eq.~\eqref{transsign}.
Note that $s$ is $+1$ if $\sum_{j=1}^{d-k} n_j$ is even; and $-1$ otherwise.
Hence the frequencies of $+1$ and $-1$ in the set of diagonal entries are equal.
In summary,  $\tr(\TU_{\pm}^k)=0$
if $d/\gcd(d,k)$ is even.

Assume now that $d/\gcd(d,k)$ is odd. The sum $\bar{n}$ is odd 
for half of the $2^{{\rm{gcd}}(d,k)}$
bit strings and even for the other half. Applying again Eq.~\eqref{transsign}, we obtain always
a positive sign. Hence,
both traces $\tr(\TU_{\pm}^k)$ are
equal to $2^{{\rm{gcd}}(d,k)-1}$.
This completes the proof. 
\end{proof}

Lemma~\ref{ferm_spec} together with
Theorem~\ref{thm:double-centralizer} implies the following 
characterization of the system algebra for a translation-invariant fermionic system:

\begin{theorem}\label{translationinvariant}
Consider the translation-invariant Hamiltonians acting
on a  fermionic system with $d$ modes. The corresponding system algebra $\mathfrak{t}^f$
is given by
\begin{align*}
\mathfrak{t}^f& \cong \mathfrak{s}[\oplus_{\ell=0}^{d-1}\, \uu (\TM_\ell)]\oplus \mathfrak{s}[\oplus_{\ell=0}^{d-1}\, 
\uu (\TM_\ell)]\\
&\cong \left[\sum_{\ell=0}^{d-1} \su(\TM_\ell) + \su(\TM_\ell)\right] +  \sum_{\ell=1}^{2d-2} \uu(1),
\end{align*} 
where the numbers $\TM_\ell$ are defined in
Eq.~\eqref{eq:trans-mult-ferm}.
\end{theorem}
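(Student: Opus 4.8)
The plan is to run the argument behind the spin-chain result Theorem~\ref{translation_spins}, carried out separately on the two blocks into which the parity superselection rule splits everything. First I would recall from Section~\ref{FQS} and Theorem~\ref{general} that the physical (even) operators on $\mF(\C^d)$ form the $*$-algebra $M_{2^{d-1}}(\C)\oplus M_{2^{d-1}}(\C)$, block-diagonal with respect to the parity projections $P_\pm$ associated to $P$ of Eq.~\eqref{P_def}, and that the ambient Lie algebra is $\su(2^{d-1})\oplus\su(2^{d-1})$. Since $\TU$ commutes with $P$ by construction, it splits as $\TU=\TU_+\oplus\TU_-$, so a skew-Hermitian even operator $A=A_+\oplus A_-$ commutes with $\TU$ precisely when $A_+$ commutes with $\TU_+$ and $A_-$ with $\TU_-$ separately. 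Consequently $\mathfrak{t}^f$ is the block-diagonal direct sum (in the sense of Remark~\ref{remark5}) of the two maximal Lie algebras of skew-Hermitian operators on $\C^{2^{d-1}}$ commuting with $\TU_+$, respectively $\TU_-$, each subject to the trace-zero normalization~\eqref{pos-neg-traceless}, which by Corollary~\ref{equal_time_evol} may be imposed without loss of generality.

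Next I would identify these two per-block algebras. An operator commutes with $\TU_\pm$ exactly when it commutes with every spectral projection of $\TU_\pm$, and Theorem~\ref{thm:double-centralizer} then identifies, via a double-centralizer argument, the Lie algebra generated by all traceless Hamiltonians commuting with a complete orthogonal family of projections of ranks $n_0,\dots,n_{d-1}$ as $\mathfrak{s}[\oplus_{\ell}\uu(n_\ell)]$. Lemma~\ref{ferm_spec} supplies the ranks: $\TU_\pm=\sum_{\ell=0}^{d-1}e^{2\pi i\ell/d}P^\pm_\ell$ with $\rank P^\pm_\ell=\TM_\ell$ the Fourier transform of Eq.~\eqref{eq:trans-mult-ferm}; crucially the \emph{same} multiplicities $\TM_\ell$ appear in both parity blocks, because Lemma~\ref{ferm_spec} shows $\tr(\TU_+^k)=\tr(\TU_-^k)$ for every $k$. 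Combining the two blocks yields the first displayed isomorphism $\mathfrak{t}^f\cong\mathfrak{s}[\oplus_{\ell=0}^{d-1}\uu(\TM_\ell)]\oplus\mathfrak{s}[\oplus_{\ell=0}^{d-1}\uu(\TM_\ell)]$.

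The second displayed isomorphism is then elementary bookkeeping: using $\uu(n)=\su(n)\oplus\uu(1)$ gives $\oplus_{\ell=0}^{d-1}\uu(\TM_\ell)\cong[\sum_{\ell=0}^{d-1}\su(\TM_\ell)]\oplus\uu(1)^{\oplus d}$, and the single global trace-zero constraint defining $\mathfrak{s}[\cdot]$ removes one dimension from the $d$-dimensional center, leaving $[\sum_{\ell}\su(\TM_\ell)]+\uu(1)^{\oplus(d-1)}$; doubling and collecting the abelian summands produces $[\sum_{\ell=0}^{d-1}\su(\TM_\ell)+\su(\TM_\ell)]+\sum_{i=1}^{2d-2}\uu(1)$. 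One should record here that each $\TM_\ell\ge 1$ (for instance $\TM_0\ge 2^{d-1}/d>0$ since $h(d,0)=2^{d-1}$ dominates all $h(d,k)\ge 0$; more conceptually, a single full-length cyclic orbit of bit strings already contributes every $d$-th root of unity once to the spectrum of $\TU_\pm$), so that the count of $\uu(1)$-summands is exactly $2d-2$ and no block degenerates. The only genuinely substantive ingredients are Theorem~\ref{thm:double-centralizer}, which is what guarantees that the translation-invariant Hamiltonians actually \emph{generate} the full commutant on each parity block rather than merely sitting inside it, and the multiplicity computation of Lemma~\ref{ferm_spec}; granting those, the proof reduces to the two-block decomposition and the $\su\oplus\uu(1)$ bookkeeping sketched above.
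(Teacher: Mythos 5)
Your proposal is correct and follows essentially the same route as the paper: split by the parity superselection rule into the two $2^{d-1}$-dimensional blocks, use Lemma~\ref{ferm_spec} for the spectral multiplicities $\TM_\ell$ of $\TU_\pm$ (identical in both blocks), apply Theorem~\ref{thm:double-centralizer} to identify each block's commutant algebra as $\mathfrak{s}[\oplus_\ell\uu(\TM_\ell)]$, and finish with the $\su\oplus\uu(1)$ bookkeeping. Your added observation that every $\TM_\ell\ge 1$ (needed for the count of exactly $2d-2$ abelian summands) is a worthwhile detail the paper leaves implicit.
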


\begin{remark}
Note that $\TM_0 \ge \TM_\ell$ holds for any $\ell$ and that
$\sum_{\ell=0}^d \TM_{\ell}=2^{d-1}$. 
It follows that $\TM_0 \geq (2^{d-1}-1)/d$ and hence that the dimension of the system algebra in 
Theorem~\ref{translationinvariant} scales exponentially with $d$.
\end{remark}

\begin{remark}
Assuming that the number of modes is given by a prime number $p$, we can 
explicitly determine the numbers $\TM_\ell$ from Eq.~\eqref{eq:trans-mult-ferm}.
The corresponding system algebras are
\begin{equation}
\sum_{i=1}^{2} \su(F_p+1) + \sum_{i=1}^{2p-2}\su(F_p) + \sum_{i=1}^{2p-2}\uu(1),
\end{equation}
where $F_p=(2^{p-1}-1)/p$ is guaranteed to be an integer by Fermat's little theorem.
\end{remark}

\subsection{Fermionic Nearest-Neighbor Hamiltonians\label{ferm_nn}}

For  spin systems (see Section \ref{shortrange}) we verified
that the translation-invariant nearest-neighbor interactions together with the  on-site elements 
will never generate all 
translation-invariant operators, i.e.\ $\mathfrak{t}_L \ne \mathfrak{t}_2$ (if the number of 
spins $L$ is greater than two). This means that there exist certain
translation-invariant elements which cannot be generated by nearest-neighbor interactions and 
on-site elements, but we could not identify the explicit form of these 
translation-invariant elements for general $L$. In particular, it would be interesting
to know if $\mathfrak{t}_M \ne \mathfrak{t}_2$ holds for interaction lengths 
less than $M$ ($2<M<L$), where $M$ is independent of $L$.

In the case of fermionic systems, we can provide a result in this direction due to the 
restriction imposed by the parity superselection rule, which strongly 
limits the set of nearest-neighbor Hamiltonians.
 As we have discussed at the beginning of this section, the fermionic translation-invariant
Hamiltonians of nearest-neighbor type
are spanned by only six elements: $h_0$, $h_{\mathrm{rh}}$, $h_{\mathrm{ch}}$, $h_{\mathrm{rp}}$,
$h_{\mathrm{cp}}$, and $h_{\mathrm{int}}$ as defined in Eqs.~\eqref{Eq_lin_1}--\eqref{Eq_lin_6}.
We can show that  there exist next-nearest-neighbor or third-neighbor
interactions for odd $d\ge 5$ which
cannot be generated by these six Hamiltonians, while for  even $d\ge 6$
we provide a fourth-neighbor element.

Let $\mathfrak{t}^f_M$ denote the subalgebra of $\mathfrak{t}^f$ (see Theorem~\ref{translationinvariant}) 
which is generated by 
all elements of interaction length less than $M$. 
In particular, $\mathfrak{t}^f_2$ is generated by nearest-neighbor and on-site elements.
The result of this subsection is summarized in the following theorem:

\begin{theorem}\label{thm:main_nn}
Let us consider the Hamiltonian 
$$\hsodd := \sum_{n=1}^{d} i(f^{\dagger}_n f^{\phantom\dagger}_{n+3} - 
f^{\dagger}_{n+3} f^{\phantom\dagger}_{n}),$$ 
and fourth-neighbor Hamiltonian  $\hseven:=$
$$\sum_{n=1}^{d}(f^{\dagger}_n f^{\phantom\dagger}_n f^{\dagger}_{n+1} 
f^{\phantom\dagger}_{n+1}f^{\dagger}_{n+2} f^{\phantom\dagger}_{n+2}f^{\dagger}_{n+3} 
f^{\phantom\dagger}_{n+3}f^{\dagger}_{n+4} f^{\phantom\dagger}_{n+4}  - \tfrac{1}{32}\unity).$$
The generator $i\hsodd \in \mathfrak{t}^f_4$
is not contained in the system algebra $\mathfrak{t}^f_2$ generated by nearest-neighbor interactions and 
on-site elements if $d\geq 5$ is odd, while the element $i\hseven \in \mathfrak{t}^f_5$ is not contained 
in $\mathfrak{t}^f_2$ if $d\geq 6$ is even. Hence $\mathfrak{t}^f_2 \ne \mathfrak{t}^f_5 $ (when $d \ge 5$).
\end{theorem}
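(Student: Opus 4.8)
The strategy is to find a linear functional on $\mathfrak{t}^f$ that kills $\mathfrak{t}^f_2$ but not the proposed generator --- $i\hsodd$ for odd $d$, $i\hseven$ for even $d$. Since both elements lie in $\mathfrak{t}^f_5$ (and $i\hsodd$ even in $\mathfrak{t}^f_4$) simply by inspection of their interaction ranges, such a functional immediately gives $\mathfrak{t}^f_2\subsetneq\mathfrak{t}^f_4\subseteq\mathfrak{t}^f_5$ for odd $d\geq5$ and $\mathfrak{t}^f_2\subsetneq\mathfrak{t}^f_5$ for even $d\geq6$, covering all $d\geq5$ and hence yielding the stated $\mathfrak{t}^f_2\ne\mathfrak{t}^f_5$. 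The functionals I would use are the parity-resolved ``central'' coordinates: using the spectral projections $P^{\pm}_\ell$ of $\TU_{\pm}$ from Lemma~\ref{ferm_spec}, put $\pi^{\pm}_\ell(H):=\tr(H P^{\pm}_\ell)$ for $\ell\in\{0,\dots,d-1\}$, and bundle them into $\Pi(H):=(\pi^{+}_\ell(H),\pi^{-}_\ell(H))_{\ell}$; equivalently $\Pi$ encodes the parity-resolved Fourier coefficients $\tr(\TU^{k}_{\pm}H)=\sum_\ell e^{2\pi i k\ell/d}\pi^{\pm}_\ell(H)$.

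The crucial point is that $\Pi$ annihilates the derived algebra $[\mathfrak{t}^f,\mathfrak{t}^f]$: a translation-invariant $H_1$ commutes with $\TU$ and, being even, with $P$ and thus with $P_{\pm}$, hence with every spectral projection $P^{\pm}_\ell$ of $\TU_{\pm}$; therefore $\tr([H_1,H_2]P^{\pm}_\ell)=\tr(H_1H_2P^{\pm}_\ell)-\tr(H_1P^{\pm}_\ell H_2)=0$ by cyclicity. Since $\mathfrak{t}^f_2$ is, by definition, the Lie algebra generated by the six nearest-neighbor and on-site Hamiltonians $h_0,h_{\mathrm{rh}},h_{\mathrm{ch}},h_{\mathrm{rp}},h_{\mathrm{cp}},h_{\mathrm{int}}$ of Eqs.~\eqref{Eq_lin_1}--\eqref{Eq_lin_6}, we have $\mathfrak{t}^f_2=\mathrm{span}_{\R}\{h_0,\dots,h_{\mathrm{int}}\}+[\mathfrak{t}^f_2,\mathfrak{t}^f_2]$ and therefore $\Pi(\mathfrak{t}^f_2)=\Pi(\mathrm{span}_{\R}\{h_0,\dots,h_{\mathrm{int}}\})$, a real subspace of dimension at most six inside the $2d$-dimensional target. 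It then suffices to verify that $\Pi(\hsodd)$ lies outside this subspace for odd $d\geq5$, and that $\Pi(\hseven)$ does for even $d\geq6$.

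To carry this out I would compute all seven vectors $\Pi(h_{\ast}),\Pi(\hsodd),\Pi(\hseven)$ explicitly from the sign rule \eqref{transsign} in the proof of Lemma~\ref{ferm_spec}. For a translation-invariant sum of Jordan--Wigner $\mathrm{Z}$-strings $\sum_n\mathrm{Z}_{T+n}$ the sums over $\TU^k$-invariant (i.e.\ $\gcd(d,k)$-periodic) Fock configurations factorize, giving $\tr(\TU^{k}\mathrm{Z}_{T})=\prod_{r=0}^{g-1}\bigl(1+\epsilon(-1)^{m_r}\bigr)$ with $g=\gcd(d,k)$, $m_r=|\{j\in T:j\equiv r\ (\mathrm{mod}\ g)\}|$, $\epsilon=(-1)^{(d-k)k/g^{2}}$; this is $0$ unless all $m_r$ have a common parity, and the split into the two parity sectors $P^{+}_\ell,P^{-}_\ell$ follows the even/odd dichotomy of $d/\gcd(d,k)$ exactly as in Lemma~\ref{ferm_spec}. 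The generators $h_0,h_{\mathrm{int}}$ are such sums with strings of length $\leq2$, and $h_{\mathrm{rh}},h_{\mathrm{ch}},h_{\mathrm{rp}},h_{\mathrm{cp}}$ are quadratic hopping/pairing terms whose $\Pi$-image can equally be obtained from the quasifree dynamics of Proposition~\ref{prop:qf-dyn} (the translation acts on covariance matrices as a cyclic $\SO(2d)$-rotation). By contrast $\hsodd$ shifts occupation by three sites --- so a nonzero diagonal contribution to $\tr(\TU^{k}\hsodd)$ forces $k$ to solve $mk\equiv3\pmod d$ for the number $m$ of particles involved, and for odd $d$ the least such exponents produce a pattern not matched by the length-one shifts of the quadratic generators --- while $\hseven$ is a $\mathrm{Z}$-string sum containing the length-five string $\mathrm{Z}_n\cdots\mathrm{Z}_{n+4}$, whose contribution at the even exponents admissible for $\mathfrak{t}^f_2$ (those are the only relevant ones, since $2\mid d$) cannot be reproduced by strings of length $\leq2$ once $d\geq6$. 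In each case the parity of $d$ is exactly what makes the target vector escape the at-most-six-dimensional span.

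The step I expect to be the main obstacle is the final linear-independence verification once the fermionic signs and the parity split are both in play: one must show that the signed configuration counts for $\hsodd$ (resp.\ $\hseven$) are nonzero in at least one sector $P^{+}_\ell$ or $P^{-}_\ell$ --- so that $\Pi$, rather than only the full trace $\tr(\TU^k\cdot)$, detects them --- and that no coincidence among the seven explicitly computed vectors spoils the argument for small $d$ (notably $d=5$). Packaging the needed vanishing statements as a fermionic analog of Proposition~\ref{prop:trace-short-ranged}, with the correct index set and a parity refinement, would be the cleanest way to organize this.
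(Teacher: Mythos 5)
Your framework is sound and is in essence the one the paper uses: trace functionals $H\mapsto\tr(H\,C)$ with $C$ a polynomial in $\TU$ are central for translation-invariant even operators, hence vanish on $[\mathfrak{t}^f_2,\mathfrak{t}^f_2]$ by cyclicity of the trace, so that $\mathfrak{t}^f_2$ is detected only through the six-dimensional span $\NN_2$ of its generators. The difference is that the paper does not keep all $2d$ parity-resolved coordinates and then solve a membership problem in a six-dimensional subspace; it engineers specific functionals that vanish on all six generators outright --- $\tr(\cdot\,\TU^{-2})$ or $\tr(\cdot\,\TU^{-4})$ depending on $d\bmod 4$ for even $d$ (Lemma~\ref{lem:zero-trace-even}), and for odd $d$ the combination $\CCC$ of Eq.~\eqref{eq:CCC}, which is $\eR$-antisymmetric and therefore automatically kills the five $\eR$-invariant generators, leaving only $h_{\mathrm{ch}}$ to be cancelled by an explicit Fourier--Majorana computation (Lemmas~\ref{lem:expand-u} and \ref{lem:trace_odd_case}). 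The parity refinement via $P^{\pm}_\ell$ that you propose turns out to be unnecessary: plain traces against powers of $\TU$ already separate $i\hsodd$ and $i\hseven$ from $\NN_2$.

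The genuine gap is that your proposal stops exactly where the proof begins. Everything after ``it then suffices to verify'' is the entire content of the theorem, and the verification is nontrivial precisely because of the fermionic signs you would have to track: the sign $\kappa(\underline{n})$ coming from Eq.~\eqref{transsign} is what makes $\tr(ih\,\TU^{-2})$ vanish on $\NN_2$ when $d\bmod 4=2$ but not when $d\bmod 4=0$ (forcing the switch to $\TU^{-4}$), and for odd $d$ the generator $ih_{\mathrm{ch}}$ does \emph{not} have vanishing trace against $\TU^{\pm 2}$ or $\TU^{\pm 4}$ individually (one computes $\tr(\TU^2\,ih_{\mathrm{ch}})=(-1)^d 2^d d\lambda_1$ with $\lambda_1\neq 0$ for odd $d$), so a single power of $\TU$ cannot work and the specific signed combination $\CCC$ is needed to cancel $ih_{\mathrm{ch}}$ while keeping $\tr(i\hsodd\,\CCC)\neq 0$. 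Your sketched configuration-count formula for $\mathrm{Z}$-strings and the heuristic ``$mk\equiv 3\pmod d$'' argument for the hopping term in $\hsodd$ are neither verified nor obviously correct --- the paper instead expands $\TU^2$ and $\TU^4$ in Fourier-transformed Majorana monomials and reads off the coefficient of the identity --- and without these computations there is no proof that $\Pi(\hsodd)$ or $\Pi(\hseven)$ escapes the six-dimensional span, which you yourself flag as ``the main obstacle.'' A strategy plus an acknowledged obstacle at the decisive step is not yet a proof.
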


Note that the Hamiltonian $\hsodd$ of Theorem~\ref{thm:main_nn} is a third-neighbor 
Hamiltonian for $d\geq7$ and a next-nearest-neighbor Hamiltonian for $d=5$.
The proof of Theorem~\ref{thm:main_nn} is rather involved.
The proof for even $d$ is given in Appendix~\ref{app:main_nn_even},
while  Appendix~\ref{app:main_nn_odd} contains the proof for odd $d$.

\section{Quasifree Fermionic Systems Satisfying Translation-Invariance\label{TIQuasi}}

We continue the discussion of translation-invariant fermionic systems
from Sec.~\ref{sec:TI} by narrowing the scope to quadratic Hamiltonians.
In Sec.~\ref{transl_quadratic}, we derive the dynamic algebras 
for systems with and without (twisted) reflection symmetry.
Both of these cases are summarized for quasifree fermionic systems 
in Table~\ref{tab:trans-quad}: the system algebras  
were computed using the computer 
algebra system {\sc magma} \cite{MAGMA} for cases with low number of modes, while
the complete picture is provided by Theorem~\ref{thm:qf-trans-inv} and Corollary~\ref{Rsym}.
Sec.~\ref{orbits-q-trans} yields a classification of the orbit structure of pure translation-invariant 
quasifree states. This allows us to present an application to many-body physics in 
Sec.~\ref{applManyBody}, where we bound the scaling of the gap for a class of quadratic 
Hamiltonians.

\begin{table}[t]
\caption{\label{tab:trans-quad} System algebras of quasifree fermionic systems
 with $d$ modes satisfying translation-invariance.}
\begin{tabular}{l@{\hspace{2mm}}l@{\hspace{2mm}}l}
\hline\hline\\[-2mm]
$d$ & general case
& (twisted) reflection symm.\ \\ 
& (see Theorem~\ref{thm:qf-trans-inv}) & (see Eq.~\eqref{eq:Refl} and  Cor.~\ref{Rsym})\\[1mm] 
\hline \\[-2mm]
1 & - & - \\[1mm]
2 & $\uu(1)+\uu(1)$ & $\uu(1)+\uu(1)$  \\[1mm]
3 & $\uu(2)+\uu(1)$ & $\su(2)+\uu(1)$  \\[1mm]
4 &  $\uu(2)+\uu(1)+\uu(1)$ & $\su(2)+\uu(1)+\uu(1)$  \\[1mm]
5 &  $\uu(2)+\uu(2)+\uu(1)$ & $\su(2)+\su(2)+\uu(1)$ \\[1mm]
6 & $\uu(2)+\uu(2)+\uu(1)+\uu(1)$ & $\su(2)+\su(2)+\uu(1)+\uu(1)$
\\[1mm]
$\vdots$ &\quad$\vdots$ &\quad$\vdots$ \\[1mm] 
$2n{-}1$ &  $\sum_{i=1}^{n-1}\uu(2) + \uu(1)$ & $\sum_{i=1}^{n-1}\su(2) + \uu(1)$ \\[1mm]
$2n$ & $\sum_{i=1}^{n-1}\uu(2) + \uu(1) + \uu(1)$ & $\sum_{i=1}^{n-1}\su(2) + \uu(1) + \uu(1)$ 
\\[1mm]
\hline\hline
\end{tabular}
\end{table}

\subsection{Translation-Invariant Quadratic Hamiltonians\label{transl_quadratic}}

A quadratic Hamiltonian $H$ is translation-invariant (i.e.\ $[H,\TU]=0$)
iff the coefficient matrices $A$ and $B$ in Eq.~\eqref{Hqfree}
are cyclic (i.e.\ $A_{nm}-A_{n+1,m+1}=B_{nm}-B_{n+1,m+1}=0$). 
To study such Hamiltonians, it is useful to rewrite them in terms of the Fourier-transformed 
annihilation and creation operators
\begin{equation} \label{eq:F-trans_fermop}
\tilde{f}^{\phantom\dagger}_k:= \frac{1}{\sqrt{d}} \sum_{p=1}^d f_p e^{-2 \pi ipk/d} \text{ and }
\tilde{f}^\dagger_k:= \frac{1}{\sqrt{d}} \sum_{p=1}^d f^\dagger_p e^{2 \pi ipk/d},  
\end{equation}
with $k\in\{ 0,1,\ldots, d{-}1\}$, which satisfy again the
canonical anticommutation relations
\begin{equation} \label{eq:f-trans-car}
\{ \tilde{f}^{\dagger}_k, \tilde{f}^{\dagger}_{k'}\} = 
\{\tilde{f}^{\phantom\dagger}_k, \tilde{f}^{\phantom\dagger}_{k'}\} =0\; \text{ and }\;
\{\tilde{f}^{\dagger}_k, \tilde{f}^{\phantom\dagger}_{k'}\} = \delta_{kk'} \unity.
\end{equation}
A Hamiltonian from Eq.~\eqref{Hqfree} with cyclic $A$ and $B$ can now be rewritten as 
\begin{equation} \label{eq:quad.trans-inv}
H=\sum_{k=0}^{d-1} \tilde{A}_k   (\tilde{f}^{\dagger}_k \tilde{f}^{\phantom\dagger}_k- 
\tfrac{\unity}{2}) + \tfrac{1}{2}\tilde{B}_{k}\tilde{f}^{\dagger}_{k}\tilde{f}^{\dagger}_{d-k} - 
\tfrac{1}{2}\tilde{B}^*_{k}\tilde{f}^{\phantom\dagger}_{k}
\tilde{f}^{\phantom\dagger}_{d-k}
\end{equation}
applying
the definitions
$
\tilde{A}_k:=\sum_{p=1}^{d} A_{1p} \exp(-2 \pi ipk/d)
$
and 
$
\tilde{B}_{k}:= \sum_{p=1}^{d} B_{1p} \exp(-2 \pi ipk/d)
$, as well as the notation $\tilde{f}_d=\tilde{f}_0$.
The hermiticity of $A$ and the skew-symmetry of $B$
translates into the properties
$\tilde{A}_k=\tilde{A}_{d-k}^{*}$ and $\tilde{B}_k=-\tilde{B}_{d-k}$. This
allows us to decompose the Hamiltonian into a four-part sum
\begin{align} 
H=&\sum_{k=1}^{\lfloor (d-1)/2 \rfloor} \Im(\tilde{A}_k)\,  \oI\;
+ \sum_{k=1}^{\lfloor (d-1)/2 \rfloor} \Re(\tilde{B}_k)\,  \oX/2 \nonumber \\
& + \sum_{k=1}^{\lfloor (d-1)/2 \rfloor} \Im(\tilde{B}_k)\, \oY/2 \;
 +\sum_{k=0}^{\lfloor d/2 \rfloor} \Re(\tilde{A}_k)\, \oZ,
 \label{eq:quad.trans-inv2}
\end{align}
where one has the following definitions
\begin{gather}
\oI := i(\tilde{f}^{\dagger}_k \tilde{f}^{\phantom\dagger}_k {-} \tilde{f}^{\dagger}_{d-k} 
	\tilde{f}^{\phantom\dagger}_{d-k}),\,
\oX := (\tilde{f}^{\dagger}_k \tilde{f}^{\dagger}_{d-k} {+} \tilde{f}^{\phantom\dagger}_{d-k} 
	\tilde{f}^{\phantom\dagger}_{k}), \nonumber \\
\oY := i(\tilde{f}^{\dagger}_k f^{\dagger}_{d-k} {-} \tilde{f}^{\phantom\dagger}_{d-k} 
	\tilde{f}^{\phantom\dagger}_{k}),\,
\oZ := (\tilde{f}^{\dagger}_k \tilde{f}^{\phantom\dagger}_{k} {+} 
	\tilde{f}^{\dagger}_{d-k} \tilde{f}^{\phantom\dagger}_{d-k} {-} \unity) \label{ell-ops1}
\end{gather}
 with $k\in\{1,\ldots, \lfloor(d-1)/2 \rfloor \}$
 as well as
\begin{align}
\ell^{\mathrm{Z}}_{d/2} &:= (\tilde{f}^{\dagger}_{d/2} \tilde{f}^{\phantom\dagger}_{d/2}  - 
\unity/2) \; \text{ for $d$ even,}\nonumber \\
 \ell^{\mathrm{Z}}_0 &:= (\tilde{f}^{\dagger}_0 \tilde{f}^{\phantom\dagger}_{0} - \unity/2).
 \label{ell-ops2}
\end{align}

 Note that the operators $\ell^{\mathrm{Z}}_{d/2}$ (for $d$ even),
$\ell^{\mathrm{Z}}_0$, $\oZ$, $\oI$, $\oX$, and $\oY$ are linearly independent and
span the $(\lfloor d-1\rfloor +d)$-dimensional space of all
translation-invariant quadratic Hamiltonians.
For notational convenience we also introduce the dummy operators
$\ell^{\mathrm{Q}}_{d/2}:=0$ (assuming $d$ is even) and $\ell^{\mathrm{Q}}_{0}:=0$ for
$\mathrm{Q} \in \{\unity, \mathrm{X},  \mathrm{Y} \}$.

With these stipulations, we can characterize the system algebra:
\begin{theorem}\label{thm:qf-trans-inv}
Let $\fq_d$ denote the system algebra 
on a fermionic system with $d$  
modes which corresponds to
the set of Hamiltonians that are
translation-invariant and
quadratic. Then the Lie algebra $\fq_d$ is isomorphic to
$[\sum_{i=1}^{(d-1)/2} \uu(2)] + \uu(1)$
for odd $d$ and to
$[\sum_{i=1}^{(d-2)/2} \uu(2)] + \uu(1)+ \uu(1)$ for even $d$.
\end{theorem}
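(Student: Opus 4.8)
The plan is to leverage the Fourier decomposition already in place. As recorded in Eq.~\eqref{eq:quad.trans-inv2} together with Eqs.~\eqref{ell-ops1}--\eqref{ell-ops2}, every translation-invariant quadratic Hamiltonian is a real linear combination of the operators $\oZ,\oI,\oX,\oY$ for $k\in\{1,\dots,\lfloor(d-1)/2\rfloor\}$ together with $\ell^{\mathrm{Z}}_0$ and, when $d$ is even, $\ell^{\mathrm{Z}}_{d/2}$, and (as noted just before the theorem) these operators are linearly independent. Since the real coefficients $\Re(\tilde A_k),\Im(\tilde A_k),\Re(\tilde B_k),\Im(\tilde B_k)$ range freely and independently, each of these operators is itself an admissible Hamiltonian, so $\fq_d$ is the Lie closure $\langle\,\oI,\,i\oX,\,i\oY,\,i\oZ\;(1\le k\le\lfloor(d-1)/2\rfloor),\; i\ell^{\mathrm{Z}}_0,\; i\ell^{\mathrm{Z}}_{d/2}\,\rangle_{\rm Lie}$ (the last generator only for even $d$). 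The whole task reduces to identifying this Lie closure.

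First I would analyze a single momentum block. For $1\le k\le\lfloor(d-1)/2\rfloor$ the four operators $\oZ,\oI,\oX,\oY$ involve only the two modes $k$ and $d-k$; writing $a:=\tilde f_k$, $b:=\tilde f_{d-k}$ and using the canonical anticommutation relations~\eqref{eq:f-trans-car}, one computes that $\{i\oX,i\oY,i\oZ\}$ closes under the commutator into a three-dimensional algebra with $\su(2)$ brackets (e.g.\ $[i\oX,i\oY]=2i\,\oZ$, etc.), while $[\oI,\oX]=[\oI,\oY]=[\oI,\oZ]=0$, so $\oI$ spans a commuting $\uu(1)$. This $\uu(1)$ is genuinely new: restricted to the two-mode Fock space of modes $k,d-k$, the operators $\oX,\oY,\oZ$ act as Pauli-type operators on the even-parity plane and annihilate the odd-parity plane, whereas $\oI$ does the opposite; in particular $\oI$ is not in the span of $\{i\oX,i\oY,i\oZ\}$ and $[\oX,\oY]\neq0$. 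Hence the block generates exactly the four-dimensional algebra $\su(2)+\uu(1)\cong\uu(2)$.

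Next I would assemble the blocks. For $k\neq k'$ the index pairs $\{k,d-k\}$ and $\{k',d-k'\}$ are disjoint, and both are disjoint from $\{0\}$ and from $\{d/2\}$; since all operators involved are even (quadratic), operators supported on disjoint mode sets commute. Therefore the $\lfloor(d-1)/2\rfloor$ momentum blocks, the one-dimensional algebra $\langle i\ell^{\mathrm{Z}}_0\rangle\cong\uu(1)$, and (for even $d$) $\langle i\ell^{\mathrm{Z}}_{d/2}\rangle\cong\uu(1)$ pairwise commute, and the Lie closure of each generating subset consists of operators supported on its own modes. Consequently the Lie closure of the whole generating set is the direct sum of the block Lie algebras. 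Counting: for odd $d$ there are $(d-1)/2$ paired blocks plus the block at $k=0$, giving $[\sum_{i=1}^{(d-1)/2}\uu(2)]+\uu(1)$; for even $d$ there are $(d-2)/2$ paired blocks plus the two blocks at $k=0$ and $k=d/2$, giving $[\sum_{i=1}^{(d-2)/2}\uu(2)]+\uu(1)+\uu(1)$, which is the claim.

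All the commutator computations are elementary, so the argument is short; the step requiring the most attention is the bookkeeping at the exceptional momenta $k=0$ and, for even $d$, $k=d/2$, where $\tilde f_k=\tilde f_{d-k}$ forces the would-be operators $\oX,\oY,\oI$ to degenerate to zero (the dummy operators $\ell^{\mathrm{Q}}_0=\ell^{\mathrm{Q}}_{d/2}=0$ of the main text). One must verify that each such mode contributes only a $\uu(1)$ and that no commutator couples it to the paired blocks — this holds because it involves a single mode disjoint from all others, but it is exactly the spot where a hasty argument would overcount. A secondary point, settled already in the block analysis, is to confirm that each paired block produces the full $\uu(2)$ rather than a proper subalgebra.
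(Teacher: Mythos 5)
Your proposal is correct and follows essentially the same route as the paper's proof: decompose into pairwise-commuting momentum blocks $L_k=\langle i\oI,i\oX,i\oY,i\oZ\rangle_{\R}$ plus the singleton $\uu(1)$'s at $k=0$ and (for even $d$) $k=d/2$, identify each paired block with $\uu(2)$ via the Pauli correspondence, and count. Your extra observation that $\oX,\oY,\oZ$ and $\oI$ act on complementary parity planes is a slightly more explicit justification that each block is the full four-dimensional $\uu(2)$, but it does not change the argument.
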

\begin{proof}
If $d=2m-1$ is odd,  
the generators $i\oI$, $i\oX$, $i\oY$, $i\oZ$, and $i\ell^{\mathrm{Z}}_{0}$
can be partitioned into $m$ 
pairwise-commuting sets, which each span linear subspaces as
\begin{equation*}
L_0:= \langle i\ell^{\mathrm{Z}}_0 \rangle_{\R}\; \text{ and }\;
L_k:= \langle i\oI, i\oX, i\oY, i\oZ \rangle_{\R}
\end{equation*}
with $k\in\{1,\ldots,m{-}1\}$. 
The commutation properties $[L_k,L_{k'}]=0$ (with $k \ne k'$)
follow from Eq.~\eqref{eq:f-trans-car}.
Moreover, $L_0$ is one-dimensional and forms a 
$\uu(1)$-algebra. 
Using Eq.~\eqref{eq:f-trans-car}, the relations 
$[\tilde{f}^{\dagger}_a \tilde{f}^{\phantom\dagger}_a, 
\tilde{f}^{\dagger}_a \tilde{f}^{\dagger}_b]=([\tilde{f}^{\dagger}_a \tilde{f}^{\phantom\dagger}_a,
\tilde{f}^{\phantom\dagger}_a \tilde{f}^{\phantom\dagger}_b ])^\dagger= \tilde{f}^{\dagger}_a 
\tilde{f}^{\dagger}_b$ 
and 
$[\tilde{f}^{\dagger}_a \tilde{f}^{\dagger}_b, \tilde{f}^{\phantom\dagger}_b 
\tilde{f}^{\phantom\dagger}_a]=\tilde{f}^{\dagger}_a \tilde{f}^{\phantom\dagger}_a +
\tilde{f}^{\dagger}_b \tilde{f}^{\phantom\dagger}_b - \unity$
can be deduced for $a \ne b$.
Substituting $k$ and $d-k$ into $a$ and $b$  in the previous formula, 
one can verify directly that the correspondence
\begin{equation*}
i \oI \mapsto i\unity,\;
i \oX \mapsto i\mathrm{X},\;
i \oY \mapsto i\mathrm{Y},\;
i \oZ \mapsto i\mathrm{Z}
\end{equation*}
provides an explicit Lie isomorphism between $L_k$ and $\uu(2)$.
If $d=2m$ is even, the system algebra consists of the 
above-described
generators supplemented with the element $i\ell_{d/2}^\mathrm{Z}$. This additional element 
commutes with all the other generators
and---therefore---provides an additional $\uu(1)$.
\end{proof}

The isomorphism between $L_k$ and $\uu(2)$ as given in the proof
leads to a compact formula
for the time evolution (in the Heisenberg picture) of the elements of $L_k$.
Since the operators $\oX$, $\oY$, $\oZ$, and $\oI$ (with $k\in\{1,\ldots, \lfloor 
(d-1)/2 \rfloor \}$) satisfy the same commutation relations as the Pauli matrices 
$\mathrm{X}$, $\mathrm{Y}$, $\mathrm{Z}$, and $\unity$, their time-evolution generated by
the Hamiltonian $H$ in Eq.~\eqref{eq:quad.trans-inv2} can be straightforwardly related to a 
qubit time-evolution
\begin{gather}
e^{iHt} i\left( a_\unity \oI + a_\mathrm{X} \oX + a_\mathrm{Y} \oY + a_\mathrm{Z} \oZ\right) 
e^{-iHt}= \nonumber \\
i a_\unity \oI + \sum_{\mathrm{Q} \in \{\mathrm{X},\mathrm{Y},\mathrm{Z}\}} i a_\mathrm{Q}\, 
\ell^\mathrm{Q}_k  \tr(e^{iH_s} \mathrm{Q}\, e^{-iH_s} \mathrm{Q}), \label{eq:te-qf-trans-inv}
\end{gather}
where $H_s=\Re(\tilde{A}_k) \mathrm{Z} + 
\Re(\tilde{B}_k) \mathrm{X}/2 + \Im(\tilde{B}_k) \mathrm{Y}/2$.

The \emph{twisted reflection symmetry} plays an important
role in translation-invariant quasifree fermionic systems.
It is defined by the unitary 
\begin{equation} \label{eq:Refl}
\eR  | n_1, n_2, \ldots n_d \rangle =  i^{(\sum_{\ell=1}^d n_\ell)^2} | n_d, n_{d-1}, \ldots n_1 \rangle . \,
\end{equation}
whose adjoint action 
on creation operators and their Fourier transforms are specified as
\begin{align} \label{eq:Refl_adj}
\eR\, f^\dagger_\ell\, \eR^{\dagger}=if^\dagger_{(d-\ell+1 \bmod d)}, \,
\eR\, \tilde{f}^\dagger_k \, \eR^{\dagger}=\tilde{f}^\dagger_{(-k \bmod d)}.
\end{align}
A given translation-invariant quasifree Hamiltonian is $\eR$-symmetric  (i.e.\
$[\eR,H]=0$) iff the coefficient matrix is restricted to be real. 
In our language, these Hamiltonians are exactly the ones for which
$\Im(\tilde{A}_k)=0$, i.e., the corresponding generators 
are spanned by the operators $i\ell_{d/2}^\mathrm{Z}$ (for $d$ even), 
$i\ell_{0}^\mathrm{Z}$, $i\oZ$,
$i\oX$, and $i\oY$.
From the proof of Theorem~\ref{thm:qf-trans-inv} one can immediately deduce
the corresponding system algebra:
\begin{corollary}\label{Rsym}
Consider a fermionic system with $d$ modes and
the set of quadratic Hamiltonians
which are translation-invariant and $\eR$-symmetric.
The corresponding system algebra $\fq_d^{\eR}$ 
  is isomorphic to $[\sum_{i=1}^{(d-1)/2} \su(2)] + \uu(1)$
for odd $d$ and to $[\sum_{i=1}^{(d-2)/2} \su(2)] + \uu(1)+ \uu(1)$ for even $d$.
\end{corollary}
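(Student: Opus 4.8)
The plan is to reduce Corollary~\ref{Rsym} directly to the proof of Theorem~\ref{thm:qf-trans-inv}, since the $\eR$-symmetric Hamiltonians form a subspace of the translation-invariant quadratic ones that is obtained by simply dropping the generators $i\oI$ (i.e.\ imposing $\Im(\tilde A_k)=0$). First I would recall the decomposition established in the proof of Theorem~\ref{thm:qf-trans-inv}: for $d=2m-1$ odd, the generators split into pairwise-commuting blocks $L_0=\langle i\ell_0^{\mathrm{Z}}\rangle_{\R}$ and $L_k=\langle i\oI,i\oX,i\oY,i\oZ\rangle_{\R}$ for $k\in\{1,\dots,m-1\}$, each $L_k$ being Lie-isomorphic to $\uu(2)$ via $i\oI\mapsto i\unity$, $i\oX\mapsto i\mathrm X$, $i\oY\mapsto i\mathrm Y$, $i\oZ\mapsto i\mathrm Z$. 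The commutation relations $[L_k,L_{k'}]=0$ for $k\ne k'$ follow from Eq.~\eqref{eq:f-trans-car} and are unaffected by passing to a subspace.

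The key step is then to identify what the $\eR$-symmetry constraint removes: the $\eR$-symmetric translation-invariant quadratic Hamiltonians are spanned by $i\ell_{d/2}^{\mathrm{Z}}$ (for $d$ even), $i\ell_0^{\mathrm{Z}}$, $i\oZ$, $i\oX$, and $i\oY$ — precisely the generators in each $L_k$ other than $i\oI$, plus the $\uu(1)$-blocks $L_0$ (and $L_{d/2}$ for even $d$). I would define $L_k^{\eR}:=\langle i\oX,i\oY,i\oZ\rangle_{\R}$ and note that, under the isomorphism of Theorem~\ref{thm:qf-trans-inv}, $L_k^{\eR}$ maps onto $\langle i\mathrm X,i\mathrm Y,i\mathrm Z\rangle_{\R}=\su(2)$. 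One must check that $L_k^{\eR}$ is actually closed under the commutator and generates all of $\su(2)$; but both are immediate from the commutation relations already verified in the parent theorem (the $i\oX,i\oY,i\oZ$ close among themselves exactly as the traceless Pauli matrices do, and two of them generate the third). Since $L_0$ contributes a $\uu(1)$ and is untouched by the $\eR$-constraint, and the cross-block commutators vanish, the system algebra is the direct sum $[\sum_{i=1}^{m-1}\su(2)]+\uu(1)$ for $d=2m-1$ odd. For $d=2m$ even one argues identically, the extra generator $i\ell_{d/2}^{\mathrm{Z}}$ (which is $\eR$-symmetric and commutes with everything) supplying a second $\uu(1)$, giving $[\sum_{i=1}^{(d-2)/2}\su(2)]+\uu(1)+\uu(1)$.

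Since the structural work is all inherited from Theorem~\ref{thm:qf-trans-inv}, there is no real obstacle; the only point requiring a moment of care is the bookkeeping of which Fourier modes survive and that the $\eR$-symmetry condition $\Im(\tilde A_k)=0$ indeed corresponds exactly to deleting the $i\oI$ generators (this is asserted in the text preceding the corollary via Eq.~\eqref{eq:Refl_adj} and the reality of the coefficient matrix). I would therefore write the proof as: ``From Eq.~\eqref{eq:Refl_adj} a translation-invariant quadratic Hamiltonian is $\eR$-symmetric iff $\Im(\tilde A_k)=0$ for all $k$, so the corresponding generators are spanned by $i\ell_{d/2}^{\mathrm Z}$ (for even $d$), $i\ell_0^{\mathrm Z}$, $i\oZ$, $i\oX$, $i\oY$. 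Restricting the block decomposition from the proof of Theorem~\ref{thm:qf-trans-inv} to this subspace replaces each $L_k\cong\uu(2)$ by $L_k^{\eR}=\langle i\oX,i\oY,i\oZ\rangle_{\R}\cong\su(2)$ while leaving the one-dimensional blocks $L_0$ (and $L_{d/2}$ for even $d$) and all vanishing cross-commutators intact; summing the blocks yields the claim.''
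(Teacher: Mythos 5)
Your proposal is correct and follows essentially the same route as the paper: the paper's own proof of Corollary~\ref{Rsym} consists precisely of observing that the $\eR$-symmetry condition $\Im(\tilde A_k)=0$ deletes the central generators $i\oI$ from each block $L_k\cong\uu(2)$ of the proof of Theorem~\ref{thm:qf-trans-inv}, leaving $\langle i\oX,i\oY,i\oZ\rangle_{\R}\cong\su(2)$ per block plus the untouched $\uu(1)$ summands. Your additional checks (closure of each reduced block and vanishing cross-commutators) are exactly the points the paper leaves implicit.
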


Given the system algebras $\fq_d$ and 
$\fq_d^{\eR}$, we investigate the subalgebras generated by short-range Hamiltonians.
It will be useful to introduce for $p \in \{1, \ldots, \lfloor (d-1)/2 \rfloor \}$ the Hamiltonians 
\begin{subequations}\label{hZ}
\begin{align}
\hI_{p} &:=  \frac{1}{2}\sum_{\ell=1}^d i(f^{\dagger}_\ell f^{\phantom{\dagger}}_{\ell+p}{-}f^{\dagger}_{\ell+p} 
f^{\phantom{\dagger}}_{\ell})
= \sum_{k=0}^{\lfloor (d-1)/2 \rfloor} \sin(\tfrac{2 \pi kp}{d})\; \oI, \label{hZZ}\\
\hX_{p} &:=   \frac{1}{2}\sum_{\ell=1}^d i(f^{\dagger}_\ell f^{\dagger}_{\ell+p}{-}f^{\phantom{\dagger}}_{\ell+p} 
f^{\phantom{\dagger}}_{\ell})
= \sum_{k=0}^{\lfloor (d-1)/2 \rfloor} \sin(\tfrac{2 \pi kp}{d})\; \oX, \\ 
\hY_{p} &:=   \frac{1}{2}\sum_{\ell=1}^d (f^{\dagger}_\ell f^{\dagger}_{\ell+p}{+}f^{\phantom{\dagger}}_{\ell+p} 
f^{\phantom{\dagger}}_{\ell} )
= \sum_{k=0}^{\lfloor (d-1)/2 \rfloor} \sin(\tfrac{2 \pi kp}{d})\; \oY,\\
\hZ_{p} &:= \frac{1}{2}\sum_{\ell=1}^d (f^{\dagger}_\ell f^{\phantom{\dagger}}_{\ell+p}{+}f^{\dagger}_{\ell+p} 
f^{\phantom{\dagger}}_{\ell} )
=\sum_{k=0}^{\lfloor d/2 \rfloor}  \cos(\tfrac{2 \pi kp}{d})\; \oZ,
\end{align}
as well as the additional ones ($\hZ_{d/2}$ only for even $d$)
\begin{align}
\hZ_{0} &:= \frac{1}{2}\sum_{\ell=1}^d (f^{\dagger}_\ell f^{\phantom{\dagger}}_{\ell}{+}f^{\dagger}_{\ell} 
f^{\phantom{\dagger}}_{\ell} {-} \unity)
=\sum_{k=0}^{\lfloor d/2 \rfloor}  \oZ,\\
\hZ_{d/2} &:= \frac{1}{2}\sum_{\ell=1}^d (f^{\dagger}_\ell f^{\phantom{\dagger}}_{\ell+p}{+}f^{\dagger}_{\ell+p} 
f^{\phantom{\dagger}}_{\ell} )
=\sum_{k=0}^{d/2}  (-1)^k\; \oZ.
\end{align}
\end{subequations}
In these definition we used cyclic indices, e.g.\ $f_{d+a}=f_a$. 
The operators $\hZ_{d/2}$ (for $d$ even), $\hZ_{0}$, $\hZ_{p}$, $\hI_{p}$, $\hX_{p}$, and $\hY_{p}$
span $\fq_d$ linearly. Using the identities above, the commutation relations of the 
$\ell^{\mathrm{Q}}_k$ operators, and some trigonometric identities, we obtain 
\begin{subequations}\label{eq:trans-com}
\begin{align}
[i\hI_{a},i\hZ_{b}]=&[i\hI_{a},i\hX_{b}]=[i\hI_{a},i\hY_{b}]=0,  \\
[i\hX_{a},i\hY_{b}]=&-\tfrac{i}{2}(\hZ_{(a+b)\bmod \lfloor d/2 \rfloor}{-}\hZ_{(a-b)\bmod \lfloor d/2 \rfloor}), \\
[i\hY_{a},i\hZ_{b}]=&-\tfrac{i}{2}(\mathrm{sgn}(d{-}a{-}b)\,\hX_{(a+b)\bmod \lfloor d/2 \rfloor}
\nonumber\\& \phantom{-\tfrac{i}{2}(}-\mathrm{sgn}(a{-}b)\,\hX_{(a-b)\bmod \lfloor d/2 \rfloor}),\\
[i\hZ_{a},i\hX_{b}]=&-\tfrac{i}{2}(\mathrm{sgn}(d{-}a{-}b)\,\hY_{(a+b)\bmod \lfloor d/2 \rfloor}\nonumber\\
&\phantom{-\tfrac{i}{2}(}-\mathrm{sgn}(a{-}b)\,\hY_{(a-b)\bmod \lfloor d/2 \rfloor}) 
\end{align}
\end{subequations}
for $a,b\in \{0,\ldots, \lfloor d/2 \rfloor\}$.
In \cite{kraus-pra71} it was shown that already 
the nearest-neighbor Hamiltonians of $\fq_d^{\eR}$
generate the whole $\fq_d^{\eR}$. Now we are in the position to provide a more 
systematic proof of their result:

\begin{lemma} \label{thm:Kraus}  
The system algebra $\fq_d^{\eR}$ can be generated using the
one-site-local operator $i\hZ_{0}$ and a
nearest-neighbor element
$i(\alpha_1 \hZ_{1} + \alpha_2 \hX_{1} + \alpha_3 \hY_{1})$ 
with $\alpha_i\in\R$ assuming that
$\alpha_2\neq 0$ or $\alpha_3 \ne 0$ for odd $d$
and additionally requiring $\alpha_1 \ne 0$ for even $d$.
\end{lemma}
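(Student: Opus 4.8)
The plan is to carry the analysis over to the Fourier picture used in the proof of Theorem~\ref{thm:qf-trans-inv}: there $\fq_d^{\eR}$ is the direct sum of the central one-dimensional pieces spanned by $i\ell^{\mathrm{Z}}_0$ and (for even $d$) $i\ell^{\mathrm{Z}}_{d/2}$ together with one copy of $\su(2)$ --- spanned by $i\oX,i\oY,i\oZ$ --- for each Fourier mode $k\in\{1,\dots,\lfloor(d-1)/2\rfloor\}$, and by the Fourier expansions \eqref{hZ} the spanning operators $\hZ_p,\hX_p,\hY_p$ act within the $k$-th block with coefficients $\cos(2\pi kp/d)$, $\sin(2\pi kp/d)$, $\sin(2\pi kp/d)$, respectively. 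The structural fact I would isolate at the outset is that $i\hZ_0$ acts on \emph{every} block with the coefficient $1$, so that (either from Eq.~\eqref{eq:trans-com} or directly from this observation) one has, uniformly in $p$, the relations $[i\hZ_0,i\hZ_p]=0$, $[i\hZ_0,i\hX_p]=-2i\hY_p$, and $[i\hZ_0,i\hY_p]=2i\hX_p$; in other words $[i\hZ_0,\cdot\,]$ rotates the plane $\langle i\hX_p,i\hY_p\rangle_{\R}$ by a fixed amount and annihilates $i\hZ_p$.

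The first step is then to extract the nearest-neighbor building blocks from the pair $\{i\hZ_0,\,g\}$, writing $g:=i(\alpha_1\hZ_1+\alpha_2\hX_1+\alpha_3\hY_1)$. The relations above give
\begin{align*}
[i\hZ_0,g]=2i\alpha_3\hX_1-2i\alpha_2\hY_1,\qquad [i\hZ_0,[i\hZ_0,g]]=-4\,i(\alpha_2\hX_1+\alpha_3\hY_1).
\end{align*}
Since $\alpha_2^2+\alpha_3^2\neq0$ by hypothesis, these two elements are linearly independent and their real span is exactly $\langle i\hX_1,i\hY_1\rangle_{\R}$; subtracting this span off $g$ leaves $i\alpha_1\hZ_1$, which is a genuinely new generator precisely when $\alpha_1\neq0$. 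Hence the Lie closure contains $i\hX_1$ and $i\hY_1$, and additionally $i\hZ_1$ whenever $\alpha_1\neq0$.

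The second step is to climb in the interaction length using only the commutation relations \eqref{eq:trans-com}: starting from $\{i\hZ_0,i\hX_1,i\hY_1\}$ (plus $i\hZ_1$ when available) one gets $i\hZ_2$ from $[i\hX_1,i\hY_1]$ modulo $i\hZ_0$, then $i\hX_3,i\hY_3$ from $[i\hZ_2,i\hX_1]$ and $[i\hZ_2,i\hY_1]$ modulo $i\hX_1,i\hY_1$, and so on, each commutator raising an index either by two (when a $\hZ_2$-type element is used) or by one (when a $\hZ_1$-type element is used). The index wrap-around $\hX_{d-p}=-\hX_p$, $\hY_{d-p}=-\hY_p$, $\hZ_{d-p}=\hZ_p$ is essential here: for \emph{odd} $d$ it exchanges even and odd indices, so the two-step climb from $p=1$ already reaches every admissible index and $i\hZ_1$ is never needed, whereas for \emph{even} $d$ it preserves the parity of the index, so the even-index sector becomes accessible only once $i\hZ_1$ is present --- which is exactly why the hypothesis demands $\alpha_1\neq0$ in the even case. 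Iterating, the Lie closure contains all $i\hZ_p$ with $0\le p\le\lfloor d/2\rfloor$ and all $i\hX_p,i\hY_p$ with $1\le p\le\lfloor(d-1)/2\rfloor$; since the cosine matrix $[\cos(2\pi kp/d)]$ and the sine matrix $[\sin(2\pi kp/d)]$ are invertible, these already span every $i\oZ,i\oX,i\oY$ as well as $i\ell^{\mathrm{Z}}_0$ and $i\ell^{\mathrm{Z}}_{d/2}$ (the latter only for even $d$), so the Lie closure is all of $\fq_d^{\eR}$ by Theorem~\ref{thm:qf-trans-inv} and Corollary~\ref{Rsym}. The small cases $d\in\{2,3,4\}$ are verified directly (for $d=2$ the operators $\hX_1,\hY_1$ vanish, so the claim reduces to $\langle i\hZ_0,i\hZ_1\rangle_{\R}=\uu(1)+\uu(1)$).

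The step I expect to be the main obstacle is the bookkeeping in this climbing argument: turning ``and so on'' into a clean induction on the interaction length requires keeping careful track of the $\mathrm{sgn}$-factors and of the wrap-around in Eq.~\eqref{eq:trans-com} near the top index $\lfloor d/2\rfloor$ and, for even $d$, at the self-paired mode $d/2$, and then checking that the operators generated really do exhaust the index ranges with the correct parities --- this is precisely where the odd/even dichotomy of the hypotheses must be pinned down, and where one has to rule out that the process stalls inside a proper subalgebra.
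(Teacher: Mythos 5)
Your proposal is correct and follows essentially the same route as the paper: first use the adjoint action of $i\hZ_0$ (which rotates the $\langle i\hX_1,i\hY_1\rangle_{\R}$ plane and annihilates $i\hZ_1$) to isolate $i\hX_1$ and $i\hY_1$, plus $i\hZ_1$ when $\alpha_1\neq 0$, and then climb through the index range via Eq.~\eqref{eq:trans-com}, with the index wrap-around for odd $d$ supplying $i\hZ_1$ in the $\alpha_1=0$ case. Your version merely makes the climbing step and the even/odd parity obstruction more explicit than the paper, which asserts these consequences of Eq.~\eqref{eq:trans-com} without spelling out the induction.
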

\begin{proof} 
(1) From Eq.~\eqref{eq:trans-com} we know that $i\hZ_{0}$, $i\hZ_{1}$,
$i\hX_{1}$, and $i\hY_{1}$ would generate the whole $\fq_d^{\eR}$. 
(2) Suppose that $\alpha_1 \ne 0$ and $\alpha_2^2 +\alpha_3^2 \ne 0$.
From  $2[i\hZ_{0}, i(\alpha_1 \hZ_{1} + \alpha_2 \hX_{1} + \alpha_3 \hY_{1})]=
\alpha_2 i\hY_{1} -  \alpha_3 i\hX_{1}$ and $2[i\hZ_{0},\alpha_2 i\hY_{1} -  \alpha_3 i\hX_{1}]
=-\alpha_2 i\hX_{1} -  \alpha_3 i\hY_{1}$ it follows that one can generate 
$i\hZ_{0}$, $i\hZ_{1}$, $i\hX_{1}$, and $i\hY_{1}$. Hence according 
to observation (1),  the 
whole $\fq_d^{\eR}$ is generated.
(3) Suppose now that $\alpha_1=0$, $d$ is odd, and $\alpha_2^2 +\alpha_3^2 \ne 0$.  
From $2[i\hZ_{0}, i(\alpha_2 \hX_{1} + \alpha_3 \hY_{1})]=
\alpha_2 i\hY_{1} -  \alpha_3 i\hX_{1}$ one can generate 
$i\hZ_{0}$,  $i\hX_{1}$, and $i\hY_{1}$. From Eq.~\eqref{eq:trans-com} it follows that 
these generators in turn generate all $i\hZ_{2p \mod d}$. Since $d$ is odd, 
$i\hZ_{1}$ is also generated. Hence we obtain 
$i\hZ_{0}$, $i\hZ_{1}$, $i\hX_{1}$, and $i\hY_{1}$, and according to (1), 
the algebra $\fq_d^{\eR}$ is generated.
\end{proof}
For the more general $\fq_d$, we obtain a slightly larger system algebra when we do \emph{not} 
assume  $\eR$-symmetry:
\begin{proposition}\label{fermionic_nn}
The elements of $\fq_d$ with interaction length 
less than $M$ (where $2 \le M \le \lceil d/2 \rceil$ and $d \ge 3$) generate a system algebra 
which is isomorphic to $[\sum_{i=1}^{(d-1)/2} \su(2)] + \sum_{i=1}^{M}\uu(1)$
for odd $d$ and to $[\sum_{i=1}^{(d-2)/2} \su(2)] + \sum_{i=1}^{M+1}\uu(1)$ for even $d$.
\end{proposition}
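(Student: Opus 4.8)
The plan is to identify the generated algebra explicitly as a subalgebra of $\fq_d$ from Theorem~\ref{thm:qf-trans-inv}, by isolating the single summand on which the interaction-length restriction actually bites. Recall from the proof of Theorem~\ref{thm:qf-trans-inv} that $\fq_d$ is, as an abstract direct sum in the sense of Remark~\ref{remark5}, the sum of the $\lfloor(d-1)/2\rfloor$ copies of $\su(2)$ carried by the subspaces $\langle i\oZ,i\oX,i\oY\rangle_\R$, the abelian ideal $\fz:=\langle i\hI_1,\dots,i\hI_{\lfloor(d-1)/2\rfloor}\rangle_\R$ (equivalently, the span of the $i\oI$), the line $\langle i\ell^{\mathrm{Z}}_0\rangle_\R$, and---for even $d$---one further line $\langle i\ell^{\mathrm{Z}}_{d/2}\rangle_\R$. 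The direct sum of everything except $\fz$ is exactly the $\eR$-symmetric subalgebra $\fq_d^{\eR}$ of Corollary~\ref{Rsym}, since the latter is spanned precisely by the $i\oZ$, $i\oX$, $i\oY$, by $i\ell^{\mathrm{Z}}_0$, and (for even $d$) by $i\ell^{\mathrm{Z}}_{d/2}$.

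First I would enumerate the generators of interaction length less than $M$: they are $i\hZ_0$ together with $i\hI_p,i\hX_p,i\hY_p,i\hZ_p$ for $1\le p\le M-1$ (cf.\ Eq.~\eqref{hZ}); note that $i\hZ_{d/2}$ (for even $d$) does \emph{not} occur, since $M\le\ceil{d/2}$ forces $d/2>M-1$. All of these lie in $\fq_d^{\eR}$ except for the $i\hI_p$, which lie in $\fz$. The key point is twofold: (i) the subset $\{i\hZ_0,i\hZ_1,i\hX_1,i\hY_1\}$ of the generators already generates all of $\fq_d^{\eR}$, by observation~(1) in the proof of Lemma~\ref{thm:Kraus}; and (ii) by the commutation relations~\eqref{eq:trans-com} each $i\hI_p$ is central in $\fq_d$. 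Consequently the generated algebra is exactly $\fq_d^{\eR}\oplus\langle i\hI_1,\dots,i\hI_{M-1}\rangle_\R$: adjoining the central elements $i\hI_1,\dots,i\hI_{M-1}$ to $\fq_d^{\eR}$ produces nothing beyond their own span, and that span meets $\fq_d^{\eR}$ trivially because $\fq_d^{\eR}$ has no $\oI$-component.

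It then remains to read off the isomorphism type and to verify the dimension of the adjoined abelian piece. Rewriting the $\hI_p$ in the site basis via Eq.~\eqref{hZZ} exhibits them as ``imaginary hopping'' terms at pairwise distinct distances, hence linearly independent; and the hypothesis $M\le\ceil{d/2}$, i.e.\ $M-1\le\lfloor(d-1)/2\rfloor=\dim\fz$, guarantees that $i\hI_1,\dots,i\hI_{M-1}$ span an $(M{-}1)$-dimensional subspace. Using Corollary~\ref{Rsym} for $\fq_d^{\eR}$, the generated algebra is therefore $[\sum_{i=1}^{(d-1)/2}\su(2)]+\bigl((M-1)+1\bigr)\uu(1)$ for odd $d$ and $[\sum_{i=1}^{(d-2)/2}\su(2)]+\bigl((M-1)+2\bigr)\uu(1)$ for even $d$, which are exactly the two asserted algebras; the parity-dependent ``$+1$'' in the number of $\uu(1)$-summands is simply the line $\langle i\ell^{\mathrm{Z}}_{d/2}\rangle_\R$, present in $\fq_d^{\eR}$ only for even $d$.

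The substantive content sits entirely in step~(i), which is Lemma~\ref{thm:Kraus}; everything else is the direct-sum decomposition of $\fq_d$ that has already been established. The only thing to be careful about is the abelian bookkeeping: that exactly $M-1$ new $\uu(1)$-summands arise from the $i\hI_p$ (no more, by the $M\le\ceil{d/2}$ bound; no fewer, by distinct-distance independence), and that they are genuinely new rather than already contained in the $\eR$-symmetric part.
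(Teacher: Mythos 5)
Your proposal is correct and follows essentially the same route as the paper's proof: invoke Lemma~\ref{thm:Kraus} to see that the $\eR$-symmetric generators $i\hZ_0,i\hZ_1,i\hX_1,i\hY_1$ already produce all of $\fq_d^{\eR}$, and then observe that the $M-1$ operators $i\hI_1,\dots,i\hI_{M-1}$ are central, linearly independent, and disjoint from $\fq_d^{\eR}$, contributing exactly $M-1$ additional $\uu(1)$-summands. Your version merely spells out the bookkeeping (linear independence of the $\hI_p$, the role of the bound $M\le\lceil d/2\rceil$, and the absence of $i\hZ_{d/2}$ among the generators) that the paper leaves implicit.
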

\begin{proof}
From Lemma~\ref{thm:Kraus} we know that the operators $h^{\mathrm{Q}}_{a}$ with
$\mathrm{Q} \in \{ \mathrm{X},\mathrm{Y},\mathrm{Z}\} $ already generate $\fq_d^{\eR} $ 
which is isomorphic to $[\sum_{i=1}^{(d-1)/2} \su(2)] + \uu(1)$
for odd $d$ and to $[\sum_{i=1}^{(d-2)/2} \su(2)] + \uu(1)+ \uu(1)$ for even $d$. 
We have $M{-}1$ additional operators $\hI_{q}$ with $q\in\{1, 
\ldots, M{-}1 \}$ which are linearly independent and commuting.  These generate the 
other parts corresponding $\sum_{i=1}^{M-1} \uu(1)$.
\end{proof}

We illustrate Lemma~\ref{thm:Kraus} and Proposition~\ref{fermionic_nn} with
a fermionic ring of $d=6$ modes.
Suppose that the drift Hamiltonian of this system
is the nearest-neighbor hopping Hamiltonian $i\hZ_{1} =   \tfrac{i}{2}\sum_{\ell=1}^6 
(f^{\phantom\dagger}_\ell f^{\dagger}_{\ell+1}{+}f^{\dagger}_{\ell+1} 
f^{\phantom{\dagger}}_{\ell})$, and that one can additionally control  
the on-site potential $i\hZ_{0} = \tfrac{i}{2}\sum_{\ell=1}^6 (f^{\dagger}_\ell 
f^{\phantom{\dagger}}_{\ell}{-} \tfrac{1}{2}\unity)$,
the pairing strength $i\hY_{1}=\frac{i}{2}\sum_{\ell=1}^6 (f^{\dagger}_\ell f^{\dagger}_{\ell+1}{+}
f^{\phantom{\dagger}}_{\ell+1} 
f^{\phantom{\dagger}}_{\ell} )$, and the magnetic flux $i\hI_{1} =  -\frac{1}{2}\sum_{\ell=1}^6 
(f^{\dagger}_\ell f^{\phantom{\dagger}}_{\ell+1}{-}f^{\dagger}_{\ell+1} 
f^{\phantom{\dagger}}_{\ell})$ in the ring. 
Lemma~\ref{thm:Kraus} implies that
the first three Hamiltonians generate the Lie algebra $\fq_6^{\eR}$ of all Hamiltonians
which are simultaneously
$\eR$-invariant, translation-invariant, and quadratic.
The magnetic flux term $i\hI_{1}$ commutes with all elements 
of $\fq_6^{\eR}$ and contributes only an additional $\uu(1)$ to the system
algebra. Thus, the system algebra generated by all nearest-neighbor quadratic Hamiltoinans
that are translation-invariant is given by $ \fq_6^{\eR} {+}\uu(1)\cong \su(2) {+}\su(2) {+} \uu(1) {+} \uu(1) {+} \uu(1)$.
In order to achieve full controllability for a translation-invariant and quasifree fermionic system 
(which corresponds to the Lie algebra $\fq_6 \cong \su(2) {+}\su(2) {+} \uu(1) {+} \uu(1) {+} \uu(1) {+}\uu(1)$), 
one has to add a next-nearest neighbor Hamiltonian as $i\hI_{2} =  -\frac{1}{2}\sum_{\ell=1}^6 
(f^{\dagger}_\ell f^{\phantom{\dagger}}_{\ell+2}{-}f^{\dagger}_{\ell+2} 
f^{\phantom{\dagger}}_{\ell})$.

\subsection{Orbits of Pure Translation-Invariant Quasifree States\label{orbits-q-trans}}

We characterize now the orbits of pure translation-invariant quasifree states 
under the action of translation-invariant quadratic Hamiltonians.
Since the operators $\oI=i(\tilde{f}^{\dagger}_k \tilde{f}^{\phantom\dagger}_k {-} 
\tilde{f}^{\dagger}_{d-k} \tilde{f}^{\phantom\dagger}_{d-k})$ commute with all the 
other translation-invariant quadratic Hamiltonians (as discussed in 
Sec.~\ref{transl_quadratic}), 
their expectation values stay invariant under the considered time evolutions. 
At the end of the section, we show that these invariant expectation values even form a separating set 
of invariants for the orbits of pure translation-invariant quasifree states. 

Let us recall that 
a quasifree state is fully characterized by its Majorana covariance matrix, 
defined in Eq.~\eqref{eq:cov-mat}.
The translation unitary $\TU$ acts on the Majorana operators by conjugation as
$\TU\, m_p\, \TU^\dagger=m_{(p+2 \mod 2d)}$. It follows that 
a quasifree state $\rho$ is translation-invariant (i.e.\ $[\rho,\TU]=0$)
iff its covariance matrix $G_{pq}$ 
is doubly-cyclic, i.e.\ $G_{pq}=G_{(p+2 \mod 2d),(q+2 \mod 2d)}$.
The double-cyclicity of $G$ implies that it can be expressed
as a block-Fourier transform of a block-diagonal matrix, i.e.
\begin{equation} \label{eq:block-fourier}
\tilde{G}=U_F\, G\, U_F^{\dagger} \, ,
\end{equation}
where 
$U_F:= 
\left(
\begin{smallmatrix}
1 & 0\\
0 & 1
\end{smallmatrix}
\right)
\otimes W$ with
$W_{pq}:=\exp(2 \pi i /d)^{q-p}$ and
$\tilde{G}=\oplus_{k=0}^{d-1} i \g(k)$
with $\g(k)$ being 
$2\times 2$-matrices.
The matrices $\g(k)$ can be calculated by the inverse 
block-Fourier transform

\begin{equation} \label{eq:g-def}
\g(k)=-i \sum_{\ell=1}^{d}  e^{2\pi k\ell i/d}
\begin{pmatrix}
G_{1,2\ell-1} & G_{1, 2\ell}\\
G_{2, 2\ell-1} & G_{2,2\ell}
\end{pmatrix}.
\end{equation}
The fact that $G$ is skew-symmetric and real implies
\begin{equation}\label{eq:g-transp}
\g(d{-}k)=-\g^T(k).
\end{equation}  
Moreover, due to Eq.~\eqref{eq:block-fourier} the set of eigenvalues of all the matrices 
$\g(k)$ equals the one of $-iG$ (including multiplicities).
Combining these observations with Proposition~\ref{prop:cov-sing-val} and 
Proposition~\ref{prop:qf-stab}, we obtain the following characterization of pure 
translation-invariant quasifree states:

\begin{lemma}
A set of  $2 \times 2$ matrices $\g(k)$ (with $k\in\{0, \ldots, d{-}1\}$)
defines a covariance matrix of a pure quasifree state through Eq.~\eqref{eq:block-fourier} 
iff they satisfy Eq.~\eqref{eq:g-transp} and their eigenvalues are in the set $\{1, -1\}$.
\end{lemma}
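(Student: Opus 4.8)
The plan is to read this lemma as the Fourier-domain restatement of Propositions~\ref{prop:cov-sing-val} and~\ref{prop:qf-stab}: a quasifree state is characterised by a real skew-symmetric covariance matrix $G$ whose singular values lie in $[0,1]$, and it is pure precisely when all of those singular values equal $1$; the lemma just says that for translation-invariant $G$ both conditions are transported intact through the block-Fourier transform of Eq.~\eqref{eq:block-fourier}. So the first thing I would do is make precise the two bookkeeping facts already mentioned informally before the lemma: (i) with $U_F$ unitary, the passage $G\mapsto\tilde G=\bigoplus_k i\g(k)$ is a bijection between real skew-symmetric doubly-cyclic $2d\times2d$ matrices and tuples $(\g(k))_{k=0}^{d-1}$ of $2\times2$ matrices for which Eq.~\eqref{eq:g-transp} holds, where reality of $G$ corresponds to the conjugation relation $\overline{\g(k)}=-\g(d-k)$ (which, together with Eq.~\eqref{eq:g-transp}, is the same as each $\g(k)$ being Hermitian, as one also reads off directly from Eq.~\eqref{eq:g-def}); and (ii) the spectrum of $-iG$ equals the union, with multiplicities, of the spectra of the $\g(k)$, since $-iG$ is unitarily equivalent to $\bigoplus_k\g(k)$.

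For the direction ``only if'': if the $\g(k)$ arise from the covariance matrix $G$ of a pure quasifree state, then $G$ is real and skew-symmetric, so Eq.~\eqref{eq:g-transp} holds by (i). By Proposition~\ref{prop:qf-stab} purity means all singular values of $G$ equal $1$, i.e.\ $GG^{T}=\unity$; together with $G^{T}=-G$ this gives $G^{2}=-\unity$, so the eigenvalues of $G$ are $\pm i$ and those of $-iG$ are $\pm1$. By (ii), every $\g(k)$ then has all its eigenvalues in $\{1,-1\}$.

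For the direction ``if'': let the $\g(k)$ satisfy Eq.~\eqref{eq:g-transp} with eigenvalues in $\{1,-1\}$, and set $G:=U_F^{\dagger}\bigl(\bigoplus_k i\g(k)\bigr)U_F$. By (i), $G$ is real and skew-symmetric, and being block-diagonal in the Fourier basis it is block-circulant, hence doubly-cyclic, so any quasifree state with covariance matrix $G$ is translation-invariant. Since $G$ is real and skew-symmetric it is normal, so its singular values are the moduli of its eigenvalues; by (ii) the spectrum of $-iG$ lies in $\{1,-1\}$, whence the eigenvalues of $G$ lie in $\{i,-i\}$ and all singular values of $G$ equal $1$. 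Proposition~\ref{prop:cov-sing-val} then produces a quasifree state $\rho$ with covariance matrix $G$, and Proposition~\ref{prop:qf-stab} shows $\rho$ is pure; by construction the Fourier blocks of $G$ are the prescribed $\g(k)$, so these indeed define a pure quasifree state through Eq.~\eqref{eq:block-fourier}.

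The spectral bookkeeping and the two invocations of the covariance-matrix results are routine. The step needing genuine care — and the one I expect to be the main obstacle — is the claim in (i) that Eq.~\eqref{eq:g-transp} (together with the conjugation relation implicit in Eq.~\eqref{eq:g-def}, equivalently the Hermiticity of each $\g(k)$) is not only necessary but \emph{sufficient} for the reconstructed $G$ to be real and skew-symmetric. This requires tracking precisely how entrywise complex conjugation and matrix transposition act after conjugating by $U_F$ — both turn out to involve the block-index reversal $k\leftrightarrow d-k$, but only in combination do they pin down reality \emph{and} skew-symmetry, since Eq.~\eqref{eq:g-transp} by itself controls only skew-symmetry; without the Hermiticity proviso one could prescribe, e.g., a single non-normal $\g(k)$ with eigenvalues $\pm1$ and obtain a $G$ that is not a covariance matrix at all.
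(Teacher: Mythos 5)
Your proposal is correct and follows essentially the same route as the paper, which obtains the lemma by combining the block-Fourier observation that the spectrum of $-iG$ is the union of the spectra of the $\g(k)$ with Propositions~\ref{prop:cov-sing-val} and \ref{prop:qf-stab}; you merely spell out the two directions that the paper compresses into a single sentence. Your closing caveat is well taken: Eq.~\eqref{eq:g-transp} alone only encodes skew-symmetry of the reconstructed $G$, and the reality of $G$ (equivalently the Hermiticity of each $\g(k)$, which follows from Eq.~\eqref{eq:g-def} and is needed to rule out non-normal blocks with eigenvalues $\pm1$) is an implicit standing assumption in the paper's statement that your argument correctly makes explicit.
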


The entries of $\g(k)$ and the expectation values of the $\ell_k$ operators
defined in Eq.~\eqref{ell-ops1} 
can be related by 
\begin{equation}\label{eq:g-l-rel}
i\g(k)=
\unity_2 \langle \oI \rangle
+\mathrm{X} \langle \oX\rangle+
\mathrm{Y} \langle \oY \rangle
+ \mathrm{Z} \langle \oZ \rangle
\end{equation}
using Eq.~\eqref{eq:g-def} and
the definitions for  $\oI$,  $\oX$,  $\oY$, and  $\oZ$.
Now we can prove the main theorem of this subsection:

\begin{theorem}\label{orbit_pure_translation}
Two pure quasifree states $\rho_1$ and $\rho_2$ can be connected  
through the action of a translation-invariant quadratic Hamiltonian
if and only if 
$
\tr(\rho_1 \oI)=\tr(\rho_2 \oI)
$
holds for all $\oI$ with  $k \in \{0, \ldots, \lfloor (d{-}1)/2 \rfloor\} $.
\end{theorem}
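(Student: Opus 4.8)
The plan is to reduce the statement to a transitivity claim about the action of the group generated by the translation-invariant quadratic Hamiltonians on the space of pure translation-invariant quasifree states, and then to handle that transitivity mode-by-mode in the Fourier-block picture. First I would observe that, by Theorem~\ref{thm:qf-trans-inv} and the commutation relations recorded in Sec.~\ref{transl_quadratic}, the operators $\oI$ (with $k\in\{1,\ldots,\lfloor(d-1)/2\rfloor\}$, together with $\ell^{\mathrm{Z}}_0$ and, for even $d$, $\ell^{\mathrm{Z}}_{d/2}$, which act trivially in the relevant Heisenberg sense) lie in the centre of $\fq_d$; hence their expectation values $\tr(\rho\,\oI)$ are constant along any orbit of a translation-invariant quadratic Hamiltonian. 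This immediately gives the ``only if'' direction: if $\rho_1$ and $\rho_2$ are connected, the listed invariants must agree.

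For the ``if'' direction I would work with the block decomposition $i\g(k)=\unity_2\langle\oI\rangle+\mathrm{X}\langle\oX\rangle+\mathrm{Y}\langle\oY\rangle+\mathrm{Z}\langle\oZ\rangle$ from Eq.~\eqref{eq:g-l-rel}, together with the constraint $\g(d-k)=-\g^T(k)$ from Eq.~\eqref{eq:g-transp} and the purity condition that each $\g(k)$ has eigenvalues in $\{1,-1\}$. Purity of a pure quasifree state forces, for each $k$ in the range $1\le k\le\lfloor(d-1)/2\rfloor$, that the $2\times2$ matrix $i\g(k)$ is traceless-part unimodular: writing $i\g(k)=a_\unity^{(k)}\unity_2+\vec v^{(k)}\cdot\vec\sigma$, the eigenvalue condition becomes $(a_\unity^{(k)})^2 - \|\vec v^{(k)}\|^2$... more precisely the characteristic polynomial of a $2\times 2$ matrix with these eigenvalues fixes $\det$ and $\tr$, which in turn pins $(a_\unity^{(k)})^2+ (\text{something}) $ and $\|\vec v^{(k)}\|$ in terms of $a_\unity^{(k)}$. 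The upshot is that once $\langle\oI\rangle=a_\unity^{(k)}$ is fixed, the vector $(\langle\oX\rangle,\langle\oY\rangle,\langle\oZ\rangle)$ is constrained to a sphere of a fixed radius $\rho_k$ (possibly a point). Now the isomorphism $L_k\cong\uu(2)$ from the proof of Theorem~\ref{thm:qf-trans-inv} shows that the time evolution generated by the $L_k$-part of a translation-invariant quadratic Hamiltonian acts on $(\langle\oX\rangle,\langle\oY\rangle,\langle\oZ\rangle)$ exactly as $\SO(3)$ rotations (via the adjoint action of $\SU(2)$ on the traceless Pauli part), while leaving $\langle\oI\rangle$ fixed; and the different blocks $k$ decouple because $[L_k,L_{k'}]=0$. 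Hence on each block we can rotate the Bloch-type vector of $\rho_1$ to that of $\rho_2$ provided they have the same length, which is guaranteed precisely by the equality of the $\langle\oI\rangle$-invariants. Composing these independent rotations over all $k$ (and noting the $k=0$ and $k=d/2$ blocks carry no free part to adjust, their covariance data being already determined, or adjustable by the $\uu(1)$ generators) yields a single translation-invariant quadratic Hamiltonian connecting $\rho_1$ to $\rho_2$.

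The main obstacle I anticipate is bookkeeping rather than conceptual: carefully checking that the purity constraint on each $2\times2$ block $\g(k)$ together with the reality/skew-symmetry constraint $\g(d-k)=-\g^T(k)$ really does reduce the per-block data to (fixed scalar $\langle\oI\rangle$) $\oplus$ (a point on a 2-sphere of radius determined by $\langle\oI\rangle$), with no hidden discrete invariants. One subtlety is whether a sign or orientation obstruction survives, since we only have $\SO(2d)$ (not $\Or(2d)$) at our disposal via $O_T=e^{-iT}$ in Proposition~\ref{prop:qf-dyn}; I expect this to be harmless because within each $\uu(2)$ block the image under $\SU(2)\to\SO(3)$ is all of $\SO(3)$, which is connected and acts transitively on spheres, and the product over blocks stays inside $\SO(2d)$. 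A second point to verify is that when $\langle\oI\rangle$ attains its extremal value the sphere degenerates to a point and the two states already coincide on that block, so there is nothing to prove there. Finally, I would double-check the even-$d$ endpoint block $k=d/2$ and the $k=0$ block, confirming that the purity condition leaves no adjustable continuous parameter there beyond what the $\uu(1)$ generators $i\ell^{\mathrm{Z}}_0$, $i\ell^{\mathrm{Z}}_{d/2}$ already reach, so that the listed invariants with $k\in\{0,\ldots,\lfloor(d-1)/2\rfloor\}$ are indeed complete.
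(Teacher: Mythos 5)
Your proposal follows essentially the same route as the paper's proof: necessity comes from the $\oI$ being central in $\fq_d$, and sufficiency from the Fourier-block decomposition of the covariance matrix, the purity constraint on the eigenvalues of each $2\times 2$ block $\g(k)$, and the transitive adjoint ($\SU(2)\to\SO(3)$) action of the per-block unitaries $U(k)$ on the resulting fixed-length Bloch vectors. The only point where your sketch is vaguer than the paper is the explicit observation that $\tr(\rho\,\oI)=i\tr[\g(k)]\in\{-2,0,2\}$, which is what forces each block to be either $\pm\unity_2$ or a unit Bloch vector; your flagged concern about the $k=0$ and $k=d/2$ blocks is legitimate, but the paper's own proof treats those endpoint blocks just as tersely.
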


\begin{proof}
First, we consider the `if'-case:
Let $H$ be a translation-invariant 
quadratic Hamiltonians for which $\rho_1=e^{-iHt} \rho_2 e^{iHt}$ holds.
Since the operators $\oI$  commute with any translation-invariant Hamiltonian, 
we have that
$
\tr(\rho_1 \oI)=\tr(e^{-iHt} \rho_2 e^{iHt} \oI)=
\tr(\rho_2 e^{iHt} \oI e^{-iHt})= \tr(\rho_2 \oI)
$.
Second, we treat the `only if'-case:
Let $\g_1(k)$ and 
$\g_2(k)$ denote the Fourier-transformed Majorana 
two-point functions (defined as in
Eq.~\eqref{eq:g-def})
of $\rho_1$ and $\rho_2$,
respectively. The action of a translation-invariant Hamiltonian,
 $\rho_a \mapsto e^{-iH} \rho_a e^{iH}$ 
is represented by the map
\begin{equation}\label{eq:g-transf}
\g_a(k) \mapsto U(k)\, \g_a(k)\, U(k)^{\dagger} \,
\end{equation}
where $U(k)$ is given by $\exp[-i\Re(\tilde{A}_k) \mathrm{Z} - 
i\Re(\tilde{B}_k) \mathrm{X}/2 - i \Im(\tilde{B}_k) \mathrm{Y}/2]$.
Using Eq.~\eqref{eq:g-l-rel}, we obtain
$\tr(\rho_a \oI)= i \tr[g_a(k)]$ for $a\in\{1,2\}$.
These expectation values have to be in the set $\{-2,0,2\}$, since
the eigenvalues of $\g_1(k)$ and $\g_2(k)$ are 
in the set $\{-1,1\}$.
Then, it follows from
$\tr(\rho_1 \oI)=\tr(\rho_2 \oI)$ 
that
the expectation values of $\g_1(k)$ and $\g_2(k)$
coincide. Thus, we obtain from Eq.~\eqref{eq:g-transf}
that $\rho_1$ and $\rho_2$ can be transformed into each other.
\end{proof}

Finally, we turn to the $\eR$-symmetric setting,
as introduced in Sec.~\ref{transl_quadratic},
and determine  the orbit structure of quasifree pure states
which are translation-invariant and  $\eR$-symmetric
under the action of operators in $\fq_d^{\eR}$.

\begin{proposition}\label{prop:qRd_transl}
The unitaries generated by the Lie algebra $\fq_d^{\eR}$
act transitively on the set of
quasifree pure states
which are translation-invariant and $\eR$-symmetric.
\end{proposition}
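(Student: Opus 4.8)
The plan is to deduce this from Theorem~\ref{orbit_pure_translation}. First I would record that every $\eR$-symmetric state $\rho$ satisfies $\tr(\rho\,\oI)=0$ for all $k$: by Eq.~\eqref{eq:Refl_adj} the adjoint action of $\eR$ interchanges the Fourier modes $k$ and $d-k$ (the accompanying phases cancel in the bilinear $\tilde{f}^{\dagger}\tilde{f}$), so $\eR\,\oI\,\eR^{\dagger}=-\oI$; hence $\tr(\rho\,\oI)=\tr(\eR^{\dagger}\rho\,\eR\,\oI)=\tr(\rho\,\eR\,\oI\,\eR^{\dagger})=-\tr(\rho\,\oI)$, which forces the trace to vanish. (For $k\in\{0,d/2\}$ the statement is vacuous, as $\ell^{\unity}_{0}:=0$.)

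Now let $\rho_{1},\rho_{2}$ be two pure translation-invariant $\eR$-symmetric quasifree states. The preceding paragraph gives $\tr(\rho_{1}\oI)=0=\tr(\rho_{2}\oI)$ for every $k\in\{0,\ldots,\lfloor(d-1)/2\rfloor\}$, so Theorem~\ref{orbit_pure_translation} provides a translation-invariant quadratic Hamiltonian $H$ with $\rho_{1}=e^{-iH}\rho_{2}e^{iH}$. It remains to replace $H$ by one lying in $\fq_{d}^{\eR}$. Using the basis of Eq.~\eqref{eq:quad.trans-inv2}, I would split $H=H_{0}+H_{\perp}$, where $H_{\perp}$ collects the $\oI$-terms (coefficients $\Im(\tilde{A}_{k})$) and $H_{0}$ collects the remaining $\oZ$-, $\oX$-, $\oY$-, $\ell^{\mathrm{Z}}_{0}$- and $\ell^{\mathrm{Z}}_{d/2}$-terms, so that $iH_{0}\in\fq_{d}^{\eR}$ by the description of the $\eR$-symmetric generators preceding Corollary~\ref{Rsym}. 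By Theorem~\ref{thm:qf-trans-inv} the operators $\oI$ are central in $\fq_{d}$, so $H_{0}$ and $H_{\perp}$ commute; and by Eq.~\eqref{eq:g-transf} the transformation induced by $e^{-iH_{\perp}}$ on the Fourier blocks $\g(k)$ of any translation-invariant quasifree state is conjugation by $U(k)=\exp[-i\Re(\tilde{A}_{k})\mathrm{Z}-i\Re(\tilde{B}_{k})\mathrm{X}/2-i\Im(\tilde{B}_{k})\mathrm{Y}/2]$, which for $H_{\perp}$ (where $\tilde{B}_{k}=0$ and $\Re(\tilde{A}_{k})=0$) equals $\unity_{2}$. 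Hence $e^{-iH_{\perp}}$ fixes every translation-invariant quasifree state, so that
\[
\rho_{1}=e^{-iH}\rho_{2}e^{iH}=e^{-iH_{\perp}}\bigl(e^{-iH_{0}}\rho_{2}e^{iH_{0}}\bigr)e^{iH_{\perp}}=e^{-iH_{0}}\rho_{2}e^{iH_{0}},
\]
with $iH_{0}\in\fq_{d}^{\eR}$, which is exactly the claimed transitivity.

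The step that needs the most care is the triviality of the $H_{\perp}$-flow used above --- that the central $\oI$-generators act by the identity on the quasifree covariance data (a $\uu(1)$-summand of $\fq_{d}^{\eR}$ acting by conjugation on the $2\times2$ blocks) --- together with a consistency check at the exceptional Fourier modes $k=0$ and $k=d/2$: there $\fq_{d}^{\eR}$ contributes only $i\ell^{\mathrm{Z}}_{0}$ and $i\ell^{\mathrm{Z}}_{d/2}$, which commute with the corresponding occupation numbers, so one should confirm that $\eR$-symmetry and purity together already pin $\g(0)$ and $\g(d/2)$ to a single value; this is where the phase $i^{(\sum_{\ell}n_{\ell})^{2}}$ in the definition~\eqref{eq:Refl} of $\eR$ enters. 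An alternative, self-contained route is to argue entirely in the block picture: $\eR$-symmetry confines each $\g(k)$ with $1\le k\le\lfloor(d-1)/2\rfloor$ to a two-sphere of admissible pure blocks on which the matching $\su(2)$-summand of $\fq_{d}^{\eR}$ acts transitively by conjugation, while pinning $\g(0)$ and $\g(d/2)$ uniquely, and then assemble the block-wise transitivity.
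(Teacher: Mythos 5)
Your proposal is correct and follows essentially the same route as the paper: derive $\tr(\rho\,\oI)=0$ from $\eR\,\oI\,\eR^{-1}=-\oI$, invoke Theorem~\ref{orbit_pure_translation} to place all such states on one $\fq_d$-orbit, and then use Eq.~\eqref{eq:g-transf} to see that the $\fq_d$-orbit coincides with the $\fq_d^{\eR}$-orbit. Your explicit decomposition $H=H_0+H_\perp$ with the central $\oI$-part acting trivially on the covariance data is just a spelled-out version of the paper's final sentence, and your worry about the modes $k=0,d/2$ is already absorbed by Theorem~\ref{orbit_pure_translation}, which covers all Fourier blocks.
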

\begin{proof}
Since  $\eR\oI\eR^{-1}=-\oI$, the expectation value of
these operators in $\eR$-symmetric states must vanish as $\tr(\rho \oI)=-\tr(\rho \eR\oI\eR^{-1})=
-\tr(\eR^{-1}\rho\eR\oI)=- \tr(\rho \oI)$. 
Moreover, by Theorem~\ref{orbit_pure_translation} we know that 
two pure translation-invariant states are on the same $\fq_d$-orbit 
iff the expectation values of the $\oI$ operators coincide for 
all $k \in \{0, \ldots, \lfloor (d-1)/2 \rfloor\}$. Hence the 
translation-invariant $\eR-$symmetric states lie on the same 
$\fq_d$-orbit.
As Eq.~\eqref{eq:g-transf} implies that the 
$\fq_d$-orbits are equivalent to $\fq_d^{\eR}$-orbits, 
we have proved the proposition.
\end{proof}

\subsection{An Application to Many-Body Physics\label{applManyBody}}
In many-body physics, one of the important characteristics of
quantum criticality is the \emph{closing of the 
gap}. This means that the energy difference between the ground state and 
the first excited state  goes to zero in the thermodynamic limit, when the 
number of spins or fermionic modes goes to infinity. 
Quasifree fermionic models can display both gapped and gapless behavior. 
Using the techniques developed in the previous subsections, 
we will prove that the gap always disappears (i.e.\ closes) for 
translation-invariant quasifree models if
the coefficient matrix $A$ of Eq.~\eqref{Hqfree} is 
purely imaginary while $B$  is an arbitrary, complex skew-symmetric matrix.

To formalize this statement, let us consider a set
 $a_r$  of fixed (finite) real numbers with $r \in \{1, \ldots, M{-}1 \}$ and a 
set $b_r$  of fixed complex numbers (of finite modulus) with $r \in \{1, \ldots, M{-}1 \}$.
With these stipulations, we define for any $d\ge 2M$ the cyclic $d \times d$ 
matrices $A_d$ and $B_d$ (or $A$ and $B$ for short)
by specifying their entries
\begin{align}
A_{pq}
&:=
\begin{cases} 
ia_{q-p} & \text{if $q-p \in \{1, \ldots, M{-}1 \}$,}\\
 -ia_{p-q} & \text{if  $p-q \in \{1, \ldots, M{-}1 \}$,} \\
0 & \text{otherwise,}\\
\end{cases}  \label{eq:A-for-gapless} 
\intertext{and}
B_{pq}
&:=
\begin{cases}
b_{q-p} & \text{if $q-p \in \{1, \ldots, M{-}1 \}$,}\\
 -b_{p-q} & \text{if  $p-q \in \{1, \ldots, M{-}1 \}$,}  \\
0 & \text{otherwise.}
\end{cases} \label{eq:B-for-gapless}
\end{align}
By applying these definitions to Eq.~\eqref{Hqfree} we obtain:
\begin{theorem}\label{thm:gap_close}
Given the positive integers $d$ and $M$ with $d\ge 2M$, consider the 
corresponding translation-invariant quasifree Hamiltonian 
\begin{equation*}
H_d=\sum_{p,q=1}^d A_{pq} (f_p^{\dagger}f_q{-}\delta_{pq} \tfrac{\unity}{2}){+}
\tfrac{1}{2}B_{pq} f_{p}^{\dagger}f_{q}^{\dagger}{-}
\tfrac{1}{2}B_{pq}^{*} f_p f_{q},
\end{equation*}
where $A$ and $B$ are defined in Eqs.~\eqref{eq:A-for-gapless} and \eqref{eq:B-for-gapless}.
Assume that $H_d$ has a unique ground state. Then the gap $\Delta_d$ of $H_d$
is bounded by $\Delta_d \le \frac{8\pi (M-1)}{d}\sum_{p=1}^{M-1} (|a_p|+|b_p|)$, i.e.\ the 
gap closes algebraically in the thermodynamic limit of  $d$ 
going to infinity.
\end{theorem}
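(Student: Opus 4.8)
The plan is to diagonalize the translation-invariant quadratic Hamiltonian $H_d$ in the Fourier basis and exhibit an explicit bound on the single-particle dispersion that forces the many-body gap to close. First I would pass to the Fourier-transformed operators $\tilde f_k$ of Eq.~\eqref{eq:F-trans_fermop} and rewrite $H_d$ in the decoupled form of Eq.~\eqref{eq:quad.trans-inv}, namely $H_d=\sum_{k=0}^{d-1}\tilde A_k(\tilde f^\dagger_k\tilde f_k-\tfrac{\unity}{2})+\tfrac12\tilde B_k\tilde f^\dagger_k\tilde f^\dagger_{d-k}-\tfrac12\tilde B_k^*\tilde f_k\tilde f_{d-k}$, where by Eqs.~\eqref{eq:A-for-gapless}--\eqref{eq:B-for-gapless} one has $\tilde A_k=\sum_{p=1}^{M-1}ia_p(e^{-2\pi ipk/d}-e^{2\pi ipk/d})=2\sum_{p=1}^{M-1}a_p\sin(2\pi pk/d)$, which is \emph{real} for every $k$, and $\tilde B_k=\sum_{p=1}^{M-1}b_p(e^{-2\pi ipk/d}-e^{2\pi ipk/d})=-2i\sum_{p=1}^{M-1}b_p\sin(2\pi pk/d)$. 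Each mode pair $(k,d-k)$ then contributes an independent $4\times 4$ (or, for the fixed points $k=0$ and $k=d/2$, $2\times 2$) block; a standard Bogoliubov rotation brings it to $\varepsilon_k(\tilde g^\dagger_k\tilde g_k+\tilde g^\dagger_{d-k}\tilde g_{d-k}-\unity)$ with single-particle energies $\varepsilon_k=\sqrt{\Re(\tilde A_k)^2+|\tilde B_k|^2}$, using $\Im(\tilde A_k)=0$; the ground-state energy is $-\sum_k\varepsilon_k$ and the gap is $\Delta_d=2\min_k\varepsilon_k$ (the factor $2$ and the precise minimization over the allowed quasiparticle excitations must be checked against the uniqueness assumption, but this is the usual free-fermion picture).

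Next I would bound $\min_k\varepsilon_k$ by evaluating $\varepsilon_k$ at a cleverly chosen momentum. The point is that both $\Re(\tilde A_k)$ and $|\tilde B_k|$ are sums of at most $M-1$ terms, each of the form (constant)$\times\sin(2\pi pk/d)$ with $|p|\le M-1$; the sine is Lipschitz with constant $2\pi p/d\le 2\pi(M-1)/d$ in the variable $k$. At $k=0$ every sine vanishes, so $\varepsilon_0=0$; but $k=0$ may be excluded as the quasiparticle vacuum. The robust argument is to pick $k$ near a zero of the dispersion: since $\Re(\tilde A_k)$ is an odd, $d$-periodic trigonometric polynomial, it vanishes at $k=0$ and (by oddness under $k\mapsto d-k$, cf.\ $\tilde A_k=\tilde A_{d-k}^*$ which here reads $\tilde A_k=-\tilde A_{d-k}$ since $\tilde A$ is real and odd) it changes sign; so there is an integer $k_0$ with $|\Re(\tilde A_{k_0})|$ and $|\Re(\tilde A_{k_0+1})|$ of opposite sign or one of them small. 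Rather than chase this, the cleanest route is: for \emph{any} target value one can find $k$ within distance $1$ (in the integer lattice $\Z_d$) of the continuum minimizer of the continuous dispersion $\varepsilon(\theta)=\sqrt{(2\sum_p a_p\sin p\theta)^2+|2\sum_p b_p\sin p\theta|^2}$, which attains $0$ at $\theta=0$; hence evaluating at the nearest lattice point $k/d\approx 0$, say $k=1$, and using the Lipschitz bound gives $\varepsilon_1\le\varepsilon_1-\varepsilon_0=|\varepsilon_1-\varepsilon_0|\le\sum_{p=1}^{M-1}2|a_p|\cdot|\sin(2\pi p/d)|+2|b_p|\cdot|\sin(2\pi p/d)|\le\sum_{p=1}^{M-1}(2|a_p|+2|b_p|)\cdot\frac{2\pi p}{d}\le\frac{4\pi(M-1)}{d}\sum_{p=1}^{M-1}(|a_p|+|b_p|)$. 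Multiplying by the factor $2$ from $\Delta_d=2\min_k\varepsilon_k$ yields exactly the claimed bound $\Delta_d\le\frac{8\pi(M-1)}{d}\sum_{p=1}^{M-1}(|a_p|+|b_p|)$.

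Finally I would assemble these pieces: (i) justify the Bogoliubov diagonalization and the formula $\Delta_d=2\min_k\varepsilon_k$ under the hypothesis that $H_d$ has a unique ground state — this is where one must be careful, because a vanishing $\varepsilon_k$ for some $k\ne 0$ would contradict uniqueness, so effectively the hypothesis guarantees the minimum is over the genuine excitation energies and the identification of the gap is legitimate; (ii) use the Lipschitz/triangle-inequality estimate above at $k=1$; (iii) conclude $\Delta_d\to 0$ as $d\to\infty$ with the stated algebraic rate $O(1/d)$. The main obstacle I anticipate is step (i): carefully matching the abstract "gap" to $2\min_k\varepsilon_k$ in the presence of the pairing terms and the parity superselection constraint (the physical excitations must have even fermion parity, so the true gap could a priori be $2\min_k\varepsilon_k$ obtained by exciting a \emph{pair} of quasiparticles, or $\varepsilon_{k}+\varepsilon_{k'}$; but since $\min_k\varepsilon_k$ is already $O(1/d)$, any such combination is still $O(1/d)$, so the bound survives — this should be remarked explicitly). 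The trigonometric Lipschitz estimate in step (ii) is entirely routine.
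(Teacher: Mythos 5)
Your overall strategy is the same as the paper's: pass to the Fourier modes, observe that every coefficient in the $k=1$ block is a sum of terms $\propto\sin(2\pi p/d)$ with $p\le M-1$ and hence of size $O((M-1)/d)$, and conclude that there is an eigenstate within $O(1/d)$ of the ground state. The final arithmetic also reproduces the stated constant. However, there is a genuine error in your central diagonalization step, and it sits exactly in the step (i) you defer. Because $A$ is purely imaginary, the dispersion $\tilde{A}_k=2\sum_p a_p\sin(2\pi pk/d)$ is \emph{odd} under $k\mapsto d-k$, so the Bogoliubov--de~Gennes block for the pair $(k,d-k)$ in the Nambu basis $(\tilde f_k,\tilde f^{\dagger}_{d-k})$ is $\tilde{A}_k\unity_2+(\text{off-diagonal pairing})$, whose eigenvalues are $\tilde{A}_k\pm|\tilde{B}_k|$ --- \emph{not} the degenerate pair $\pm\sqrt{\tilde{A}_k^2+|\tilde{B}_k|^2}$ you write. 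The formula $\sqrt{\epsilon^2+|\Delta|^2}$ requires an even dispersion $\epsilon_{d-k}=\epsilon_k$. In the paper's language, your real quantity $\tilde{A}_k$ multiplies the central element $\oI$ of each $\uu(2)$ block, not $\oZ$; the block Hamiltonian has no $\mathrm{Z}$-component at all. This is not a cosmetic slip: it is precisely the structural feature that makes the theorem true only for imaginary $A$. If your spectrum were correct, the same argument would apply verbatim when $A$ acquires a real part (e.g.\ a chemical potential $\mu$), and the conclusion would be false, since then the block carries a $\Re(\tilde{A}_k)\,\oZ$ term of size $O(1)$ and the system is gapped. So the identity $\Delta_d=2\min_k\varepsilon_k$ with your $\varepsilon_k$ cannot be justified, and step (i) fails as stated.

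That said, the damage is repairable and your estimate survives: with the correct block spectrum the four many-body levels of the $(1,d-1)$ block differ pairwise by quantities such as $2|\tilde A_1|$, $|\tilde B_1|$, and $|\tilde A_1|\pm\tfrac12|\tilde B_1|$, all bounded by $2(|\tilde A_1|+|\tilde B_1|)\le\frac{8\pi(M-1)}{d}\sum_{p}(|a_p|+|b_p|)$, and flipping the occupation of that block produces a quasifree eigenstate distinct from the (unique) ground state, which is all one needs. This is essentially what the paper does: it avoids the Bogoliubov machinery entirely by noting that a pure quasifree state is an eigenstate iff $[\tilde b^X_k\mathrm{X}+\tilde b^Y_k\mathrm{Y},\g_d(k)]=0$ for all $k$, replacing $\g^d_{gs}(1)$ by $\pm\unity_2$ to manufacture a second eigenstate, and computing the energy difference directly from Eq.~\eqref{eq:eigen-energy}. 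You should replace your dispersion formula by this block analysis (or by the correct eigenvalues $\tilde A_k\pm|\tilde B_k|$) before the argument is complete.
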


\begin{proof}
Since $H_d$ is translation-invariant and its coefficient matrix is 
imaginary, it can be decomposed in terms of the operators $\ell^{\mathrm{Q}}_{k}$ with
$\mathrm{Q} \in \{\unity, \mathrm{X},  \mathrm{Y} \}$ and 
$k \in \{1, \ldots, \lfloor (d{-}1)/2 \rfloor\} $ as
\begin{equation*}
H_d = \sum_{k=1}^{\lfloor (d{-}1)/2 \rfloor} \tilde{a}_k \oI  +\tfrac{1}{2}\tilde{b}^X_k \oX 
+\tfrac{1}{2} \tilde{b}^Y_k\oY,
\end{equation*}
using the definitions $\tilde{a}_k := - \sum_{p=1}^{M-1} a_{p} \sin(-2 \pi pk/d)$, 
$\tilde{b}^X_k := - \Re[\sum_{p=1}^{M-1} b_p \sin(-2 \pi pk/d)]$,
as well as
$\tilde{b}^Y_k := - \Im[\sum_{p=1}^{M-1} b_p \sin(-2 \pi pk/d)]$.
Let $\rho_d$ be a pure quasifree state, and let $\g_{d}(k)$ denote its
Fourier-transformed Majorana two-point functions (see Eq.~\eqref{eq:g-def}).
From Eq.~\eqref{eq:te-qf-trans-inv} we know that $\rho_d$
is an eigenstate of $H_d$ iff $[\tilde{b}^X_k \mathrm{X} + \tilde{b}^Y_k\mathrm{Y},
\g_{d}(k)]=0$. The eigenvalue of $H_d$ corresponding to this state is given by
\begin{align}
\tr(\rho_d H_d)=\sum_{k=1}^{\lfloor (d{-}1)/2 \rfloor} \tr[ig_{d}(k)(\tilde{a}_k\unity_2{+}
\tfrac{1}{2}\tilde{b}^X_k\mathrm{X}{+}\tfrac{1}{2}\tilde{b}^Y_k\mathrm{Y})]. \label{eq:eigen-energy}
\end{align}
Let us emphasize that the proof builds on the fact
that $M$ is fixed and finite, while $d$ goes to infinity in the thermodynamic limit.
Among the eigenstates of $H_d$, consider the (unique) ground state
$\rho^{d}_{gs}$, whose Fourier-transformed Majorana 
two-point functions (see Eq.~\eqref{eq:g-def}) will be denoted by $\g^d_{gs}(k)$. 
From this ground state let us construct another quasifree state $\rho^{d}_{e}$ which 
is defined through its Majorana 
two-point functions
\begin{align}
\g^d_{e}(1)
 :=
\begin{cases}
\phantom{-} \unity_2 & \text{if $\g^d_{gs}(1) \ne -\unity_2$,}\\
- \unity_2 & \text{otherwise,}
\end{cases} \nonumber
\end{align}
while for general $k \ne 1$ we assign $\g^d_{e}(k):=\g^d_{gs}(k)$.

The corresponding pure quasifree state $\rho^d_{e}$
is an eigenstate of $H_d$, since according to 
Eq.~\eqref{eq:g-transf} its Fourier-transformed Majorana 
two-point function stays invariant during the time-evolution generated by $H_d$.
Using Eq.~\eqref{eq:eigen-energy}, we can calculate the difference 
between the energies
corresponding to $\rho^{d}_{gs}$ and $\rho^{d}_{e}$ as
\begin{align*}
&\Delta_d := \tr [ (\rho^{d}_{e} - \rho^{d}_{gs}) H_d]\\[1mm]
&=\sum_{k=1}^{\lfloor (d{-}1)/2 \rfloor} \tr\left([\g^d_{e}(k)-\g^d_{gs}(k)](\tilde{a}_k\unity_2{+}
\tfrac{1}{2}\tilde{b}^X_k\mathrm{X}{+}\tfrac{1}{2}\tilde{b}^Y_k\mathrm{Y})\right) \\[1mm]
&=\tr\left([\g^d_{e}(1)-\g^d_{gs}(1)](\tilde{a}_1\unity_2{+}
\tfrac{1}{2}\tilde{b}^X_1\mathrm{X}{+}\tfrac{1}{2}\tilde{b}^Y_1\mathrm{Y})\right) \\[2mm]
&\le 2\| [\g^d_{e}(1)-\g^d_{gs}(1)](\tilde{a}_{1}\unity_2{+}
\tfrac{1}{2}\tilde{b}^X_{1}\mathrm{X}{+}\tfrac{1}{2}\tilde{b}^Y_{1}\mathrm{Y}) \|  \\[2.5mm]
&\le 4(|\tilde{a}_1| +\tfrac{1}{2} |\tilde{b}^X_1|+\tfrac{1}{2} |\tilde{b}^Y_1|  )  \\[1mm]
&\le 4 \left|\sum_{p=1}^{M-1} a_{p} \sin(2 \pi p/d)\right| +
4 \left|\sum_{p=1}^{M-1} b_{p} \sin(2 \pi p/d)\right| \\
&\le  \frac{8\pi (M{-}1)}{d}\sum_{p=1}^{M-1} (|a_p|+|b_p|)
\end{align*}
This completes the proof of the theorem.
\end{proof}

\section{Particle-Number Conserving Systems}\label{sec:PC}
Finally, we treat fermionic systems whose particle-number is conserved.   
The corresponding system algebras are given both in the general case  as well as in the quasifree case.
Furthermore, a necessary and sufficient condition for quasifree pure-state controllability
in this setting is provided.

\subsection{The System Algebra of Particle-Number Conserving Hamiltonians}

Let $P_n$ denote the orthogonal projection from the  Fock space 
$\mathcal{F}(\C^d)= \oplus_{n=0}^{d} \wedge^n \C^d$
onto the $n$-particle subspace $\wedge^n \C^d \subset \mathcal{F}(\C^d)$
of dimension $\binom{d}{n} $.
The \emph{particle-number operator} $\hat{n}$ of a fermionic system 
is defined as
$
\hat{n}:=\sum_{n=0}^d n P_n
$.
Note that
$
\sum_{p=1}^d f^{\dagger}_pf^{\phantom\dagger}_p \psi_n=n \psi_n
$ holds
for any $\psi_n \in \wedge^n\C^d$.
Hence, the particle number operator can also be expressed as 
\begin{equation*}
\hat{n}=\sum_{p=1}^d f^{\dagger}_pf^{\phantom\dagger}_p \, .
\end{equation*}
A fermionic Hamiltonian $H$ is called particle-number conserving if it commutes with
$\hat{n}$. Using the general Theorem~\ref{thm:double-centralizer} of 
Appendix~\ref{DoubleCentralizers}, one directly obtains the 
corresponding system algebra.
\begin{proposition} \label{prop:particle_number}
The system algebra of particle-number conserving fermionic  
interactions with $d$ modes is 
\begin{equation*}
\mathfrak{s}\left(\bigoplus_{n \, \text{even}}\; \uu\hspace{-0.75mm}\left[\tbinom{d}{n}\right]\right) 
\oplus\mathfrak{s}\left(\bigoplus_{n \, \text{odd}}\; \uu\hspace{-0.75mm}\left[\tbinom{d}{n}\right]\right) .
\end{equation*}
\end{proposition}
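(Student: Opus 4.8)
The plan is to identify the real vector space of physical particle-number-conserving Hamiltonians explicitly, observe that it is already closed under the commutator, and conclude that the system algebra — its Lie closure — coincides with it. A Hamiltonian $H$ conserves particle number, $[H,\hat n]=0$, if and only if it commutes with every spectral projection $P_n$ of $\hat n$, i.e.\ if and only if it is block-diagonal with respect to $\mathcal{F}(\C^d)=\bigoplus_{n=0}^{d}\wedge^n\C^d$, with an arbitrary Hermitian block on each $\wedge^n\C^d\cong\C^{\binom{d}{n}}$; every such operator is realized by a genuine fermionic Hamiltonian, since every operator on the Fock space is a polynomial in the $f_p$ and $f_p^{\dagger}$. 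I would first record that particle-number conservation is strictly stronger than parity conservation: because $P$ acts as $(-1)^n$ on $\wedge^n\C^d$, the parity operator $P=\sum_n(-1)^nP_n$ is a function of $\hat n$, so $[H,\hat n]=0$ automatically gives $[H,P]=0$; hence all such $H$ are already even (physical) operators and no further constraint from the parity superselection rule appears.

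Next I would invoke Corollary~\ref{equal_time_evol}: two physical Hamiltonians generate the same time evolution precisely when their difference is $(\lambda+\mu)P_+ + (\lambda-\mu)P_-$ with $\lambda,\mu\in\R$, so one may normalize the representatives to be traceless separately on the positive parity subspace $\bigoplus_{n\text{ even}}\wedge^n\C^d$ and on the negative parity subspace $\bigoplus_{n\text{ odd}}\wedge^n\C^d$ (note $\sum_{n\text{ even}}\binom dn=\sum_{n\text{ odd}}\binom dn=2^{d-1}$, consistent with Theorem~\ref{general}). Multiplying by $i$, the resulting space of skew-Hermitian generators is exactly
\[
\mathfrak{s}\Bigl(\bigoplus_{n\text{ even}}\uu\bigl[\tbinom dn\bigr]\Bigr)\ \oplus\ \mathfrak{s}\Bigl(\bigoplus_{n\text{ odd}}\uu\bigl[\tbinom dn\bigr]\Bigr),
\]
the outer $\oplus$ being the block-diagonal direct sum of the two parity blocks. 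This space is a Lie algebra: a commutator of block-diagonal operators is again block-diagonal and automatically traceless on each block, so both the block structure and the two trace conditions are preserved. Being a Lie algebra, it equals its own Lie closure, which is the system algebra. Equivalently, this is just the output of the general double-centralizer Theorem~\ref{thm:double-centralizer} applied with the symmetry operator $\hat n$, whose spectral projections $P_n$ have ranks $\binom dn$ and whose associated parity splits the index range $\{0,\ldots,d\}$ into the even and odd values of $n$.

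Since the realizability of every block-diagonal Hermitian operator as a fermionic Hamiltonian is immediate, this proposition is essentially bookkeeping and there is no single hard step. The two points I would be careful about are: (i) that the set of physical particle-number-conserving Hamiltonians is already commutator-closed — this is what lets one replace ``the algebra generated by them'' by ``their linear span''; and (ii) that Corollary~\ref{equal_time_evol} imposes \emph{two} independent trace constraints, one on each parity subspace, which is exactly why the answer is $\mathfrak{s}(\,\cdot\,)\oplus\mathfrak{s}(\,\cdot\,)$ rather than a single traceless algebra $\mathfrak{s}(\bigoplus_n\uu[\binom dn])$. If one instead prefers to cite Theorem~\ref{thm:double-centralizer} directly, the only remaining task is to verify its hypotheses for $\hat n$ and read off the block data, which is routine once the parity split has been noted.
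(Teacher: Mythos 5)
Your argument is correct and follows essentially the same route as the paper, whose proof consists of a one-line appeal to Theorem~\ref{thm:double-centralizer}: you identify the centralizer of $\hat n$ as the block-diagonal algebra with blocks of size $\tbinom{d}{n}$ and intersect with the physical algebra $\su(2^{d-1})\oplus\su(2^{d-1})$ of Theorem~\ref{general}. Your explicit remarks — that $P=\sum_n(-1)^nP_n$ is a function of $\hat n$ so no extra parity constraint arises, and that the parity superselection rule is precisely what produces \emph{two} separate trace conditions rather than one — usefully spell out what the paper's citation of Theorem~\ref{thm:double-centralizer} leaves implicit.
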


\subsection{Quadratic Hamiltonians} \label{subsec:part-numb-quad}

A quadratic Hamiltonian $H$ is particle-number conserving iff its
coefficient matrix $B$ of Eq.~\eqref{Hqfree} is zero, i.e., iff
$H=\sum_{p,q=1}^d A_{pq}(f^{\dagger}_ pf^{\phantom\dagger}_q-\delta_{pq}\tfrac{\unity}{2})$
where $A$ denotes any hermitian matrix. 
The corresponding system algebra is given by the following proposition:

\begin{proposition}\label{quasifree_particle}
The system algebra of the particle-number conserving quadratic $d$-mode  
Hamiltonians 
is isomorphic to $\uu(d)$.  
\end{proposition}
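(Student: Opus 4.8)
The plan is to exploit the fact that the number-conserving bilinears $f_p^\dagger f_q^{\phantom\dagger}$ realize a copy of $\gl(d,\C)$ under the commutator, acting \emph{faithfully} on the Fock space, so that the Lie closure of the skew-Hermitian ones is an isomorphic image of $\uu(d)$ — with the shift $-\delta_{pq}\unity/2$ supplying exactly the central $\uu(1)$.

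First I would verify, directly from the canonical anticommutation relations, the identity $[f_p^\dagger f_q^{\phantom\dagger},\, f_r^\dagger f_s^{\phantom\dagger}] = \delta_{qr}\, f_p^\dagger f_s^{\phantom\dagger} - \delta_{ps}\, f_r^\dagger f_q^{\phantom\dagger}$, which is a short routine computation. This shows that the real span $\mathcal B$ of the operators $i\sum_{pq}A_{pq}f_p^\dagger f_q^{\phantom\dagger}$ with $A=A^\dagger$ is closed under the commutator, and that the assignment $\phi\colon f_p^\dagger f_q^{\phantom\dagger}\mapsto e_{pq}$ — with $e_{pq}$ the matrix units as in the proof of Proposition~\ref{quasifree} — extended by $\phi(\unity):=0$, is a Lie homomorphism that maps $\mathcal B$ onto $\uu(d)$. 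Next I would show $\phi$ is faithful on the bilinears: if $\sum_{pq}A_{pq}f_p^\dagger f_q^{\phantom\dagger}+c\unity=0$ as an operator on $\mF(\C^d)$, then applying it to the vacuum $\Omega$ gives $c=0$ and applying it to the one-particle vectors $f_q^\dagger\Omega$ gives $A_{pq}=0$ for all $p,q$ (equivalently, on the one-particle sector $\wedge^1\C^d\cong\C^d$ the bilinears already act as the defining representation of $\gl(d)$). Hence $\phi$ restricted to $\mathcal B$ is an isomorphism onto $\uu(d)$.

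It then remains to identify the system algebra $\fg=\langle iH_A\mid A=A^\dagger\rangle_{\rm Lie}$, where $iH_A=i\sum_{pq}A_{pq}(f_p^\dagger f_q^{\phantom\dagger}-\delta_{pq}\tfrac{\unity}{2})$. Splitting $A=A_0+\tfrac{\tr A}{d}\unity_d$ with $\tr A_0=0$, one has $iH_A = i\sum_{pq}(A_0)_{pq}f_p^\dagger f_q^{\phantom\dagger} + \tfrac{\tr A}{d}\,c$, where $c:=i(\hat n-\tfrac{d}{2}\unity)$ and $\hat n=\sum_p f_p^\dagger f_p^{\phantom\dagger}$. The first term ranges over the subspace $\fs$ of skew-Hermitian bilinears with traceless coefficient matrix, which $\phi$ maps isomorphically onto $\su(d)$; all commutators of the generators lie in $\fs$ (their $\phi$-images are traceless and $\phi$ is injective on bilinears); and $c$ is central in $\fg$, since $\phi(i\hat n)=i\unity_d$ is central in $\gl(d)$ (so $i\hat n$ commutes with every bilinear by faithfulness) while $i\unity$ is literally central. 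Therefore $\fg=\fs\oplus\R c\cong\su(d)\oplus\uu(1)\cong\uu(d)$, of dimension $d^2$, as claimed. (For the inclusion $\fg\supseteq\uu(d)$ one may alternatively invoke Theorem~\ref{quad_ex_u}, whose generators $w_1,w_2$ are number-conserving quadratics; the result is also consistent with Proposition~\ref{prop:particle_number}, since this $\uu(d)$ acts on each $\wedge^n\C^d$ via the $n$-th exterior power of its defining representation.)

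The argument is essentially bookkeeping; the two points that genuinely need care are the faithfulness of the Fock-space representation of the bilinears and the correct handling of the constant $-\delta_{pq}\unity/2$ — which is precisely what turns the naive answer $\su(d)$ into $\uu(d)$, while at the same time not enlarging $\fg$ beyond $\su(d)\oplus\uu(1)$.
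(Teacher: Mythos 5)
Your proposal is correct and follows essentially the same route as the paper's proof: both construct an explicit Lie isomorphism onto the $d\times d$ skew-Hermitian matrices via the matrix units $e_{pq}$, verify the homomorphism property from the canonical anticommutation relations, and check that the kernel is trivial. Your additional bookkeeping — splitting off the trace part of $A$ and identifying the central element $i(\hat n - \tfrac{d}{2}\unity)$ explicitly — is a slightly more explicit treatment of the constant shift that the paper absorbs directly into the definition of its map $\iota$, but it is not a genuinely different argument.
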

\begin{proof}
Let $\iota$ denote 
the $\mathbb{R}$-linear mapping  from 
the $d$-mode Hamiltonians 
which are quadratic and particle-number conserving
to
the $d \times d$ skew-hermitian matrices. We define $\iota$ using
$\iota (i(f^\dagger_p f^{\phantom\dagger}_q {+} f^\dagger_q f^{\phantom\dagger}_p {-}\tfrac{\unity}{2})) 
=i(e_{pq}{+}e_{qp})$ and
$\iota(f^\dagger_p f^{\phantom\dagger}_q {-} f^\dagger_q f_p)=e_{pq}{-}e_{qp}$,
where $e_{pq}$ denotes a matrix with entries 
$[e_{pq}]_{uv}:=\delta_{pu} \delta_{qv}$. Note that the canonical anticommutation relations imply that
$[f^{\dagger}_pf^{\phantom\dagger}_q,f^{\dagger}_rf^{\phantom\dagger}_s]=$
\begin{equation*} \label{eq:part_conv_comm}
\delta_{ps}f^{\phantom\dagger}_qf^{\dagger}_r {+}\delta_{qr}f^{\dagger}_pf^{\phantom\dagger}_s 
{-}\delta_{ps}\delta_{qr}\unity {-} \delta_{ps}\delta_{qr}\delta_{pq}(f^{\phantom\dagger}_qf^{\dagger}_r
{+}f^{\dagger}_pf^{\phantom\dagger}_s{-}\tfrac{\unity}{2}). 
\end{equation*} 
Thus, $\iota$ is a homomorphism since
\begin{align*}
&\iota([ \kappa_{\pm}(f^\dagger_p f^{\phantom\dagger}_q {\pm} f^\dagger_q f^{\phantom\dagger}_p 
{-} \delta_{pq}\tfrac{\unity}{2}), \kappa_{\pm}(f^\dagger_r f^{\phantom\dagger}_s {\pm} f^\dagger_r 
f^{\phantom\dagger}_s {-} \delta_{rs}\tfrac{\unity}{2})])\\
&=\kappa_{\pm}^2\iota([\delta_{qr}(f^\dagger_{p}f^{\phantom\dagger}_s{\mp}f^\dagger_{s}
f^{\phantom\dagger}_p) \pm \delta_{pr}(f^\dagger_{q}f^{\phantom\dagger}_s{\mp}f^\dagger_{s}
f^{\phantom\dagger}_q), \\
& \phantom{=\kappa_{\pm}^2\iota([\hspace{1mm}}  \delta_{ps}(f^\dagger_{q}
f^{\phantom\dagger}_r{\mp}f^\dagger_{r}f^{\phantom\dagger}_q)
\pm \delta_{qs}(f^\dagger_{p}f^{\phantom\dagger}_r{\mp}f^\dagger_{r}f^{\phantom\dagger}_p)])\\
&=[\kappa_{\pm}(e_{pq}{\pm}e_{qp}), \kappa_{\pm}(e_{rs}{\pm}e_{sr})]\\
&=[\iota( \kappa_{\pm}(f^\dagger_p f^{\phantom\dagger}_q {\pm} f^\dagger_q f^{\phantom\dagger}_p 
{-} \delta_{pq}\tfrac{\unity}{2})),\iota(\kappa_{\pm}(f^\dagger_r f^{\phantom\dagger}_s {\pm} 
f^\dagger_r f^{\phantom\dagger}_s {-} \delta_{rs}\tfrac{\unity}{2}))],
\end{align*}
where $\kappa_+=i$ and $\kappa_-=1$.
The map $\iota$ is even an isomorphism as its kernel
is trivial. The proposition follows as
the Lie algebra $\uu(d)$ is 
isomorphic to the Lie algebra of 
$d \times d$ skew-hermitian matrices.
\end{proof}
\begin{remark}\label{re:iota}
Obviously, the $\iota$ map from the previous 
proof establishes an isomorphism $ih^{(k)} \mapsto 
i\sum_{p,q=1}^d A^{(k)}_{pq}(f^{\dagger}_ pf^{\phantom\dagger}_q-\delta_{pq}\tfrac{\unity}{2})$ 
from $\langle ih^{(1)},  \ldots , ih^{(\ell)} \rangle_{\mathrm{Lie}}$
to $\langle iA^{(1)},  \ldots, iA^{(\ell)}
\rangle_{\mathrm{Lie}}$ for any set  $\{A^{(1)},\ldots, A^{(\ell)}\}$ of
$d{\times}d$ Hermitian matrices.
\end{remark}

\subsection{Quasifree Pure-State Controllability in the Particle-Number Conserving 
Setting \label{subsec:part-numb-pure-quasifree}}

We presented  in Section\ref{sec:qf_pure_cont} a necessary and
sufficient condition for quasifree pure-state 
controllability. Here, we provide an analogous result in the
particle-number conserving setting using a Lie-theoretic result of Ref.~\cite{KDH12}.

A quasifree state $\rho_F$ is called particle-number conserving if
$[\rho_F, P_n]=0$ holds for all $n\in\{0, \ldots , d\}$. 
As discussed in Section~\ref{subsec:quasifree_state}, quasifree states are uniquely characterized 
by the expectation values of the $m_x m_y$ operators. We obtain in the number-conserving case  that 
$\tr( \rho_F \, f^{\phantom\dagger}_q f^{\phantom\dagger}_p )=0$ 
as the condition $[\rho_F, P_n]=0$ 
implies $\sum_{n=0}^{d} P_n \rho_F P_n =\rho_F$ as well as $\tr(\rho_F  f_p f_q)
=
\sum_{n=0}^{d}\tr(P_n \rho_F P_n f_p f_q )
=\sum_{n=0}^d\tr(\rho_F P_nf_p f_q P_n)=0$. Similarly, one can prove 
$\tr( \rho_F \, f^{\dagger}_p f^{\dagger}_q )=0$.
It follows that $\rho_F$ is uniquely determined by the
$d{\times}d$ Hermitian matrix $M_{p,q}= \tr(\rho_F \, f^{\dagger}_p f^{\phantom\dagger}_q )$.
In the literature, this matrix is usually called the \emph{one-particle density matrix} of 
$\rho_F$  \footnote{Note that in some papers  
the one-particle density matrix is defined as $M/\tr(M)$.}.
Let us shortly summarize three well-known statements about one-particle 
density matrices of quasifree states (see \cite{Bach94, AL97}):
\begin{proposition}  \label{prop:pc_qf}
Consider a particle-number conserving quasifree state $\rho_F$ of a 
fermionic system, and let $M$ denote its
one-particle density matrix. The following statements hold:
(a) The eigenvalues of $M$ lie between $0$ and $1$. 
(b) $\rho_F$ is  pure iff $M$ is a projection.
(c) If $\rho_F$ is pure, then $\tr(M)=n$ is an integer, and
$\rho_F$ is supported on the $n$-particle subspace $\wedge^n \C^d$  of the Fock space, i.e.\
\begin{equation} \label{eq:fix_part_numb}
P_k \rho_F P_k = 
\begin{cases} 
\rho_F & \text{if }\; k =n,\\[2mm]
0 & \text{if }\;  k\ne n.
\end{cases} 
\end{equation}
\end{proposition}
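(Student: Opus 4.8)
The plan is to reduce all three claims to the \emph{normal-mode decomposition} of a particle-number conserving quasifree state. Since $M$ is Hermitian ($M_{pq}^{*}=\tr(\rho_F f_q^\dagger f_p)=M_{qp}$), write $M=V\Lambda V^\dagger$ with $V\in\U(d)$ and $\Lambda=\diag(\lambda_1,\dots,\lambda_d)$. By the particle-number conserving specialization of Proposition~\ref{prop:qf-dyn} (equivalently, through the isomorphism $\iota$ of Proposition~\ref{quasifree_particle} and Remark~\ref{re:iota}), the Fock-space unitary implementing the single-particle rotation that diagonalizes $M$ is generated by a quadratic particle-conserving Hamiltonian; it maps $\rho_F$ to a quasifree state whose one-particle density matrix is $\Lambda$, and, since its generator commutes with $\hat{n}$, it preserves every subspace $\wedge^n\C^d$. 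First I would therefore assume without loss of generality that $M=\Lambda$ is diagonal. A quasifree state is uniquely fixed by its two-point functions (Wick expansion, Sec.~\ref{subsec:quasifree_state}); in the particle-conserving case these are $\tr(\rho_F f_p f_q)=\tr(\rho_F f_p^\dagger f_q^\dagger)=0$ and $\tr(\rho_F f_p^\dagger f_q)=\lambda_p\delta_{pq}$. The product operator $\rho_\Lambda:=\bigotimes_{k=1}^{d}\big(\lambda_k\, f_k^\dagger f_k+(1-\lambda_k)(\unity-f_k^\dagger f_k)\big)$ is itself quasifree and has exactly these two-point functions, so $\rho_F=\rho_\Lambda$ in the diagonal gauge.

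Part (a) is then immediate: each single-mode factor $\rho_k=\lambda_k f_k^\dagger f_k+(1-\lambda_k)(\unity-f_k^\dagger f_k)$ is a nonnegative operator of unit trace, forcing $0\le\lambda_k\le 1$. (A one-line gauge-free argument also works: for a unit vector $v\in\C^d$ put $a:=\sum_q v_q f_q$; then $\langle v|M|v\rangle=\tr(\rho_F a^\dagger a)\ge 0$, and by the canonical anticommutation relations $aa^\dagger+a^\dagger a=\unity$ one gets $\langle v|M|v\rangle=1-\tr(\rho_F aa^\dagger)\le 1$. Alternatively, express the Majorana covariance matrix~\eqref{eq:cov-mat} of $\rho_F$ in terms of $M$ and invoke Proposition~\ref{prop:cov-sing-val}.)

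For part (b), $\rho_F=\rho_\Lambda$ is pure iff every factor $\rho_k$ is a rank-one projector, which occurs iff $\lambda_k\in\{0,1\}$ for all $k$, i.e.\ iff $\Lambda$ --- equivalently $M=V\Lambda V^\dagger$ --- is a projection; one can equivalently argue via Proposition~\ref{prop:qf-stab} after translating the purity condition $(G^\rho)^2=-\unity$ on the covariance matrix into $M^2=M$. For part (c), suppose $\rho_F$ is pure, so $M$ is a projection; set $n:=\tr M$, which is then the rank of $M$, hence a nonnegative integer. In the diagonal gauge $\rho_\Lambda$ is the Slater determinant $f_{k_1}^\dagger\cdots f_{k_n}^\dagger\ket 0$ over the occupied modes $\{k_1,\dots,k_n\}$, a vector in $\wedge^n\C^d$; applying the inverse Gaussian unitary, which preserves $\wedge^n\C^d$, shows $\rho_F$ is supported on $\wedge^n\C^d$, i.e.\ $P_n\rho_F P_n=\rho_F$ and $P_k\rho_F P_k=0$ for $k\ne n$, which is~\eqref{eq:fix_part_numb}. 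A gauge-free finish: since $[\rho_F,\hat{n}]=0$ and $\rho_F$ is a rank-one projector, its range vector is an eigenvector of $\hat{n}$ with some eigenvalue $n'$, and $n'=\tr(\rho_F\hat{n})=\tr\big(\rho_F\sum_p f_p^\dagger f_p\big)=\tr M=n$.

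I expect the main obstacle to be the bookkeeping in the normal-mode reduction: justifying cleanly that the diagonalizing single-particle rotation $V\in\U(d)$ lifts to a Fock-space unitary that (i) is generated by a particle-conserving quadratic Hamiltonian, (ii) conjugates $\rho_F$ into the quasifree state with one-particle matrix $\Lambda$, and (iii) commutes with $\hat{n}$. Items (i) and (ii) are the particle-conserving restrictions of Propositions~\ref{quasifree_particle} and~\ref{prop:qf-dyn} combined with Remark~\ref{re:iota}, and (iii) is automatic from the form of the generator; still, this is the step to state with care. Everything else reduces to a single-mode computation or to the Wick-expansion characterization of quasifree states already recorded in Sec.~\ref{subsec:quasifree_state}.
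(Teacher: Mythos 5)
Your argument is correct. Note, however, that the paper does not prove Proposition~\ref{prop:pc_qf} at all: it is quoted as a known result with references to \cite{Bach94,AL97}, so there is no in-paper proof to compare against. What you have written is essentially the standard normal-mode argument from that literature: diagonalize $M=V\Lambda V^\dagger$, lift $V$ to a particle-conserving Gaussian unitary (the particle-conserving specializations of Propositions~\ref{quasifree_particle} and~\ref{prop:dyn_numb_cons} together with Remark~\ref{re:iota}, all of which are logically independent of the present statement), and reduce to the product state $\rho_\Lambda$, after which (a)--(c) become single-mode computations. Two small points deserve care. First, the identification $\rho_F=\rho_\Lambda$ rests on $\rho_\Lambda$ being quasifree; this is true (it is the Gibbs state of a particle-conserving quadratic Hamiltonian, or one checks Wick's expansion directly for a product of single-mode diagonal states), but it should be said explicitly since it is the step that lets you invoke uniqueness of a quasifree state given its covariance matrix. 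Second, in (a) the phrase ``each single-mode factor is a nonnegative operator of unit trace, forcing $0\le\lambda_k\le1$'' is slightly circular as written: positivity of the individual factors must be \emph{deduced} from positivity of the product $\rho_F=\rho_\Lambda$ (e.g.\ by inspecting the $2^d$ eigenvalues $\prod_k\mu_k$ with $\mu_k\in\{\lambda_k,1-\lambda_k\}$), not assumed. Your gauge-free alternative via $a:=\sum_q v_q f_q$ and the relation $a^\dagger a+aa^\dagger=\unity$ is airtight and preferable for (a), as is the gauge-free finish for (c) using $[\rho_F,\hat n]=0$ and $\tr(\rho_F\hat n)=\tr M$; only (b) genuinely needs the normal-mode reduction (or the translation of $(G^\rho)^2=-\unity$ into $M^2=M$).
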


The dynamics of particle-number conserving quasifree fermions can also be 
represented using the one-particle density matrices (see \cite{Bach94, AL97}):
\begin{proposition} \label{prop:dyn_numb_cons}
Consider a  particle-number conserving quasifree state $\rho_{a}$ corresponding to 
the one-particle density matrix $M_a$. Assume that the
quadratic Hamiltonian
\begin{equation*}
H=\sum_{p,q=1}^d A_{pq}(f^{\dagger}_ pf^{\phantom\dagger}_q-\delta_{pq}\tfrac{\unity}{2}),
\end{equation*}
which is defined by the Hermitian matrix $A$, generates the time-evolution of $\rho_{a}$.
The time-evolved state  (at unit time),
$\rho_{b}= e^{-iH} \rho_{a} e^{iH}$ is again a number-conserving quasifree state
with a one-particle density matrix
$M_b= U_AM_aU^\dagger_A$, where $U_A=e^{-iA} \in \U(d)$.
\end{proposition}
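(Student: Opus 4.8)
The plan is to reduce the statement to the one-particle (``first-quantized'') level by following the Heisenberg evolution of the creation and annihilation operators generated by $H$. Because the Hamiltonian of Eq.~\eqref{Hqfree} with $B=0$, namely $H=\sum_{p,q}A_{pq}(f^\dagger_pf^{\phantom\dagger}_q-\delta_{pq}\tfrac{\unity}{2})$, contains no pairing terms, the conjugation $X\mapsto e^{iHt}Xe^{-iHt}$ does not mix creation with annihilation operators. Concretely, the canonical anticommutation relations give $[H,f^\dagger_r]=\sum_pA_{pr}f^\dagger_p$, hence $\tfrac{d}{dt}(e^{iHt}f^\dagger_re^{-iHt})=i\sum_pA_{pr}\,e^{iHt}f^\dagger_pe^{-iHt}$; integrating this linear system and taking adjoints yields $e^{iH}f^\dagger_re^{-iH}=\sum_p(\overline{U_A})_{rp}f^\dagger_p$ and $e^{iH}f^{\phantom\dagger}_re^{-iH}=\sum_p(U_A)_{rp}f^{\phantom\dagger}_p$ with $U_A=e^{-iA}\in\U(d)$. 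This is the only genuinely fermionic input; everything else is linear algebra.

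Next I would dispose of the two structural claims about $\rho_b$. Quasifreeness of $\rho_b$ follows immediately from Proposition~\ref{prop:qf-dyn} (or directly: the rotation above acts orthogonally on the Majorana operators, and the Wick expansion form is invariant under such a rotation). Particle-number conservation is equally quick: since $H$ commutes with each projection $P_n$, so does $e^{-iH}$, whence $[\rho_b,P_n]=e^{-iH}[\rho_a,P_n]e^{iH}=0$ for every $n$. Thus $\rho_b$ is again a particle-number conserving quasifree state, and by Proposition~\ref{prop:pc_qf} it is completely determined by its one-particle density matrix.

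It then remains to compute that density matrix. Writing $(M_b)_{pq}=\tr(\rho_b f^\dagger_pf^{\phantom\dagger}_q)=\tr\big(\rho_a\,(e^{iH}f^\dagger_pe^{-iH})(e^{iH}f^{\phantom\dagger}_qe^{-iH})\big)$ and substituting the two transformation laws from the first step turns the right-hand side into a bilinear expression in the entries of $U_A$ and of $M_a$, which collapses to $M_b=U_AM_aU_A^\dagger$. Alternatively, one can stay within the covariance-matrix formalism of Proposition~\ref{prop:qf-dyn}: when $B=0$ the orthogonal matrix $O_T$ attached to $H$ lies in the realification of $\U(d)$ inside $\SO(2d)$, and the rule $G^b=O_TG^aO_T^{\mathrm{T}}$ then reads $M\mapsto U_AMU_A^\dagger$ under the linear dictionary between one-particle density matrices and Majorana covariance matrices.

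The hard part here is not conceptual but bookkeeping. One has to keep several interlocking conventions straight --- the direction of the Heisenberg versus the Schr\"odinger evolution, whether the one-particle unitary acts on $f^\dagger$ or on $f$, and the index ordering in $M_{pq}=\tr(\rho f^\dagger_pf^{\phantom\dagger}_q)$ --- because an inconsistent choice produces $\overline{U_A}M_a\overline{U_A}^\dagger$ or $U_A^\dagger M_aU_A$ in place of the asserted $U_AM_aU_A^\dagger$. Once the conventions of Eq.~\eqref{Hqfree} and of Proposition~\ref{prop:pc_qf} are pinned down, the final identity is a one-line check.
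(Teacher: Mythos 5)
The paper does not actually prove this proposition; it is quoted as a known result with a pointer to Refs.~\cite{Bach94, AL97}, so there is no in-paper argument to compare against. Your Heisenberg-picture derivation is the standard proof and is sound: the commutator $[H,f^\dagger_r]=\sum_p A_{pr}f^\dagger_p$, the resulting transformation laws $e^{iH}f^\dagger_re^{-iH}=\sum_p(\overline{U_A})_{rp}f^\dagger_p$ and $e^{iH}f^{\phantom\dagger}_re^{-iH}=\sum_p(U_A)_{rp}f^{\phantom\dagger}_p$, and the disposal of quasifreeness and number conservation are all correct. The one point you leave hanging is exactly the one worth settling: plugging your two transformation laws into the paper's stated convention $M_{pq}=\tr(\rho\,f^\dagger_pf^{\phantom\dagger}_q)$ gives $(M_b)_{pq}=\sum_{r,s}(\overline{U_A})_{pr}(M_a)_{rs}(U_A)_{qs}$, i.e.\ $M_b=\overline{U_A}\,M_a\,\overline{U_A}^{\dagger}$, \emph{not} $U_AM_aU_A^\dagger$; the clean covariant form $M_b=U_AM_aU_A^\dagger$ holds for the opposite index order $M_{pq}=\tr(\rho\,f^\dagger_qf^{\phantom\dagger}_p)$, which is the convention of the cited reference. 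So your suspicion that a conjugate can appear is not a bookkeeping error on your side but a genuine (harmless) convention mismatch in the statement as printed; since $\overline{U_A}$ ranges over all of $\U(d)$ as $A$ ranges over Hermitian matrices, every downstream use of the proposition (orbits, transitivity on projections of fixed rank, Theorem~\ref{thm:qf_pc_pure_cont}) is unaffected. It would strengthen your write-up to carry out the final contraction explicitly and state which convention yields which formula, rather than leaving it as a ``one-line check.''
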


A particle-number conserving pure quasifree state $\rho_F$ with $\tr(M)=n$
is sometimes called an \emph{$n$-particle pure quasifree state}, since
according to Proposition~\ref{prop:pc_qf} its state is supported on the
$n$-particle subspace $\wedge^{n}\C^d$. We will denote the set of 
such quasifree pure states by $\QFn$.
A system of number-conserving quadratic Hamiltonians $\mathcal{S}=\{ih_1, \ldots , ih_\ell\}$ is said to 
provide quasifree pure-state controllability for a fixed particle number $n$
if  there exists an $iH \in \langle \mathcal{S} \rangle_{\mathrm{Lie}}$ for any $\rho_a, \rho_b \in \QFn$ such that
$\rho_b= e^{-iH} \rho_a e^{iH}$.
To find a necessary and sufficient conditions for this type of
controllability, let us invoke a Theorem~4.1 of Ref.~\cite{KDH12}:
\begin{theorem} \label{thm:symp}
Consider  the Lie algebra $s_\Sigma$ generated by the traceless 
$d{\times}d$ skew-Hermitian matrices $iB_1,  \ldots , iB_\ell$ and 
let  $\Pdn$ denote the set of all projections acting on $\C^d$ whose rank $n$ lies between
$1$ and $d-1$. The Lie group corresponding to $s_\Sigma$ acts naturally via the adjoint action 
on $\Pdn$ . This action is transitive if and only if  either\\
(a) $s_\Sigma$ is isomorphic to $\su(d)$
or (b) $d$ is even, $n\in \{1, d{-}1 \}$, and $s_\Sigma$ is isomorphic to  $\spp(d/2)$.
\end{theorem}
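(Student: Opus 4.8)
Since a rank-$n$ orthogonal projection on $\C^d$ is determined by its range, the set $\Pdn$ is naturally identified with the complex Grassmannian $\mathrm{Gr}(n,d)$ of $n$-dimensional subspaces of $\C^d$, on which $\SU(d)$ acts transitively with isotropy subgroup $\mathrm{S}(\U(n)\times\U(d-n))$; moreover $P\mapsto\unity-P$ is an equivariant diffeomorphism $\Pdn\to\mathfrak{P}(d,d-n)$. So the plan is to classify all connected subgroups $G\subseteq\SU(d)$ acting transitively on $\mathrm{Gr}(n,d)$ (treating $1\le n\le d/2$ and transferring by the duality $n\leftrightarrow d-n$), in three steps: reduce to the irreducible case, rule out groups preserving a bilinear form, and invoke the classification of transitive actions on compact homogeneous spaces. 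The implication ``(a) or (b) $\Rightarrow$ transitive'' is the easy direction: $\SU(d)$ is transitive by construction, while for even $d$ the group $\SP(d/2)$ (the centralizer of a quaternionic structure in $\U(d)$) acts transitively on the unit sphere $S^{2d-1}\subset\C^d$ and commutes with multiplication by complex scalars, hence acts transitively on $\mathfrak{P}(d,1)=\mathrm{CP}^{d-1}$, and by the duality also on $\mathfrak{P}(d,d-1)$.

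For the converse I would first reduce to the irreducible case. If $G$ preserved a subspace $V$ with $0<\dim V<d$, then $U\mapsto\dim(U\cap V)$ would be a $G$-invariant function on $\mathrm{Gr}(n,d)$; an elementary estimate shows its maximum $\min(n,\dim V)$ and its minimum $\max(0,n+\dim V-d)$ differ whenever $0<\dim V<d$ and $1\le n\le d-1$, so this function is non-constant, contradicting transitivity. Hence $\C^d$ is an irreducible $G$-module. Since the centre of the (compact, hence reductive) Lie algebra $s_\Sigma\subseteq\su(d)$ acts on $\C^d$ by scalars that are simultaneously traceless and skew-Hermitian, i.e.\ by $0$, the algebra $s_\Sigma$ is semisimple, and $G$ is a connected semisimple irreducible subgroup of $\SU(d)$.

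The next step exploits invariant bilinear forms. If $G$ preserves a nondegenerate symmetric form $B$, then already for $n=1$ the function $U\mapsto\dim(U\cap U^{\perp_B})\in\{0,1\}$ distinguishes isotropic from non-isotropic lines, so an $\SO(d)$-type group is never transitive on a complex Grassmannian. If $G$ preserves a nondegenerate symplectic form $\omega$, the same invariant $U\mapsto\dim(U\cap U^{\perp_\omega})$ is constant for $n\in\{1,d-1\}$ (a line is automatically $\omega$-isotropic, and dually for a hyperplane) but already takes the values $0$ and $2$ on $\mathrm{Gr}(2,d)$, and is non-constant for all $2\le n\le d-2$; hence an $\SP(d/2)$-type group can be transitive only for $n\in\{1,d-1\}$, which is exactly case (b). What remains is the case that $G$ is irreducible, admits no invariant nondegenerate bilinear form, and $G\neq\SU(d)$. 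Here I would appeal to the classification of transitive actions on compact homogeneous spaces: comparing $\dim G$ with $\dim\mathrm{Gr}(n,d)=2n(d-n)$ and analysing the linear isotropy representation on $T_{[U]}\mathrm{Gr}(n,d)\cong\C^n\otimes(\C^{d-n})^{*}$ one shows that no such $G$ occurs; for $n=1$ this is the statement that $\mathrm{CP}^{d-1}$ admits no ``exceptional'' transitive compact symmetry group beyond $\SU(d)$ and $\SP(d/2)$ (in contrast to the spheres $S^N$), cf.\ \cite{MS43,Oni66}. Combining the three steps and re-applying the $n\leftrightarrow d-n$ duality yields precisely (a) and (b).

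The main obstacle is this last classification step. For the ``balanced'' Grassmannians ($n\approx d/2$) it is controlled by the sheer size of $\mathrm{Gr}(n,d)$ together with the known gaps in the dimensions of irreducible linear groups, but for small $n$ the Grassmannian has dimension only $O(d)$, so crude dimension counts do not exclude spurious transitive subgroups and one genuinely needs the isotropy-representation (Montgomery--Samelson--Onishchik) analysis carried out in \cite{KDH12}, whose Theorem~4.1 coincides with the statement above.
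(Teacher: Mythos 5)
The paper does not actually prove this statement: it is quoted verbatim as Theorem~4.1 of Ref.~\cite{KDH12} (``let us invoke a Theorem~4.1 of Ref.~[KDH12]''), so there is no internal proof to compare against. Judged on its own terms, your outline is correct where it is explicit. The identification of $\Pdn$ with the Grassmannian and the equivariant duality $P\mapsto\unity-P$, the transitivity of the standard $\SP(d/2)$ on $\mathrm{CP}^{d-1}$ via its transitivity on the unit sphere, the reduction to irreducible subgroups using the invariant $U\mapsto\dim(U\cap V)$ (and thence, by Schur's lemma and tracelessness, to semisimple ones), and the exclusion of orthogonal-type subgroups for all $n$ and of symplectic-type subgroups for $2\le n\le d-2$ via the radical of the restricted bilinear form are all sound, and they supply more detail than the paper does.

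Nevertheless, as a self-contained proof the proposal has a genuine gap, and it sits exactly where you locate it: after the reductions one must still show that an irreducible semisimple subgroup of $\SU(d)$ preserving no bilinear form and acting transitively on $\mathrm{Gr}(n,d)$ equals $\SU(d)$. For $n$ near $d/2$ a dimension count is decisive, but for $n=1$ the target $\mathrm{CP}^{d-1}$ has real dimension only $2(d-1)$, so crude dimension bounds exclude nothing and one genuinely needs the Montgomery--Samelson/Onishchik classification of compact groups transitive on complex projective space (or the isotropy-representation analysis of \cite{KDH12}); your final appeal to ``Theorem~4.1 of \cite{KDH12}, which coincides with the statement above'' is an appeal to the theorem being proved, i.e.\ circular as a proof, even if it mirrors what the paper itself does by citation. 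A smaller point: in the direction (b)$\Rightarrow$transitivity you only treat the standard copy of $\SP(d/2)$, whereas the hypothesis is that $s_\Sigma$ is \emph{abstractly} isomorphic to $\spp(d/2)$; you should add that every nontrivial irreducible representation of $\spp(m)$ has dimension at least $2m$, with equality only for the defining representation, so a faithful $d$-dimensional representation forces the standard embedding up to conjugation.
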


The theorem implies the following
 necessary and sufficient condition:
\begin{theorem}\label{thm:qf_pc_pure_cont}
Consider the set $\mathcal{S}=\{ih_1,  \ldots , ih_\ell\}$ corresponding to
number-conserving quadratic Hamiltonians of a fermionic system with $d \ge 2$ modes. 
The set $\mathcal{S}$ generates a particle-number conserving system giving rise to
full quasifree pure-state controllability on the $n$-particle subspace with $1 \le n \le d{-}1$,
 iff either
(a) $d$ is odd and $\langle \mathcal{S}\rangle_{\mathrm{Lie}}$ is isomorphic to $\uu(d)$ or $\su(d)$\
or (b) $d$ is even, $n \in \{1,  d{-}1 \}$ and  $\langle \mathcal{S} \rangle_{\mathrm{Lie}}$  is isomorphic to
$\uu(d)$, $\su(d)$, $\uu(1) + \spp(d/2) $, or $\spp(d/2)$.
\end{theorem}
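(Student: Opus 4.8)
The plan is to reduce the statement, via the standard correspondence between particle-number-conserving quasifree states and one-particle density matrices, to the Lie-theoretic transitivity criterion of Theorem~\ref{thm:symp}, and then to translate that criterion back into a condition on the system algebra of $\mathcal{S}$. First I would set up the correspondence. By Proposition~\ref{prop:pc_qf} an $n$-particle pure quasifree state $\rho_F\in\QFn$ is uniquely determined by its one-particle density matrix $M$, which is precisely a rank-$n$ orthogonal projection on $\C^d$, and conversely every such projection occurs; hence $\rho_F\mapsto M$ is a bijection $\QFn\to\Pdn$ (the hypothesis $1\le n\le d-1$ is exactly the range in which $\Pdn$ is a nontrivial Grassmannian, while for $n\in\{0,d\}$ the set $\QFn$ is a single point and the controllability question is vacuous). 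Writing $A^{(k)}$ for the Hermitian coefficient matrix of $h_k$ as in Eq.~\eqref{Hqfree}, Remark~\ref{re:iota} gives a Lie isomorphism $\langle\mathcal{S}\rangle_{\mathrm{Lie}}\cong\fg:=\langle iA^{(1)},\ldots,iA^{(\ell)}\rangle_{\mathrm{Lie}}\subseteq\uu(d)$, and Proposition~\ref{prop:dyn_numb_cons} shows that the time evolution generated by a quadratic number-conserving Hamiltonian with coefficient matrix $A$ acts on one-particle density matrices by $M\mapsto e^{-iA}Me^{iA}$. Combining this with the general principle that the reachable set of a state is the orbit of the dynamic group [cf.\ Eq.~\eqref{eqn:reachSigma}], $\mathcal{S}$ provides full quasifree pure-state controllability on $\wedge^n\C^d$ if and only if the connected subgroup $\exp(\fg)\subseteq\U(d)$ acts transitively on $\Pdn$ by conjugation.

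Next I would strip off the abelian part. Let $\pi:\uu(d)\to\su(d)$, $X\mapsto X-\tfrac{1}{d}\tr(X)\unity$, be the Lie homomorphism projecting out the center $\uu(1)$. Since $\uu(1)$ acts trivially under conjugation, $\Adr_{\exp X}=\Adr_{\exp\pi(X)}$ on projections for every $X\in\fg$, so $\exp(\fg)$ and $\exp(\pi(\fg))$ have the same orbits in $\Pdn$. The subalgebra $\pi(\fg)\subseteq\su(d)$ is generated by the traceless skew-Hermitian matrices $iB_k:=i(A^{(k)}-\tfrac{1}{d}\tr(A^{(k)})\unity)$, so Theorem~\ref{thm:symp} applies directly: $\exp(\pi(\fg))$ acts transitively on $\Pdn$ if and only if $\pi(\fg)\cong\su(d)$, or else $d$ is even, $n\in\{1,d-1\}$ and $\pi(\fg)\cong\spp(d/2)$.

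Finally I would lift these conditions on $\pi(\fg)$ back to conditions on $\fg$, hence on $\langle\mathcal{S}\rangle_{\mathrm{Lie}}$. The kernel of $\pi|_\fg$ equals $\fg\cap\uu(1)$, which is either $\{0\}$ or all of $\uu(1)$. If it is $\{0\}$, then $\fg\cong\pi(\fg)$ is semisimple, so composing the inclusion $\fg\hookrightarrow\uu(d)$ with $X\mapsto\tfrac{1}{d}\tr(X)$ is a homomorphism into an abelian algebra and therefore vanishes, i.e.\ $\fg\subseteq\su(d)$; then $\fg\cong\su(d)$ forces $\fg=\su(d)$ by equality of dimensions, while $\fg\cong\spp(d/2)$ simply exhibits $\fg$ as an $\spp(d/2)$-subalgebra. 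If $\fg\cap\uu(1)=\uu(1)$, one checks that $\pi(X)\in\fg$ for every $X\in\fg$, whence $\fg=(\fg\cap\su(d))\oplus\uu(1)$ with $\fg\cap\su(d)=\pi(\fg)$; thus $\fg=\uu(d)$ when $\pi(\fg)\cong\su(d)$ and $\fg\cong\uu(1)+\spp(d/2)$ when $\pi(\fg)\cong\spp(d/2)$. Combining, transitivity on $\Pdn$ holds iff $\langle\mathcal{S}\rangle_{\mathrm{Lie}}$ is isomorphic to $\su(d)$ or $\uu(d)$, or else $d$ is even, $n\in\{1,d-1\}$ and $\langle\mathcal{S}\rangle_{\mathrm{Lie}}$ is isomorphic to $\spp(d/2)$ or $\uu(1)+\spp(d/2)$; splitting into $d$ odd (where only the first alternative can occur) and $d$ even (where both can) yields precisely the clauses (a) and (b) of the theorem.

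I expect the last step to be the only subtle part: one must rule out any ``exotic'' subalgebra of $\uu(d)$ whose image under $\pi$ is isomorphic to $\su(d)$ (or $\spp(d/2)$), which is exactly where semisimplicity forcing tracelessness, the dimension bookkeeping, and the splitting $\fg=(\fg\cap\su(d))\oplus\uu(1)$ come in. Everything else is assembly of the cited Propositions~\ref{prop:pc_qf} and \ref{prop:dyn_numb_cons}, Remark~\ref{re:iota}, and Theorem~\ref{thm:symp}; in particular, in contrast to the proof of Theorem~\ref{controllability_quasifree_pure}, the isomorphism-type formulation of Theorem~\ref{thm:symp} makes any conjugacy analysis of $\spp(d/2)$-subalgebras unnecessary here.
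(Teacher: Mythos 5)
Your proposal is correct and follows essentially the same route as the paper's own proof: reduce via Remark~\ref{re:iota} and Propositions~\ref{prop:pc_qf}--\ref{prop:dyn_numb_cons} to transitivity of the adjoint action on $\Pdn$, pass to the traceless generators (the paper uses $A^{(k)}\mapsto A^{(k)}-\tr(A^{(k)})\unity/d$ where you use the projection $\pi$, which yields the same subalgebra of $\su(d)$), invoke Theorem~\ref{thm:symp}, and lift back by distinguishing whether the center $\uu(1)$ is contained in the system algebra. Your final lifting step is in fact spelled out a bit more carefully than in the paper (semisimplicity forcing $\fg\subseteq\su(d)$, and the splitting $\fg=(\fg\cap\su(d))\oplus\uu(1)$ otherwise), but the argument is the same.
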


\begin{proof}
We consider the set $\mathcal{A} = \{ iA^{(1)}, iA^{(2)}, \ldots , iA^{(\ell)}\}$ of skew-Hermitian 
matrices which correspond to
the generators in $\mathcal{S}$, i.e.\ $ih_{k}=i\sum_{p,q=1}^d A^{(k)}_{pq}(f^{\dagger}_ pf^{\phantom\dagger}_q
-\delta_{pq}\unity/2)$. We apply Remark~\ref{re:iota} and
obtain that $\langle \mathcal{S} \rangle_{\mathrm{Lie}}$ is isomorphic to 
$\langle  \mathcal{A} \rangle_{\mathrm{Lie}}$. We
combine this result with Propositions~\ref{prop:pc_qf} and  \ref{prop:dyn_numb_cons}:
There exists an $ih_{ab} \in \langle \mathcal{S} \rangle_{\mathrm{Lie}} $ 
for each pair $\rho_a, \rho_b \in \QFn$ such that $e^{-ih_{ab}} \rho_a e^{ih_{ab}}=\rho_b$, iff there exists 
an $iA_{ab} \in  \langle  \mathcal{A} \rangle_{\mathrm{Lie}}$  for each pair $M_a, M_b \in \Pdn$
such that $e^{-iA_{ab}} M_a e^{iA_{ab}}= M_b$.
Thus we have to find necessary and sufficient
conditions under which  $\langle \mathcal{A}\rangle_{\mathrm{Lie}}$ 
generates  a transitive action on $\Pdn$ for a given $d$ and $n$.
For any skew-Hermitian $iA$ and $M\in \Pdn$, we have that $\exp(-iA) M \exp(iA)= 
\exp[-i(A- \tr(A) \unity/d)] M \exp[i(A- \tr(A)\unity/d)]$. Hence we can infer that  
 $\langle \mathcal{A} \rangle_{\mathrm{Lie}}$  generates a transitive action iff the system algebra
generated by the set 
$\mathcal{A}':=\{ i(A^{(1)}-\tr(A^{(1)})\unity/d), \ldots, i(A^{(\ell)} - \tr(A^{(\ell)})\unity/d)\}$
 also gives rise to a transitive action. 
Since $\mathcal{A}'$ contains only traceless skew-Hermitian operators, we know from Theorem~\ref{thm:symp} 
that it can act transitively on $\Pdn$ if and only if either
$\langle \mathcal{A}' \rangle_{\mathrm{Lie}}$ is isomorphic to $\su(d)$, or $d$ is even,
$n \in \{1, d{-}1 \}$, and $\langle \mathcal{A}' \rangle_{\mathrm{Lie}}$ is isomorphic to 
$\spp(d/2)$.

On the other hand, if  $\langle\mathcal{A}' \rangle_{\mathrm{Lie}}=\su(d)$ or
$\langle\mathcal{A}' \rangle_{\mathrm{Lie}}=\spp(d/2)$ then $\langle\mathcal{A}' \rangle_{\mathrm{Lie}}$
is a simple irreducible Lie subalgebra 
of $\su(d)$. It follows that
$\langle\mathcal{A} \rangle_{\mathrm{Lie}}$ is either isomorphic to $\langle\mathcal{A}' \rangle_{\mathrm{Lie}}$
if $\tr(A^{(k)})=0$ for all $k\in\{1, \ldots, \ell\}$ or to $\uu(1) + \langle\mathcal{A}' \rangle_{\mathrm{Lie}}$
if there exists a $k$ such that $\tr(A^{(k)})\ne 0$. This proves the theorem.
\end{proof}

\section{Conclusion\label{conclusion}}

We have put dynamic systems theory of coherently controlled fermions into a Lie-algebraic frame 
in order to answer problems of
controllability, reachability, and simulability in a unified picture. 
As summarized in Tab.~\ref{tab:summary},
to this end we have determined the dynamic system Lie algebras
in a comprehensive number of cases, illustrated by examples, with and without confinement
to quadratic interactions (quasifree particles) as well as with and without symmetries
such as translation invariance, twisted reflection symmetry, or particle-number conservation.
Once having established the system algebras, the group orbits of a given (pure or mixed) 
initial quantum state determine the respective reachable sets of all states a system can be driven into 
by coherent control.  In this respect, different types
of pure-state reachability and 
its relation to coset spaces
has been treated with particular attention. 

There are illuminating analogies and differences
between spin and fermionic systems. 
For quasifree systems, this was discussed in Sec.~\ref{QUASI} 
and in  Appendix~\ref{appl_quasi}, while the translation-invariant case
is addressed in Sec.~\ref{sec:TI}.
In particular, translation-invariant Hamiltonians 
which cannot be generated from nearest-neighbor ones
appear both for spin systems (Sec.~\ref{shortrange}) and for fermionic systems (Sec.~\ref{ferm_nn}).
Moreover, for fermionic systems some of these Hamiltonians have bounded interaction length.
It is an open question if the same also holds for spin systems.

On a general scale, the system algebras determined serve 
as a dynamic fingerprint. 
Their application to quantum simulation has been elucidated in a plethora of paradigmatic
settings.
Hence we anticipate the comprehensive findings presented here will find a broad scope
of use.

\begin{table}[tb]
\caption{\label{tab:summary} System algebras for $d$-mode fermionic systems}
\begin{tabular}{p{1mm}l@{\hspace{-4mm}}r@{\hspace{2mm}}r}
\hline\hline\\[-2mm]
&Symmetries\footnote{besides parity superselection rule $\mathrm{P}$: $\mathrm{T}=$ 
translation-invariance,
\mbox{$\mathrm{R}=$ twisted reflection symmetry, $\mathrm{N}=$ particle-no.\ conservation}
\vspace{1.5mm}}
& System algebra  & Details\\[1mm] \hline \\[-2mm]
\multicolumn{2}{l}{general systems:}\hspace{-8mm}& $\su(2^{d-1})\oplus \su(2^{d-1})$ &  
Thm.~\phantom{0}\ref{general} \\[1mm]
&$\{ \mathrm{T} \}$ &$\mathfrak{s}[\oplus_{\ell=0}^{d-1}\, \uu (\TM_\ell)]\oplus 
\mathfrak{s}[\oplus_{\ell=0}^{d-1}\, \uu (\TM_\ell)]$
 & Thm.~\ref{translationinvariant}\\[1.5mm]
&$\{ \mathrm{N} \}$ &$\mathfrak{s}\left(\oplus_{n \, \text{even}}\; 
\uu\hspace{-0.75mm}\left[\tbinom{d}{n}\right]\right) \oplus\mathfrak{s}\left(\oplus_{n \, \text{odd}}\; 
\uu\hspace{-0.75mm}\left[\tbinom{d}{n}\right]\right)$ & 
Prop.~\ref{prop:particle_number}\\[2mm]\hline\\[-1mm]
\multicolumn{2}{l}{quasifree systems:}\hspace{-8mm}& $\so(2d)$\footnote{the orthogonal algebra is 
represented as direct sum of two equal copies given as irreducible blocks of dimension $2^{d-1}$; 
the system algebra $\so(2d)$ itself was 
determined already, e.g., in Ref.~\cite{SW86}.} & Prop.~\phantom{0}\ref{quasifree} \\[1mm]
&$\{  \mathrm{T}\}, d \text{ odd} $ & $[\sum_{i=1}^{(d-1)/2} \uu(2)] + \uu(1)$ & 
Thm.~\ref{thm:qf-trans-inv} \\[1mm]
&$\{  \mathrm{T}\}, d \text{ even}  $ & $[\sum_{i=1}^{(d-2)/2} \uu(2)] + \uu(1) + \uu(1)$ & 
Thm.~\ref{thm:qf-trans-inv} \\[1mm]
&$\{  \mathrm{T}, \mathrm{R}\}, d \text{ odd} $ & $[\sum_{i=1}^{(d-1)/2} \su(2)] + \uu(1)$ & 
Cor.~\ref{Rsym} \\[1mm]
&$\{  \mathrm{T}, \mathrm{R}\}, d \text{ even}  $ & $[\sum_{i=1}^{(d-2)/2} \su(2)] + \uu(1) + 
\uu(1)$ & Cor.~\ref{Rsym} \\[1mm]
&$\{  \mathrm{N} \}$ & $\uu(d)$ & Prop.~\ref{quasifree_particle}\\[1mm]
\hline\hline
\end{tabular}
\end{table}

\begin{acknowledgments}
This work was supported in part by the  {\sc eu}
through the programs {\sc coquit},
{\sc q-essence}, {\sc chist-era quasar}, {\sc siqs} and the {\sc erc} grant
{\sc gedentqopt},  by the Bavarian Excellence Network {\sc enb}
via the international doctorate programme of excellence
{\em Quantum Computing, Control, and Communication} ({\sc qccc}),
by {\em Deutsche Forschungsgemeinschaft} ({\sc dfg}) in the
collaborative research centre {\sc sfb}~631 as well as the international 
research group {\sc for} 1482 through the grant {\sc schu}~1374/2-1.
\end{acknowledgments}

\appendix

\section{Discussion of Double Centralizers\label{DoubleCentralizers}}

Motivated by Sec.~\ref{ExAndDisc}, in this appendix  we discuss 
how the form of the double centralizer of a Lie algebra $\fg\subset \su(k)$
limits the possibilities for $\fg$:
\begin{proposition}\label{Thm:DCT2}
Let $\fg$ denote a subalgebra of $\su(k)$. 
There exists a set $A \subset  \su(k)$ such that
$\fg= \cent_{\su(k)}(A)$, if and only if 
$\cent_{\su(k)}(\cent_{\su(k)}(\fg))=\fg$.
\end{proposition}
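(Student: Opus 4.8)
The plan is to prove both directions by exploiting two standard facts about centralizers inside a semisimple Lie algebra: monotonicity-reversal under $\cent_{\su(k)}(\cdot)$ (larger sets have smaller centralizers) and the ``expansion'' inequality $B \subseteq \cent_{\su(k)}(\cent_{\su(k)}(B))$, valid for any subset $B \subseteq \su(k)$. The ``if'' direction is immediate: if $\cent_{\su(k)}(\cent_{\su(k)}(\fg)) = \fg$, then taking $A := \cent_{\su(k)}(\fg) \subseteq \su(k)$ gives a set with $\cent_{\su(k)}(A) = \fg$, as required. So the content lies entirely in the ``only if'' direction.

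For the ``only if'' direction, suppose $\fg = \cent_{\su(k)}(A)$ for some $A \subseteq \su(k)$. First I would apply $\cent_{\su(k)}(\cdot)$ to the trivial inclusion $A \subseteq \cent_{\su(k)}(\cent_{\su(k)}(A))$, which by monotonicity-reversal yields $\cent_{\su(k)}(\cent_{\su(k)}(\cent_{\su(k)}(A))) \subseteq \cent_{\su(k)}(A)$, i.e. $\cent_{\su(k)}(\cent_{\su(k)}(\fg)) \subseteq \fg$. The reverse inclusion $\fg \subseteq \cent_{\su(k)}(\cent_{\su(k)}(\fg))$ is just the expansion inequality applied to $B = \fg$. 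Combining the two gives equality. This is the same ``triple centralizer collapses to single centralizer'' argument that works for any Galois-type closure operator, and it requires essentially nothing beyond the definitions; the earlier remarks in the paper (Jacobi identity making centralizers into subalgebras) already guarantee all the objects in sight are genuine Lie subalgebras of $\su(k)$, so no extra bookkeeping is needed there.

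The one point deserving care --- and what I expect to be the main (though modest) obstacle --- is making sure the argument stays inside $\su(k)$ rather than accidentally invoking the associative double-commutant theorem for the full matrix algebra. Here $\cent_{\su(k)}(B)$ for $B \subseteq \su(k)$ is the intersection of the \emph{associative} commutant $B'$ with the real form $\su(k)$; one must check that the purely formal identities $\cent_{\su(k)}(B) = B' \cap \su(k)$ and the resulting monotonicity/expansion properties hold verbatim in this restricted setting. They do, since intersecting with a fixed subspace preserves both inclusion-reversal and the expansion inequality, but I would spell this out in a sentence so that the proof is self-contained and does not silently lean on a statement about $\Mat_k(\C)$ that is actually \emph{false} for Lie-algebra centralizers (indeed Sec.~\ref{ExAndDisc} exhibits $\mathcal{L}_d$ as an explicit $\fg$ with $\cent_{\su(k)}(\cent_{\su(k)}(\fg)) \neq \fg$, which is precisely the case where $\fg$ is \emph{not} of the form $\cent_{\su(k)}(A)$, confirming that the proposition is a genuine characterization and not a triviality). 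Finally I would note that the statement is really an instance of the general fact that $X \mapsto \cent_{\su(k)}(\cent_{\su(k)}(X))$ is a closure operator whose closed sets are exactly the images of $\cent_{\su(k)}(\cdot)$, so one could alternatively phrase the whole proof in that language in two lines.
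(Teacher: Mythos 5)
Your proposal is correct and follows essentially the same route as the paper's proof: the ``if'' direction takes $A:=\cent_{\su(k)}(\fg)$, and the ``only if'' direction reduces to the triple-centralizer identity $\cent_{\su(k)}(\cent_{\su(k)}(\cent_{\su(k)}(A)))=\cent_{\su(k)}(A)$, which the paper simply asserts (citing Procesi) and you derive explicitly from the antitone and extensive properties of the centralizer map. The extra care you take to keep everything inside $\su(k)$ and to distinguish this from the associative double-commutant theorem is sound but not a different argument.
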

\begin{proof}
First, let us assume the existence of the set $A$.
As $\cent_{\su(k)}(\cent_{\su(k)}(\cent_{\su(k)}(A)))= \cent_{\su(k)}(A)$ 
holds for any set $A$, which can also be inferred from \cite[Proposition~6.1.3.1(iii)]{Procesi07},
we obtain $\cent_{\su(k)}(\cent_{\su(k)}(\fg))= \fg$.
Second, we assume that $\cent_{\su(k)}(\cent_{\su(k)}(\fg))=\fg$ holds.
We choose $A:=\cent_{\su(k)}(\fg)$ and verify its existence.  
\end{proof}

To further analyze the influence of symmetry properties
on the system algebra, we recall some elementary representation theory
(see, e.g., Theorem~1.5 of \cite{Ledermann}):

\begin{proposition}\label{reprtheory}
Consider a completely reducible complex matrix representation 
$\Phi(g)$
of a group
$G$,
where $k$ is the degree of $\Phi$.
Let $\comm(\Phi)=\Phi'$ denote
the commutant algebra of all complex $k \times k$-matrices 
simultaneously
commuting with $\Phi(g)$ for $g\in G$.
Then, 
$\Phi(g)$ is equivalent to
$$
\bigoplus_{j=1}^{w} \left[ \unity_{e_j} \otimes \phi_j(g) \right],
$$
where $\phi_j$ denote for $j\in\{1,\ldots,w\}$ distinct inequivalent irreducible complex 
matrix representations 
of $G$
with degree $k_j$, occurring with multiplicity $e_j$ in $\Phi$.
In particular,\\
(a) $\dim \comm(\Phi)=\sum_{j=1}^{w} e_j^2$,\\
(b) $\dim \centre(\comm(\Phi))=w$,\\
(c) $k= \sum_{j=1}^{w} k_j e_j$.
\end{proposition}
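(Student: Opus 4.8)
The statement to prove is Proposition~\ref{reprtheory}, a standard piece of representation theory: for a completely reducible complex representation $\Phi$ of a group $G$, decomposing into isotypic components $\bigoplus_j \unity_{e_j}\otimes\phi_j$ with the $\phi_j$ distinct irreducibles of degree $k_j$ and multiplicities $e_j$, we have $\dim\comm(\Phi)=\sum_j e_j^2$, $\dim\centre(\comm(\Phi))=w$, and $k=\sum_j k_j e_j$.

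\medskip

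\noindent\textbf{Plan of proof.} The plan is to start from the isotypic decomposition $\Phi\cong\bigoplus_{j=1}^{w}\bigl(\unity_{e_j}\otimes\phi_j\bigr)$, whose existence is exactly the content of complete reducibility together with grouping equivalent irreducible summands; this is what the cited Theorem~1.5 of \cite{Ledermann} provides, so I would simply invoke it. The core of the argument is then Schur's lemma. First I would observe that an element of the commutant $\comm(\Phi)$ is a block matrix $(M_{ij})$ indexed by the isotypic components, where $M_{ij}$ intertwines $\unity_{e_j}\otimes\phi_j$ with $\unity_{e_i}\otimes\phi_i$. Writing $M_{ij}$ in the tensor decomposition $\C^{e_i}\otimes\C^{k_i}\leftarrow\C^{e_j}\otimes\C^{k_j}$, the intertwining condition forces $M_{ij}$ to be of the form $N_{ij}\otimes(\text{intertwiner }\phi_j\to\phi_i)$. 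By Schur's lemma the space of intertwiners $\phi_j\to\phi_i$ is zero for $i\ne j$ (distinct irreducibles) and one-dimensional, spanned by $\unity_{k_j}$, for $i=j$. Hence $M$ is block-diagonal with $j$-th block $N_j\otimes\unity_{k_j}$ for an arbitrary $N_j\in\Mat_{e_j}(\C)$. This identifies $\comm(\Phi)\cong\bigoplus_{j=1}^{w}\Mat_{e_j}(\C)$ as an algebra.

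\medskip

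\noindent Once the algebra isomorphism $\comm(\Phi)\cong\bigoplus_j\Mat_{e_j}(\C)$ is in hand, all three claims are immediate. For (a), $\dim\comm(\Phi)=\sum_{j=1}^{w}\dim\Mat_{e_j}(\C)=\sum_{j=1}^{w}e_j^2$. For (b), the center of a direct sum of matrix algebras is the direct sum of the centers, and the center of $\Mat_{e_j}(\C)$ is the scalars $\C\,\unity_{e_j}$, one-dimensional; summing over $j$ gives $\dim\centre(\comm(\Phi))=w$. For (c), counting dimensions in the decomposition $\Phi\cong\bigoplus_j\unity_{e_j}\otimes\phi_j$ gives $k=\deg\Phi=\sum_{j=1}^{w}e_j k_j$.

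\medskip

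\noindent\textbf{Main obstacle.} There is no serious obstacle here; the only point requiring a little care is the precise bookkeeping in the Schur-lemma step — namely arguing that a general intertwiner $\C^{e_j}\otimes\C^{k_j}\to\C^{e_i}\otimes\C^{k_i}$ commuting with $\unity\otimes\phi$ must factor as $N\otimes(\text{scalar or }0)$. This is the standard fact that $\Hom_G(U\otimes V,\,U'\otimes V')\cong\Hom_{\C}(U,U')\otimes\Hom_G(V,V')$ when $G$ acts only on the second factors, and then applying Schur. Given that the paper is content to cite \cite{Ledermann} for the decomposition itself, I would likewise present the commutant computation compactly, treating Schur's lemma as known, and devote at most a sentence or two to the tensor-factorization identity. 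The whole proof should be short.
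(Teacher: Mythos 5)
Your proof is correct: the isotypic decomposition followed by the Schur-lemma computation of the commutant as $\bigoplus_{j=1}^{w}\Mat_{e_j}(\C)\otimes\unity_{k_j}$ immediately yields (a), (b), and (c). The paper itself offers no proof but simply cites Theorem~1.5 of Ledermann, and your argument is precisely the standard one behind that citation, so there is nothing to reconcile.
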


Obviously, the same is true for representations of a compact Lie group
or its Lie algebra. Given a subalgebra $\fg$ of $\su(k)$ (or respectively of $\uu(k)$) and a 
representation $\Phi$
of $\fg$ with degree $k$, we discuss the easiest case of Proposition~\ref{reprtheory} where
$w=1$ and $e_1=1$. Hence, $\Phi$ is irreducible and $\fg$ is an irreducible subalgebra
of $\su(k)$ (or respectively of $\uu(k)$). But $\fg$ is not necessarily equal to $\su(k)$ 
(or respectively to $\uu(k)$).
Irreducible simple subalgebras of $\su(k)$ were studied extensively in this regard in Ref.~\cite{ZS11}.
Note that the irreducible subalgebras of $\uu(k)$ are of the form
$\fg$ or $\fg+\uu(1)$ where $\fg$ denotes any irreducible subalgebra of $\su(k)$
(cf.\ pp.~27--28 and p.~321 of \cite{GG78}). --- A slight generalization is given by the case
of an abelian commutant algebra, i.e.\ $\dim \comm(\Phi)=\dim \centre(\comm(\Phi))$ and $e_j=1$ 
for all $j\in\{1,\ldots,w\}$. One may thus apply the spectral theorem 
(see, e.g., \cite{Hass99,Rom92,DeVito90})
simultaneously to all the elements of the commutant algebra:

\begin{theorem}\label{thm:double-centralizer}
Consider a Lie algebra $\fg \subseteq \su(k)$ and its
representation $\Phi$ of degree $k$.
Assume that the corresponding commutant algebra
$\mathcal{C}=\comm(\Phi)$ is abelian. One obtains that $\fg$ is a subalgebra
of $\mathfrak{s}[\oplus_{j=1}^{\dim \mathcal{C}} \uu(k_j)]$ and it is equivalent
to $\mathfrak{s}[\oplus_{j=1}^{\dim \mathcal{C}} \fg_j]$,
where $k=\sum_{j=1}^{\dim \mathcal{C}} k_j$ and $\fg_j$ are irreducible subalgebras of $ \uu(k_j)$.
Furthermore, one finds
$k_j=\dim(P_j)$, where $P_j$ are the orthogonal projection operators given by 
the joint spectral decomposition of $\mathcal{C}$ with $\sum_{j=1}^{\dim \mathcal{C}} P_j=\unity_k$ 
and $P_i P_j = 0$ for $i \ne j$.
If $\fg$ is the maximal Lie algebra with these properties, then
$\fg=\mathfrak{s}[\oplus_{j=1}^{\dim \mathcal{C}} \uu(k_j)]$.
\end{theorem}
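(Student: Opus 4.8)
The plan is to deduce Theorem~\ref{thm:double-centralizer} from the elementary representation theory collected in Proposition~\ref{reprtheory}, specialized to the situation where the commutant algebra $\mathcal{C}=\comm(\Phi)$ is abelian. First I would observe that since $\fg$ acts on $\C^k$ via a (unitary, hence completely reducible) representation $\Phi$ of degree $k$, Proposition~\ref{reprtheory} applies: $\Phi$ is equivalent to $\bigoplus_{j=1}^{w}[\unity_{e_j}\otimes \phi_j]$ with distinct inequivalent irreducibles $\phi_j$ of degree $k_j$ and multiplicities $e_j$, and $\dim\comm(\Phi)=\sum_j e_j^2$ while $\dim\centre(\comm(\Phi))=w$. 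The abelianness of $\mathcal{C}$ forces $\dim\mathcal{C}=\dim\centre(\mathcal{C})$, i.e.\ $\sum_j e_j^2=w$; combined with $e_j\ge 1$ this gives $e_j=1$ for all $j$ and $w=\dim\mathcal{C}$. So after a change of basis $\Phi\cong \bigoplus_{j=1}^{\dim\mathcal{C}}\phi_j$ with all $\phi_j$ inequivalent irreducibles and $k=\sum_j k_j$ by part (c).

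Next I would identify the projections: since $e_j=1$, the commutant $\mathcal{C}$ consists exactly of the block-scalar matrices $\bigoplus_j \lambda_j \unity_{k_j}$, which is a maximal abelian $*$-subalgebra spanned by the orthogonal projections $P_j$ onto the $j$-th block; these satisfy $\sum_j P_j=\unity_k$, $P_iP_j=0$ for $i\neq j$, and $k_j=\dim(P_j)$. (Equivalently one invokes the joint spectral decomposition of the abelian $*$-algebra $\mathcal{C}$.) Because every element of $\fg$ commutes with all of $\mathcal{C}$ — $\fg\subseteq \comm(\mathcal{C})=\comm(\comm(\Phi))$, the double commutant — each element of $\fg$ is block-diagonal with respect to the decomposition $\C^k=\bigoplus_j P_j\C^k$. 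Writing $\fg_j$ for the image of $\fg$ under projection onto the $j$-th block (equivalently the restriction $\phi_j|_{\fg}$), each $\fg_j$ is an irreducible subalgebra of $\uu(k_j)$ since $\phi_j$ is irreducible. The trace constraint $\fg\subseteq\su(k)$ then says the total trace vanishes on each block-diagonal element, so $\fg$ embeds into $\mathfrak{s}[\oplus_j \uu(k_j)]:=\{\oplus_j X_j \mid X_j\in\uu(k_j),\ \sum_j\tr X_j=0\}$, and $\fg$ is equivalent to $\mathfrak{s}[\oplus_j \fg_j]$ in the obvious sense that it is a subalgebra of $\oplus_j \fg_j$ cut out by the single trace relation (one should be slightly careful here about whether $\fg$ surjects onto each $\fg_j$, but that is exactly how $\fg_j$ was defined).

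For the maximality statement, I would argue in the reverse direction: suppose $\fg$ is the \emph{largest} subalgebra of $\su(k)$ with commutant equal to (the abelian algebra) $\mathcal{C}$, equivalently $\comm(\fg)=\mathcal{C}$, equivalently $\fg=\comm(\comm(\fg))\cap\su(k)=\comm(\mathcal{C})\cap\su(k)$ by the double-commutant idempotency used already in Proposition~\ref{Thm:DCT2}. But $\comm(\mathcal{C})$ is precisely the algebra of all block-diagonal matrices $\bigoplus_j M_j$ with $M_j\in\Mat_{k_j}(\C)$, so its skew-hermitian, traceless part is exactly $\mathfrak{s}[\oplus_{j=1}^{\dim\mathcal{C}}\uu(k_j)]$, giving $\fg=\mathfrak{s}[\oplus_{j}\uu(k_j)]$. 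I expect the one genuinely delicate point to be pinning down the precise meaning of ``$\fg$ is the maximal Lie algebra with these properties'' and making sure it is equivalent to $\comm(\fg)=\mathcal{C}$ rather than merely $\comm(\fg)\supseteq\mathcal{C}$ — once that is fixed, the computation of $\comm(\mathcal{C})$ and its traceless skew-hermitian part is routine linear algebra. The rest is bookkeeping with Proposition~\ref{reprtheory} and the spectral theorem for the abelian $*$-algebra $\mathcal{C}$.
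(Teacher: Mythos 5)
Your argument is correct and is exactly the route the paper intends: Theorem~\ref{thm:double-centralizer} is stated there without a separate proof, as a direct consequence of Proposition~\ref{reprtheory} (abelianness of $\mathcal{C}$ forcing $\sum_j e_j^2=w$ and hence all multiplicities $e_j=1$), the joint spectral decomposition of the abelian commutant into the projections $P_j$, and the double-commutant computation for the maximal case. The one caveat you rightly flag --- that $\fg$ is in general only a subdirect sum inside $\oplus_j \fg_j$ rather than all of $\mathfrak{s}[\oplus_j\fg_j]$ --- is an imprecision in the theorem's wording, not in your proof.
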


Using Proposition~\ref{reprtheory} one can
directly characterize a maximal Lie algebra $\fg$ contained in $\su(k)$
which is defined by all its symmetries including cases where
the commutant to $\fg$  is not necessarily abelian.
Observe the notation of Remark~\ref{remark5} and the one of Proposition~\ref{reprtheory}.
\begin{theorem}\label{maxSymLie}
Consider a Lie algebra $\fg \subseteq \su(k)$ and its
representation $\Phi$ of degree $k$.
Let $\mathcal{C}=\comm(\Phi)$ denote the commutant of $\fg$.
If $\fg$ is the maximal Lie algebra with these properties, then
$\fg=\mathfrak{s}[\sum_{j=1}^{\omega} \uu(k_j)]$ where $\omega=\dim[\centre(\mathcal{C})]$
and $\sum_{j=1}^{\omega} k_j\leq k$.
\end{theorem}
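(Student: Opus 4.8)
The plan is to deduce Theorem~\ref{maxSymLie} from the representation-theoretic decomposition of Proposition~\ref{reprtheory} together with the same ``maximality'' reasoning already used in Theorem~\ref{thm:double-centralizer}. First I would invoke Proposition~\ref{reprtheory} applied to $\Phi$ restricted to $\fg$ (or rather to the group it generates): since $\fg\subseteq\su(k)$ is a Lie algebra of a compact group, $\Phi$ is completely reducible, so up to a change of basis $\Phi\cong\bigoplus_{j=1}^{w}[\unity_{e_j}\otimes\phi_j]$ with the $\phi_j$ pairwise inequivalent irreducibles of degree $k_j$ and multiplicities $e_j$, and by part~(b) the number $w$ of distinct irreducible types equals $\dim[\centre(\mathcal{C})]=:\omega$. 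The block structure partitions $\C^k$ into $\omega$ isotypic components of dimensions $e_jk_j$, and $\sum_{j=1}^{\omega}e_jk_j=k$ by part~(c), which already gives $\sum_{j}k_j\le k$.

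Next I would identify the maximal Lie algebra compatible with having $\mathcal{C}$ as its commutant. By the double-commutant theorem (von Neumann), the set of all matrices commuting with $\mathcal{C}$ is exactly $\bigoplus_{j=1}^{\omega}[\Mat_{e_j}(\C)\otimes\unity_{k_j}]$; intersecting with $\su(k)$ gives the algebra $\mathfrak{s}[\bigoplus_{j=1}^{\omega}\uu(e_j)]$ acting as $\uu(e_j)\otimes\unity_{k_j}$ on each isotypic block. Crucially, one must check that this algebra has commutant exactly $\mathcal{C}$ and no larger: any element of its commutant must be block-diagonal with respect to the isotypic decomposition and, within block $j$, must commute with all of $\Mat_{e_j}(\C)\otimes\unity_{k_j}$, hence be of the form $\unity_{e_j}\otimes(\text{something on }\C^{k_j})$; since the $\phi_j$ are pairwise inequivalent irreducibles this recovers precisely $\comm(\Phi)=\mathcal{C}$. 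Therefore $\fg$, being the \emph{maximal} Lie subalgebra of $\su(k)$ with commutant $\mathcal{C}$, must equal $\mathfrak{s}[\bigoplus_{j=1}^{\omega}\uu(e_j)]$ — and here I would have to be careful that the statement as written uses $k_j$ for what my argument calls $e_j$; I would align the notation with Proposition~\ref{reprtheory} so that the $k_j$ in the theorem are the multiplicities, with $\sum_j k_j\le k$ a consequence of $\sum_j k_j(\dim\phi_j)=k$ and $\dim\phi_j\ge 1$, with equality iff every $\phi_j$ is one-dimensional (the abelian-commutant case of Theorem~\ref{thm:double-centralizer}).

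The main obstacle I anticipate is purely one of bookkeeping rather than mathematics: keeping the roles of ``multiplicity'' and ``degree of irreducible'' straight, since the notation in the theorem statement is terse and the earlier Theorem~\ref{thm:double-centralizer} used $k_j$ for the block dimensions in the special abelian case where multiplicities are all one. The genuinely substantive point — that the maximal algebra with a prescribed commutant is the $\su$-part of a block-diagonal sum of full unitary algebras on the isotypic components — is exactly the double-commutant statement and needs only the observation that ``maximal with commutant $\mathcal{C}$'' forces $\fg=\cent_{\su(k)}(\mathcal{C})=\cent_{\su(k)}(\cent_{\su(k)}(\fg))$, which by Proposition~\ref{Thm:DCT2} is automatically of the asserted form. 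So the proof is short: cite Proposition~\ref{reprtheory} for the decomposition and the count $\omega=\dim\centre(\mathcal{C})$, cite Proposition~\ref{Thm:DCT2} (or argue the double-commutant identity directly) to pin down $\fg$, and read off $\sum_j k_j\le k$ from part~(c).
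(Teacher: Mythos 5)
Your overall strategy --- decompose $\Phi$ via Proposition~\ref{reprtheory} and then identify the maximal algebra as the double commutant of $\fg$ intersected with $\su(k)$ --- is the same as the paper's, but your concrete computation of that double commutant swaps the multiplicity spaces with the representation spaces, and this leads you to the wrong answer. With $\Phi\cong\bigoplus_{j=1}^{\omega}[\unity_{e_j}\otimes\phi_j]$ and $\deg\phi_j=k_j$, Schur's lemma gives $\mathcal{C}=\comm(\Phi)=\bigoplus_j[\Mat_{e_j}(\C)\otimes\unity_{k_j}]$, so the set of matrices commuting with $\mathcal{C}$ is $\bigoplus_j[\unity_{e_j}\otimes\Mat_{k_j}(\C)]$ --- \emph{not} $\bigoplus_j[\Mat_{e_j}(\C)\otimes\unity_{k_j}]$ as you wrote, which is $\mathcal{C}$ itself. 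Hence the maximal Lie subalgebra of $\su(k)$ commuting with $\mathcal{C}$ is $\mathfrak{s}[\bigoplus_j\unity_{e_j}\otimes\uu(k_j)]\cong\mathfrak{s}[\sum_{j}\uu(k_j)]$ with $k_j$ the \emph{degrees} of the irreducibles, exactly as the theorem states in the notation of Proposition~\ref{reprtheory}; no notational ``realignment'' is needed, and your proposed reading of the $k_j$ as multiplicities is incorrect (the two readings give non-isomorphic algebras in general, of dimensions $\sum_j k_j^2-1$ versus $\sum_j e_j^2-1$). A quick sanity check exposes the slip: your candidate $\mathfrak{s}[\bigoplus_j\uu(e_j)\otimes\unity_{k_j}]$ does not even contain the original $\fg$, which acts as $\unity_{e_j}\otimes\phi_j(\fg)$ inside each isotypic block, so it cannot be the maximal algebra; moreover its commutant is $\bigoplus_j[\unity_{e_j}\otimes\Mat_{k_j}(\C)]\neq\mathcal{C}$ in general, so your verification step (``this recovers precisely $\comm(\Phi)=\mathcal{C}$'') actually produces $\comm(\mathcal{C})$ rather than $\mathcal{C}$.

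Once the tensor factors are put back in the right order, the rest of your argument goes through and coincides with the paper's (terse) proof: $\omega=\dim[\centre(\mathcal{C})]$ counts the isotypic components by Proposition~\ref{reprtheory}(b); the bound $\sum_{j}k_j\leq k$ follows from $k=\sum_j e_jk_j$ with $e_j\geq 1$, with equality precisely when all multiplicities $e_j$ equal one --- i.e.\ the abelian-commutant case of Theorem~\ref{thm:double-centralizer}, not, as you state, the case where every $\phi_j$ is one-dimensional; and maximality forces $\fg=\cent_{\su(k)}(\mathcal{C})$, which is the asserted $\mathfrak{s}[\sum_{j}\uu(k_j)]$.
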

\begin{proof}
Using Proposition~\ref{reprtheory} (and its notation) one obtains that $\fg$ is equivalent to 
$\oplus_{j=1}^{w} \left[ \unity_{e_j} \otimes \phi_j(g) \right]$.
Therefore, $\fg$ is a subalgebra
of $\mathfrak{s}[\sum_{j=1}^{\omega} \uu(k_j)]$ with $\sum_{j=1}^{\omega} k_j\leq k$.
The maximality of $\fg$ completes the proof.

In a dual approach, one could start from a set $S$ of symmetries
of $\fg$. Due to the maximality of $\fg$, the set $S$  has to comprise {\em all 
symmetries of} $\fg$. Next, one can apply Proposition~\ref{reprtheory} to the 
subalgebra of $\su(k)$ generated by the linear span intersected with $\su(k)$, 
i.e.~$\langle S\rangle \cap \su(k)$. 
The theorem then follows directly
using Schur's lemma and the maximality of $\fg$.
\end{proof}

The reader familiar with the double-commutant theorem in algebraic quantum mechanics will
wonder about the different power of symmetries for characterizing algebras of observables on the
one hand and Lie algebras on the other:
a von-Neumann algebra $\mathcal{A}$ is entirely determined by its commutant 
$\mathcal{A}'$, since $\mathcal{A}''=\mathcal{A}$ \cite{Dix81,Sak71}. 
In this sense, there is a duality between the algebra $\mathcal{A}$ and its
commutant $\mathcal{A'}$ encapsulating all \emph{symmetries}.
On the other hand,
consider the illustrative case of an irreducible Lie subalgebra $\fg$ of $\su(k)$
\footnote{Note that irreducible subalgebras of $\su(k)$ are semisimple (or even simple).},
where
the centralizer $\cent_{\su(k)}(\fg)$ is trivial, i.e.\ zero. This centralizer
is shared with \emph{all} irreducible Lie subalgebras of $\su(k)$.
So in turn, the double centralizer in $\su(k)$ to all these subalgebras is $\su(k)$ itself. 
We thus obtain the following
corollary to Proposition~\ref{Thm:DCT2} and Theorem~\ref{thm:double-centralizer},
where the double centralizer gives a maximality criterion ensuring
that an irreducible subalgebra $\fg$ of $\su(k)$ is in fact fulfilling $\fg=\su(k)$ 
\footnote{Note that the condition $\cent_{\su(k)}(\{0\})=\fg$  is not easily tested using
only a set of generators of $\fg$.}:

\begin{repcorollary}{cor_double}
Let $\fg$ denote an irreducible subalgebra of $\su(k)$, i.e.\ 
$\cent_{\su(k)}(\fg)=\{0\}$. Then one finds that
$\cent_{\su(k)}(\cent_{\su(k)}(\fg))=\fg$ if and only if $\fg=\su(k)$
\end{repcorollary}

Note that Corollary~\ref{cor_double} can be readily generalized: Let $\fg,\fh$ denote 
two irreducible subalgebras of $\su(k)$ with $\fg\subseteq\fh\subseteq\su(k)$ so that
$\cent_{\fh}(\fg)=\{0\}=\cent_{\fh}(\fh)$. Then one finds 
$\cent_{\fh}(\cent_{\fh}(\fg))=\fg$ if and only if $\fg=\fh$.

Summarizing the general case, the symmetry properties of a Lie algebra $\fg\subseteq \su(k)$,
as given by its commutant w.r.t.\ a representation of $\fg$,
do \emph{not} determine the Lie algebra $\fg$ uniquely. Yet the commutant allows us to infer a 
\emph{unique maximal Lie algebra}
contained in $\su(k)$, which is (up to an identity matrix) equal to the double commutant of $\fg$,
but in general not to $\fg$ itself. Although all representations of compact Lie algebras, such as $\su(k)$ 
and its semisimple subalgebras, 
are completely reducible, the situation for Lie algebras also
differs from the case of \emph{associative algebras}: here complete reducibility of a representation 
implies the double-commutant theorem (see Theorem~(3.5.D) of \cite{Weyl} or Theorem~4.1.13 of 
\cite{GoodmanWallach09}),
whereas the double-commutant theorem does not apply to Lie algebras as discussed above.

\section{Parameterizations of Quadratic Hamiltonians\label{ParaQuadratic}}

In this appendix, we discuss various parameterizations of quadratic Hamiltonians
related to the one of Eq.~\eqref{Hqfree} in Sec.~\ref{QUASI}.
We start with the parametrization
\begin{align*}
H := \sum_{p,q=1}^d C_{pq} f_p f_q + D_{pq} f_p f^{\dagger}_q 
+E_{pq} f^{\dagger}_p f_q + F_{pq} f^{\dagger}_p f^{\dagger}_q
\end{align*}
by complex $d\times d$-matrices $C$, $D$, $E$, and $F$.
Hermiticity of $H$ requires 
$C=F^{\dagger}$, $D=D^{\dagger}$, and  $E=E^{\dagger}$, while
the (anti-)commutator relations enforce
$C=-C^{t}$, $D=-E^{t}$, and $F=-F^{t}$. 
Setting $A:=2E$ and $B:=-2 C^{*}$, we recover the notation of
Eq.~\eqref{Hqfree} and obtain
\begin{align*}
&H = \frac{1}{2} \sum_{p,q=1}^d -B^{*}_{pq} f_p f_q - A^{*}_{pq} f_p f^{\dagger}_q 
+ A_{pq} f^{\dagger}_p f_q +  B_{pq} f^{\dagger}_p f^{\dagger}_q\\
&\phantom{H} = \frac{1}{2} \sum_{p,q=1}^d - B^{*}_{pq} f_p f_q + 2 A_{pq} (f^{\dagger}_p f_q 
{-} \delta_{pq} \tfrac{\unity}{2}) + B_{pq} f^{\dagger}_p f^{\dagger}_q\\
&=\frac{1}{2} \sum_{p,q=1}^d
\Re(B_{pq}) \left( f^{\dagger}_p f^{\dagger}_q {-} f_p f_q \right)
+\Re(A_{pq}) \left( f^{\dagger}_p f_q {-} f_p f^{\dagger}_q \right)\\
&\phantom{=}
+\Im(B_{pq})\, i\left( f^{\dagger}_p f^{\dagger}_q {+} f_p f_q \right)
+ \Im(A_{pq})\, i\left( f^{\dagger}_p f_q {+} f_p f^{\dagger}_q \right).
\end{align*}
Note
$\Re(A)=\Re(A)^t$, $\Im(A)=-\Im(A)^t$,
$\Re(B)=-\Re(B)^t$, and $\Im(B)=-\Im(B)^t$
which is a consequence of $A=A^{\dagger}$ and $B=-B^t$.
We rewrite the Hamiltonian using Majorana operators
such that
\begin{equation*}
-i H = -\frac{1}{2} \left[ \sum_{p=1}^d -\Re(A_{pp})\, m_{2p-1} m_{2p} + \sum_{p,q=1;p>q}^d V_{pq}\right],
\end{equation*}
where
\begin{align*}
V_{pq}=&
- 
\Re(B_{pq}) \left(m_{2p-1} m_{2q} - m_{2q-1} m_{2p} \right)\\
&
- 
\Re(A_{pq}) \left(m_{2p-1} m_{2q} + m_{2q-1} m_{2p} \right)\\
&
-
\Im(B_{pq}) \left(m_{2p-1} m_{2q-1} - m_{2p} m_{2q} \right)\\
&
-
\Im(A_{pq}) \left(m_{2p-1} m_{2q-1} + m_{2p} m_{2q} \right).
\end{align*}
By applying the Jordan-Wigner transformation we obtain the Hamiltonian for 
the corresponding spin system 
(for better readability, the tensor-product symbol is omitted, e.g.,
$\mathrm{I}\mathrm{X}\mathrm{Y}:=\mathrm{I}\otimes\mathrm{X}\otimes\mathrm{Y}$):
\begin{equation*}
-iH=-\frac{i}{2}\Big[\sum_{p=1}^d -\Re(A_{pp}) \underbrace{\mathrm{I}\cdot\cdot\,\mathrm{I}}_{p-1}
\mathrm{Z}\underbrace{\mathrm{I}\cdot\cdot\,\mathrm{I}}_{d-p}
+ \sum_{p,q=1;p>q}^d W_{pq}\Big]
\end{equation*}
with $W_{pq}=$
\begin{align*}
&+
\Re(B_{pq}) \big(
\underbrace{\mathrm{I}\cdot\cdot\,\mathrm{I}}_{q-1}
\mathrm{X} \underbrace{\mathrm{Z}\cdot\cdot\,\mathrm{Z}}_{p-q-1}
\mathrm{X} \underbrace{\mathrm{I}\cdot\cdot\,\mathrm{I}}_{d-p}
-
\underbrace{\mathrm{I}\cdot\cdot\,\mathrm{I}}_{q-1}
\mathrm{Y} \underbrace{\mathrm{Z}\cdot\cdot\,\mathrm{Z}}_{p-q-1}
\mathrm{Y} \underbrace{\mathrm{I}\cdot\cdot\,\mathrm{I}}_{d-p}
\big) \\
&+
\Re(A_{pq})
\big(
\underbrace{\mathrm{I}\cdot\cdot\,\mathrm{I}}_{q-1}
\mathrm{X} \underbrace{\mathrm{Z}\cdot\cdot\,\mathrm{Z}}_{p-q-1}
\mathrm{X} \underbrace{\mathrm{I}\cdot\cdot\,\mathrm{I}}_{d-p}
+
\underbrace{\mathrm{I}\cdot\cdot\,\mathrm{I}}_{q-1}
\mathrm{Y} \underbrace{\mathrm{Z}\cdot\cdot\,\mathrm{Z}}_{p-q-1}
\mathrm{Y} \underbrace{\mathrm{I}\cdot\cdot\,\mathrm{I}}_{d-p}
\big)
\\
&
-
\Im(B_{pq})
\big(
\underbrace{\mathrm{I}\cdot\cdot\,\mathrm{I}}_{q-1}
\mathrm{Y} \underbrace{\mathrm{Z}\cdot\cdot\,\mathrm{Z}}_{p-q-1}
\mathrm{X} \underbrace{\mathrm{I}\cdot\cdot\,\mathrm{I}}_{d-p}
+
\underbrace{\mathrm{I}\cdot\cdot\,\mathrm{I}}_{q-1}
\mathrm{X} \underbrace{\mathrm{Z}\cdot\cdot\,\mathrm{Z}}_{p-q-1}
\mathrm{Y} \underbrace{\mathrm{I}\cdot\cdot\,\mathrm{I}}_{d-p}
\big)\\
&
+
\Im(A_{pq})
\big(
\underbrace{\mathrm{I}\cdot\cdot\,\mathrm{I}}_{q-1}
\mathrm{Y} \underbrace{\mathrm{Z}\cdot\cdot\,\mathrm{Z}}_{p-q-1}
\mathrm{X} \underbrace{\mathrm{I}\cdot\cdot\,\mathrm{I}}_{d-p}
-
\underbrace{\mathrm{I}\cdot\cdot\,\mathrm{I}}_{q-1}
\mathrm{X} \underbrace{\mathrm{Z}\cdot\cdot\,\mathrm{Z}}_{p-q-1}
\mathrm{Y} \underbrace{\mathrm{I}\cdot\cdot\,\mathrm{I}}_{d-p}
\big).
\end{align*}

\section{Applications of Quasifree Fermions to Spin Systems\label{appl_quasi}}
Here we take new fermionic approaches to exhaustively prove and improve some 
results of Ref.~\cite{ZS11}, where some proofs were still sketchy---thereby also filling a 
desideratum voiced in \cite{WBS12}.

\subsection{A Spin System with System Algebra $\so(2n+1)$}

\begin{proposition}[see Proposition~27 in \cite{ZS11}] \label{prop:so2n+1}
Consider a Heisenberg-{\XX} chain with the drift Hamiltonian 
$$H_d=\mathrm{XX\cdot\cdot\,II}+\mathrm{YY\cdot\cdot\,II}+\cdots+\mathrm{II\cdot\cdot\,XX}
+\mathrm{II\cdot\cdot\,YY}$$
on $n$ spin-$\tfrac{1}{2}$ qubits with $n\geq 2$. Assume that one end qubit is individually 
locally controllable. 
The system algebra is given as the subalgebra $\so(2n+1)$
which is irreducibly embedded in $\su(2^n)$.
\end{proposition}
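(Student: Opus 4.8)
The plan is to move to the Majorana picture via the Jordan--Wigner transformation and to show that the system algebra is the real span of all linear and all quadratic Majorana monomials in the $2n$ Majorana operators $m_1,\dots,m_{2n}$, which is a faithful model of the spin representation of $\so(2n{+}1)$.

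First I would record the Jordan--Wigner images of the generators. Up to a real scalar the Heisenberg-$\XX$ drift $H_d$ becomes $v_1=\sum_{p=1}^{n-1}(-m_{2p-1}m_{2p+2}+m_{2p}m_{2p+1})$, i.e.\ the element $w_1=L(v_1)$ of Eq.~\eqref{v_ops} with $d=n$, while individual local controllability of the first qubit supplies $iX_1=im_1$ and $iY_1=im_2$. That $m_1,m_2$ are \emph{odd} operators is harmless, since the simulating spin system is not subject to the parity superselection rule (Sec.~\ref{parity}). As $[im_1,im_2]=-2m_1m_2$, the quadratic element $w_2=L(m_1m_2)$ is also available, so Lemma~\ref{quad_ex_lem} applied to $\{w_1,w_2\}$ places $L(a_p)$, $L(b_p)$ and $L(c_p)$ into the system algebra for all admissible indices; in particular $L(c_p)=-\tfrac12(m_{2p-1}m_{2p+1}+m_{2p}m_{2p+2})$ is available for $p=1,\dots,n{-}1$.

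Next I would show, by induction on $p$, that every $m_a$ ($1\le a\le 2n$) lies in the system algebra: $m_1,m_2$ are given, and from $m_{2p-1},m_{2p}$ the brackets $[im_{2p-1},L(c_p)]=-im_{2p+1}$ and $[im_{2p},L(c_p)]=-im_{2p+2}$ (which only use $[m_a,m_bm_c]=2\delta_{ab}m_c-2\delta_{ac}m_b$) produce the next pair. Then $[im_a,im_b]=-2m_am_b$ shows that every quadratic monomial $m_am_b$ ($a\neq b$) is in the system algebra as well, so the system algebra contains
\begin{equation*}
\fg:=\langle\, im_1,\dots,im_{2n}\,\rangle_\R\;\oplus\;\langle\, m_am_b : 1\le a<b\le 2n \,\rangle_\R .
\end{equation*}
Conversely $\fg$ is itself a subalgebra of $\su(2^n)$: its elements are skew-Hermitian and traceless, and $\fg$ is closed under commutators because brackets of two linear (resp.\ two quadratic) monomials are quadratic --- by Eq.~\eqref{eq:quad-majorana}, whose scalar term appears only for two \emph{equal} quadratic monomials, whence the bracket vanishes --- whereas brackets of a linear with a quadratic monomial are linear. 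Since all generators lie in $\fg$, the system algebra equals $\fg$.

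The step that needs the most care, and which I expect to be the main obstacle, is identifying $\fg$ abstractly: $\dim\fg=2n+\binom{2n}{2}=n(2n{+}1)$ is simultaneously $\dim\so(2n{+}1)$ and $\dim\spp(n)$, so a dimension count cannot separate the types $B_n$ and $C_n$. I would settle this with the Clifford-algebra description: the operators $f_a:=im_a$ satisfy $f_af_b+f_bf_a=-2\delta_{ab}\unity$ and therefore, acting on the $2^n$-dimensional Fock space, generate a copy of the even Clifford algebra $Cl(2n{+}1)^{\mathrm{even}}\cong Cl(2n)\cong\Mat_{2^n}(\C)$; under the standard isomorphism sending the abstract generators $e_ae_{2n+1}\mapsto f_a$ (and hence $e_ae_b\mapsto -f_af_b$ for $a,b\le 2n$), the Lie subalgebra $\mathfrak{spin}(2n{+}1)=\langle\tfrac12 e_ie_j\rangle_\R$ is carried exactly onto $\langle f_a\rangle_\R\oplus\langle f_af_b\rangle_\R=\fg$. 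Hence $\fg\cong\so(2n{+}1)$, realized through its spin representation; since $\Mat_{2^n}(\C)$ has a unique simple module, this representation is irreducible, which gives $\cent_{\su(2^n)}(\fg)=\{0\}$ and completes the proof.
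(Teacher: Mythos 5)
Your proposal is correct, and its first half (passing to Majorana operators, recovering $L(m_1m_2)$ from $[im_1,im_2]$, invoking Lemma~\ref{quad_ex_lem}, and walking down the chain via brackets of linear with quadratic monomials to obtain all $m_a$ and hence all $m_am_b$) is essentially the paper's argument; your explicit closure check, showing that linear plus quadratic Majorana monomials form a Lie subalgebra so that nothing of higher degree appears, makes precise a step the paper only asserts. Where you genuinely diverge is in identifying the resulting algebra. The paper computes the dimension $2n^2+n$ and the rank $n$, establishes irreducibility by a centralizer computation, invokes Theorem~6 of \cite{ZS11} (connected coupling graph) to conclude simplicity, and then enumerates the simple compact Lie algebras with these invariants; since $\spp(n)$ and $\fe_6$ survive this count, it must additionally exhibit $\so(2n)$ as a maximal-rank subalgebra and appeal to Borel--de Siebenthal to exclude them. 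You instead recognize $\langle im_a\rangle_{\R}\oplus\langle m_am_b\rangle_{\R}$ as the image of $\mathfrak{spin}(2n+1)$ under the standard isomorphism $Cl(2n+1)^{\mathrm{even}}\cong Cl(2n)\cong\Mat_{2^n}(\C)$, which settles the $B_n$-versus-$C_n$ ambiguity in one stroke and delivers irreducibility for free (any invariant subspace is invariant under the associative algebra generated, which is the full matrix algebra); compactness of the real form follows from skew-Hermiticity. Your route is shorter and self-contained for this particular algebra; the paper's route has the advantage of reusing machinery (rank counts, simplicity via the coupling graph, maximal subalgebra arguments) that it deploys uniformly across the other controllability results, as its own ``Principle Remark'' after Proposition~\ref{prop:so2n+2} makes explicit.
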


\begin{proof}
We use the fermionic picture where the number of modes $d$ equals the number
of spins $n$. The generators are given by $w_1=L(v_1)$ with
$v_1= \sum_{p=1}^{d-1} - m_{2p-1} m_{2p+2} + m_{2p} m_{2p+1}$, $L(m_1)$, and $L(m_2)$.
Obviously, the element $w_2=L(v_2)$ with $v_2 = m_1 m_2$ can also be generated. 
One can verify that exactly all Majorana operators of degree one or two can be obtained:
One line of reasoning uses Lemma~\ref{quad_ex_lem} together with
the commutation relations $[L(m_{2p-1}),(b_p)] = L(m_{2p+2})$ and
$[L(m_{2p}),L(b_p)] = -L(m_{2p+1})$ to show that all degree-one operators
can be generated. This immediately gives all quadratic operators as well, while operators
of higher degree are not attainable. Therefore, the dimension of the system algebra 
is $2d^2+d$. Note that the operators $L(m_{2p-1} m_{2p})$
form a maximal abelian subalgebra $\fa$ which proves that the system algebra has rank $d$.
In the spin picture, we can directly verify that $\fa=
\langle -i\mathrm{Z}_1/2,\ldots,-i\mathrm{Z}_n/2 \rangle_{\rm Lie}$
by computing the centralizer $\fc_\fa:=\langle -\tfrac{i}{2} \prod_{j\in S} \mathrm{Z}_j 
\,|\, \{\} \ne S \subset \{1,\ldots,n\}  \rangle_{\rm Lie}$ of $\fa$ in $\su(2^n)$.
Let us compute the centralizer $\fc_\fb$ of $\fb=\langle m_{2p} m_{2p+1}, -m_{2p-1} m_{2p+2} \,|\, p 
\in \{1,\ldots,d{-}1 \} \rangle_{\rm Lie}$ in $\su(2^n)$. Note that the generators of $\fb$ are given in the 
spin picture by
$-i\mathrm{X}_p \mathrm{X}_{p+1} /2$ and $-i\mathrm{Y}_p \mathrm{Y}_{p+1} /2$.
One can readily show by induction that $\fc_\fb=\langle -\tfrac{i}{2} \prod_{j=1}^n \mathrm{X}_j, 
 -\tfrac{i}{2} \prod_{j=1}^n \mathrm{Y}_j, -\tfrac{i}{2} \prod_{j=1}^n \mathrm{Z}_j \rangle_{\rm Lie}$.
It follows that the centralizer $\fc$ of the full system algebra in $\su(2^n)$ has to be contained in
$\fc_\fa \cap \fc_\fb=\langle -\tfrac{i}{2} \prod_{j=1}^n \mathrm{Z}_j \rangle_{\rm Lie}$.
One can now easily prove that the centralizer of the full system algebra
in $\su(2^n)$ is trivial and  that the 
system algebra is irreducibly embedded  in $\su(2^d)$.
As the coupling graph of the spin 
system is connected, we conclude with Theorem~6 of \cite{ZS11} that the system algebra is
simple. Listing all simple (and compact) Lie algebras with the correct dimension and rank,
we obtain (a) $\so(2d+1)$ for $d\geq 1$, (b) $\spp(d)$ for $d \geq 1$, (c) $\su(2)\cong \so(3)$
for $d=1$, and (d) $\fe_6$ for $d=6$. As the system algebra contains also all quadratic
operators, it has a subalgebra $\so(2d)$ which is of maximal rank.
This rules out the cases (b) and (d) (see p.~219 of \cite{BS49}
or Sec.~8.4 of \cite{GG78}) for $d\neq2$.
But the case (b) agrees with (a) for $d=2$.
For $d=1$, the cases (a) and (c) coincide. This completes the proof.
\end{proof}

Note that with our fermionic approach one can readily determine
the dimension and rank of the system algebra. Likewise, we establish that
all fermionic operators act irreducibly from which we can infer that the system algebra
is simple. The rest of the proof follows by an exhaustive enumeration.---In more general terms, 
as in Theorem~34 and Corollary~35 of \cite{ZS11}, we  
connect a spin system with a fictitious fermionic system:
\begin{corollary}
Consider a fictitious fermionic system with $d$ modes which consists of
all linear and quadratic operators and whose generators can, e.g., be chosen
as all Majorana operators of type $L(m_{2p-1})$ combined with the Hamiltonian
from Eq.~\eqref{Hqfree} where the control functions 
$A_{pq}$ and $B_{pq}$ can be assumed to be real.
This fictitious fermionic system and the spin system of Proposition~\ref{prop:so2n+1}
with $n=d$ spins can simulate each other. In particular, both can simulate
a general quasifree fermionic system with $d$ modes and system algebra $\so(2d)$
as presented in Proposition~\ref{quasifree} and Theorem~\ref{quad_ex_so}.
\end{corollary}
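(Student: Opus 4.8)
The plan is to observe that this corollary is essentially a bookkeeping consequence of the structure results already established, so the whole argument reduces to identifying the three relevant system algebras as explicit subalgebras of $\su(2^d)$ (via the fixed Jordan-Wigner transformation) and then applying the simulability criterion~\eqref{eqn:simulability}, $\fg_{\Sigma_A}\supseteq\fg_{\Sigma_B}$.

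First I would pin down the system algebra $\fg_{\mathrm f}$ of the fictitious fermionic system. Its quadratic generators already generate $\so(2d)$, i.e.\ all quadratic Majorana monomials $L(m_km_\ell)$, by Theorem~\ref{quad_ex_so} together with the observation that its generators $w_1,w_2,w_3$ can be realized with real control functions $A_{pq},B_{pq}$ in Eq.~\eqref{Hqfree}. Using the elementary identity $[m_a,m_bm_c]=2\delta_{ab}m_c-2\delta_{ac}m_b$, the bracket $[L(m_{2p-1}),L(m_{2p-1}m_{2q})]$ is a nonzero multiple of $L(m_{2q})$, so together with the generators $L(m_{2p-1})$ one obtains every linear operator $L(m_k)$ with $k\in\{1,\dots,2d\}$ (indeed a single $L(m_1)$ would already suffice). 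Conversely, Eq.~\eqref{eq:quad-majorana} together with $[m_a,m_b]=2m_am_b$ for $a\ne b$ shows that the real span of all linear and all quadratic Majorana operators is closed under the commutator, and its dimension is $2d+\binom{2d}{2}=2d^2+d$. Hence $\fg_{\mathrm f}$ equals this span, which — exactly as in the proof of Proposition~\ref{prop:so2n+1} — is the irreducibly embedded copy of $\so(2d+1)$ in $\su(2^d)$.

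Next I would recall that Proposition~\ref{prop:so2n+1} with $n=d$ identifies the system algebra $\fg_{\mathrm{spin}}$ of the spin chain with exactly this same subalgebra: in the Jordan-Wigner picture its generators $w_1$, $L(m_1)$, $L(m_2)$ generate precisely all Majorana operators of degree one or two. Since $\fg_{\mathrm f}$ and $\fg_{\mathrm{spin}}$ coincide as subalgebras of $\su(2^d)$, criterion~\eqref{eqn:simulability} yields both inclusions $\fg_{\mathrm f}\supseteq\fg_{\mathrm{spin}}$ and $\fg_{\mathrm{spin}}\supseteq\fg_{\mathrm f}$, so the two systems simulate each other. For the ``in particular'' claim, by Proposition~\ref{quasifree} the quadratic Majorana monomials span a subalgebra isomorphic to $\so(2d)$, which by Theorem~\ref{quad_ex_so} is the system algebra of a generic quasifree fermionic system with $d$ modes; this subalgebra sits inside the linear-plus-quadratic algebra $\so(2d+1)$, so~\eqref{eqn:simulability} shows that both the fictitious fermionic system and the spin chain can simulate it.

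The step I expect to require the most care is not a hard estimate but a matter of consistency: ensuring that all three system algebras are genuinely regarded inside one and the same $\su(2^d)$ under the fixed Jordan-Wigner identification, since~\eqref{eqn:simulability} demands honest containment of subalgebras rather than mere abstract isomorphism. Everything else is a direct corollary of Theorem~\ref{quad_ex_so}, Proposition~\ref{quasifree}, and Proposition~\ref{prop:so2n+1}.
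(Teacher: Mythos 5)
Your proposal is correct and follows essentially the same route the paper intends: the corollary is stated there without a separate proof, as an immediate consequence of Proposition~\ref{prop:so2n+1} (whose proof already identifies the span of all degree-one and degree-two Majorana operators as the irreducibly embedded $\so(2d+1)$) together with Theorem~\ref{quad_ex_so} and the containment criterion~\eqref{eqn:simulability}. Your explicit verification that the generators $L(m_{2p-1})$ plus the real-coefficient quadratic Hamiltonians close onto exactly the linear-plus-quadratic span, and that this coincides with the spin chain's algebra under the fixed Jordan--Wigner identification, correctly fills in the details the paper leaves implicit.
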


\subsection{A Spin System with System Algebra $\so(2n+2)$}

\begin{proposition}[see Proposition~28 in \cite{ZS11}]\label{prop:so2n+2}
Consider a Heisenberg-{\XX} chain with the drift Hamiltonian 
$$H_d=\mathrm{XX\cdot\cdot\,II}+\mathrm{YY\cdot\cdot\,II}+\cdots+\mathrm{II\cdot\cdot\,XX}
+\mathrm{II\cdot\cdot\,YY}$$ on $n$ spin-$\tfrac{1}{2}$ qubits with $n\geq 2$. Assume that 
each of the two end qubits is individually locally controllable. 
The system algebra is given as the subalgebra $\so(2n+2)$
which is irreducibly embedded in $\su(2^n)$.
\end{proposition}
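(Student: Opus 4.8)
The plan is to mimic the strategy used for Proposition~\ref{prop:so2n+1}, again passing to the fermionic picture via the Jordan-Wigner transformation with $d=n$ modes. The drift Hamiltonian $H_d$ maps to $w_1=L(v_1)$ with $v_1=\sum_{p=1}^{d-1}-m_{2p-1}m_{2p+2}+m_{2p}m_{2p+1}$, local control on the first qubit gives $L(m_1)$ and $L(m_2)$, and local control on the last qubit gives $L(m_{2d-1})$ and $L(m_{2d})$ (here one must track the Jordan-Wigner tail of $\mathrm{Z}$'s: $\mathrm{X}_d$ and $\mathrm{Y}_d$ correspond to $L$ applied to $\mathrm{Z}^{\otimes d-1}\otimes\mathrm{X}$ and $\mathrm{Z}^{\otimes d-1}\otimes\mathrm{Y}$, which are $m_{2d-1}$ and $m_{2d}$ up to the parity factor $P$; so strictly speaking one generates these Majorana operators only modulo $\langle\unity,P\rangle$, but since we work inside $\su(2^{d-1})\oplus\su(2^{d-1})$ this is harmless). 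First I would invoke Lemma~\ref{quad_ex_lem} together with the commutators $[L(m_{2p-1}),L(b_p)]=L(m_{2p+2})$ and $[L(m_{2p}),L(b_p)]=-L(m_{2p+1})$, exactly as in Proposition~\ref{prop:so2n+1}, to show that from $w_1$ and $L(m_1),L(m_2)$ alone one already generates \emph{all} degree-one Majorana operators $L(m_k)$, $k\in\{1,\dots,2d\}$, and hence all quadratic ones. The extra end-qubit controls therefore do not enlarge the span of attainable operators beyond the linear-plus-quadratic ones; the system algebra is the same $\so(2d+1)$-type object of dimension $2d^2+d$ and rank $d$ obtained in Proposition~\ref{prop:so2n+1}.

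This is the crux of the matter, and it shows the stated Proposition~\ref{prop:so2n+2} \emph{as literally written cannot be what is intended} unless the drift is genuinely different. Re-reading, the only way to get $\so(2n+2)$ rather than $\so(2n+1)$ is that controlling \emph{both} ends supplies an operator that is \emph{not} reachable from one end — but as just argued, one end already produces every linear Majorana operator. The resolution I would adopt is that the relevant distinction lies in whether $P$ (equivalently $L(\prod_q m_q)$) becomes available: with both ends controlled one can form $L(m_1)L(m_2)\cdots$ type products reaching the full even algebra only up to the same obstruction, so what actually changes is the ambient representation. Concretely, I expect the correct reading to be that the system algebra is the one spanned by all Majorana operators of degree $\le 2$ \emph{together with} the operators $L(m_1 m_2 \cdots)$-type products that the two-ended control makes accessible, yielding $\so(2d+2)$ via the standard embedding $\so(2d)\subset\so(2d+1)\subset\so(2d+2)$. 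I would then run the identical endgame as before: compute the centralizer of the quadratic part, compute the centralizer of the full generated algebra in $\su(2^d)$, show it is trivial so the algebra acts irreducibly, invoke connectedness of the coupling graph and Theorem~6 of \cite{ZS11} to conclude simplicity, and finish by listing all simple compact Lie algebras of the appropriate dimension and rank, ruling out $\spp$ and exceptional cases by the presence of a maximal-rank $\so(2d)$ subalgebra (p.~219 of \cite{BS49} or Sec.~8.4 of \cite{GG78}).

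The main obstacle, then, is pinning down \emph{precisely which new generator} the second end-qubit control contributes; this is a bookkeeping question about the Jordan-Wigner tails and the parity operator $P$, and it requires care because naively $L(m_{2d-1})$ and $L(m_{2d})$ are already in the algebra generated by the single-end system. Once the genuinely new element is identified — I anticipate it is an odd-degree Majorana \emph{monomial of maximal support}, or equivalently the element that promotes the irreducible $\so(2d+1)\hookrightarrow\su(2^d)$ to an irreducible $\so(2d+2)\hookrightarrow\su(2^d)$ — the dimension count ($2d^2+3d+1=\binom{2d+2}{2}$) and rank count ($d+1$) must be re-verified, and the exhaustive classification step repeated with these new numbers. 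The irreducibility and simplicity arguments transfer verbatim from Proposition~\ref{prop:so2n+1}, so the only substantive new work is this single generator identification and the corresponding adjustment of the final Lie-algebra enumeration.
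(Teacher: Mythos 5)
Your proposal founders on a concrete Jordan--Wigner bookkeeping error, and the confusion it causes propagates through the whole argument. Local control of the \emph{last} qubit supplies $\mathrm{I}^{\otimes(d-1)}\otimes\mathrm{X}$ and $\mathrm{I}^{\otimes(d-1)}\otimes\mathrm{Y}$, and under the inverse Jordan--Wigner map these are \emph{not} $L(m_{2d-1})$ and $L(m_{2d})$: the degree-one Majorana operators $m_{2d-1}$ and $m_{2d}$ correspond to $\mathrm{Z}^{\otimes(d-1)}\otimes\mathrm{X}$ and $\mathrm{Z}^{\otimes(d-1)}\otimes\mathrm{Y}$, i.e.\ they carry the $\mathrm{Z}$-string. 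Stripping that string costs a factor $\mathrm{Z}^{\otimes(d-1)}\otimes\mathrm{I}\propto\prod_{p=1}^{d-1}m_{2p-1}m_{2p}$, so the genuinely new generators are the degree-$(2d{-}1)$ monomials $L(m_{2d-1}\prod_{p=1}^{d-1}m_{2p-1}m_{2p})$ and $L(m_{2d}\prod_{p=1}^{d-1}m_{2p-1}m_{2p})$ --- this is exactly how the paper writes them. Your parenthetical escape route ("these are $m_{2d-1}$ and $m_{2d}$ up to the parity factor $P$, which is harmless inside $\su(2^{d-1})\oplus\su(2^{d-1})$") does not work: this is a spin system, the parity superselection rule does not apply (the paper says so explicitly in Sec.~III.B), $P$ is not central in $\su(2^d)$, and $m_{2d-1}$ versus $m_{2d-1}P$ generate genuinely different Lie closures because $P$ anticommutes with odd monomials. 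Consequently your central claim --- that the second end qubit adds nothing and the statement "as literally written cannot be what is intended" --- is false; the proposition is correct as stated.

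Once the new generators are identified correctly, the rest goes through essentially as you sketched and as the paper does it: one verifies that exactly the Majorana monomials of degrees $1$, $2$, $2d{-}1$, and $2d$ are generated, giving dimension $2d+d(2d{-}1)+2d+1=2d^2+3d+1=\binom{2d+2}{2}$; the maximal abelian subalgebra is spanned by the $L(m_{2p-1}m_{2p})$ together with $L(\prod_{p=1}^{d}m_{2p-1}m_{2p})$, giving rank $d+1$; irreducibility, simplicity via Theorem~6 of \cite{ZS11}, and the enumeration of simple compact algebras of that dimension and rank then yield $\so(2d+2)$. You correctly anticipated those final numbers and the structure of the endgame, but the one substantive new step of this proposition --- identifying which operators the second end qubit actually contributes and showing the closure stops at degrees $\{1,2,2d{-}1,2d\}$ --- is precisely the step your write-up gets wrong and then leaves unresolved.
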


\begin{proof}
We switch to a fermionic picture where the number of modes $d$ equals the number
of spins $n$. The generators are $w_1=L(v_1)$ with 
$v_1= \sum_{p=1}^{d-1} - m_{2p-1} m_{2p+2} + m_{2p} m_{2p+1}$, $L(m_1)$, 
$L(m_2)$, $L(m_{2d-1} \prod_{p=1}^{d-1} m_{2p-1}m_{2p})$, and 
$L(m_{2d} \prod_{p=1}^{d-1} m_{2p-1}m_{2p})$.
One can verify by explicit computations that exactly all Majorana operators
of degree one, two, $2d-1$, and $2d$ can be generated.
Therefore, the dimension of the system algebra 
is $2d^2+3d+1$. 
Using a similar argument as in the proof of Proposition~\ref{prop:so2n+1}, we conclude
that the operators $L(m_{2p-1} m_{2p})$ 
together with the operator $L(\prod_{p=1}^{d} m_{2p-1}m_p)$
form a maximal abelian subalgebra which proves that the system algebra has rank $d+1$.
One can also show 
that the system algebra
is irreducibly embedded in $\su(2^d)$. As the coupling graph of the spin system
is connected, we conclude with Theorem~6 of \cite{ZS11} that the system algebra is
simple. The proof is completed
by listing all simple (and compact) Lie algebras with the correct dimension and rank.
We obtain (a) $\so(2d+2)$ for $d\geq 1$ and (b) $\su(4)\cong \so(6)$ for $d=2$.
\end{proof}

\emph{Principle Remark:} Now we have established a setting that allows for
exploiting the powerful general results of \cite{MS43} 
on the structure of orthogonal groups that
provide a second avenue to Proposition~\ref{prop:so2n+1}
assuming we have already established Proposition~\ref{prop:so2n+2}:
Lemmata~3 and 4 of \cite{MS43} show
that for $k \geq 3$ any subalgebra of $\so(k)$ with dimension $(k-1)(k-2)/2$ is isomorphic
to $\so(k-1)$; moreover $\so(k-1)$ is a maximal subalgebra of $\so(k)$.
Thus, by proving that the system algebra has dimension $2d^2+d$ with
$d\geq 1$, it can be identified as the subalgebra $\so(2d+1)$ of $\so(2d+2)$.
We emphasize that this particular proof technique should be widely 
applicable in quantum systems theory.

Relying on the proof of Proposition~\ref{prop:so2n+2} and building on 
Theorem~32 as well as Corollary~34 of \cite{ZS11}, we obtain 
connections between a spin system, a quasifree fermionic system, and 
a fictitious fermionic system:
\begin{corollary}
The following control systems all have the system algebra $\so(2k+2)$ and
can simulate each other:
(a) the spin system of Proposition~\ref{prop:so2n+2} with $k$ spins,\\
(b) the quasifree fermionic system with $k+1$ modes 
as presented in Proposition~\ref{quasifree} and Theorem~\ref{quad_ex_so}, and\\
(c) a fictitious fermionic system with $k$ modes which contains all Majorana operators of
degree one, two, $2k{-}1$, and $2k$, and whose
generating Hamiltonian can be chosen from Eq.~\eqref{Hqfree} where the
control functions 
$A_{pq}$ and $B_{pq}$ can be assumed to be real.
\end{corollary}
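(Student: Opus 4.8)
The plan is to prove the corollary in three stages: (i) identify the system algebra of each of the three control systems as $\so(2k+2)$; (ii) observe that systems (a) and (c) are one and the same system written in the spin and in the fermionic language, so they simulate each other trivially; (iii) link system (b) to (a)/(c) through the spin--fermion simulation correspondence of \cite{ZS11}. Once all three system algebras are known to coincide, mutual simulability follows from the reachable-set characterization around Eq.~\eqref{eqn:simulability}, in the form extended to systems living on Hilbert spaces of different dimension that was developed in \cite{ZS11}. Assembling these three stages completes the proof.

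For stage (i), system (a) is handled directly by Proposition~\ref{prop:so2n+2}: with $n=k$ spins the algebra is the copy of $\so(2k+2)$ irreducibly embedded in $\su(2^k)$. For (c) I would note that the fictitious $k$-mode model is exactly the fermionic system appearing in the proof of Proposition~\ref{prop:so2n+2}: its controls are the quadratic Hamiltonian of Eq.~\eqref{Hqfree} with real coefficients --- which already generates the full quadratic algebra $\so(2k)$ by Theorem~\ref{quad_ex_so} and the corollary following it --- together with the two degree-one generators $L(m_1),L(m_2)$ and the two degree-$(2k-1)$ generators. Using Eq.~\eqref{eq:quad-majorana} and the equally elementary commutation relations among Majorana monomials of degrees $1,2,2k-1,2k$ one checks that the span of all $L(M)$ with $\deg M\in\{1,2,2k-1,2k\}$ is closed under the commutator, has dimension $2k+\binom{2k}{2}+2k+1=2k^2+3k+1=\dim\so(2k+2)$, has rank $k+1$, and acts irreducibly on $\mathcal{F}(\C^k)\cong\C^{2^k}$; the classification of compact simple Lie algebras of this dimension and rank --- already carried out in the proof of Proposition~\ref{prop:so2n+2} --- then pins it down as $\so(2k+2)$. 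For (b), Proposition~\ref{quasifree} together with Theorem~\ref{quad_ex_so} gives that the quasifree fermionic system with $k+1$ modes has system algebra $\so\big(2(k+1)\big)=\so(2k+2)$.

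Stage (ii) is immediate: the Jordan--Wigner transformation of Sec.~\ref{fock_majorana} is an isomorphism of operator algebras that sends the fermionic generators of (c) onto the spin generators of the chain of Proposition~\ref{prop:so2n+2}, so the two systems have identical reachable operations and hence simulate each other.

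Stage (iii) is where the main obstacle lies, because the three representations of the abstract algebra $\so(2k+2)$ are genuinely different: (a)/(c) carries a $2^k$-dimensional (half-spinor) representation on $\C^{2^k}$, whereas (b) acts on the $2^{k+1}$-dimensional Fock space $\mathcal{F}(\C^{k+1})$ and --- on quasifree states --- only through the defining representation on the $2(k+1)\times 2(k+1)$ covariance matrices via Proposition~\ref{prop:qf-dyn}. I would resolve this exactly along the lines of Theorem~32 and Corollary~34 of \cite{ZS11}. On the one hand, the quasifree dynamics of (b) restricted to quasifree states is completely captured by the $\SO(2k+2)$-action $G\mapsto O_T G O_T^{t}$ on covariance matrices, that is, by the adjoint action of $\so(2k+2)$ on itself once $\so(2k+2)$ is identified with the skew-symmetric $(2k+2)\times(2k+2)$ matrices; this adjoint module occurs inside the operator algebra of (a)/(c) as the isotypic block spanned by the system algebra, so a covariance matrix can be encoded as a Hamiltonian of (a)/(c) whose $\Adr$-evolution under Eq.~\eqref{contr} reproduces the quasifree flow, showing that (a)/(c) simulates (b). On the other hand, the half-spinor representation carried by (a)/(c) is a subrepresentation of the representation $S^{+}\oplus S^{-}$ carried by (b) on $\mathcal{F}(\C^{k+1})$, so (b) simulates (a)/(c). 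Verifying these branching rules and the compatibility of the encoding with the time-ordered exponentials is the technical heart, but it runs entirely parallel to --- and may largely be cited from --- the corresponding statements of \cite{ZS11}; combined with stage (ii) and the coincidence of all three system algebras established in stage (i), this yields all the claimed pairwise simulations.
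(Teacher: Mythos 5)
Your proposal is correct and follows essentially the same route as the paper, which likewise identifies all three system algebras as $\so(2k+2)$ by reusing the proof of Proposition~\ref{prop:so2n+2} (carried out in the fermionic picture, where the algebra is exactly the span of Majorana monomials of degrees $1$, $2$, $2k{-}1$, $2k$) together with Proposition~\ref{quasifree} and Theorem~\ref{quad_ex_so}, and then delegates the mutual-simulability claim to Theorem~32 and Corollary~34 of \cite{ZS11}. The only difference is that your stage~(iii) spells out the representation-theoretic content (the half-spinor module of (a)/(c) versus the $S^{+}\oplus S^{-}$ module of (b), and the covariance-matrix/adjoint encoding) that the paper leaves entirely to the citation.
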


\section{Proof of Theorem~\ref{quad_ex_u}\label{proof_u_d}}
The cases of $d\in\{2,3,4\}$ can be verified directly and we assume in the following that $d \geq 5$ holds.
We build on Lemma~\ref{quad_ex_lem}
and obtain a basis of $\fk_1$ consisting of 
$L(a_p)$ with $1\leq p \leq d$ as well as $L(b_p^{(i)})$
with
\begin{align*}
b_p^{(i)}&:=-m_{2p-1} m_{2p+2i} + m_{2p} m_{2p+2i-1}
\intertext{and $L(c_p^{(i)})$ with}
c_p^{(i)}&:=m_{2p-1} m_{2p+2i-1} + m_{2p} m_{2p+2i}
\end{align*}
where
$p,i\geq 1$ and $p+i\leq d$.  One can systematically 
enlarge the index $(i)$ starting from the elements
$L(b_p^{(1)})=L(b_p)\in \fk_1$ and $L(c_p^{(1)})=(c_p)\in \fk_1$
and generate all $L(b_p^{(i)})$ and $L(c_p^{(i)})$
by combining the commutator relations
$[L(c_p),L(b_{p+1}^{(i)})]=-L(b_p^{(i+1)})$ and
$[L(c_{p+i}),L(b_{p}^{(i)})]=L(b_p^{(i+1)})$ with the commutator relations 
$[L(a_p),L(b_p^{(i)})]=-L(c_p^{(i)})$ and $[L(a_p),L(c_p^{(i)})]=L(b_p^{(i)})$.
It is straightforward to check that no further elements are generated by
commutators starting from the elements $L(a_p)$, $L(b_p^{(i)})$, and $L(c_p^{(i)})$.
We obtain that $\dim(\fk_1)=d+(d-1)d=d^2$. Furthermore, the elements $L(a_p)$ form
a maximal abelian subalgebra of $\fk_1$ and the rank of $\fk_1$ is $d$.
It follows that $\fk_1$ is a subalgebra of maximal rank 
in $\so(2d)$.

We now show that the center of $\fk_1$ is one-dimensional and is
generated by $L(c)$ with $c:=\sum_{p=1}^d a_p$. Combining the commutator relations 
$[L(a_{p+i}),L(b_p^{(i)})]=L(c_p^{(i)})$
and $[L(a_{p+i}),L(c_p^{(i)})]=-L(b_p^{(i)})$ with the ones for $L(a_{p})$ mentioned above,
we conclude that $[L(a_p{+}a_{p+i}),L(b_p^{(i)})]=[L(a_p{+}a_{p+i}),L(c_p^{(i)})]=0$. 
In addition, we obtain
$[L(a_j),L(b_p^{(i)})]=[L(a_j),L(c_p^{(i)})]=0$ if $p\neq j \neq p+i$.
It follows that $[L(c),L(b_p^{(i)})]=[L(c),L(c_p^{(i)})]=0$ and that $L(c)$ 
commutes with all elements of $\fk_1$. We rule out the existence of further
elements in the center by explicitly computing the semisimple
part $\fs:=[\fk_1,\fk_1]$ of $\fk_1$. By applying $[L(b_p^{(i)}),L(c_p^{(i)})]/2=L(a_{p+i}-a_p)$
combined with previously mentioned commutator relations, we can fix a basis of $\fs$ 
consisting of the elements $L(b_p^{(i)})$, $L(c_p^{(i)})$, and $L(a_p-a_{p+1})$ where 
$1\leq p \leq d-1$ and $1\leq i \leq d-p$. 

We proceed to prove in the following that $\fs$ is actually 
simple by showing that $\fs$ is not abelian (which obviously holds) and that
any non-zero ideal $\fii$ of $\fs$ is equal to $\fs$. 
Starting from $(\adr^2(L(a_q)))\, L(b_p^{(i)})=-L(b_p^{(i)})$
and $(\adr^2(L(a_q)))\, L(c_p^{(i)})=-L(c_p^{(i)})$ for $q=p$ or $q=p+i$, we deduce
that $\adr^2(L(a_q))+\adr^4(L(a_q))=0$.
Likewise, $y_p^{(i)}:=[\adr^2(L(a_p{-}a_{p+i}))+\adr^4(L(a_p{-}a_{p+i}))]/12$
annihilates all basis elements of $\fs$ 
except for $L(b_p^{(i)})$ and $L(c_p^{(i)})$ 
which are left invariant. Using the definition
$x_p^{(i)}:=[\adr^2(L(b_p^{(i)}))-\adr^2(L(c_p^{(i)}))]/4$
and verifying $x_p^{(i)}\, L(a_p) = x_p^{(i)}\, L(a_{p+i}) =0$,
we can infer that $x_p^{(i)}\, L(a_q{-}a_{q+j}) = 0$ holds for all valid $q$ and $j$.
Furthermore, we have $x_p^{(i)}\, L(b_q^{(j)}) = x_p^{(i)}\, L(c_q^{(j)}) =0$
for all valid $q$ and $j$ unless
when both $q=p$ and $j=i$ hold. We obtain 
$x_p^{(i)}\, L(b_p^{(i)}) = L(b_p^{(i)})$
and $x_p^{(i)}\, L(c_p^{(i)}) = -L(c_p^{(i)})$ in this exceptional case.
As $\fs$ is semisimple, $\fii$ cannot be abelian and has to contain an element 
which is supported on some $L(b_p^{(i)})$ or $L(c_p^{(i)})$. 
Relying on the ideal property $[\fs,\fii] \subseteq \fii$
and the operators $x_p^{(i)}$ and $y_p^{(i)}$,
we conclude that $L(b_p^{(i)})\in \fii$ or $L(c_p^{(i)})\in \fii$.
Obviously, the conditions  $L(b_p^{(i)})\in \fii$, $L(c_p^{(i)})\in \fii$,
and $L(a_p{-}a_{p+i})\in \fii$ are equivalent. 
By applying previously mentioned commutator relations, we can verify
that $L(b_q^{(j)})\in \fii$ holds for all $q\leq p$ and $q+j\geq p+i$.
In particular, $L(b_1^{d-1})\in \fii$. Using the commutator relations
$[L(c_p),L(b_{p}^{(i)})]=L(b_{p+1}^{(i-1)})$ and 
$[L(c_{p+i-1}),L(b_{p}^{(i)})]=-L(b_{p}^{(i-1)})$ where $i>1$, we can reach
the conclusion that $L(b_q^{(j)})\in \fii$ for all valid $q$ and $j$.
Thus, we have shown that $\fii = \fs$ and $\fs$ has to be simple.

We summarize that $\fk_1$ has dimension $d^2$, has rank $d$, and
is a subalgebra of maximal rank 
in $\so(2d)$. In addition, it is a direct sum of a simple subalgebra and
a one-dimensional abelian subalgebra. 
We list all compact, simple Lie algebras $\fs$ of rank $k:=d-1\geq 4$:
$\su(k{+}1)$ has dimension $k^2+2k$, $\so(2k{+}1)$ has dimension $2k^2+k$,
$\spp(k)$ has dimension $2k^2+k$, $\so(2k)$ has dimension $2k^2-k$, as well as the
exceptional ones.  Note that the exceptional cases $\fg_2$, $\ff_4$, $\fe_6$, $\fe_7$, and $\fe_8$ are ruled out 
by  their respective ranks $2$, $4$, $6$, $7$, and $8$ as well as dimensions $14$, $52$, $78$, $133$, 
and $248$. We 
obtain $\fs\cong\su(d)$ and $\fk_1\cong\uu(d)$.

\section{Proof of Proposition~\ref{prop:trace-short-ranged}\label{app:VIIB}}
Here, a proof for the Proposition~\ref{prop:trace-short-ranged} of Section~\ref{shortrange} is provided.
We start in Subsection~\ref{app:VIIB_1}
by generalizing a key observation of Ref.~\cite{kraus-pra71}
(where the particular case of Proposition~\ref{prop:help-trace} when $K$ divides $L$ 
was already considered). 
This generalization is then applied in Subsection~\ref{app:VIIB_2}
where the details for the proof of  Proposition~\ref{prop:trace-short-ranged}
are given.

\subsection{Generalizing a Key Observation of Ref.~\cite{kraus-pra71}\label{app:VIIB_1}}
\begin{proposition} \label{prop:help-trace}
The trace of the product of  $U^{-K}_T$ with a tensor product of Pauli operators
$Q_i \in \{ \unity_2, \mathrm{X},\mathrm{Y},\mathrm{Z} \}$ can be computed as
\begin{equation}\label{eq:trace-cont}
\tr \left[U^{-K}_T \left(\bigotimes_{i=1}^L Q_i\right)\right]=
\prod_{p=1}^{c}\tr\left[\prod_{q=0}^{L/c-1} Q_{(qK+p) \bmod L}\right],
\end{equation}
where $c:=\gcd(K,L)$.
\end{proposition}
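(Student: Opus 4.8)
The plan is to evaluate the trace directly in the occupation-number basis $\{\,|n_1,\dots,n_L\rangle : n_i\in\{0,1\}\,\}$, exploiting that $U_T^{-K}$ is a site-permutation operator. First I would record its action on basis states: iterating the defining relation $U_T|n_1,\dots,n_L\rangle=|n_L,n_1,\dots,n_{L-1}\rangle$ yields $U_T^{-K}|n_1,\dots,n_L\rangle=|n_{1+K},n_{2+K},\dots,n_{L+K}\rangle$, with all site indices read modulo $L$. Writing $[Q_i]_{n'n}:=\langle n'|Q_i|n\rangle$ for the single-qubit matrix elements, I expand $\bigl(\bigotimes_{i=1}^{L}Q_i\bigr)|n_1,\dots,n_L\rangle$ in the basis and then apply $U_T^{-K}$; the overlap with $\langle n_1,\dots,n_L|$ forces $n'_{i+K}=n_i$ for every $i$ and hence collapses to a single surviving term. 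This gives the diagonal matrix element $\prod_{i=1}^{L}[Q_i]_{\,n_{i-K},\,n_i}$, so that
\[
\tr\!\left[U_T^{-K}\bigotimes_{i=1}^{L}Q_i\right]=\sum_{n_1,\dots,n_L}\ \prod_{i=1}^{L}[Q_i]_{\,n_{i-K},\,n_i}.
\]

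Second, I would organize this sum by the orbits of the cyclic shift $\sigma:i\mapsto i+K$ on $\Z/L\Z$. Each variable $n_i$ occurs in exactly two factors — as a column index in $[Q_i]_{\,n_{i-K},\,n_i}$ and as a row index in $[Q_{i+K}]_{\,n_i,\,n_{i+K}}$ — so the full sum factorizes over the $\sigma$-orbits. For an orbit $\{p,\,p+K,\dots,p+(\ell-1)K\}$ of length $\ell$, the corresponding partial sum is a chain of matrix products that closes up cyclically, $\sum_{a_0,\dots,a_{\ell-1}}[Q_p]_{a_{\ell-1}a_0}[Q_{p+K}]_{a_0a_1}\cdots[Q_{p+(\ell-1)K}]_{a_{\ell-2}a_{\ell-1}}=\tr\bigl[Q_p\,Q_{p+K}\cdots Q_{p+(\ell-1)K}\bigr]$, i.e.\ the trace of the product of the $Q$'s taken in orbit order.

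Third, I would invoke the elementary fact that $\sigma$ has exactly $c:=\gcd(K,L)$ orbits, each of length $\ell=L/c$, and that $\{1,2,\dots,c\}$ is a complete set of orbit representatives — two such integers lie in distinct orbits because the orbits are precisely the cosets of $\langle K\rangle=\langle c\rangle$ in $\Z/L\Z$, i.e.\ the residue classes modulo $c$. Substituting these representatives and $\ell=L/c$ into the orbit-wise formula reproduces exactly $\prod_{p=1}^{c}\tr\bigl[\prod_{q=0}^{L/c-1}Q_{(qK+p)\bmod L}\bigr]$, which is the claim. I do not expect a genuine obstacle here: the only points requiring care are keeping the direction of the shift induced by $U_T^{-K}$ consistent (this is what fixes the ordering $Q_p\,Q_{p+K}\cdots$ inside each cyclic trace) and confirming that $\{1,\dots,c\}$ meets every orbit exactly once. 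As a sanity check, when $K\mid L$ one has $c=K$ and the formula collapses to the special case already established in Ref.~\cite{kraus-pra71}.
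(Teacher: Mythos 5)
Your proposal is correct and follows essentially the same route as the paper's proof: evaluate the trace in the occupation-number basis using that $U_T^{-K}$ acts as a cyclic site permutation, then factor the resulting sum over the $c=\gcd(K,L)$ orbits of the shift $i\mapsto i+K$, each orbit of length $L/c$ contributing one cyclic trace. The only cosmetic difference is the order in which you apply $U_T^{-K}$ and $\bigotimes_i Q_i$ inside the matrix element, which merely relabels the factors within each cyclic trace and does not change the result.
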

\begin{proof}
To simplify our calculations, let us introduce the notation $v(\ell)= (K+ \ell) \bmod L$, note
that 
$(v\circ v)(\ell)=v(v(\ell))=(2K+\ell) \bmod L$,  or more generally $v^{\circ p}(\ell)=(pK+\ell) \bmod L$.
We can now write  the action of $U_T^{-K}$ on an arbitrary standard basis vector as
\begin{equation*}
U_T^{-K}\, |n_1,\ldots, n_\ell,\ldots, n_L \rangle =
|n_{v(1)}, \ldots, n_{v(\ell)}, \ldots, n_{v(L)} \rangle .
\end{equation*}
Without loss of generality we can confine the discussion to the case where $K \le L$. 
We complete the proof, by evaluating the 
trace in Eq.~\eqref{eq:trace-cont} as
\begin{align*}
&\tr\left[U^{-K}_T \left(\bigotimes_{i=1}^L Q_i\right)\right]=
\tr\left[\left(\bigotimes_{i=1}^L Q_i\right)U^{-K}_T\right]\\
&=\sum_{\underline{n} \in \{ 0,1 \}^L}\langle n_1, \ldots, 
 n_L | \left(\bigotimes_{i=1}^L Q_i\right)U^{-K}_T
|n_1,\ldots, n_{L} \rangle\\
&=\sum_{\underline{n} \in \{ 0,1 \}^L}\langle n_1,\ldots, n_L | \left(\bigotimes_{i=1}^L Q_i\right)
|n_{v(1)}, \ldots, n_{v(L)} \rangle\\
&=\sum_{\underline{n} \in \{ 0,1 \}^L} \prod_{\ell=1}^L 
\langle n_\ell | Q_\ell
|n_{v(\ell)}\rangle\intertext{which can be further simplified to}
&\phantom{=}\hspace{1.5mm}\sum_{\underline{n} \in \{ 0,1 \}^L} \prod_{\ell=1}^L 
\langle n_\ell | Q_\ell
|n_{(K + \ell) \mod L}\rangle\\
&=\sum_{\underline{n} \in \{ 0,1 \}^L} \prod_{p=1}^c 
\prod_{q=1}^{L/c-1}  \langle n_{v^{\circ q}(p)} | Q_p | n_{v^{\circ(q+1)}(p)}\rangle\\
&=\prod_{p=1}^{c} 
\tr\left[\prod_{q=0}^{L/c-1} Q_{[(qL/c+p) \mod L]}\right]. \qedhere
\end{align*}
\end{proof}

\subsection{Details of the Proof of Proposition~\ref{prop:trace-short-ranged}\label{app:VIIB_2}}

The Lie algebras $\mathfrak{t}_{M}$ and $\mathfrak{t}_{M+1}$ 
are generated by elements of the form of
\begin{align*}
&i\sum_{q'=0}^{L-1}
U^{q'}_T \left[  \left( \bigotimes_{p=1}^{M} Q_p \right)  \otimes  \unity_2^{\otimes L-M}  
\right]U^{-q'}_T \; \;  \text{and} \\
&i\sum_{q'=0}^{L-1}
U^{q'}_T \left[  \left( \bigotimes_{p=1}^{M+1} Q_p \right)  \otimes  \unity_2^{\otimes 
L-M-1}  \right]U^{-q'}_T, 
\end{align*}
respectively. Here, we consider
all combinations of $Q_p \in \{ \unity_2, \mathrm{X},\mathrm{Y},\mathrm{Z} \}$
apart from the case when $Q_1=\unity_2$.
We introduce the notation $F(a,W):=$
\begin{align*}
&\tr\left(U^{aq M}_T i\sum_{q'=0}^{L-1}
U^{q'}_T \left[  \left( \bigotimes_{p=1}^{W} Q_p \right)  \otimes  \unity_2^{\otimes L-W}  
\right]U^{-q'}_T\right)\\
&=\tr\left( i\sum_{q'=0}^{L-1}
U^{q'}_T U^{aq M}_T  \left[  \left(\bigotimes_{p=1}^{W} Q_p \right)  \otimes  
\unity_2^{\otimes L-W}  
\right]U^{-q'}_T\right)\\
&=i\sum_{q'=0}^{L-1}\tr\left(U^{q'}_T U^{aq M}_T \left[  \left( \bigotimes_{p=1}^{W} Q_p \right)  
\otimes  \unity_2^{\otimes L-W}  \right]U^{-q'}_T\right)\\
&=i\sum_{q'=0}^{L-1}\tr\left(  U^{aq M}_T\left( \bigotimes_{p=1}^{W} Q_p \right)  \otimes  
\unity_2^{\otimes L-W} \right),
\end{align*}
where $a\in\{1,-1\}$ and $W\in\{M,M{+}1\}$.
Using Proposition~\ref{prop:help-trace}, we compute the formulas 
\begin{align*}
F(1,M)&=iL \prod_{p=1}^{M}\tr\left[Q_{p}\right] =0,\\
F(1,M{+}1)&=iL \tr\left[ Q_1 Q_{M{+}1} \right] \prod_{p=2}^{M}\tr\left[Q_{p}\right],\\
F(-1,M{+}1)&=iL \tr\left[ Q_{M{+}1} Q_1\right] \prod_{p=2}^{M}\tr\left[Q_{p}\right].
\end{align*}
It follows that the respective statements in the proposition hold for the 
generators of $\mathfrak{t}_{M}$ and $\mathfrak{t}_{M{+}1}$.
Now we prove this consequence also for any element in 
$\mathfrak{t}_{M}$ (or $\mathfrak{t}_{M{+}1}$). 
First, let us note that the elements generated must be contained in 
$[\mathfrak{t}_{M}, \mathfrak{t}_{M}]$ (or
$[\mathfrak{t}_{M{+}1}, \mathfrak{t}_{M{+}1}]$). Second, since 
all elements in $\mathfrak{t}_{M{+}1}$ (and hence in $\mathfrak{t}_M$) commute with 
$U^{qM}_T$, we have that $\tr(U^{qM}_Tih)=0$
holds for any element
$ih \in [\mathfrak{t}_{M{+}1}, \mathfrak{t}_{M{+}1}]$,
as
\begin{align*}
&\tr([ih^1_{M{+}1},ih^2_{M{+}1}]U^{qM}_T)\\
&=\tr(ih^1_{M{+}1} \, ih^2_{M{+}1} \, U^{qM}_T) -
\tr(ih^2_{M{+}1} \, ih^1_{M{+}1} \, U^{qM}_T)\\
&=\tr(ih^1_{M{+}1} \, ih^2_{M{+}1} \, U^{qM}_T) -
\tr(ih^2_{M{+}1} \, U^{qM}_T \, ih^1_{M{+}1})\\
&=\tr(ih^1_{M{+}1} \, ih^2_{M{+}1} \, U^{qM}_T) -
\tr( ih^1_{M{+}1} \, ih^2_{M{+}1} \, U^{qM}_T)=0.
\end{align*}
Thus Proposition~\ref{prop:trace-short-ranged} follows.

\section{Proof of Theorem~\ref{thm:main_nn} for $d$ Even\label{app:main_nn_even}}

Let us introduce the notation  $\NN_2$, which corresponds to 
the linear space spanned by the nearest-neighbor (and on-site) operators. 
Note that $\NN_2$ forms only a linear space and is in general not equal to 
the Lie algebra $\mathfrak{t}^f_2$ generated by its elements.
We first 
prove a fermionic generalization of Lemma~\ref{prop:trace-short-ranged}.

\begin{lemma} \label{lem:zero-trace-even}
Consider a fermionic system for which the number $d\geq 6$ of modes 
is even.
For any $ih \in \NN_2$ the 
condition $\tr(ih \, \TU^{-2})=0$ holds if $d \bmod 4 =2$, 
while $\tr(ih \, \TU^{-4})=0$ holds if $d \bmod 4 =0$.  
\end{lemma}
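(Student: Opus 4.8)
The statement to prove is Lemma~\ref{lem:zero-trace-even}: for a fermionic ring with $d\geq 6$ even, every $ih\in\NN_2$ satisfies $\tr(ih\,\TU^{-2})=0$ when $d\equiv 2\pmod 4$ and $\tr(ih\,\TU^{-4})=0$ when $d\equiv 0\pmod 4$. The plan is to reduce the claim to a trace computation on each of the six spanning Hamiltonians $h_0$, $h_{\mathrm{rh}}$, $h_{\mathrm{ch}}$, $h_{\mathrm{rp}}$, $h_{\mathrm{cp}}$, $h_{\mathrm{int}}$ of Eqs.~\eqref{Eq_lin_1}--\eqref{Eq_lin_6}, since $\NN_2$ is by definition their real span and the trace is linear. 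So it suffices to show $\tr(ih\,\TU^{-2})=0$ (resp. $\tr(ih\,\TU^{-4})=0$) for each of these six generators.

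First I would translate everything into the Jordan--Wigner spin picture, where $\TU$ becomes (up to the parity-dependent sign bookkeeping of Eq.~\eqref{Trans-ferm}) closely related to the spin translation $U_T$, and each of the six fermionic terms becomes a translation-invariant sum of local Pauli strings: $h_0$ is a sum of single-site $\mathrm{Z}$'s, $h_{\mathrm{rh}},h_{\mathrm{ch}},h_{\mathrm{rp}},h_{\mathrm{cp}}$ are sums of two-site terms like $\mathrm{X}\mathrm{X},\mathrm{Y}\mathrm{Y},\mathrm{X}\mathrm{Y},\mathrm{Y}\mathrm{X}$ on neighbouring sites (with possible $\mathrm{Z}$-string corrections only from the boundary wrap-around term), and $h_{\mathrm{int}}$ is a sum of $\mathrm{Z}\mathrm{Z}$ terms plus a multiple of identity. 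Then I would invoke the trace formula of Proposition~\ref{prop:help-trace}: $\tr[U_T^{-K}(\bigotimes_i Q_i)]=\prod_{p=1}^{c}\tr[\prod_{q=0}^{L/c-1}Q_{(qK+p)\bmod L}]$ with $c=\gcd(K,L)$. Applying this with $K=2$ or $K=4$ to each local Pauli string appearing in the six Hamiltonians: the support of each string has size $\leq 2$, while for $d\equiv 2\pmod 4$ and $K=2$ we get $\gcd(2,d)=2$ so each of the two product-factors runs over $d/2\geq 3$ sites most of which carry $\unity_2$, forcing a factor $\tr[\unity_2^{\otimes(\ge 1)}Q\cdots]$; a short case check on whether the two nontrivial Pauli letters land in the same coset modulo $2$ shows the product of traces always vanishes (either an odd isolated $\mathrm{X},\mathrm{Y}$, or $\mathrm{Z}$ repeated an odd number of times is traceless; $\mathrm{X}\mathrm{X}$ landing in one coset gives $\tr(\unity_2^{\otimes(d/2-2)})\tr(\unity_2^{\otimes d/2})$ in the companion coset times... — the key point is that $d/2$ is odd, so a coset of size $d/2$ filled entirely with $\unity_2$ except possibly a single $\mathrm{Z}$ still has odd length, hence $\mathrm{Z}$-count parity forces a traceless factor). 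The analogous bookkeeping with $K=4$, $\gcd(4,d)\in\{2,4\}$, handles $d\equiv 0\pmod 4$; here $d/4$ or $d/2$ is the relevant coset length and one again tracks whether the isolated nontrivial letters produce an odd $\mathrm{Z}$-parity or an isolated $\mathrm{X}/\mathrm{Y}$.

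The main obstacle is not the spin-side trace algebra itself but the fermionic sign subtleties: $\TU^{-K}$ is not literally $U_T^{-K}$ under Jordan--Wigner because of the parity phase in Eq.~\eqref{Trans-ferm} and because the wrap-around nearest-neighbor terms ($f^\dagger_d f_1$ etc.) pick up a $P$-dependent sign, so the cleanest route is probably to work directly with Majorana monomials and the action $\TU m_j \TU^\dagger = m_{(j+2 \bmod 2d)}$ together with the parity operator. Concretely I would write $\TU^{-K}$ restricted to the even subspace, note that $\tr$ over the full Fock space of (even operator)$\cdot \TU^{-K}$ equals $\tr_+ + \tr_-$ over the two parity blocks, and compute $\tr(m_{q_1}\cdots m_{q_r}\TU^{-K})$ using that the ordered Majorana monomials are orthogonal and $\TU^{-K}$ permutes them (with signs) — the trace is nonzero only when the permutation fixes the monomial, which for a degree-$\le 4$ monomial under a shift by $2K$ requires the index multiset to be invariant, and the arithmetic conditions $d\equiv 2\pmod 4$ (for $K=2$) or $d\equiv 0\pmod 4$ (for $K=4$) are exactly what rules this out or forces the residual sign sum to cancel. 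I expect the $h_{\mathrm{int}}$ term (quartic, contributing a $\tfrac14\unity$ piece whose trace against $\TU^{-K}$ is $\tr(\TU^{-K})$) to need the extra input $\tr(\TU_\pm^k)=0$ when $d/\gcd(d,k)$ is even, which is precisely Lemma~\ref{ferm_spec}; so the vanishing of the identity contribution follows from $d/\gcd(d,2)=d/2$ being even (i.e. $4\mid d$) or $d/\gcd(d,4)$ being even (i.e. $d\equiv 2,4 \bmod 8$... ) — I would double-check this matches the stated congruences and otherwise absorb the identity piece into a separate short computation. Once all six generators are dispatched, linearity over the real span finishes the lemma.
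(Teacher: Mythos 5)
Your opening reduction is sound and matches the paper's first step: by linearity of the trace and translation invariance one may write $ih=\sum_{n=0}^{d-1}\TU^{n}\,ih_{12}\,\TU^{-n}$ with $ih_{12}$ a \emph{traceless} even operator on the first two modes, so that $\tr(ih\,\TU^{-K})=d\,\tr(ih_{12}\,\TU^{-K})$. The gap is in the mechanism you then propose for the vanishing of $\tr(ih_{12}\,\TU^{-K})$. Your criterion --- that $\tr(M\,\TU^{-K})$ can be nonzero only when the shift by $2K$ (mod $2d$) fixes the index multiset of the Majorana monomial $M$ --- is false. The paper's own Lemma~\ref{lem:not_zero_trace_even} supplies a counterexample: for, say, $d=10$ the local generator $ih_{5}=\prod_{p=1}^{5}f_p^{\dagger}f_p^{\phantom\dagger}-\unity/32$ has vanishing identity component and expands into Majorana monomials supported on $\{1,\dots,10\}$, none of which is invariant under the shift by $4$ modulo $20$, yet $\tr(ih_{5}\,\TU^{-2})=7/8\neq 0$. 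What actually happens is that the diagonal Fock-basis matrix elements $\langle \underline{n}|\,ih_{12}\,\TU^{-K}|\underline{n}\rangle$ \emph{do} survive, namely for every bit string periodic with period $\gcd(K,d)$; the fermionic sign from Eq.~\eqref{Trans-ferm} on these strings is either uniformly $+1$ (for $d\bmod 4=2$, $K=2$, and for $d\bmod 8=4$, $K=4$), in which case the surviving sum equals a positive multiple of the trace of $ih_{12}$ over a two- or four-mode space and vanishes \emph{only because $ih_{12}$ is traceless}, or it equals $(-1)^{n_1+n_2+n_3+n_4}$ (for $d\bmod 8=0$), in which case the sum factorizes and the spectator factor $\sum_{n_3,n_4}(-1)^{n_3+n_4}=0$ kills it. Neither case is ``the arithmetic rules out surviving terms or forces a sign cancellation among them'' in the sense your sketch relies on.

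Two further points. First, your plan to dispose of the $\tfrac14\unity$ piece of $h_{\mathrm{int}}$ via $\tr(\TU^{-K})=0$ cannot work in the case you need it: by Lemma~\ref{ferm_spec}, $\tr(\TU^{2})=2\cdot 2^{\gcd(d,2)-1}=4\neq 0$ exactly when $d/2$ is odd, i.e.\ when $d\bmod 4=2$. The identity piece does not vanish separately; it cancels against the quartic piece inside the surviving diagonal sum --- again the tracelessness mechanism. Second, your hesitation about the Jordan--Wigner route is well placed: $\TU^{-K}$ differs from $U_T^{-K}$ by a parity-dependent diagonal sign, which is precisely what the factors $\kappa(\underline{n})$ and $\lambda(\underline{n})$ in a direct Fock-basis computation must track; but the Majorana-monomial replacement you offer does not repair this, so the proposal as written does not establish the lemma.
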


\begin{proof}
By definition, any element $ih  \in \NN_2$ can be written as 
$ih=\sum_{n=0}^{d-1} \TU^{n}\, ih_{12}\, \TU^{-n}$,
where $ih_{12}$ is a traceless skew-Hermitian operator acting only on 
the first two modes of the fermionic system. Therefore, $ih_{12}$ is
a linear combination of the elements $i \,m_1m_2m_3m_4$  and $m_a m_b$ where
$a,b \in \{1,2,3,4 \}$ and $a\ne b$.  We obtain that
$
\tr(ih\,  \TU^{-b})= \tr[ \sum_{n=0}^{d-1}(\TU^{n}\, ih_{12}\, \TU^{-n})\, \TU^{-b}] = 
 \sum_{n=0}^{d-1} \tr(\TU^{n}\, ih_{12}\, \TU^{-n}\, \TU^{-b})= d \, \tr(ih_{12}\, \TU^{-b})$.
If $d \bmod 4 =2$, 
we 
write out explicitly $\tr(ih\,  \TU^{-b})/d$  for $b=2$ by applying Eq.~\eqref{Trans-ferm}:
\begin{align*}
&\tr(ih\,  \TU^{-2})/d=  \tr(ih_{12}\, \TU^{-2})\\
&=\sum_{\underline{n} \in \{ 0,1 \}^d} \langle n_1, \ldots, 
  n_d | i h_{12}\, \TU^{-2}
|n_1, \ldots,  n_{d} \rangle\\
&=\sum_{\underline{n} \in \{ 0,1 \}^d }\kappa(\underline{n})\, \langle n_1, \ldots, 
  n_d | i h_{12}
|n_{3}, \ldots, n_{d}, n_{1}, n_{2} \rangle,
\end{align*}
where 
\begin{equation} \label{eq:kappa_sign}
\kappa(\underline{n})= (-1)^{(n_{1} + n_{2})(n_3 +n_4+ \cdots + n_d)}.
\end{equation}
In the sum given above, the basis vectors are orthogonal and thus most of the terms 
are zero.
The only terms with non-zero contributions can occur in the cases of 
$n_1=n_{2\ell-1}$ and $n_2=n_{2\ell}$
with $\ell \in\{1, \ldots, d/2\}$. In particular, we have 
$\kappa(n_1, n_2, n_1, n_2, \ldots, n_1, n_2)=1$ as $d/2$ is an odd  number if $d \bmod 4 =2$.
Hence we obtain that
\begin{align*}
&\tr(ih\,  \TU^{-2})/d= \tr(ih_{12}\,\TU^{-2}) \\
&=
\sum_{n_1,n_2\in\{0,1\}} \langle n_1, n_2, n_1, n_2, \ldots
  | i h_{12} 
|n_{1}, n_{2}, n_{1},n_{2}, \ldots \rangle\\
&= \sum_{n_1,n_2\in\{0,1\}} \langle n_1, n_2 | ih_{12} | n_1, n_2 \rangle=\tr(ih_{12})=0.
\end{align*}
If $d \bmod 4 =0$, 
 we can explicitly  write out the 
trace:
\begin{align*}
&\tr(ih\,  \TU^{-4})/d=  \tr(ih_{12}\,\TU^{-4})= \\
&=\sum_{\underline{n} \in \{ 0,1 \}^d} \langle n_1, \ldots, n_d | i h_{12}\, \TU^{-4}
|n_1, \ldots, n_{d} \rangle\\
&= \sum_{\underline{n} \in \{ 0,1 \}^d } \lambda(\underline{n})\, \langle n_1, \ldots, n_d | i h_{12}
|n_{5}, \ldots, n_d, n_1, \ldots, n_{4} \rangle,
\end{align*}
where 
\begin{equation} \label{eq:lambda_sign}
\lambda(\underline{n})= (-1)^{(n_{1} + n_{2}+n_3 + n_4)(n_5 +n_6+ \cdots+ n_d)}.
\end{equation}
The basis vectors in the sum are again orthogonal, and most of the terms are zero.
The only terms that can give non-zero contributions are for the cases of $n_1=n_{4\ell-3}$,
$n_2=n_{4\ell-2}$, $n_3=n_{4\ell-1}$, and $n_4=n_{4\ell}$ with $\ell \in\{1, \ldots, 
d/4\}$. 
It follows in these cases that
\begin{align}
\lambda(\underline{n}) &= (-1)^{(n_1+n_2+n_3+n_4)(d/4-1)} \nonumber\\
&=\begin{cases}
1 & \text{if  $d \bmod 8 =4$}, \\
(-1)^{(n_1+n_2+n_3+n_4)} & \text{if $d \bmod 8 =0$}. 
\end{cases}
\label{eq:sign4}
\end{align}
The notation $\chi:=n_1, n_2, n_3, n_4,n_1, n_2, n_3, n_4,\ldots,n_4$ is used, and we obtain
\begin{align}
&\tr(ih\,  \TU^{-4})/d=  \tr(ih_{12}\,\TU^{-4}) \nonumber \\
&=
\sum_{n_1,\ldots,n_4\in\{0,1\}} \lambda(\underline{n})\, \langle \chi
  | i h_{12} 
|\chi \rangle \nonumber \\
&=  \sum_{n_1,\ldots,n_4\in\{0,1\}} \lambda(\underline{n})\, 
\langle n_1, \ldots, n_4 | ih_{12} | n_1, \ldots, n_4 \rangle. \label{eq:sign4b}
\end{align}
We apply Eq.~\eqref{eq:sign4} and obtain that the Eq.~\eqref{eq:sign4b} is zero if
$d \bmod 8 =4$. We can also prove that Eq.~\eqref{eq:sign4b} is zero 
for $d \bmod 8 =0$ as Eq.~\eqref{eq:sign4b}
simplifies to
\begin{align*}
&\left[\sum_{n_1,n_2\in\{0,1\}} 
(-1)^{(n_1 + n_2)} \langle n_1 n_2| i h_{12} |n_1 n_2 \rangle\right]\\
&\times \left[\sum_{n_3,n_4\in\{0,1\}} (-1)^{(n_3 + n_4)}\right] =0.\qedhere
\end{align*}
\end{proof}

\begin{lemma} \label{lem:not_zero_trace_even}
Consider a fermionic system for which the number $d\geq 6$ of modes 
is even. 
The properties $\tr(i\hseven \, \TU^{-2}) \ne 0 $ and $\tr(i\hseven \, \TU^{-4}) \ne 0 $ 
hold for the operator $i\hseven$ of Theorem \ref{thm:main_nn}.
\end{lemma}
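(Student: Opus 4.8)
The plan is to compute $\tr(i\hseven\,\TU^{-2})$ and $\tr(i\hseven\,\TU^{-4})$ explicitly along the same lines as the proof of Lemma~\ref{lem:zero-trace-even}, but now tracking carefully that the fourth-neighbor density-density Hamiltonian $\hseven$ does \emph{not} have the vanishing property. First I would recall that $\hseven=\sum_{n=1}^d (\hat{n}_n\hat{n}_{n+1}\hat{n}_{n+2}\hat{n}_{n+3}\hat{n}_{n+4}-\tfrac{1}{32}\unity)$ with $\hat{n}_\ell:=f^\dagger_\ell f^{\phantom\dagger}_\ell$, and that each term is translation-invariant, so $i\hseven=\sum_{n=0}^{d-1}\TU^n\,(ih_{12345})\,\TU^{-n}$ where $ih_{12345}:=i(\hat n_1\hat n_2\hat n_3\hat n_4\hat n_5-\tfrac{1}{32}\unity)$ acts (diagonally, in the occupation-number basis) on the first five modes. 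As in Lemma~\ref{lem:zero-trace-even}, $\tr(i\hseven\,\TU^{-b})=d\,\tr(ih_{12345}\,\TU^{-b})$ for $b\in\{2,4\}$.

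Next I would evaluate $\tr(ih_{12345}\,\TU^{-b})$ by inserting the occupation-number basis and using Eq.~\eqref{Trans-ferm}: only the bit strings $\underline n$ that are invariant under a cyclic shift by $b$ positions contribute, i.e.\ strings with period dividing $\gcd(d,b)$. Since $h_{12345}$ is diagonal, the matrix element $\langle\underline n|ih_{12345}|\underline n\rangle$ is just $i(n_1n_2n_3n_4n_5-\tfrac{1}{32})$ times the sign $\kappa(\underline n)$ (resp.\ $\lambda(\underline n)$) from Eqs.~\eqref{eq:kappa_sign}, \eqref{eq:lambda_sign}. The key point, and what distinguishes $\hseven$ from the nearest-neighbor operators, is that the constant $-\tfrac{1}{32}\unity$ has been subtracted precisely so that $h_{12345}$ is traceless on the full five-mode space but \emph{not} traceless against the sign-weighted partial traces that arise for the shifted unitaries. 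Concretely, for $b=2$ the contributing strings have period $\gcd(d,2)=2$, so $\underline n=(n_1,n_2,n_1,n_2,\dots)$; summing $\kappa(\underline n)\,i(n_1n_2n_1n_2n_1-\tfrac1{32})$ over $n_1,n_2\in\{0,1\}$ with the period-$2$ sign pattern one gets a nonzero result (the $n_1=n_2=1$ term survives with a definite sign and is not cancelled, since the product $n_1n_2n_1n_2n_1=n_1n_2$ is nonzero only on that one string). For $b=4$ the contributing strings have period $\gcd(d,4)\in\{2,4\}$; again one carries out the finite sum over the at most $2^4$ period-$4$ strings weighted by $\lambda(\underline n)$ of Eq.~\eqref{eq:sign4}, and one checks the total does not vanish. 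I expect both sums to reduce to a single surviving monomial (the all-ones pattern on the fundamental period), so each trace equals $d$ times $\pm i$ times a nonzero rational, in particular nonzero.

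The main obstacle will be the sign bookkeeping: one has to verify that for every even $d\ge 6$ — across all residues of $d$ modulo $4$ (and modulo $8$, as in Eq.~\eqref{eq:sign4}) — the sign factors $\kappa(\underline n)$, $\lambda(\underline n)$ on the one surviving all-ones period string are $+1$ (or at least do not conspire to cancel against other contributions), so that no accidental vanishing occurs. I would organize this by casework on $d\bmod 8$ exactly mirroring the structure of Lemma~\ref{lem:zero-trace-even}, using that on the period string $(1,1,\dots,1)$ the exponent in $\kappa$ is $(n_1+n_2)(n_3+\cdots+n_d)=2\cdot(d-2)/2\cdot 1$-type products which are manifestly even, so $\kappa=\lambda=+1$ there; hence the all-ones contribution $i(1-\tfrac{1}{32})=\tfrac{31i}{32}$ survives, and one need only check that the remaining (non-all-ones) period strings contribute a value with modulus strictly less than $\tfrac{31}{32}$ (indeed they contribute $-\tfrac{i}{32}$ each, finitely many of them), so the total is nonzero. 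Once $\tr(i\hseven\,\TU^{-2})\ne 0$ and $\tr(i\hseven\,\TU^{-4})\ne 0$ are established, the lemma is proved; combined with Lemma~\ref{lem:zero-trace-even} it will immediately give $i\hseven\notin\mathfrak{t}^f_2$ for even $d\ge 6$, which is the even-$d$ half of Theorem~\ref{thm:main_nn}.
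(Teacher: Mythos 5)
Your proposal follows essentially the same route as the paper's proof: reduce to $d\,\tr(ih_{5}\,\TU^{-b})$ by translation invariance, restrict the occupation-basis sum to the bit strings of period $\gcd(d,b)$, and observe that the all-ones string contributes $+\tfrac{31}{32}$ (with trivial sign, since $n_1+n_2$ resp.\ $n_1+\cdots+n_4$ is even there) while the remaining at most $15$ periodic strings contribute $\pm\tfrac{1}{32}$ each, so no cancellation is possible. The paper evaluates the sums exactly by casework on $d\bmod 4$ and $d\bmod 8$ (obtaining $\tfrac{7}{8}$, $1$, or $\tfrac{1}{2}$ times $2^d d\lambda$-type prefactors), whereas your triangle-inequality finish is a slightly more robust shortcut; both are valid.
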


\begin{proof}
We proceed similarly  as in the proof of Lemma~\ref{lem:zero-trace-even}.
The operator $i\hseven$ can be written as 
$\sum_{n=0}^{d-1} \TU^{n}\, ih_{5}\, \TU^{-n}$,
where $ih_{5}:=(f^{\dagger}_1 f^{\phantom\dagger}_1 f^{\dagger}_{2} f^{\phantom\dagger}_{2}f^{\dagger}_{3} 
f^{\phantom\dagger}_{3}f^{\dagger}_{4} f^{\phantom\dagger}_{4}f^{\dagger}_{5} f^{\phantom\dagger}_{5}  - 
\unity/32)$.
Due to this particular structure of  $ih_{5}$, we can simplify the trace
$\tr(i\hseven\,  \TU^{-b})= \tr[ \sum_{n=0}^{d-1}(\TU^{n}\, ih_{5}\, \TU^{-n})\, \TU^{-b}] = 
 \sum_{n=0}^{d-1} \tr(\TU^{n}\, ih_{5}\, \TU^{-n}\, \TU^{-b})= d \, \tr(ih_{5}\, \TU^{-b})$.
Let us explicitly
write out  the trace for $b=2$ by applying Eq.~\eqref{Trans-ferm}:
\begin{align*}
&tr(i\hseven\,  \TU^{-2})/d= tr(ih_{5}\,\TU^{-2})= \\
&=\sum_{\underline{n} \in \{ 0,1 \}^d} \langle n_1, \ldots, n_d | i h_{5}\, \TU^{-2}
|n_1, \ldots, n_{d} \rangle\\
&=\sum_{\underline{n} \in \{ 0,1 \}^d } \kappa(\underline{n}) \, \langle n_1, \ldots, n_d |  ih_{5}
|n_{3}, \ldots, n_{d}, n_{1}, n_{2} \rangle \\
&=\sum_{\underline{n} \in \{ 0,1 \}^d } 
\theta(\underline{n})\,\langle n_1, \ldots, n_d |  n_{3}, \ldots, n_{d}, n_{1}, n_{2} \rangle.
\end{align*}
where $
\theta(\underline{n}):= (\delta_{n_3,1}\delta_{n_4,1}\delta_{n_5,1}\delta_{n_6,1} \delta_{n_7,1}- 1/32)\,
\kappa(\underline{n})$
and 
$\kappa(\underline{n})$ was defined in Eq.~\eqref{eq:kappa_sign}.
Most of the terms in the sum are zero as the basis vectors are orthogonal.
The only terms with non-zero contributions occur for
$n_{2\ell-1}=n_1$ and $n_{2\ell}=n_2$ with $\ell \in\{1, \ldots, d/2\}$.
If $d \bmod 4 = 2$, it follows that $\theta(\underline{n})=31/32$ for $n_1=n_2=1$, 
and $\theta(\underline{n})=-1/32$ otherwise. If $d \bmod 4 =0 $, we have
$\theta(\underline{n})=31/32$ for $n_1=n_2=1$, and $\theta(\underline{n})=1/32$ for $n_1+n_2=1$, 
and  $\theta(\underline{n})=-1/32$ for $n_1=n_2=0$.
We obtain
\begin{align*}
&\tr(i\hseven\, \TU^{-2})/d= \tr(ih_{5}\,\TU^{-2}) \\
&=
\sum_{n_1,n_2\in\{0,1\}} \theta(\underline{n})\,\langle n_1, n_2, n_1, n_2, \ldots
|n_{1}, n_{2}, n_1,n_{2}, \ldots \rangle\\
&=\begin{cases}
7/8 & \text{if  $d \bmod 4 =2$}, \\
1 & \text{if $d \bmod 4=0$}. 
\end{cases}
\end{align*}
Let us now consider the trace with $\TU^{-4}$:
\begin{align*}
&\tr(i\hseven\,  \TU^{-4})/d=  \tr(ih_{5}\,\TU^{-4})= \\
&=\sum_{\underline{n} \in \{ 0,1 \}^d} \langle n_1, \ldots, n_d | i h_{5}\, \TU^{-4}
|n_1, \ldots, n_{d} \rangle\\
&= \sum_{\underline{n} \in \{ 0,1 \}^d } \lambda(\underline{n})\,\langle n_1, \ldots, n_d | i h_{5}
|n_{5}, \ldots, n_{d}, n_{1}, \ldots, n_{4} \rangle \, \\
&=\sum_{\underline{n} \in \{ 0,1 \}^d } 
\mu(\underline{n})\,\langle n_1, \ldots, n_d | n_{5},  \ldots, n_{d}, n_{1}, \ldots, n_{4} \rangle
\end{align*}
where
$\mu(\underline{n}):= (\delta_{n_5,1}\delta_{n_6,1}\delta_{n_7,1}\delta_{n_8,1} \delta_{n_9,1}
- 1/32)\,\lambda(\underline{n})$
and  $\lambda(\underline{n})$ was defined in Eq.~\eqref{eq:lambda_sign}.
Again, most of the terms in the sum are zero as the basis vectors are orthogonal.
Provided that $d \bmod 4= 2$, the
only terms with non-zero contributions can occur in the case of
$n_{2\ell-1}=n_1$ and $n_{2\ell}=n_2$  where $\ell \in\{1, \ldots, d/2\}$.
In this case $\mu(\underline{n})=31/32$ for $n_1=n_2=1$, 
and $\mu(\underline{n})=-1/32$ otherwise. It follows that $\tr(i\hseven\,  \TU^{-4})/d=$
\begin{align*}
 \sum_{n_1,n_2\in\{0,1\}} \mu(\underline{n})\,\langle n_1, n_2, n_1, n_2, \ldots
|n_{1}, n_{2}, n_1,n_{2}, \ldots \rangle=\frac{7}{8}.
\end{align*}
If $d \bmod 4=0$, 
terms with non-zero contributions can occur for $n_{4\ell-3}=n_1$,
$n_{4\ell-2}=n_2$, $n_{4\ell-1}=n_3$, and $n_{4\ell}=n_4$ with $\ell \in\{1, \ldots, d/4\}$.
For these cases we obtain from Eq.~\eqref{eq:sign4} that
\begin{align*}
\mu(\underline{n})&=
 (\delta_{n_5,1}\delta_{n_6,1}\delta_{n_7,1}\delta_{n_8,1} \delta_{n_9,1}
- \tfrac{1}{32})\\
&\times
\begin{cases}
1 & \text{if  $d \bmod 8 =4$}, \\
(-1)^{(n_1+n_2+n_3+n_4)} & \text{if $d \bmod 8 =0$}. 
\end{cases}
\end{align*}
Using 
$\chi=n_1, n_2, n_3, n_4,n_1, n_2, n_3, n_4,\ldots,n_4$ we can simplify the trace to
\begin{align*}
&\tr(i\hseven\,  \TU^{-4})/d=  \tr(ih_{5}\,\TU^{-4}) =
\sum_{n_1,\ldots,n_4\in\{0,1\}} \mu(\underline{n})\, \langle \chi
|\chi \rangle  \\
&= \begin{cases}
1/2 & \text{if $d \bmod 8 = 4$}, \\
1 & \text{if $d \bmod 8 =0$}. \xqedhere{128.5pt}{\qed}
\end{cases} 
\end{align*}
\renewcommand{\qed}{}
\end{proof}

Now we can prove Theorem~\ref{thm:main_nn} for even $d$
as summarized in the following proposition:
\begin{proposition} \label{prop:even_case}
Consider a fermionic system for which the number $d \ge 6$ of modes
is even.
The fourth-neighbor element  $i\hseven\in \mathfrak{t}^f_5$
of Theorem \ref{thm:main_nn}
is not contained in the system algebra $\mathfrak{t}^f_2$ of nearest-neighbor interactions.
\end{proposition}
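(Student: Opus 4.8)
The plan is to combine the two lemmas just proved with a simple trace argument. First I would observe that every element $ih\in\NN_2$ is, by definition, a linear combination of terms $\TU^{n}\,ih_{12}\,\TU^{-n}$ with $ih_{12}$ a traceless skew-Hermitian operator supported on the first two modes; hence the \emph{linear span} $\NN_2$ is closed under the adjoint action of $\TU$, and the Lie algebra $\mathfrak{t}^f_2$ it generates is contained in the derived algebra $[\NN_2,\NN_2]$ together with $\NN_2$ itself. The key point is that any commutator $[ih^1,ih^2]$ of elements of $\NN_2$ satisfies $\tr([ih^1,ih^2]\,\TU^{-b})=0$ whenever all elements of $\NN_2$ commute with $\TU^{-b}$: this is the same cyclicity-of-trace computation already spelled out in Appendix~\ref{app:VIIB_2}, using that $\TU^{-b}$ commutes with every $ih^j\in\NN_2$.

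The concrete steps are then as follows. Fix $b=2$ if $d\bmod 4=2$ and $b=4$ if $d\bmod 4=0$. By Lemma~\ref{lem:zero-trace-even}, $\tr(ih\,\TU^{-b})=0$ for every $ih\in\NN_2$. Next, since $\TU^{-b}$ commutes with all of $\NN_2$, the trace-cyclicity argument shows $\tr(ih\,\TU^{-b})=0$ for every $ih$ in the Lie algebra $\mathfrak{t}^f_2$ generated by $\NN_2$ (the generators satisfy it by Lemma~\ref{lem:zero-trace-even}, and the property is inherited by all commutators). On the other hand, Lemma~\ref{lem:not_zero_trace_even} gives $\tr(i\hseven\,\TU^{-b})\neq 0$ for both relevant values of $b$. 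Therefore $i\hseven\notin\mathfrak{t}^f_2$, which is exactly the claim of Proposition~\ref{prop:even_case} and establishes Theorem~\ref{thm:main_nn} in the even case. Finally one should note $i\hseven\in\mathfrak{t}^f_5$ simply because $\hseven$ is manifestly a translation-invariant Hamiltonian of interaction length five, hence $\mathfrak{t}^f_2\ne\mathfrak{t}^f_5$ when $d\ge 6$ is even.

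I do not expect a genuine obstacle here: the heavy lifting has already been done in Lemmas~\ref{lem:zero-trace-even} and~\ref{lem:not_zero_trace_even}, where the delicate sign bookkeeping (the factors $\kappa(\underline n)$, $\lambda(\underline n)$ from Eq.~\eqref{Trans-ferm}, and the parity of $d/2$ and $d/4$) was carried out. The only thing to be careful about is the case split: one must use $\TU^{-2}$ precisely when $d\equiv 2\pmod 4$ and $\TU^{-4}$ when $d\equiv 0\pmod 4$, because these are the powers for which Lemma~\ref{lem:zero-trace-even} yields a vanishing trace while Lemma~\ref{lem:not_zero_trace_even} yields a non-vanishing one. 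If one is being fully rigorous one should also double-check that $\hseven$ indeed lies in $\mathfrak{t}^f$ at all, i.e.\ that it is even and commutes with $\TU$ — both are immediate from its form as $\sum_n \TU^n (ih_5)\TU^{-n}$ with $h_5$ a product of number operators (hence even), so this is a formality rather than a difficulty. The argument is otherwise a clean ``an invariant distinguishes the two algebras'' proof.
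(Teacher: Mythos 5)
Your proposal is correct and follows essentially the same route as the paper: choose $\CC=\TU^{-2}$ or $\TU^{-4}$ according to $d\bmod 4$, use Lemma~\ref{lem:zero-trace-even} for the generators, propagate the vanishing trace to all of $\mathfrak{t}^f_2$ via cyclicity of the trace and the fact that $\CC$ commutes with $\mathfrak{t}^f_2$, and contrast with Lemma~\ref{lem:not_zero_trace_even}. The only slip is cosmetic: $\mathfrak{t}^f_2$ is spanned by $\NN_2$ together with $[\mathfrak{t}^f_2,\mathfrak{t}^f_2]$ (iterated commutators), not merely $[\NN_2,\NN_2]$, but your inheritance argument handles iterated commutators anyway, so nothing is lost.
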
 

\begin{proof}
We introduce the operator
\begin{equation*}
\CC=\begin{cases}
\TU^{-2} & \text{if $d \bmod 4=2$}, \\
\TU^{-4} & \text{if $d \bmod 4=0$}. 
\end{cases}
\end{equation*}
It follows from Lemma \ref{lem:zero-trace-even} that the equality $\tr(ih\,\CC)=0$ holds
 for any  $ih \in \NN_2$. 
Since  $\CC$ commutes with all elements of $\mathfrak{t}^f_2$ and
$\mathfrak{t}^f_2=\mathrm{span}(\NN_2, [\mathfrak{t}^f_2,\mathfrak{t}^f_2])$, we have
\begin{align*}
\tr( \CC\, [ih_1,ih_2])&=\tr(\CC\, ih_1\,ih_2) -\tr(\CC\, ih_2\,ih_1)\\
&=\tr(\CC\, ih_1\,ih_2) - \tr(ih_1\, \CC\, ih_2)\\
&=\tr(\CC\, ih_1\,ih_2)-\tr(\CC\, ih_1\,ih_2)=0.
\end{align*}
This means that 
$\tr(\CC \, ik)=0$ for all  $ik \in \mathfrak{t}^f_2$.
On the other hand, we know from Lemma~\ref{lem:not_zero_trace_even} that 
$\tr (i\hseven\, \CC) \ne 0$.
This means that $i\hseven \notin \mathfrak{t}^f_2$.
\end{proof}

\section{Proof of Theorem~\ref{thm:main_nn} for $d$ Odd\label{app:main_nn_odd}}

The proof of Theorem~\ref{thm:main_nn} for odd number of modes uses 
an expansion of the translation unitary $\TU$ by the
Fourier transformed Majorana operators, which are defined as
\begin{equation} \label{eq:F-major}
\widetilde{m}_{2k}:=i(\tilde{f}^{\phantom\dagger}_k - \tilde{f}^{\dagger}_{k}) \; \text{ and }\;
\widetilde{m}_{2k+1}:=\tilde{f}^{\phantom\dagger}_k + \tilde{f}^{\dagger}_{k} \, .
\end{equation}
Note that the operators $\tilde{f}^{\phantom\dagger}_k$ were defined in Eq.~\eqref{eq:F-trans_fermop}.
The self-adjoint operators $\widetilde{m}_x$ satisfy again the Majorana anticommutation
relations $\{ \widetilde{m}_x, \widetilde{m}_y\}= 2 \delta_{x,y} \unity$. Moreover, 
the trace of any $\widetilde{m}_x$-monomial is zero, since it is a linear combination of
Majorana monomials. The following lemma relates these operators to the translation unitary.
\begin{lemma} \label{lem:expand-u}
The  translation unitary $\TU$ can be written as 
\begin{align}
\TU&= (- i)^{d-1} \exp \left[-i \sum_{k=0}^{d-1} \tfrac{2 \pi k}{d}(\tilde{f}^{\dagger}_{k}
 \tilde{f}^{\phantom\dagger}_{k}-\tfrac{1}{2}\unity)\right] \label{eq:expand_1}\\
&= (-i)^{d-1}\exp \left[- \sum_{k=0}^{d-1} \tfrac{ \pi k}{d}\, \widetilde{m}_{2k+1} \widetilde{m}_{2k}\right]
\label{eq:expand_2}\\
&=(-i)^{d-1}\prod_{k=0}^{d-1}[\cos(\tfrac{\pi k}{d})\unity - \sin(\tfrac{\pi k}{d})\, \widetilde{m}_{2k+1}
\widetilde{m}_{2k}] \label{eq:expand_3}
\end{align}
using the Fourier-transformed operators
$\tilde{f}_k$ and $\tilde{f}^{\dagger}_k$ as well as $\widetilde{m}_{2k}$ and $\widetilde{m}_{2k+1}$.
\end{lemma}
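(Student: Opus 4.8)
The plan is to reduce everything to the single fact that the operator $V:=\exp\!\bigl[-i\sum_{k=0}^{d-1}\tfrac{2\pi k}{d}\,\tilde{f}^{\dagger}_{k}\tilde{f}^{\phantom\dagger}_{k}\bigr]$ acts on the creation operators exactly as $\TU$ does. Using $[\tilde{f}^{\dagger}_{k}\tilde{f}^{\phantom\dagger}_{k},\tilde{f}^{\dagger}_{k}]=\tilde{f}^{\dagger}_{k}$ and the formula $e^{A}Be^{-A}=e^{\adr_{A}}(B)$, one obtains $V\,\tilde{f}^{\dagger}_{k}\,V^{\dagger}=e^{-2\pi i k/d}\tilde{f}^{\dagger}_{k}$; inverting the Fourier transform \eqref{eq:F-trans_fermop}, so that $f^{\dagger}_{p}=\tfrac{1}{\sqrt d}\sum_{k}\tilde{f}^{\dagger}_{k}e^{-2\pi i pk/d}$, this gives $V\,f^{\dagger}_{p}\,V^{\dagger}=\tfrac{1}{\sqrt d}\sum_{k}\tilde{f}^{\dagger}_{k}e^{-2\pi i(p+1)k/d}=f^{\dagger}_{(p+1\bmod d)}=\TU\,f^{\dagger}_{p}\,\TU^{\dagger}$, and by taking adjoints the same holds for the annihilation operators. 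Consequently $V^{\dagger}\TU$ commutes with all creation and annihilation operators, hence with every linear operator on $\mathcal{F}(\C^{d})$, so it is a scalar of modulus one; evaluating on the Fock vacuum, where $\TU\,\Omega=\Omega$ by \eqref{Trans-ferm} and $V\,\Omega=\Omega$ because $\tilde{f}^{\dagger}_{k}\tilde{f}^{\phantom\dagger}_{k}\,\Omega=0$, fixes this scalar to be $1$, i.e.\ $\TU=V$.

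Next I would recover the form \eqref{eq:expand_1} by reinserting the $-\tfrac12\unity$ shift inside the exponent: since $\sum_{k=0}^{d-1}k=\tfrac{d(d-1)}{2}$, one has $\exp\!\bigl[-i\sum_{k}\tfrac{2\pi k}{d}(\tilde{f}^{\dagger}_{k}\tilde{f}^{\phantom\dagger}_{k}-\tfrac12\unity)\bigr]=e^{\,i\pi(d-1)/2}\,V=i^{\,d-1}V$, and because $(-i)^{d-1}i^{\,d-1}=1$, multiplication by $(-i)^{d-1}$ returns $V=\TU$. The Majorana version \eqref{eq:expand_2} then follows from the elementary identity $\widetilde{m}_{2k+1}\widetilde{m}_{2k}=2i\,(\tilde{f}^{\dagger}_{k}\tilde{f}^{\phantom\dagger}_{k}-\tfrac12\unity)$, obtained by expanding the definitions \eqref{eq:F-major} and using $\{\tilde{f}^{\phantom\dagger}_{k},\tilde{f}^{\dagger}_{k}\}=\unity$; substituting $\tilde{f}^{\dagger}_{k}\tilde{f}^{\phantom\dagger}_{k}-\tfrac12\unity=-\tfrac{i}{2}\widetilde{m}_{2k+1}\widetilde{m}_{2k}$ in the exponent converts the coefficient $-i\tfrac{2\pi k}{d}$ into $-\tfrac{\pi k}{d}$.

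Finally, for \eqref{eq:expand_3} I would observe that the terms $\widetilde{m}_{2k+1}\widetilde{m}_{2k}$ commute pairwise for different $k$ (Majorana operators with different mode indices anticommute, and each factor is a product of two of them), so the exponential of the sum factorizes as $\prod_{k}\exp[-\tfrac{\pi k}{d}\widetilde{m}_{2k+1}\widetilde{m}_{2k}]$; since $(\widetilde{m}_{2k+1}\widetilde{m}_{2k})^{2}=-\unity$ by the Majorana relations, each factor equals $\cos(\tfrac{\pi k}{d})\unity-\sin(\tfrac{\pi k}{d})\widetilde{m}_{2k+1}\widetilde{m}_{2k}$, which is the stated product.

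There is no genuinely hard step here; the one place demanding care is the phase bookkeeping — verifying that the $-\tfrac12\unity$ shift contributes precisely the factor $e^{i\pi(d-1)/2}=i^{d-1}$ that cancels against the prefactor $(-i)^{d-1}$ — together with the (standard but worth stating explicitly) appeal to irreducibility of the CAR representation to conclude that an operator commuting with all $f_{p}$ and $f^{\dagger}_{p}$ must be a scalar multiple of the identity.
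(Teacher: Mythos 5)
Your proof is correct and follows essentially the same route as the paper's: both establish the adjoint action $\tilde{f}^{\dagger}_{k}\mapsto e^{-2\pi i k/d}\tilde{f}^{\dagger}_{k}$, translate it via the inverse Fourier transform to $f^{\dagger}_{p}\mapsto f^{\dagger}_{p+1}$, check invariance of the vacuum with the same phase bookkeeping $(-i)^{d-1}e^{i\pi(d-1)/2}=1$, and pass between the three displayed forms using $\widetilde{m}_{2k+1}\widetilde{m}_{2k}=i(2\tilde{f}^{\dagger}_{k}\tilde{f}^{\phantom\dagger}_{k}-\unity)$, the pairwise commutation of these bilinears, and $(\widetilde{m}_{2k+1}\widetilde{m}_{2k})^{2}=-\unity$. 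The only (minor) divergence is in the final identification: you invoke irreducibility of the CAR representation to conclude that $V^{\dagger}\TU$ is a scalar fixed by the vacuum, whereas the paper simply notes that Eq.~\eqref{Trans-ferm} determines $\TU$ uniquely from its adjoint action on the $f^{\dagger}_{p}$ together with its action on $\Omega$ — logically the same two inputs.
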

\begin{proof}
Let us denote 
the right hand side of Eq.~\eqref{eq:expand_1} by
\begin{equation*}
\TUN:=  (-i)^{d-1}\exp \left[-i \sum_{k=0}^{d-1}\tfrac{2 \pi k}{d}(\tilde{f}^{\dagger}_{k} 
\tilde{f}^{\phantom\dagger}_{k}-\tfrac{1}{2}\unity)\right].
\end{equation*}
Applying the identity $\widetilde{m}_{2k+1}\widetilde{m}_{2k}= 
i(2\tilde{f}^{\dagger}_{k} \tilde{f}^{\phantom\dagger}_{k}- \unity) $, it follows that 
$\TUN=(-i)^{d-1} \exp(- \sum_{k=0}^{d-1}  \pi k\, \widetilde{m}_{2k+1} \widetilde{m}_{2k}/d)$.
Since the formula $[\widetilde{m}_{2k+1}\widetilde{m}_{2k}, \widetilde{m}_{2k'+1}\widetilde{m}_{2k'}]=0$ holds 
for  $k \ne k'$, we can split the exponential into the product 
$\TUN=(-i)^{d-1} \prod_{k=0}^{d-1}  \exp(- \pi k\, \widetilde{m}_{2k+1} \widetilde{m}_{2k}/d)$.
We employ $(\widetilde{m}_{2k+1} \widetilde{m}_{2k})^2=-\unity$ and obtain the 
formula 
\begin{align*}
&\exp(-\tfrac{ \pi k}{d}\, \widetilde{m}_{2k+1} \widetilde{m}_{2k})=\sum_{n=0}^{\infty} 
\tfrac{(- \pi k)^n}{n!\, d^n}(\widetilde{m}_{2k+1} \widetilde{m}_{2k})^n\\
&=\sum_{n=0}^{\infty} \tfrac{(-1)^n  (\pi k)^{2n}  } {(2n)!\, d^{2n}} \unity - 
\sum_{n=0}^{\infty} \tfrac{ (-1)^n (\pi k)^{2n+1}  } {(2n+1)!\, d^{2n+1}}\, \widetilde{m}_{2k+1} \widetilde{m}_{2k} \\
&= \cos(\tfrac{\pi k}{d})\unity - \sin(\tfrac{\pi k}{d})\, \widetilde{m}_{2k+1}\widetilde{m}_{2k}.
\end{align*}
Thus,  $\TUN$ is equal to the right hand side of Eq.~\eqref{eq:expand_3}.
Similarly, the adjoint of $\TUN$ can be written as
\begin{align*}
\TUN^{\dagger}&= i^{(d-1)} \prod_{k=0}^{d-1}  \exp( \tfrac{ \pi k}{d}\, \widetilde{m}_{2k+1} \widetilde{m}_{2k}) \\
&= i^{(d-1)}\prod_{k=0}^{d-1}\left[\cos(\tfrac{\pi k}{d})\unity + 
\sin(\tfrac{\pi k}{d})\, \widetilde{m}_{2k+1}\widetilde{m}_{2k} \right]. 
\end{align*}
The commutation relations of Eq.~\eqref{eq:f-trans-car} imply the formula
$\widetilde{m}_{2k+1} \widetilde{m}_{2k} \tilde{f}^{\dagger}_k = 
-  \tilde{f}^{\dagger}_k \widetilde{m}_{2k+1} \widetilde{m}_{2k} = 
i \tilde{f}^{\dagger}_k $. 
It follows that
\begin{align}
\TUN \tilde{f}^{\dagger}_k \TUN^{\dagger} &= 
[\cos(\tfrac{\pi k}{d})\unity - \sin(\tfrac{\pi k}{d})\, \widetilde{m}_{2k+1}\widetilde{m}_{2k}] \nonumber \\
&\times 
\tilde{f}^{\dagger}_k  [\cos(\tfrac{\pi k}{d})\unity + \sin(\tfrac{\pi k}{d}) \,
\widetilde{m}_{2k+1}\widetilde{m}_{2k}] \nonumber\\
&= e^{-2\pi i k/d} \tilde{f}^{\dagger}_k, \nonumber
\intertext{which implies that}
\TUN f^{\dagger}_n \TUN^{\dagger}&= \TUN \left(\frac{1}{\sqrt{d}} 
\sum_{k=1}^d \tilde{f}^\dagger_k e^{-2 \pi ink/d} \right) \TUN^{\dagger} \nonumber\\
& =\frac{1}{\sqrt{d}} \sum_{k=1}^d \tilde{f}^\dagger_k e^{-2 \pi i(n+1)k/d}  = f^{\dagger}_{n+1}. \label{eq:v-trans}
\end{align}
Applying the formulas
$\tilde{f}_k  | 0 \rangle =0$ and $[\tilde{f}^{\dagger}_{k} \tilde{f}^{\phantom\dagger}_{k},
\tilde{f}^{\dagger}_{k'} \tilde{f}^{\phantom\dagger}_{k'}]=0$, we conclude that 
$\exp[-i \sum_{k=0}^{d-1} 2 \pi k \tilde{f}^{\dagger}_{k} \tilde{f}^{\phantom\dagger}_{k}/d]  | 0 \rangle = |0 \rangle$.
This allows to investigate how $\TUN$ acts on the Fock vacuum $| 0 \rangle$:
\begin{align}
\TUN | 0 \rangle &= (-i)^{d-1} \exp \left[-i \sum_{k=0}^{d-1} \tfrac{2 \pi k}{d} (\tilde{f}^{\dagger}_{k} 
\tilde{f}^{\phantom\dagger}_{k}-\tfrac{1}{2}\unity)\right] | 0 \rangle \nonumber \\
&= (-i)^{d-1} e^{i \sum_{k=0}^{d-1} \tfrac{\pi k}{d}}\,
\exp \left[ -i \sum_{k=0}^{d-1} \tfrac{2 \pi k}{d} (\tilde{f}^{\dagger}_{k} \tilde{f}^{\phantom\dagger}_{k})\right] | 
0 \rangle \nonumber \\
&=  (-i)^{d-1} e^{i \tfrac{\pi}{2} (d-1)} | 0 \rangle = |0 \rangle. \label{eq:inv-vac}
\end{align}
It follows from Eqs.~\eqref{eq:v-trans} and \eqref{eq:inv-vac} that $\TUN$ satisfies 
Eq.~\eqref{Trans-ferm} if we substitute $\TUN$ for $\TU$. As
Eq.~\eqref{Trans-ferm} defines $\TU$ uniquely, $\TU=\TUN$ must hold.
\end{proof}
In the next step, we provide a polynomial of $\TU$ which multiplied by any
nearest-neighbor Hamiltonian gives an operator with zero trace (if the system is composed of 
an odd number of modes). One key observation
is that the action of the twisted reflection operator on the translation unitary is
\begin{equation} \label{eq:Refl_trans}
\eR\, \TU\, \eR^{\dagger} = \TU^{-1} \, ,
\end{equation}
which follows directly form the definition of $\eR$, see Eq.~\eqref{eq:Refl_adj}.  
Using this equation and Lemma \ref{lem:expand-u}, one can prove the following statement:
\begin{lemma} \label{lem:trace_odd_case}
Consider a fermionic system for which  the number $d \ge 5$ of modes is  odd
and introduce the operator 
\begin{equation} \label{eq:CCC}
\CCC= (-1)^{\lfloor d/4 \rfloor} (\TU^2 -\TU^{-2}) - (-1)^d  (\TU^4- \TU^{-4} ).
\end{equation}
The equality $\tr (ih \, \CCC)=0$ holds for any $ih \in \mathfrak{t}^f_2$. 
\end{lemma}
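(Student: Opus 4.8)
The goal is to show that $\tr(ih\,\CCC)=0$ for every $ih\in\mathfrak{t}^f_2$, where $\CCC$ is the specific polynomial in $\TU$ given in Eq.~\eqref{eq:CCC}. Since $\mathfrak{t}^f_2=\mathrm{span}(\NN_2,[\mathfrak{t}^f_2,\mathfrak{t}^f_2])$ and $\CCC$ is a polynomial in $\TU$ which therefore commutes with every element of $\mathfrak{t}^f_2$, the argument from the proof of Proposition~\ref{prop:even_case} (the cyclicity-of-trace computation $\tr(\CCC[ih_1,ih_2])=0$) reduces the claim to checking $\tr(ih\,\CCC)=0$ merely for the \emph{linear} generators $ih\in\NN_2$. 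Such a generator has the form $ih=\sum_{n=0}^{d-1}\TU^n\,ih_{12}\,\TU^{-n}$ with $ih_{12}$ a traceless skew-Hermitian operator supported on the first two modes, so by the same averaging trick used in Lemmas~\ref{lem:zero-trace-even} and \ref{lem:not_zero_trace_even}, $\tr(ih\,\TU^{-b})=d\,\tr(ih_{12}\,\TU^{-b})$ for each $b$. Thus I need to compute $\tr(ih_{12}\,\TU^{\pm 2})$ and $\tr(ih_{12}\,\TU^{\pm 4})$ and verify that the particular linear combination defining $\CCC$ makes them cancel.

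First I would expand $ih_{12}$ in the basis $\{i\,m_1m_2m_3m_4,\ m_am_b\}$ with $a\neq b$ in $\{1,2,3,4\}$, reducing everything to traces of the form $\tr(M\,\TU^{b})$ where $M$ is a Majorana monomial of degree $\le 4$ on the first two modes. These I would evaluate using the explicit product formula for $\TU$ from Lemma~\ref{lem:expand-u} (Eq.~\eqref{eq:expand_3}), written in terms of the Fourier-transformed Majorana operators $\widetilde m_x$, whose monomials are all traceless. Concretely, $\TU^b$ expands as a sum of $\widetilde m$-monomials with coefficients built from $\cos(\pi k/d)$ and $\sin(\pi k/d)$, and the only terms contributing to $\tr(M\,\TU^b)$ are those whose $\widetilde m$-content matches that of $M$ after rewriting $m_1,\dots,m_4$ in the Fourier basis. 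The resulting expressions are finite trigonometric sums over $k\in\{0,\dots,d-1\}$; the parity hypothesis (odd $d$) and the specific prefactors $(-1)^{\lfloor d/4\rfloor}$ and $(-1)^d$ in $\CCC$ are chosen precisely so these sums telescope/cancel. An alternative, and probably cleaner, route is to mimic the direct basis-vector bookkeeping of Lemma~\ref{lem:zero-trace-even}: write $\tr(ih_{12}\,\TU^{-b})=\sum_{\underline n}\kappa_b(\underline n)\langle n_1,\dots,n_d|ih_{12}|\,\text{(shift of }\underline n)\rangle$, observe that only bit-strings periodic under the shift by $b$ survive, and track the fermionic sign $\kappa_b(\underline n)$ from Eq.~\eqref{transsign}. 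For odd $d$, the periodicity constraints for $b=2$ and $b=4$ both force $\underline n$ to be constant on period-$2$ (resp.\ period-$4$) blocks, and the sign depends only on $d\bmod 8$, which is exactly what the coefficients in $\CCC$ are tuned to.

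I would also exploit the twisted reflection symmetry: by Eq.~\eqref{eq:Refl_trans}, $\eR\,\TU^b\,\eR^\dagger=\TU^{-b}$, so $\tr(ih_{12}\,\TU^{b})$ and $\tr(\eR\,ih_{12}\,\eR^\dagger\,\TU^{-b})$ are equal; combining this with the behavior of $ih_{12}$ under $\eR$ (which flips certain monomials) shows that $\tr(ih_{12}\,(\TU^b-\TU^{-b}))$ only receives contributions from the $\eR$-odd part of $ih_{12}$, cutting the casework roughly in half. After these reductions, the claim becomes a handful of explicit identities of the shape ``$(-1)^{\lfloor d/4\rfloor}\,\alpha_2(d) = (-1)^d\,\alpha_4(d)$'' for the surviving coefficients $\alpha_2,\alpha_4$, which I would verify by splitting into the residue classes $d\equiv 1,3,5,7\pmod 8$.

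\textbf{Main obstacle.} The hard part is the careful sign accounting for the fermionic translation unitary: unlike the spin case, $\TU^k$ acts on Fock basis vectors with the nontrivial sign $s=(-1)^{(\sum_{i\le d-k}n_i)(\sum_{j>d-k}n_j)}$ from Eq.~\eqref{transsign}, and when $d$ is odd the period-$2$ and period-$4$ invariant bit-strings interact with this sign in a way that depends delicately on $d\bmod 8$ (this is visible already in Eqs.~\eqref{eq:sign4} and \eqref{eq:sign4b} of the even case). Getting the prefactors $(-1)^{\lfloor d/4\rfloor}$ and $(-1)^d$ in $\CCC$ to produce an \emph{exact} cancellation across all four odd residue classes mod $8$ — rather than just for the generators, but also consistently after the commutator reduction — is where the bookkeeping must be done with care; everything else is routine linear algebra and trace cyclicity.
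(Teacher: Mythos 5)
Your proposal is correct and follows essentially the same route as the paper: reduce to the linear span $\NN_2$ via the commutator/cyclicity argument (since $\CCC$ commutes with all of $\mathfrak{t}^f_2$), use $\eR\,\CCC\,\eR^{\dagger}=-\CCC$ to annihilate the traces against the $\eR$-symmetric generators, and handle the single remaining $\eR$-odd generator $ih_{\mathrm{ch}}$ by expanding $\TU^{2}$ and $\TU^{4}$ in the Fourier-transformed Majorana basis of Lemma~\ref{lem:expand-u} and evaluating the resulting trigonometric sums, whose values $(-1)^d d$ and $(-1)^{\lfloor d/4\rfloor}d$ make the two terms of $\CCC$ cancel exactly. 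The only cosmetic difference is that the paper works with the six translation-invariant generators directly rather than with a local $ih_{12}$, which lets $ih_{\mathrm{ch}}$ appear in the already-diagonal form $-\sum_k\sin(2\pi k/d)\,\widetilde m_{2k+1}\widetilde m_{2k}$ and avoids the monomial-matching bookkeeping you anticipate.
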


\begin{proof}
We will first prove that $\tr(v \, \CCC)=0$ holds for all $v \in \NN_2$, where
$\NN_2$ denotes the linear space spanned by the nearest-neighbor interactions
(as in Appendix \ref{app:main_nn_even}).
The equation $\eR\, \CCC\, \eR^{\dagger}=-\CCC$ follows from Eq.~\eqref{eq:Refl_trans}. On the other hand, 
Eq.~\eqref{eq:Refl_adj} implies 
that $\eR\, ih\, \eR^{\dagger}= ih$ holds for any 
$ih \in \{ ih_0, ih_{\mathrm{rh}}, ih_{\mathrm{rp}}, ih_{\mathrm{cp}}, ih_{\mathrm{int}}\}$, hence 
 $\tr(ih \, \CCC)= \tr( \eR\, ih\, \eR^{-1} \eR\, \CCC\, \eR^{-1})=-\tr(ih \, \CCC)=0$.

In order to calculate $\tr(ih_{\mathrm{ch}} \,  \CCC)$,  we first note that using Eq.~\eqref{hZZ} 
the operator $ih_{\mathrm{ch}}$ can be written as
\begin{equation}
ih_{\mathrm{ch}}= -\sum_{k=0}^{d-1} \sin(\tfrac{2 \pi k}{d})\; 
\widetilde{m}_{2k+1}\widetilde{m}_{2k} \, .
\end{equation}
Next, let us expand 
$\TU^2$  using Lemma \ref{lem:expand-u}:
\begin{align}
\TU^2 &= \prod_{k=0}^{d-1}\left[\cos(\tfrac{2\pi k}{d})\unity - \sin(\tfrac{2\pi k}{d})\, 
\widetilde{m}_{2k+1}\widetilde{m}_{2k} \right] \nonumber \\
&= \lambda_1 \unity - \lambda_1 \sum_{k=0}^{d-1}\tan(\tfrac{2\pi k}{d}) \, 
\widetilde{m}_{2k+1}\widetilde{m}_{2k}+ M_1, \label{eq:exp_for_odd}
\end{align}
where $M_1$ is a linear combination 
 of Majorana monomials of degree greater than two and
$\lambda_1:=\prod_{k=0}^{d-1}\cos\left(\tfrac{2\pi k}{d} \right)$.
Similarly, let us expand $\TU^4$:
\begin{align*}
\TU^4 &=\prod_{k=0}^{d-1}\left[\cos(\tfrac{4\pi k}{d})\unity - 
\sin(\tfrac{4\pi k}{d})\, \widetilde{m}_{2k+1}\widetilde{m}_{2k} \right]\ \\
&= \lambda_1 \unity - \lambda_1 \sum_{k=0}^{d-1}\tan(\tfrac{4\pi k}{d}) \, 
\widetilde{m}_{2k+1}\widetilde{m}_{2k}+ M_2,
\end{align*}
where  $M_2$ is a linear combination 
of Majorana monomials of degree greater than two. We employed that
$\prod_{k=0}^{d-1}\cos(\tfrac{4\pi k}{d}) = \prod_{k=0}^{d-1}\cos(\tfrac{2\pi k}{d})$ holds
for odd $d$.

We note that all monomials of Fourier-transformed Majorana operators
have zero trace and determine the traces $\tr (\TU^2  \, ih_{\mathrm{ch}})$  
and $\tr (\TU^4  \, ih_{\mathrm{ch}})$ by calculating the coefficient of $\unity$ in 
$\TU^2  \, ih_{\mathrm{ch}}$  and $\TU^4  \, ih_{\mathrm{ch}}$:
\begin{align*}
\tr(\TU^2 \, ih_{\mathrm{ch}})&= 2^d \lambda_1\sum_{k=0}^{d-1} \tan( \tfrac{2 \pi k}{d}) 
\sin( \tfrac{2 \pi k}{d}) = (-1)^d 2^d d\lambda_1, \\
 \tr(\TU^4 \, ih_{\mathrm{ch}})&=2^d \lambda_1 
\sum_{k=0}^{d-1}
\tan( \tfrac{4 \pi k}{d}) \sin( \tfrac{2 \pi k}{d})\\ &= (-1)^{\lfloor d/4 \rfloor} 2^d d \lambda_1.
\end{align*} 
Note that $\tr(ih_{\mathrm{ch}}\, \TU^{-\ell}) =\tr(\eR\, ih_{\mathrm{ch}}\, \eR^{\dagger}\eR\, \TU^{-\ell}\, 
\eR^{\dagger}) =- \tr(ih_{\mathrm{ch}}\, \TU^{\ell})$, which allows us to conclude 
$$\tr(i  h_{\mathrm{ch}} \, \CCC)=2 (-1)^{\lfloor d/4\rfloor} \tr(i h_{\mathrm{ch}}\, \TU^2)- 2(-1)^d 
\tr(i h_{\mathrm{ch}}\, \TU^4).$$ 
This implies that $\tr(\CCC \, ih_{\mathrm{ch}})=0$, and thus $\tr(v \, \CCC)=0$ holds for all $v \in \NN_2$. 
As $\CCC$ commutes with all elements of $\mathfrak{t}_2^f$, it also follows that  $\tr(ih \, \CCC)=0$ for 
any $ih \in \mathfrak{t}_2^f$.
\end{proof}

After these preparations we can prove Theorem~\ref{thm:main_nn} for odd $d$
as summarized in the following proposition:
\begin{proposition} \label{prop:odd_case}
Consider a fermionic system with  $d \ge 5$ odd modes and the
Hamiltonian $\hsodd$ of Theorem~\ref{thm:main_nn}.
The generator $i\hsodd \in \mathfrak{t}^f_4$
is not contained in the system algebra $\mathfrak{t}^f_2$ of nearest-neighbor interactions.
\end{proposition}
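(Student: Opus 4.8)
The plan follows the template of the even case (Proposition~\ref{prop:even_case}): I would exhibit an operator whose trace against every element of $\mathfrak{t}^f_2$ vanishes but whose trace against $i\hsodd$ does not, which immediately forces $i\hsodd\notin\mathfrak{t}^f_2$. Such an operator is already available, namely $\CCC$ from Eq.~\eqref{eq:CCC}, which by Lemma~\ref{lem:trace_odd_case} satisfies $\tr(ih\,\CCC)=0$ for all $ih\in\mathfrak{t}^f_2$. Writing $\CCC=c_2(\TU^2-\TU^{-2})+c_4(\TU^4-\TU^{-4})$ with $c_2=(-1)^{\lfloor d/4\rfloor}\neq0$ and $c_4=-(-1)^d$, the whole task reduces to computing $\tr(i\hsodd\,\CCC)$ and checking it is nonzero.

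First I would pass to the Fourier--Majorana picture. Exactly as for $ih_{\mathrm{ch}}$ in the proof of Lemma~\ref{lem:trace_odd_case}, the Fourier-transformed operators of Eqs.~\eqref{eq:F-trans_fermop} and \eqref{eq:F-major} give $i\hsodd=-\sum_{k=0}^{d-1}\sin(6\pi k/d)\,\widetilde{m}_{2k+1}\widetilde{m}_{2k}$, a degree-two element carrying the Fourier weights $\sin(6\pi k/d)$ in place of the $\sin(2\pi k/d)$ that occur for $ih_{\mathrm{ch}}$. Expanding $\TU^{\pm2\ell}$ ($\ell\in\{1,2\}$) by Lemma~\ref{lem:expand-u} as $\lambda_1\unity\mp\lambda_1\sum_k\tan(2\pi\ell k/d)\,\widetilde{m}_{2k+1}\widetilde{m}_{2k}+(\text{Fourier--Majorana terms of degree}\ge4)$ with $\lambda_1=\prod_k\cos(2\pi k/d)=\prod_k\cos(4\pi k/d)>0$ (nonzero because $d$ is odd), and using that every Fourier--Majorana monomial of positive degree is traceless while $(\widetilde{m}_{2k+1}\widetilde{m}_{2k})^2=-\unity$, only the degree-two parts survive against the degree-two $i\hsodd$; this yields $\tr(\TU^{2\ell}\,i\hsodd)=-2^d\lambda_1\,\Sigma_\ell$ and $\tr(\TU^{2\ell}\,ih_{\mathrm{ch}})=-2^d\lambda_1\,\Sigma_\ell^{\mathrm{ch}}$ where $\Sigma_\ell:=\sum_k\tan(2\pi\ell k/d)\sin(6\pi k/d)$ and $\Sigma_\ell^{\mathrm{ch}}:=\sum_k\tan(2\pi\ell k/d)\sin(2\pi k/d)$. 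Moreover $\tr(\TU^{-2\ell}\,ih)=-\tr(\TU^{2\ell}\,ih)$ for $ih\in\{ih_{\mathrm{ch}},i\hsodd\}$, since $\eR\,\hsodd\,\eR^\dagger=-\hsodd$ (the same twisted-reflection argument that applies to $h_{\mathrm{ch}}$) and $\eR\,\TU^{-2\ell}\,\eR^\dagger=\TU^{2\ell}$ by Eq.~\eqref{eq:Refl_trans}. Hence $\tr(ih\,\CCC)=-2^{d+1}\lambda_1\bigl(c_2\cdot(\text{the }\ell{=}1\text{ sum})+c_4\cdot(\text{the }\ell{=}2\text{ sum})\bigr)$ for both $ih_{\mathrm{ch}}$ and $i\hsodd$.

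The crux is a trigonometric asymmetry between nearest-neighbour and third-neighbour hopping. With $\phi=2\pi k/d$, the identities $\tan\phi\sin3\phi=-4\cos^3\phi+5\cos\phi-\sec\phi$ and $\tan2\phi\sin3\phi=-\cos3\phi+\cos\phi/\cos2\phi$, together with $\sum_k\cos(2\pi k/d)=\sum_k\cos(6\pi k/d)=0$ valid for $d\ge5$, collapse the new sums to $\Sigma_1=-\sum_k\sec(2\pi k/d)$ and $\Sigma_2=\sum_k\cos(2\pi k/d)/\cos(4\pi k/d)$; the analogous reductions already carried out inside the proof of Lemma~\ref{lem:trace_odd_case} (from $\tan\phi\sin\phi=\sec\phi-\cos\phi$ and $\tan2\phi\sin\phi=-\cos\phi+\cos\phi/\cos2\phi$) give $\Sigma_1^{\mathrm{ch}}=+\sum_k\sec(2\pi k/d)$ and $\Sigma_2^{\mathrm{ch}}=\sum_k\cos(2\pi k/d)/\cos(4\pi k/d)$. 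Thus $\Sigma_1=-\Sigma_1^{\mathrm{ch}}$ while $\Sigma_2=\Sigma_2^{\mathrm{ch}}$: going from $h_{\mathrm{ch}}$ to $\hsodd$ flips the $\TU^{\pm2}$-contribution and leaves the $\TU^{\pm4}$-contribution unchanged. Since $\tr(ih_{\mathrm{ch}}\,\CCC)=0$ reads $c_2\Sigma_1^{\mathrm{ch}}+c_4\Sigma_2^{\mathrm{ch}}=0$, i.e.\ $c_4\Sigma_2^{\mathrm{ch}}=-c_2\Sigma_1^{\mathrm{ch}}$, substitution gives
\[
\tr(i\hsodd\,\CCC)=-2^{d+1}\lambda_1\bigl(c_2\Sigma_1+c_4\Sigma_2\bigr)=-2^{d+1}\lambda_1\bigl(-c_2\Sigma_1^{\mathrm{ch}}+c_4\Sigma_2^{\mathrm{ch}}\bigr)=2^{d+2}\lambda_1\,c_2\sum_{k=0}^{d-1}\sec(2\pi k/d).
\]
As $c_2=(-1)^{\lfloor d/4\rfloor}\neq0$, $\lambda_1>0$, and $\sum_{k=0}^{d-1}\sec(2\pi k/d)$ is a nonzero multiple of $d$ (in fact $(-1)^{(d-1)/2}d$, by a short Chebyshev-polynomial argument), we obtain $\tr(i\hsodd\,\CCC)\neq0$, and therefore $i\hsodd\notin\mathfrak{t}^f_2$.

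I expect the main obstacle to be precisely this trigonometric bookkeeping: verifying the collapse of $\Sigma_1,\Sigma_2$ to closed forms — in particular the sign distinguishing third-neighbour from nearest-neighbour hopping, which plays the role of the $d\bmod8$ casework in the even proof — and the nonvanishing of $\sum_k\sec(2\pi k/d)$. The structural content is light: once $\CCC$ is known from Lemma~\ref{lem:trace_odd_case} to annihilate $\mathfrak{t}^f_2$, everything reduces to a single trace evaluation, and the sign flip $\Sigma_1=-\Sigma_1^{\mathrm{ch}}$, $\Sigma_2=\Sigma_2^{\mathrm{ch}}$ converts it into the nearest-neighbour trace already controlled inside that lemma.
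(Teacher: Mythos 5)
Your proposal is correct and follows the same skeleton as the paper's proof: the witness operator is the same $\CCC$ of Eq.~\eqref{eq:CCC}, Lemma~\ref{lem:trace_odd_case} supplies $\tr(ih\,\CCC)=0$ on all of $\mathfrak{t}^f_2$, the Fourier--Majorana expansion of $\TU^{\pm2},\TU^{\pm4}$ from Lemma~\ref{lem:expand-u} reduces everything to two trigonometric sums, and the twisted reflection $\eR$ handles the negative powers of $\TU$. Where you genuinely depart from the paper is in the endgame: the paper evaluates both sums $\sum_k\tan(2\pi k/d)\sin(6\pi k/d)$ and $\sum_k\tan(4\pi k/d)\sin(6\pi k/d)$ in closed form and checks that the two contributions add constructively, whereas you only prove the relations $\Sigma_1=-\Sigma_1^{\mathrm{ch}}$ and $\Sigma_2=\Sigma_2^{\mathrm{ch}}$ (both identities check out, using $\sum_k\cos(2\pi k/d)=\sum_k\cos(6\pi k/d)=0$ for $d\ge5$) and then recycle the already-established cancellation $c_2\Sigma_1^{\mathrm{ch}}+c_4\Sigma_2^{\mathrm{ch}}=0$ from $\tr(ih_{\mathrm{ch}}\,\CCC)=0$. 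This buys you two things: you never have to evaluate the messier $\ell=2$ sum at all, and the final answer $2^{d+2}\lambda_1c_2\sum_k\sec(2\pi k/d)$ depends only on $\sum_k\sec(2\pi k/d)=(-1)^{(d-1)/2}d\neq0$ (which your roots-of-unity/partial-fraction or Chebyshev argument does give), making the conclusion insensitive to the overall sign conventions in the trace expansion --- a robustness the paper's version, which must get both closed forms right including their relative sign, does not have. The only ingredients you must still supply in full are the two product-to-sum identities and the nonvanishing of $\sum_k\sec(2\pi k/d)$, and both are correct as stated.
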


\begin{proof}
Using Eq.~\eqref{hZZ}, $i\hsodd$ can be written as 
\begin{equation}
i\hsodd = -\sum_{k=0}^{d-1} \sin(\tfrac{6 \pi k}{d})\; 
\widetilde{m}_{2k+1}\widetilde{m}_{2k}.
\end{equation}
Observe that $\tr(i\hsodd\, \TU^{-\ell}) =\tr(\eR\, i \hsodd\,\eR^{\dagger}\eR\, \TU^{-\ell}\, \eR^{\dagger}) =
- \tr(i \hsodd\, \TU^{\ell})$ and conclude that the formula $\tr(i \hsodd \, \CCC)=2 (-1)^{\lfloor d/4\rfloor} 
\tr(i \hsodd\, \TU^2)- 2(-1)^d \tr(i \hsodd \,\TU^4)$ holds. Now, the expansion of  $\TU$ given by 
Eq.~\eqref{eq:exp_for_odd} allows us to calculate the
trace of $i\hsodd \,  \CCC$:
\begin{align*}
\tr(i\hsodd \, \CCC)&= 2^{d+1}  (-1)^{\lfloor d/4 \rfloor}  \lambda_1 \sum_{k=0}^{d-1} \tan( \tfrac{2 \pi k}{d}) 
\sin( \tfrac{6 \pi k}{d})  \\
& - 2^{d+1} (-1)^d  \lambda_1 \sum_{k=0}^{d-1} \tan( \tfrac{4 \pi k}{d}) \sin( \tfrac{6 \pi k}{d})\\
&= 2^{d+1} (-1)^{\lfloor d/4\rfloor}  \lambda_1  (-1)^{d-1}  d\\
& - 2^{d+1} (-1)^d  \lambda_1 (-1)^{\lfloor d/4\rfloor} d\\
& = 2^{d+2}  (-1)^{\lfloor d/4 \rfloor} (-1)^{d-1} d \lambda_1 \ne 0.
\end{align*}
On the other hand, we know from Lemma~\ref{lem:trace_odd_case} that  the equality 
$\tr(\CCC \, ih)=0$ holds for any $ih \in \mathfrak{t}^f_2$. Therefore, 
$i\hsodd \notin \mathfrak{t}^f_2$.
\end{proof}


%

\end{document}